\newcommand{\R}{\mathbb{R}}
\spnewtheorem*{AAT}{Abel's Addition Theorem}{\bfseries}{\itshape}
\patchcmd{\qed}{\ifmmode\qedsymbol}{\ifmmode\the\qedsymbol}{}{\foobar}
\patchcmd{\qed}{\hfil\qedsymbol}{\hfil\the\qedsymbol}{}{\foobar}
\newcommand{\qedhere}{\tag*{\the\qedsymbol}}
\def\ddefloop#1{\ifx\ddefloop#1\else\ddef{#1}\expandafter\ddefloop\fi}
\def\ddef#1{\expandafter\def\csname bb#1\endcsname{\ensuremath{\mathbb{#1}}\xspace}}
\def\ddef#1{\expandafter\def\csname c#1\endcsname{\ensuremath{\mathcal{#1}}\xspace}}
\def\ddef#1{\expandafter\def\csname m#1\endcsname{\ensuremath{\mathbf{#1}}\xspace}}
\def\ddef#1{\expandafter\def\csname v#1\endcsname{\ensuremath{\boldsymbol{#1}}\xspace}}
\def\ddef#1{\expandafter\def\csname v#1\endcsname{\ensuremath{\boldsymbol{\csname #1\endcsname}}\xspace}}
\newcommand{\zo}{\ensuremath{{\{0,1\}}}\xspace}
\newcommand{\indc}{\ensuremath{\approx_c}\xspace}
\newcommand{\NP}{\ensuremath{\mathsf{NP}}\xspace}
\newcommand{\BPP}{\ensuremath{\mathsf{BPP}}\xspace}
\newcommand{\polylog}{\mathrm{polylog}}
\newcommand{\msf}[1]{\ensuremath{\mathsf{#1}}}
\newcommand{\set}[1]{\left\{#1\right\}}
\newcommand{\N}{\mathbb{N}}
\renewcommand{\vec}[1]{\ensuremath{\boldsymbol{\mathrm{#1}}}}
\providecommand{\poly}{\msf{poly}}
\mathchardef\mhyphen="2D
\newlength{\protowidth}
\newlength{\cprotowidth}
\renewcommand{\P}{\ensuremath{\mathsf{P}}\xspace}
\def\draft{0}
\newcommand{\authnote}[3]{\textsf{\color{#3}  $\ll$ #1: { #2} $\gg$ }  }
\newcommand{\authnote}[3]{}
\newcommand{\BGW}{{\sf BGW}}
\newcommand{\VSS}{{\sf VSS}}
\newcommand{\trunc}{{\sf trunc}}
\newcommand{\MPC}{{\sf MPC}}
\newcommand{\myparagraph}[1]{\medskip\noindent\textbf{#1}}
\begin{document}

\title*{On the works of Avi Wigderson\thanks{To appear in H. Holden, R. Piene: The Abel Laureates 2018–2022, Springer, 2024}
}

\author{Boaz Barak\inst{1} \and Yael Kalai\inst{2} \and Ran Raz\inst{3} \and Salil Vadhan\inst{1} \and Nisheeth K. Vishnoi\inst{4}} 
\institute{School of Engineering and Applied Sciences, Harvard University. \and
           Microsoft Research and MIT \and 
           Department of Computer Science, Princeton University. \and 
           Department of Computer Science, Yale University.}

\maketitle

\abstract{ %
This is an overview of some of the works of Avi Wigderson, 2021 Abel prize laureate. Wigderson's contributions span many fields of computer science and mathematics.
In this survey we focus on four subfields: \emph{cryptography}, \emph{pseudorandomness}, \emph{computational complexity lower bounds},  and the theory of \emph{optimization over symmetric manifolds}.
Even within those fields, we are not able to mention all of Wigderson's results, let alone cover them in full detail.
However, we attempt to give a broad view of each field, as well as describe how Wigderson's papers have answered central questions, made key definitions, forged unexpected connections, or otherwise made lasting changes to our ways of thinking in that field.
}

\section{Introduction} \label{sec:intro}

In a career that has spanned more than 40 years, Wigderson has resolved long-standing open problems, made definitions that shaped entire fields, built unexpected bridges between different areas, and introduced ideas and techniques that inspired generations of researchers.
A recurring theme in Wigderson's work has been
uncovering the deep connections between computer science and mathematics. 
His papers have both demonstrated unexpected applications of diverse mathematical areas to questions in computer science, and shown how to use theoretical computer science insights to solve problems in pure mathematics.
Many of these beautiful connections are surveyed in Wigderson's own book~\cite{wigderson19}.

In writing this chapter, we were faced with a daunting task. Wigderson's body of work is so broad and deep that it is impossible to do it justice in a single chapter, or even in a single book.
Thus, we chose to focus on a few subfields and, within those, describe only some of Wigderson's central contributions to these fields.

In Section~\ref{sec:crypto} we discuss Wigderson's contribution to \emph{cryptography}. As we describe there, during the second half of the 20th century, cryptography underwent multiple revolutions.
Cryptography transformed from a practical art focused on ``secret writing'' to a science that protects not only communication but also \emph{computation}, and provides the underpinning for our digital society.
Wigderson's works have been crucial to this revolution, 
vastly extending its reach through constructions of objects such as \emph{zero-knowledge proofs} and \emph{multi-party secure computation}. 
In Section~\ref{sec:derand}, we discuss Wigderson's contribution to the field of \emph{pseudorandomness}.
One of the great intellectual achievements of computer science and mathematics alike has been the realization that many deterministic processes can still behave in ``random-like'' or \emph{pseudorandom} manner.  Wigderson has led the field in understanding and pursuing the deep implications of pseudorandomness for problems in computational complexity, such as the power of randomized algorithms and circuit lower bounds, and in developing theory and explicit constructions of ``pseudorandom objects'' like expander graphs and randomness extractors.
Wigderson's work in this field used mathematical tools from combinatorics, number theory, algebra, and information theory to answer computer science questions, and has applied computer science abstractions and intuitions to obtain new results in mathematics, such as explicit constructions of Ramsey graphs.

Section~\ref{sec:lowerbounds} covers Wigderson's contribution to the great challenge of theoretical computer science: proving \emph{lower bounds} on the computational resources needed to achieve computational tasks.
Algorithms to solve computational problems have transformed the world and our lives, but for the vast majority of interesting computational tasks, we do not know whether our current algorithms are \emph{optimal} or whether they can be dramatically improved. %
To demonstrate optimality, one needs to prove such \emph{lower bounds}, and this task has turned out to be exceedingly difficult, with the famous $\P$ vs. $\NP$ question being but one example. %
While the task is difficult, there has been some progress in it, specifically in proving lower bounds for restricted (but still very useful and interesting) computational models.
Wigderson has been a central contributor to this enterprise.

Section~\ref{sec:optimization} covers a line of work by Wigderson and his co-authors on developing and analyzing continuous optimization algorithms for various problems in computational complexity theory, mathematics, and physics. 
Continuous optimization is a cornerstone of science and engineering.
There is a very successful theory and practice of convex optimization.
However,  progress in the area of \emph{nonconvex} optimization has been hard and sparse, despite a plethora of nonconvex optimization problems in the area of machine learning.
Wigderson and his co-authors, in their attempt to analyze some nonconvex optimization problems important in complexity theory, realized that these are no ordinary nonconvex problems -- nonconvexity arises because the objective function is invariant under certain group actions.
This insight led them to synthesize tools from invariant theory, representation theory, and optimization to develop a quantitative theory of optimization over Riemannian manifolds that arise from continuous symmetries of noncommutative matrix groups.
Moreover, this pursuit revealed connections with and applications to  a host of disparate problems in mathematics and physics.

All of us are grateful for having this opportunity to revisit and celebrate Wigderson's work.
More than anything, we feel lucky to have had the joy and privilege of knowing Avi as a mentor, colleague, collaborator, and friend. %

\section{Cryptography} \label{sec:crypto}

Cryptography has been used for thousands of years, going back to ancient Egypt, Sumeria, and Greece.
However, throughout the vast majority of that time, it had two major limitations. 
First, there was no formal analysis of cryptographic schemes, leading to a ``cat and mouse'' game in which ciphers are continuously designed and then broken, leading Edgar Allan Poe to say in 1847 that \emph{``Human ingenuity cannot concoct a cipher which human ingenuity cannot resolve.''}
Second, cryptography was synonymous with ``secret writing'': the design of schemes that enable two parties that share some secret information (i.e., \emph{ secret key}) to communicate by using encryption and decryption.

In the second half of the 20th century, cryptography broke out of these two limitations. First, starting with the work of Shannon~\cite{shannon1949communication}, cryptography was placed on solid mathematical foundations. Second, with their invention of \emph{public key cryptography}, Diffie and Hellman~\cite{DiffieH76} ushered in a new era where cryptography extended far beyond secret writing.
However, neither Shannon nor Diffie and Hellman could imagine how far cryptography would grow.
First, in almost all settings, analyzing cryptographic schemes required going beyond the information-theoretic methods of Shannon, and to use \emph{computational complexity} as a basis.
Second, in the 1980s, cryptography was extended to protect not only \emph{communication} but also \emph{computation}, with a crowning achievement being ``secure multiparty computation'' protocols that allow any number of parties to compute arbitrary functions on their secret inputs, controlling precisely what information would be revealed and to whom.

Avi Wigderson played a key role in these developments. He was instrumental in mapping out the computational assumptions required for many cryptography tools and proving the central feasibility result for secure multiparty computation. 

In this section, we survey some of Wigderson's contributions to cryptography, focusing on two central themes.

\begin{enumerate}
    \item Building cryptographic schemes that are secure under \emph{computational assumptions} (which can be viewed as stronger variants of the famous $\P\neq\NP$ conjecture). This line of works is covered in Section~\ref{sec:crypto:comp}.

    \item Building cryptographic schemes that are proven to be secure \emph{unconditionally}, without relying on any computational assumptions, but rather on certain environmental conditions such as a trusted majority of parties. This field is sometimes known as \emph{information-theoretically secure cryptography}, and some of Wigderson's contributions to it are covered in Section~\ref{sec:crypto:it}.
    
\end{enumerate}

Two objects play a central role in both fields: \emph{zero-knowledge proofs}, and \emph{multi-party secure computation}. 
These are the foundational tools for extending cryptography from only securing \emph{communication} to securing \emph{computation}.
Wigderson has made seminal contributions to constructing these objects in both the computational and information-theoretic regimes, enabling many follow-up works that used these tools to achieve a vast range of cryptographic applications. 
The description below is informal in parts, and many proofs are omitted. However they can be found in Goldreich's excellent textbook \cite[]{GoldreichFOCvol1,GoldreichFOCvol2}.
See also the recent text~\cite{tyagi2023information} for more on information-theoretic cryptography.

\subsection{Cryptography under computational assumptions} \label{sec:crypto:comp}

Cryptography is intimately connected to computational complexity. Indeed, achieving most cryptographic goals requires the existence of functions that are \emph{computationally hard to compute}.
The necessity of computational hardness for encryption was realized early on by Shannon~\cite{shannon1949communication}.
However, in the early 1980s, researchers realized that computational hardness is \emph{sufficient} for achieving applications that extend far beyond encryption.
Avi Wigderson played a key role in this revolution that vastly expanded the domain of cryptography beyond its classical goals of protecting the confidentiality and authenticity of communications.

\subsubsection{Zero knowledge proofs for all languages in $\NP$}\label{sec:comp-ZK}

\emph{Zero-knowledge proofs} achieve the seemingly paradoxical notion of convincing a party (known as the ``verifier'') that a particular statement X is true, without giving any information to the verifier as to \emph{why} that statement is true.
For example,  a prover that knows the factorization of the number  $N = 1,013,883,390,263,903$  as $N= 32,722,259 \times 30,984,517$ can easily prove the statement ``$N$ is composite and has a factor with least-significant digit $7$'' by providing the factorization, but a zero-knowledge proof allows them to prove this fact \emph{without} revealing the prime factors.\footnote{Simply proving that a number $N$ is composite can be done easily, since the verifier can use an efficient \emph{primality testing} algorithm~\cite{AgrawalKaSa04}, and so even an empty proof suffices. Also, current classical (i.e., non quantum) algorithms can be used to efficiently factor numbers with up to a few hundred digits~\cite{lenstra1993development,BoudotGGHTZ22}.}

In 1982, Goldwasser, Micali and Rackoff~\cite{GoldwasserMR85} defined the notion of zero knowledge proofs, and gave such proofs for particular examples of languages, such as quadratic residuosity, for which no efficient algorithm is known.
To do this, \cite{GoldwasserMR85} needed to extend the notion of proofs to include \emph{interaction}--- the prover and verifier exchange messages rather than just a static piece of text--- and \emph{randomization}--- the verifier's algorithm is randomized, and it is only convinced of the proof validity with high probability. 

We now formally define interactive proofs in general and zero-knowledge proofs in particular.
We restrict our attention to proof systems for languages in $\NP$, which is the case of most practical relevance in cryptography.
Recall that the class $\NP$ consists of all languages for which membership can be \emph{efficiently verified}.
Formally, we define an $\NP$-relation to be a relation $R \subseteq \{0,1\}^* \times \{0,1\}^*$ such that there is a polynomial-time algorithm to check whether a pair $(x,y)$ is in $R$ and such that there is a polynomial $p$ such that for every pair $(x,y)\in R$, $|y| \leq p(|x|)$.
For a relation $R$, we define the \emph{lanaguage} corresponding to $R$ to be $L_R = \{ x | (x,y)\in R \}$.
The class $\NP$ is the set of all languages $L$ such that $L=L_R$ for some $\NP$-relation $R$.
Given a string $x$, one can \emph{prove} that $x$ is in $L$ by providing the string $y$ such that $(x,y)\in R$.
By the definition of an $\NP$ relation, the string $y$ will be of length at most polynomial in $|x|$ and the membership of $(x,y)$ in $R$ can be verified efficiently.

An \emph{interactive algorithm} is a (potentially randomized) algorithm that, given a current state $s_i$ and the message received $m_i$, outputs the updated state $s_{i+1}$ and the message it sends $m_{i+1}$. An interaction between two interactive algorithms $A$ and $B$ given inputs $a,b$, respectively, proceeds in the natural way: the initial states of $A$ and $B$ are $a,b$, respectively. Then we iterate between each algorithm, computing its new state and message sent based on the previous state and message received from the other party. (For concreteness, we assume that $A$ sends the first message, and thus gets the empty message and its initial state as input to compute it.)
We say that an interactive algorithm  $A$ is \emph{polynomial time} if there are some polynomials $p,q$ such that, letting $n$ be the length of $A$'s input: (1) the total number of rounds $A$ will interact with before halting, as well as the length of each message it sends and its internal state, is at most $p(n)$, (2) $A$ computes every message and updated state using at most $q(n)$ operations.

\begin{definition}\label{def:IPNP} (Interactive proofs for $\NP$ languages) Let $R$ be an $\NP$ relation and $L=L_R$ the corresponding $\NP$ language. An (efficient) \emph{interactive proof system} for $R$ is a pair of interactive randomized polynomial-time algorithms $P,V$ that satisfy the following properties:

\begin{description}
\item[Completeness:] For every $(x,y)\in R$, if $P$ gets $x,y$ as input and $V$ gets $x$ as input, then at the end of the interaction, $V$ will output $1$ with probability $1$.

\item[Soundness:] For every interactive algorithm $P^*$ (even inefficient one) and every $x\not\in L$, if $V$ gets $x$ as input and interacts with $P^*$, then the probability that $V$ outputs $1$ at the end of the interaction is at most $\tfrac{1}{2}$.\footnote{The probability $1/2$  of error can be reduced to $2^{-k}$ by the standard trick of repeating the protocol $k$ times sequentially.}
\end{description}
\end{definition}

The formal definition of Zero-Knowledge proofs uses the notion of a \emph{simulator}. 
The idea is that to demonstrate that a verifier $V$ did not learn anything from an interaction with a prover $P$, we show that $V$ \emph{could have simulated the interaction by itself}. 

\begin{definition}\label{def:ZKP} (Zero-knowledge proofs)  Let $R$ be an $\NP$ relation and $(P,V)$ be an efficient interactive proof system for $R$. 
We say that $(P,V)$ is \emph{zero knowledge} if the following holds. For every polynomial-time interactive algorithm $V^*$ there exists a (non-interactive) randomized polynomial-time algorithm $S^*$ such that: for every $(x,y)\in R$, if we let $s^*$ be the random variable corresponding to $V^*$'s state, then $s^*$ is \emph{computationally indistinguishable} from the random variable $S^*(x)$.
\end{definition}

Let $\{ X_\alpha \}_{\alpha \in I}$ and $\{ Y_\alpha \}_{\alpha \in I}$ be two parameterized collections of random variables, with $I\subseteq \{0,1\}^*$, and  $X_\alpha,Y_\alpha$ supported on strings of length at most polynomial in $|\alpha|$. We say that $\{ X_\alpha \}$ and $\{ Y_\alpha \}$ are \emph{computationally indistinguishable}, denoted by $\{ X_\alpha \} \indc \{ Y_\alpha \}$  if there exists some function $\mu:\N \rightarrow (0,1]$ such that $\mu(n)=n^{-\omega(1)}$ (i.e., $\lim_{n\rightarrow \infty} \tfrac{\log \mu(n)}{\log n} = -\infty$) and such that for every Boolean circuit $C_\alpha$ of size at most $1/\mu(|\alpha|)$, $\left| \Pr[ C_\alpha(X_\alpha)=1 ] - \Pr[ C_\alpha(Y_\alpha)=1 ] \right|<\mu(|\alpha|)$.
We often omit the subscript $\alpha$ when it is clear from context and so use the notation $X_\alpha \indc Y_\alpha$ or simply $X \indc Y$. For example, the condition of Definition~\ref{def:ZKP} is that there is some function $\mu:\N\rightarrow [0,1]$ such that $\mu(n)=n^{-\omega(1)}$ and such that for every Boolean circuit $C$ of size $\leq 1/\mu(|x|)$, $| \Pr[ C(s^*)=1] -\Pr[ C(S^*(x))=1]| < \mu(|x|)$.

\medskip

At the time of Goldwasser et al's result, it was not clear that zero-knowledge proofs are not a mere ``curiosity'' restricted to very specific examples. (Indeed, the \cite{GoldwasserMR85} paper famously took three years before it was accepted for publication.)
In Wigderson's work with Goldreich and Micali~\cite{GoldreichMW86} they showed that this is decidedly not the case.
Rather, they showed that (under standard cryptographic assumptions) \emph{every} language in $\NP$ has a zero-knowledge proof.

One way to define $\NP$ is that it consists of languages $L$ such that the membership of a string $x$ in $L$ can be proven by an efficient mathematical proof (i.e., a piece of text at most polynomially long in $|x|$, which can be verified in polynomial time).
Hence \cite{GoldreichMW86}'s result can be thought of as saying that if a statement $x$ has an efficient proof at all, then it also has an efficient proof in which the verifier learns nothing except that the statement is true.\footnote{This way of phrasing is a bit cheating, since the first instance of ``proof'' corresponds to a standard mathematical proof--- a static deterministically-verifiable piece of text--- while the second one corresponds to the extended notion of \cite{GoldwasserMR85} which includes interaction and randomness. A followup work \cite{Ben-OrGGHKMR88} extended this by showing that everything that can be proven by a randomized interactive proof, also has such a proof which is zero knowledge.}
The way that \cite{GoldreichMW86} proved their theorem was ingenious.
They used the celebrated Cook-Levin Theorem, which is typically considered as a \emph{negative} or \emph{impossibility} result, to show give a positive result.

We now describe their protocol. The Cook-Levin Theorem says that there are concrete problems that are \emph{$\NP$-complete} in the sense that any other problem in $\NP$ reduces to them. 
One example of an $\NP$-complete language is \emph{three coloring} or $3COL$, which is defined as follows.
For a graph $G=(V,E)$, $G \in 3COL$ if and only if there exists $\chi:V \rightarrow \{1,2,3\}$ such that for every $\{u,v\}\in E$, $\chi(u) \neq \chi(v)$.
A classical result is the following:

\begin{theorem}\label{thm:threecolnpc} (Cook-Levin-Karp \cite{Cook71,levin1973,Karp72})  $3COL$ is $\NP$-complete. That is, for every $L\in\NP$ there exists a polynomial-time reduction $r:\{0,1\}^* \rightarrow \{0,1\}^*$ such that for every $x\in \{0,1\}^*$,
\[
x\in L \Leftrightarrow r(x) \in 3COL \;.
\]
Moreover, if $R$ is the $\NP$-relation corresponding to $L$, there exists polynomial-time algorithms $r',r''$ such that
\begin{enumerate}
\item For every $(x,y)\in R$, $r'(x,y)$ is a valid 3-coloring for the graph $r(x)$.

\item For every $x\in \{0,1\}^*$ and $G=r(X)$, if $\chi$ is a valid 3-coloring for $G$ then $(x,r''(G,\chi)) \in R$.
\end{enumerate}
\end{theorem}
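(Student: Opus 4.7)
The plan is to prove the theorem in two classical steps and then observe that each step is local enough to yield the witness translators $r'$ and $r''$. The first step is the Cook--Levin reduction of an arbitrary $L\in\NP$ to 3SAT; the second is Karp's reduction of 3SAT to 3COL. Defining $r$ as the composition of the two instance-reductions and $r',r''$ as the compositions of the two witness-reductions then gives the theorem.

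For the first step, fix an $\NP$-relation $R$ and a polynomial-time verifier $V$ for it. On input $x$, I would build in time $\poly(|x|)$ a Boolean circuit $C_x$ that takes $y$ as input and computes $V(x,y)$; this is possible because a polynomial-time algorithm can be unrolled into a polynomial-size circuit. I then convert $C_x$ into an equisatisfiable 3CNF formula $\varphi_x$ by the Tseitin transformation, introducing one auxiliary variable per gate and a constant number of clauses per gate encoding its truth table, plus a clause forcing the output gate to be $1$. The variables of $\varphi_x$ split into a block representing $y$ and a block representing the internal wires. Witness translation is constructive in both directions: from $y$ with $(x,y)\in R$, simulate $V(x,y)$ and read off every wire value to obtain a satisfying assignment; from a satisfying assignment, read the $y$-block to recover an $\NP$-witness.

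For the second step, given a 3CNF formula $\varphi$ with variables $x_1,\dots,x_n$ and clauses $C_1,\dots,C_m$, construct a graph $G_\varphi$ as follows. Take a base triangle on three vertices labeled $T$, $F$, $B$; any 3-coloring is forced (up to relabeling the colors) to give these three vertices distinct colors, and I fix this labeling by convention. For each variable $x_i$, add two ``literal'' vertices $v_i$ and $\bar v_i$, make them adjacent to each other and both adjacent to $B$; this forces one to be colored $T$ and the other $F$, corresponding to a Boolean assignment. For each clause $C_j$ on three literals, attach a small ``OR-gadget'' connecting the corresponding literal vertices to $T$ and $F$ so that the gadget admits a proper 3-coloring (extending the fixed coloring of $\{T,F,B\}$) if and only if at least one of the three literal vertices is colored $T$. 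Then $\varphi$ is satisfiable iff $G_\varphi$ is 3-colorable, and witnesses translate both ways: a satisfying assignment yields a coloring by coloring each true literal vertex $T$ and each false literal vertex $F$ and extending through each clause gadget, while a 3-coloring yields an assignment by reading off which literal vertex of each variable is colored $T$.

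Composing the two instance-reductions and the two witness-translators gives the maps $r$, $r'$, $r''$ required by the theorem; polynomial-time composition of polynomial-time maps is polynomial-time. The main technical obstacle is the clause gadget: one must exhibit a small graph on a constant number of vertices attached to the three literal vertices together with $T$ and $F$, and verify by a bounded case analysis that (a) if at least one literal vertex is colored $T$, the gadget admits a proper 3-coloring extending the current partial coloring, and (b) if all three literal vertices are colored $F$, no such extension exists. Karp's original six-vertex gadget does the job, but the correctness argument is the most delicate piece of bookkeeping in the whole proof; everything else is routine circuit manipulation or reduction composition.
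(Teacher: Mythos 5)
The paper does not prove Theorem~\ref{thm:threecolnpc}; it only cites the classical works of Cook, Levin, and Karp and remarks that the ``moreover'' clause is the witness-translating (Levin-reduction) refinement already implicit in those works. Your sketch---Cook--Levin via circuit unrolling and Tseitin encoding to 3CNF, then Karp's base-triangle-plus-OR-gadget reduction from 3SAT to 3COL, with $r'$ and $r''$ obtained by composing the witness translations of the two steps---is precisely the standard argument those citations refer to, and it is correct as outlined.
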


The ``moreover'' part of Theorem~\ref{thm:threecolnpc} was already implicit in the classical works of \cite{Cook71,levin1973,Karp72}, and was explicitly discussed by Levin (which is why a triple $(r,r',r'')$ as above is sometimes known as a ``Levin reduction'').
Wigderson and his coauthors used this insight to show that in order to give a zero-knowledge proof system for all $L\in \NP$, it suffices to give a zero-knowledge proof system for $3COL$.
Specifically, for every $\NP$ language $L$, let $r_L, r'_L,r''_L$ the reductions from $L$ to $3COL$ as in Theorem~\ref{thm:threecolnpc}.
Given a zero-knowledge protocol $(P_{3COL},V_{3COL})$ for $3COL$ we can obtain a protocol $(P_L,V_L)$ as follows: 

\begin{itemize}
\item Verifier and prover get $x$ as input, and the prover gets in addition $y$ such that $(x,y)\in R_L$
\item Verifier and prover compute $G=r_L(x)$ and prover computes $\chi = r'_L(x,y)$.
\item Verifier and prover run the protocol  $(P_{3COL},V_{3COL})$  with inputs $G$ and  $(G,\chi)$ respectively.
\end{itemize}

\subsubsection{Computationally secure multiparty computation} \label{sec:MPC-GMW}

Obtaining a zero-knowledge proof system for every problem in $\NP$ is an intellectually satisfying result on its own merits. But does it have further applications? 
In another work of Wigderson with Goldreich and Micali~\cite{GoldreichMW87}, they showed that the answer is a resounding \emph{yes}.
They introduced a general technique to use zero-knowledge proofs as a way to \emph{compile} protocols that achieve a very weak form of security into ones that achieve a very strong one.
Using their technique, \cite{GoldreichMW87} proved what is arguably ``The Fundamental Theorem of Cryptography''--- a protocol for \emph{ secure multiparty computation}.

Secure multiparty computation (MPC) is a vast generalization of many tasks in cryptography, including encryption, electronic voting, voting, privacy-preserving data mining, and more.
Full proofs and even the precise definition of MPC with all its variants is beyond the scope of this section.
Lindell's survey~\cite{Lindell21} gives an excellent introduction, while the books~\cite{GoldreichFOCvol2,CramerMPC15} go into more detail.

The setup is that there are $n$ parties holding private inputs $x_1 \in \mathcal{X}_1,\ldots, x_n \in \mathcal{X}_n$, and they wish to compute a (potentially probabilistic) map
\[
F:\mathcal{X}_1\times \cdots \times{X}_n \rightarrow \mathcal{Y}_1\times \cdots \times \mathcal{Y}_n
\]
such that (roughly speaking) the following properties hold:

\begin{description}
\item[\textbf{Completeness}] Every party $i$ will learn the value $y_i$ where $(y_1,\ldots,y_n) = F(y_1,\ldots,y_n)$.

\item[\textbf{Privacy}] A party $i$ will not learn anything else apart from $y_i$ about the private inputs of the other parties. More generally, every adversary that controls some set $A \subseteq [n]$ of parties will not learn more about the private inputs $\{ x_i \}_{i\not\in A}$ of the other parties beyond what could be derived from the outputs $\{ y_i \}_{i\in A}$.

\item[\textbf{Soundness}] An adversary that controls $A$ as above cannot modify the outputs $y_i$ of $i\not\in A$ beyond its choice of the inputs $\{ x_i \}_{i\in A}$.\footnote{The adversary might also be able to abort the protocol; we ignore this issue of aborts in this section, but it is discussed extensively in the literature. }

\end{description}

Up to considerations of computational efficiency, as well as allowing for interactive communication between parties, we can cast almost any cryptography problem as an instance of MPC. 
For example, the encryption task can be thought of as computing the function $F(x,\empty)=(\empty,x)$ where $\empty$ is the ``empty'' input/output. That is, computing $F$ corresponds to ensuring that the second party learns $x$ and that no one learns anything else.
Conducting an auction could correspond to computing the function $F(x_1,\ldots,x_n) = (y_1,\ldots,y_n)$ where $y_i= 1_{i=\arg\max \{ x_i \}}$. (That is, each party only learns whether or not they were the highest bidder.)

Yao~\cite{Yao86} gave a version of MPC that was restricted in two ways.
First, Yao's protocol was only for two parties.
Second, and more importantly, Yao's protocol assumed a very restricted (and unrealistic) adversary: one that follows precisely the protocol's instructions but tries to extract information from the communication it is involved in.
Such adversaries are known in cryptographic parlance as \emph{passive} or \emph{honest-but-curious}.
Since in general, we have no reason to expect attackers to obey our protocol's instructions, honest-but-curious is not a realistic model for security.

Wigderson's work~\cite{GoldreichMW87} solved both issues.
First, they gave a general MPC protocol for $n$ parties in the hones-but-curious model.
Second, they used zero-knowledge proofs to provide a general transformation or ``compiler'' from protocols that are only secure against honest-but-curious adversaries into ones that are secure against general (also known as \emph{malicious}) adversaries.
Since their work, the general paradigm of using zero-knowledge proofs to ``boost'' security from passive to active adversaries has found numerous uses in theory and practice. 

The details of \cite{GoldreichMW87}'s protocol are complex, and we omit them here.
However, some of the techniques are illustrated in Section~\ref{sec:crypto:it}, which describes a different multiparty secure computation protocol of Wigderson in the \emph{information-theoretic} setting.
In both cases, the general idea is that (1) we can describe a general function $F$ as a \emph{Boolean circuit}, which is a composition of simple \emph{gates}, and (2) once we do so, we can achieve a secure computation protocol by performing a gate-by-gate computation of intermediate values that are ``encrypted'' in the sense that no party (or strict subset of parties) can recover them on its own.

Arguably, it is \cite{GoldreichMW87}'s honest-but-curious to malicious \emph{compiler} which had had the most significant impact.
The idea behind this compiler is simple yet ingenious: Every party in the protocol will use zero-knowledge proofs to prove that it has followed the protocol's instructions.
For example, suppose that at a given step in the protocol, the party $i$ has private input $x_i$, and has received messages $m_1,\ldots,m_t$. 
Suppose that according to the protocol's instructions, the party should compute its next message as 
\begin{equation}
m_{t+1} = \Pi_i(x_i,m_1,\ldots,m_t) \label{eq:nextmessagefunc}
\end{equation} where $\Pi_i$ is some known polynomial-time function that is specified by the protocol.
A simple way for $i$ to convince the other parties that it computed $m_{t+1}$ correctly is to reveal all the inputs used in the computation, including the private input $x_i$. 
But that would, of course, violate $i$'s privacy.
However, the statement ``there exists $x_i$ that satisfies (\ref{eq:nextmessagefunc})'' is an $\NP$ statement.
Hence, it can be proved in \emph{zero knowledge}, and so in a way that does not reveal $x_i$.

While this is the general idea, implementing it involved additional complications, including ensuring consistency (that the same private input $x_i$ is used in all messages), dealing with randomized protocols (that are inherent to cryptography), and more.
Using tools such as \emph{commitment schemes} and \emph{coin-tossing protocols}, \cite{GoldreichMW87} overcame those obstacles and proved that (under standard cryptographic assumptions), there exists a secure multiparty computation protocol for \emph{every} polynomial time (potentially probabilistic) map  $F:\mathcal{X}_1\times \cdots \times{X}_n \rightarrow \mathcal{Y}_1\times \cdots \times \mathcal{Y}_n$.
This is one of the most fundamental theorems in all of cryptography and shows that if we are willing to allow for (polynomial-time) computation and communication overhead, \emph{every} protocol problem can be solved.
Although the road from such a theoretical proof of existence to practical constructions is long and arduous, the results of \cite{GoldreichMW87}, as well as the techniques they introduced, served as guiding lights for theorists and practitioners alike.

\subsection{Information Theoretic Cryptography} \label{sec:crypto:it}

In the previous section we showed how to construct zero-knowledge interactive proofs for all of $\NP$, and how to use them to construct secure multi-party computation protocols.  These works \cite{GoldreichMW86,GoldreichMW87} rely on cryptographic assumptions (such as the existence of a one-way function) and assume that the malicious parties are computationally bounded and cannot break the underlying cryptographic assumption.  In particular, these protocols do not offer {\em everlasting security}.  Namely, even if during the execution of the protocol the parties did not learn any information (beyond the validity of the statement or the output of the computation), if many years later the computers become stronger and manage to break the underlying cryptographic assumption, then at that point information can be leaked. 

This motivates the question of whether we can obtain everlasting security.  In other words, can we construct a zero-knowledge interactive proof and a secure $\MPC$ protocol that do not rely on cryptographic assumptions and provide security against all-powerful adversaries?  Wigderson, together with Ostrovsky \cite{OstrovskyW93}, proved that the answer is no for zero-knowledge interactive proofs. 
Namely, they showed that one-way functions are necessary for constructing zero-knowledge interactive proofs for all of $\NP$ (assuming $\NP\neq \BPP$).  Similarly, Wigderson, together with Ben-Or and Goldwasser \cite{Ben-OrGW88}, showed that cryptographic assumptions are necessary for secure $\MPC$ (with the security guarantees as presented in Section~\ref{sec:crypto:comp}).  

For Wigderson and his coauthors, these lower bounds were nothing but an invitation to surpass them.  In the case of zero-knowledge they managed to do so by changing the model in a clever and interesting way.  Specifically, to obtain information theoretic zero knowledge, Wigderson, together with Ben-Or, Goldwasser and Kilian~\cite{Ben-OrGKW88}, considered a new proof model:  Rather than considering a verifier that is interacting with a single prover, they considered a verifier that is interacting with {\em two non-communicating provers}.   They constructed  information theoretic zero-knowledge 2-prover interactive proofs.  We elaborate on this construction and on the immense impact of the 2-prover model in Section~\ref{sec:crypto:IT:MIP}.  For the case of secure $\MPC$, Wigderson, together with Ben-Or, Goldwasser \cite{Ben-OrGW88} managed to get an information theoretic security by restricting the fraction of parties the adversary is allowed to corrupt to be less than $1/2$ in the honest-but-curious setting and less than $1/3$ in the general malicious setting. They constructed an ingenious $\MPC$ protocol that is information theoretic secure assuming the adversary is restricted as above, and assuming that each pair of parties is connected via a secure channel. 
This result is a true breakthrough and has served as a foundation for numerous subsequent works.  We elaborate on it in Section~\ref{sec:crypto:IT:MPC}.

 \subsubsection{Multi-Prover Zero-Knowledge Interactive Proofs}\label{sec:crypto:IT:MIP}
In the multi-prover interactive proof model, there are multiple provers who can cooperate and communicate between 
them to decide on a common optimal strategy {\em before} the interaction with the verifier starts. But, 
once they start to interact with the verifier, they 
can no longer interact nor can they see the 
messages exchanged between the verifier and the 
other provers. 

\begin{definition}[2-prover interactive proof]
A 2-prover interactive proof for a language $L\in \NP$ consists of two provers $(P_1,P_2)$ and a probabilistic polynomial time verifier $V$ such that the verifier takes as input an instance $x$ and each prover takes as input both $x$ and a corresponding witness $w$.  The verifier samples two queries $(q_1,q_2)$ and sends $q_i$ to prover $P_i$.   Each prover $P_i$ computes an answer  $a_i=P_i(x,w,q_i)$ and sends $a_i$ to the verifier $V$, who then outputs a verdict bit $b=V(x,q_1,q_2,a_1,a_2)$ indicating accept or reject. 
 
 The following two properties are required to hold:
\begin{itemize}
    \item {\bf Completeness.}  For every $x\in L$ and any corresponding witness $w$ s.t.\ $(x,w)\in R_L$, the verifier $V(x)$, who generates queries $q_1$ and $q_2$, accepts the answers $a_1=P_1(x,w,q_1)$ and $a_2=P_2(x,w,q_2)$, with probability~$1$.
    \item {\bf Soundness.} For every $x\notin L$ and any two (computationally unbounded) cheating provers $P^*_1$ and $P^*_2$, the probability that the verifier $V(x)$, who generates queries $q_1$ and $q_2$, accepts the answers $a_1=P^*_1(x,q_1)$ and $a_2=P^*_2(x,q_2)$, is at most $1/2$. 
\end{itemize}
\end{definition}
\begin{theorem}
Every language $L\in \NP$ has a two-prover 
perfect zero-knowledge interactive proof-system. 
\end{theorem}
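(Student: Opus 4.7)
The plan is to mimic the single-prover zero-knowledge protocol for $3COL$ from Section~\ref{sec:comp-ZK}, but to replace the computationally binding, computationally hiding commitment scheme (which required a one-way function) with an \emph{information-theoretic} bit commitment whose security is unlocked by the no-communication assumption between the two provers. Since perfect zero knowledge in the multi-prover setting demands that the simulated and real distributions be identical (rather than computationally indistinguishable as in Definition~\ref{def:ZKP}), the move to an information-theoretic commitment is precisely what makes the stronger guarantee attainable. By the Cook-Levin-Karp reduction (Theorem~\ref{thm:threecolnpc}), it suffices to construct a perfect two-prover zero-knowledge protocol for $3COL$.

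The new primitive is a two-prover bit commitment that is both perfectly hiding and perfectly binding. Its construction uses a uniformly random pad $s \in \{0,1\}^n$ shared by $P_1$ and $P_2$ in advance. To commit to a bit $b$, the verifier draws uniform $r \in \{0,1\}^n$ and sends it to $P_1$, who replies with $a = s \oplus (b \cdot r)$ (interpreting $b \cdot r$ as $r$ if $b=1$ and $0^n$ if $b=0$). To open, $P_2$, who has seen neither $r$ nor $a$, reveals $(b,s)$, and the verifier checks $a = s \oplus (b \cdot r)$. Hiding is perfect because $a$ is uniformly distributed independent of $b$. Binding follows because, to equivocate, any coordinated strategy would force $P_2$ to reveal pads $s_0, s_1$ satisfying $a = s_0$ and $a = s_1 \oplus r$, hence $s_0 \oplus s_1 = r$; since $(s_0,s_1)$ is a function of the provers' pre-shared state only, while $r$ is drawn later by the verifier, this occurs with probability $2^{-n}$.

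Given this primitive, the $3COL$ protocol proceeds exactly as in the single-prover GMW construction. $P_1$ picks a uniformly random permutation $\pi$ of $\{1,2,3\}$ and commits bit-by-bit to $\pi(\chi(v))$ for every $v \in V$, using fresh independent shared pads with $P_2$. The verifier challenges with a uniformly random edge $\{u,v\} \in E$, and $P_1$ (helped by $P_2$) opens the two corresponding commitments; the verifier accepts iff the openings reveal distinct colors in $\{1,2,3\}$. Completeness is immediate. Soundness holds because if $G$ is not $3$-colorable then any committed ``coloring'' must contain at least one monochromatic or illegal edge, so rejection occurs with probability at least $1/|E|$ per round, amplifiable to $2^{-k}$ by sequential repetition. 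Perfect zero knowledge is shown by the standard simulator: it guesses a challenge edge $\{u,v\}^*$ in advance, commits to two random distinct colors on $u,v$ and to $0$ on all other vertices, runs $V^*$, outputs the transcript if $V^*$'s query equals $\{u,v\}^*$, and rewinds otherwise. Because the commitments on the unqueried vertices are \emph{perfectly} hiding, $V^*$'s query distribution is identical in the simulation and the real interaction, the conditional distribution of the opened pair matches a random permutation of an honest coloring on that edge, and the expected number of rewinds is $|E|$.

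The main technical obstacle is the binding analysis against \emph{jointly malicious} $P_1^*, P_2^*$ whose strategies may depend on arbitrary unbounded shared randomness fixed before the protocol begins; the key point is that $r$ is drawn only after the joint strategy is locked in, reducing equivocation to guessing a fresh uniform string. A secondary delicate point is that sequential repetition must preserve perfect zero knowledge against adaptive cheating verifiers across rounds, which one handles by invoking the standard composition lemma for two-prover zero-knowledge proofs; this must be verified carefully in the perfect (non-computational) regime, where one cannot absorb small statistical slack as one could in the single-prover computational setting.
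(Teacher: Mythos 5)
Your proposal follows essentially the same route as the paper: reduce to $3COL$ via the Cook--Levin--Karp reduction, replace the computationally hiding commitment in the single-prover protocol with a two-prover commitment that is perfectly hiding and statistically binding because the non-communicating second prover never sees the verifier's random string $r$, and run the standard rewinding simulator. Your commitment is a minor simplification of the paper's (you drop the shared masking bit $d$ and let $P_1$'s message depend on $b$ directly), but the hiding and binding analyses are identical --- in both cases equivocation forces the second prover to predict $r$, which succeeds with probability $2^{-n}$.
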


\noindent\textbf{\emph{Proof idea.}} Recall that in Section~\ref{sec:comp-ZK} we showed how to construct a computational zero-knowledge interactive proof.  The computational aspect follows from the use of a commitment scheme, whose hiding property is only computational.  

The main new ingredient in the 2-prover zero-knowledge proof is an information theoretic commitment.  Recall that in the zero-knowledge proof presented in Section~\ref{sec:comp-ZK}, in the first step the prover sends a commitment (in the case 3-coloring, this is a commitment to a legal coloring).  To achieve zero-knowledge we need this commitment scheme to be hiding and for soundness this commitment must be binding.  It is known that any commitment scheme that is statistically binding can only be {\em computationally hiding}, and this is precisely where the cryptographic hardness assumption comes in.  

Wigderson et.~al.~\cite{Ben-OrGKW88} get around this barrier by constructing a commitment scheme that is both statistically binding and statistically hiding in a model where there are two committers, who are assumed to be non-communicating.   
In what follows, we present a slightly simplified version of their commitment scheme.  We show how to commit to a single bit, and one can commit to arbitrarily many bits by repetition.

\subsubsection{Bit commitment scheme in the 2-prover setting.}

In what follows we show how two provers $P_1$ and $P_2$ commit to a bit $b\in\{0,1\}$. First, before the protocol begins, they share a random string $w\leftarrow\{0,1\}^n$ and a single random bit $d\leftarrow\{0,1\}$ which are hidden from the verifier~$V$.  $n$ controls the binding failure; taking $n=1$ will guarantee binding with probability $1/2$ whereas taking a general $n$ will guarantee binding with probability $2^{-n}$.
\begin{description}
\item {\em Commitment phase:}
\begin{enumerate}
\item The verifier $V$ chooses a random string $r\leftarrow\{0,1\}^n$, and sends $r$ to $P_1$.  He sends nothing to $P_2$.
\item Prover $P_1$ sends $x=(d\cdot r)\oplus w$ and the prover $P_2$ sends $z=b\oplus d$.

\end{enumerate}
\item {\em Opening phase:}
\begin{enumerate}
    \item Prover $P_2$ sends to the verifier the committed bit $b$ along with~$w$.
    \item The verifier $V$ accepts if and only if $x=((b\oplus z)\cdot r)\oplus w$.
\end{enumerate}
\end{description}

\myparagraph{\emph{Analysis.}}
In what follows we argue that this commitment scheme is information theoretic hiding and is also information theoretical binding (assuming the two provers do not interact).  
\begin{description}
    \item {\em Hiding.}  Note that $(x,z)\equiv U_{n+1}$ where $U_{n+1}$ denotes the uniform distribution over $n+1$ bits:  
    $$
    (x,z)=((d\cdot r)\oplus w, b\oplus d)\equiv (U_n, b\oplus d) \equiv U_{n+1}. 
    $$
    \item {\em Binding.}  We show that any pair of cheating provers can break the binding property with probability at most $2^{-n}$.  To this end, consider any pair of cheating provers $P^*_1$ and $P^*_2$ that send $(x,z)$ to $V$, and that later $P_2$ can open successfully to both $0$ using $w_0$ and to $1$ using $w_1$.  This means that 
    $$
    (z\cdot r)\oplus w_0= (1\oplus z)\cdot r\oplus w_1
    $$
    which in turn implies that
    $$
    w_0\oplus w_1=r,
    $$
    and thus $P^*_2$ can predict $r$, which should happen with probability $2^{-n}$. 
    
    \end{description}

\myparagraph{Information theoretic 2-prover zero-knowledge proof.} Equipped with this information theoretic commitment scheme, the 2-prover zero-knowledge construction is essentially the same as that presented in Section~\ref{sec:comp-ZK} while replacing the computational commitment scheme with the information theoretic one presented above.

\subsection{The Importance of the Multi-Prover Interactive Proof Model} \label{sec:PCP}

As mention above, the original motivation of \cite{Ben-OrGKW88} for considering the model of multi-prover interactive proofs was for constructing  statistical zero-knowledge proofs, a goal that they achieved with utter success.  However, already in their original paper, \cite{Ben-OrGKW88} realized the potential power of such a proof model, and they posed the following open problem: {\em ``It is interesting to consider what is the power of this new model solely with respect to language
recognition." }   Their intuition for why this model is powerful stems from the fact that {\em ``the verifier can check the provers against each other.''}  In particular, the example they give is that of suspects that try to cover up a crime.  It seems hard to cheat in a consistent manner.  

Indeed, Babai et.~al.~\cite{BabaiFL90}  showed that this proof model is extremely powerful.  In particular, they showed that the correctness of any (deterministic or non-deterministic) 
time-$T$ computation can be verified in a 2-prover interactive proof model, where the  communication complexity is only $\polylog(T)$  and where the running time of the verifier is only $\polylog(T)$ plus quasi-linear in the input length.  In particular, a polynomial time verifier can verify the correctness of exponentially long (deterministic or non-deterministic) computations.

The power of this proof model had groundbreaking consequences, leading to the notable PCP theorem \cite{FeigeGLSS91, BabaiFLS91, AroraS92, AroraLMSS92}.  In particular, Fortnow et.~al.~\cite{FortnowRS88} realized that if in the proof for a time-$T$ computation the messages from the verifier to the provers are of size $O(\log T)$ (which they indeed in the construction of ~\cite{BabaiFL90}) then each prover can generate a list consisting of its answers to {\em all} possible verifier messages, and this list will be of size $\poly(T)$.  The lists of the two provers can be thought of as a list of size $\poly(T)$ that can be verified by reading only two blocks of size $\polylog(T)$.  

This simple observation is spectacular.  In the context of non-deterministic computations, it means  that one can take any proof, and convert it into a new proof which is polynomially longer, but which can be verified by randomly reading only poly-logarithmically many bits of the proof.  Indeed this observation created an immense splash in the theory community,  leading to the 
notable PCP theorem which says that for any $\NP$ language $L$, with a corresponding $\NP$ relation $R$, it holds that there is an efficient transformation that given any $(x,w)\in R$ generates a {\em probabilistically checkable proof} (PCP) $\pi$ of size polynomial in the size of $(x,w)$, such that if $x\notin L$ then after (randomly) reading only {\em 3 bits} of $\pi$ the verifier will reject the proof with probability $7/8$, and if $(x,w)\in L$ and the proof $\pi$ was honestly generated then it is accepted by the probabilistic verifier with probability~$1$.

The 2-prover interactive proof model and the PCP theorem had numerous applications to theory of computation and beyond.  They form the foundation for all known  hardness of approximation results, and are at the heart of all known succinct computationally sound proof system (also known as argument systems).  
Succinct arguments have played an important role in cryptography over the last 15 years. This significance is underscored by the hundreds of papers that have been published, the dozens of systems that have been built, and their deployment by numerous blockchain corporations, including prominent ones like Ethereum.

\subsubsection{Information Theoretic Secure Multi-Party Computation}\label{sec:crypto:IT:MPC}

Recall that in \ref{sec:MPC-GMW} we elaborated on the work of \cite{GoldreichMW87}, which showed how a set of $n$ parties can compute any function of their (secret) inputs securely, where the security guarantees {\em computational}, i.e., security holds against {\em computationally bounded} adversaries. 
The focus of this section is on obtaining information theoretic security (also known as perfect security), where security holds even against {\em all powerful} adversaries. Indeed, Wigderson, together with Ben-Or and Goldwasser \cite{Ben-OrGW88}, showed that any function can be computed with {\em perfect security} assuming each pair of parties is connected with a private channel, as long as the malicious adversary controls less than $1/3$ of the parties. If the adversary is restricted to be semi-honest then security is guaranteed as long as the adversary controls a minority of the parties. Moreover, they showed that such a corruption rate is tight, both in the malicious setting and the semi-honest setting.

  This result is truly remarkable.  Indeed, it is a cornerstone in the field of secure multi-party computation, and has paved the way for a lot of subsequent work, making it a highly influential and a groundbreaking contribution.  As is often the case in the literature, we refer to this protocol as the $\BGW$ protocol.

\myparagraph{High-level overview of the $\BGW$ protocol.} 
Suppose a set of $n$ parties wish to securely compute a function $f$ from $n$ inputs to $n$ outputs. Let $C$ be an arithmetic circuit computing $f$. 
On a high level, the $\BGW$ protocol securely emulates the computation of~$C$ in a gate-by-gate manner, starting from the input gates all the way up to the output gates. More specifically, it proceeds with the following steps:
\begin{enumerate}
    \item {\bf Secret sharing of the inputs.}
The protocol begins with the parties sharing their inputs with
each other using a secure secret sharing scheme \cite{Shamir79}.  If the adversary is assumed to be semi-honest, and thus is assumed to corrupt less than $n/2$ parties, then the parties share each bit of their secret input using the Shamir  $t$-out-of-$n$ secret sharing scheme \cite{Shamir79} with $t=\lceil{n/2\rceil}-1$.  Such a scheme ensures that if at most $t$ shares are revealed then no information about the secret is revealed, whereas $t+1$ shares can be efficiently combined to reveal the secret.  We elaborate on Shamir's secret sharing scheme below.  

If the adversary is malicious then one needs to use a {\em verifiable secret
sharing } ($\VSS$) scheme.  The first information theoretically secure $\VSS$ scheme was constructed in the work of \cite{Ben-OrGW88}, and we elaborate on it towards the end of this section.

\item {\bf Gate-by-gate emulation.} The parties then emulate the computation of each gate of
the circuit, computing secret shares of the gate’s output from the secret shares of the gate’s inputs. As we shall see, the Shamir secret sharing scheme, as well as the $\VSS$ scheme of \cite{Ben-OrGW88}, have the property that computing shares corresponding to addition gates  can be done {\em locally}, without any interaction. Thus, the parties only interact in order to emulate the
computation of multiplication gates. This step is the most involved part of the protocol, and we elaborate on it below.

\item {\bf Output reconstruction.} 
 Finally,
the parties reconstruct the value of each output wire from the shares of the that wire.  Namely, if an output wire belongs to party $P_i$ then all the parties send party $P_i$ their shares corresponding to the wire and $P_i$ uses all these shares to reconstruct the output.
\end{enumerate}

We next describe the $\BGW$ protocol in more detail.  We first focus on the semi-honest setting, and only towards the end of the section we discuss the malicious setting. We start by recalling Shamir's secret sharing scheme.

\myparagraph{Shamir Secret Sharing Scheme.} Suppose a party (often referred to as the dealer) wishes to share a secret input among $n$ parties, with the guarantee that any $t+1$ of the parties can use their shares to efficiently reconstruct the secret and yet any $t$ shares do not reveal any information about the secret.  In what follows we assume for simplicity, and without loss of generality, that the secret is a single bit (and thus can be embedded in any finite field).

Let $\mathbb{F}$ be a finite field of size greater than $n$, let $\alpha_1,\ldots,\alpha_n$ arbitrary distinct non-zero elements in $\mathbb{F}$.
In order to share a secret $s\in\mathbb{F}$ a random degree $t$ polynomial $p(x) \in \mathbb{F}[x]$ is chosen such that $p(0)=s$.  The share of party $P_i$ is 
set to be $p(\alpha_i)$. By interpolation, given
any $t + 1$ points it is possible to reconstruct the polynomial $p$ and compute the secret $s = p(0)$. Furthermore, since $p$ is random subject to $p(0)=s$, and thus has $t$ random coefficients, its values at any $t$ or less of the $\alpha_i$’s give no information about the secret~$s$.

\myparagraph{Gate-by-Gate Emulation.} 
We next show how to use the structure of Shamir's secret sharing scheme to do the gate-by-gate emulation.  The first observation is
that addition gates can be computed locally. That is, given shares $p(\alpha_i)$ and $q(\alpha_i)$ of the two input wires corresponding to an addition gate, it holds that $r(\alpha_i) = p(\alpha_i) + q(\alpha_i)$ is a valid sharing of the output wire.
This is due to the fact that the polynomial $r(x)=p(x)+q(x)$ has the same
degree as both $p(x)$ and $q(x)$, and $r(0) = p(0) + q(0)$.
\begin{remark}\label{remark:BGW-linear}
    Note that if the function~$f$ is linear, and thus can be computed using a circuit that has only addition gates, then the gate-by-gate emulation step is completely non-interactive. 
\end{remark}
Regarding multiplication gates, a natural attempt is to compute the product of the shares, namely, party $P_i$ computes $r(\alpha_i) = p(\alpha_i) \cdot q(\alpha_i)$.  Indeed, the constant term is $r(0)=p(0)\cdot q(0)$, as desired.  However, the degree of $r(x)$ becomes $2t$, as opposed to~$t$.  This is a problem since the reconstruction algorithm works as long as
the polynomial used for the sharing is of degree at most~$t$.  We therefore need to reduce the degree of $r$ down to~$t$. To solve this, Wigderson and his coauthors devised a beautiful and elegant degree reduction protocol. This protocol also ensures that 
the new degree $t$ polynomial is a {\em random} degree $t$ polynomial with free coefficient $p(0)\cdot q(0)$.  This is crucial for security, since if the polynomial is not random then $t$ shares may reveal undesired information.

Specifically, the degree reduction protocol first  randomizes the degree-$2t$ polynomial~$r=p\cdot q$ so that it is uniformly distributed, and
then it reduces its degree to~$t$ while maintaining its uniformity. Specifically, a multiplication gate is computed via the following protocol: 
\begin{enumerate}
\item {\bf Local multiplication.} Each party locally multiplies its input shares. Namely, party $P_i$ computes $r(\alpha_i)=p(\alpha_i)\cdot q(\alpha_i)$.

\item  {\bf Randomizing the polynomial~$r$.} Each party $P_i$ generates a random degree $2t$ polynomial $z_i$ such that $z_i(0)=0$, and sends  to each party $P_j$ the share $z_i(\alpha_j)$. 
Then, each party $P_i$ adds all the shares it received and the original share it computed to obtain
$$\sum_{j=1}^n z_j(\alpha_i)+r(\alpha_i).$$  
The resulting shares define a random degree $2t$ polynomial $R$ such that $R(0)=p(0)\cdot q(0)$.

\item {\bf Degree reduction.} The parties run a private protocol where each party $P_i$ converts its share $R(\alpha_i)$ into the share $R_{\trunc}(\alpha_i)$, where $R_{\trunc}$ is simply a truncation of the polynomial $R$ to a degree $t$ polynomial.  Namely, if $R(x)=\sum_{j=0}^{2t} a_ix^i$ then $R_{\trunc}(x)=\sum_{j=0}^{t} a_ix^i$. 

A priori it is not clear how a party $P_i$ can compute $R_{\trunc}(\alpha_i)$ from $R(\alpha_i)$.  Indeed, $P_i$ cannot do this on its own, and needs the help of all other parties $P_j$, who have shares $R(\alpha_j)$. Note that this computation needs to be done in a private manner, which is the problem we are trying to solve in the first place, and thus it seems that we are back to square one! 
However, the magical observation of  \cite{Ben-OrGW88} is that this truncation function, which converts shares of $R$ to shares of $R_{\trunc}$, is {\em linear}.  As mentioned in Remark~\ref{remark:BGW-linear}, linear functions we already know how to compute securely since they do not require any multiplication gates!
Thus, it remains to argue the linearly of this function, which is argued in the claim below.  

\begin{claim}
   There exists a fixed (public) matrix $A\in \mathbb{F}^{n\times n}$ such that for every degree $2t$ polynomial $R:\mathbb{F}\rightarrow \mathbb{F}$ and for every distinct non-zero elements $\alpha_1,\ldots,\alpha_n \in \mathbb{F}$,
   $$
   A\cdot (R(\alpha_1),\ldots,R(\alpha_n))^T =(R_\trunc(\alpha_1),\ldots,R_\trunc(\alpha_n))^T, 
   $$
   where $R_\trunc$ is defined as above.
\end{claim}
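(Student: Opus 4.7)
The plan is to exhibit $A$ as the composition of three linear maps: (i) recover the coefficient vector of $R$ from its $n$ evaluations $(R(\alpha_1),\ldots,R(\alpha_n))$, (ii) truncate the coefficient vector by zeroing out the coefficients of degree $> t$, and (iii) re-evaluate the truncated polynomial at the same $n$ points. Since all three steps are linear in the appropriate coordinates, their composition is given by a single fixed matrix that depends only on $\alpha_1,\ldots,\alpha_n$ (and on $t$), not on $R$.

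Concretely, let $V \in \mathbb{F}^{n \times (2t+1)}$ be the Vandermonde-type matrix with $V_{i,j} = \alpha_i^{j-1}$ for $j = 1,\ldots,2t+1$. For any degree-$2t$ polynomial $R(x) = \sum_{j=0}^{2t} a_j x^j$ we have
\[
(R(\alpha_1),\ldots,R(\alpha_n))^T \;=\; V \cdot (a_0, a_1, \ldots, a_{2t})^T.
\]
Because $\alpha_1,\ldots,\alpha_n$ are distinct and $n \ge 2t+1$ (which is the standard setting of the $\BGW$ protocol, where the adversary controls fewer than $n/3$ parties so $2t+1 \le n$), the matrix $V$ has full column rank, hence admits an explicit left inverse $B \in \mathbb{F}^{(2t+1)\times n}$ (for instance, via the Lagrange interpolation formula) satisfying $B V = I_{2t+1}$. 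Thus $B$ is a fixed matrix recovering the coefficient vector $(a_0,\ldots,a_{2t})^T$ from the evaluation vector.

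Now let $T \in \mathbb{F}^{(t+1)\times(2t+1)}$ be the projection onto the first $t+1$ coordinates, i.e., $T \cdot (a_0,\ldots,a_{2t})^T = (a_0,\ldots,a_t)^T$, and let $V' \in \mathbb{F}^{n \times (t+1)}$ be the Vandermonde matrix $V'_{i,j} = \alpha_i^{j-1}$ for $j=1,\ldots,t+1$, so that $V' \cdot (a_0,\ldots,a_t)^T = (R_\trunc(\alpha_1),\ldots,R_\trunc(\alpha_n))^T$. Setting $A := V' \cdot T \cdot B$ yields a fixed public matrix with the claimed property, since for every degree-$2t$ polynomial $R$,
\[
A\cdot (R(\alpha_1),\ldots,R(\alpha_n))^T = V' T B V \cdot (a_0,\ldots,a_{2t})^T = V' T \cdot (a_0,\ldots,a_{2t})^T = (R_\trunc(\alpha_1),\ldots,R_\trunc(\alpha_n))^T.
\]
There is no real obstacle here; the only substantive point is the hypothesis $n \ge 2t+1$, which ensures that $V$ has a left inverse and hence that recovering the coefficients from the evaluations is a well-defined linear operation. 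This is precisely what makes the degree-reduction step amenable to the addition-only (linear) secure computation subprotocol already available from the previous remark.
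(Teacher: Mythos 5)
Your proof is correct and follows essentially the same route as the paper: interpolate the evaluations back to coefficients, project away the high-degree coefficients, and re-evaluate, with $A$ the composition of these three linear maps. The only cosmetic difference is that the paper pads the coefficient vector with zeros to length $n$ and writes $A = V_{\vec{\alpha}}\, P\, V_{\vec{\alpha}}^{-1}$ using the square $n\times n$ Vandermonde matrix and its inverse, whereas you use rectangular Vandermonde matrices and a left inverse, both of which hinge on the same hypothesis $n \ge 2t+1$ that you correctly identify.
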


\begin{proof} Let $\vec{R}=(R_0,R_1,\ldots,R_{2t},0,\ldots,0)\in \mathbb{F}^n$ denote the vector of coefficients of the polynomial~$R$.
Let $V_{\vec{\alpha}}$ be the Vondermonde matrix corresponding to $\vec{\alpha}=(\alpha_1,\ldots,\alpha_n)$. Namely, for every $i,j\in [n]$,  $V_{\vec{\alpha}}(i,j)=\alpha_{i}^{j-1}$. Note that 
$$
V_{\vec{\alpha}}\cdot \vec{R}^T =  (R(\alpha_1),\ldots,R(\alpha_n))^T
$$
It is well known that the Vondermonde matrix $V_\alpha$ is invertible  if $\alpha_1,\ldots,\alpha_n \in \mathbb{F}$ are all distinct and non-zero.  Therefore,
\begin{equation}\label{eqn:BGW-R}
\vec{R}^T=V^{-1}_{\vec{\alpha}}\cdot (R(\alpha_1),\ldots,R(\alpha_n))^T.
\end{equation}
Let $P$ be the linear projection function that takes as input a vector $(a_1,\ldots,a_n)\in\mathbb{F}^n$ and outputs $(a_1,\ldots,a_{t+1},0\ldots,0)\in\mathbb{F}^n$.  Namely, in matrix representation, $P(i,j)=1$ if $i=j$ and both are in $\{1,\ldots,t+1\}$, and $P(i,j)=0$ otherwise.
Thus, denoting by $\vec{R}_\trunc$ the $t+1$ coefficients of the degree~$t$ polynomial $R_\trunc$ followed by zeros, i.e.,  $\vec{R}_\trunc=(R_0,R_1\ldots,R_{t},0,\ldots,0)$, by the definition of $P_t$ and by Equation~\eqref{eqn:BGW-R}
$$
\vec{R}_\trunc^T=P\cdot V^{-1}_{\vec{\alpha}}\cdot (R(\alpha_1),\ldots,R(\alpha_n))^T.
$$
This, together with the definition of $V_{\vec{\alpha}}$, implies that 
$$
(R_\trunc(\alpha_1),\ldots,R_\trunc(\alpha_n))^T= V_{\vec{\alpha}}\cdot P\cdot V^{-1}_{\vec{\alpha}}\cdot (R(\alpha_1),\ldots,R(\alpha_n))^T.
$$
We thus conclude the proof of this claim by setting $A=V_{\vec{\alpha}}\cdot P\cdot V^{-1}_{\vec{\alpha}}$.
\end{proof}

\end{enumerate}

This concludes the description of the $\BGW$ protocol in the honest-but-curious setting, where the adversary is assumed to follow the protocol.  We next show how Wigderson and his co-authors modify this protocol to obtain security against a {\em malicious} adversary who controls less than $1/3$ of the parties and may deviate arbitrarily from the protocol.  
The main new tool is a \emph{verifiable secret sharing} scheme.

\subsubsection{Verifiable Secret Sharing}

A verifiable secret sharing ($\VSS$) scheme, originally defined by Chor et al. \cite{CGMA85}, is a secret sharing scheme that is secure even in the presence of {\em malicious} adversaries. Recall that a secret sharing scheme (with threshold~$t$)
is made up of two stages:  A sharing stage and a reconstruction stage.  In the sharing stage, the dealer shares a secret among the $n$ parties so that 
any $t + 1$ parties can efficiently reconstruct the secret from their shares, while any subset of $t$ or fewer shares reveal no information about the secret. In the reconstruction stage, a set of $t + 1$ or more parties reconstruct the secret. If we consider Shamir’s secret-sharing scheme, much can go
wrong if the dealer or some of the parties are malicious. Recall, that to share a secret $s$, the dealer is supposed to choose a random
polynomial $q$ of degree $t$ with $q(0) = s$ and then hand each party $P_i$
its share $q(\alpha_i)$. However, a malicious dealer can choose a polynomial of higher degree, and as a result different subsets of $t + 1$ parties may reconstruct different values. Thus, the shared
value is not well defined. In addition,  in the reconstruction phase a corrupted party can provide an arbitrary malicious value instead of the prescribed value $q(\alpha_i)$, thus effectively changing the value
of the reconstructed secret.

A verifiable secret sharing scheme is aimed at solving precisely these issues.  Chor et al. \cite{CGMA85} constructed a $\VSS$ scheme with {\em computational} security, i.e., assuming the malicious parties are computationally bounded (and assuming the hardness of some computational problem). 
Wigerson with his coauthors \cite{Ben-OrGW88} constructed an  information theoretically secure $\VSS$ scheme, which ensures security against {\em all powerful} adversaries, assuming that at most  $t < n/3$ of the parties are corrupted.  More specifically, the security guarantee is that the shares received by the honest parties are guaranteed to be $q(\alpha_i)$ for a well-defined degree-$t$ polynomial $q$, even if
the dealer is corrupted.  To achieve this guarantee, the secret sharing stage is followed by a verification stage which is an interactive stage where the parties ``correct" their shares if needed.  This correction protocol, which we elaborate on below, is extremely beautiful!   

Given this security guarantee it is possible to use techniques from the field of error-correcting codes in order
to reconstruct $q$ (and thus $q(0) = s$) as long as $n-t$ correct shares are provided and $t < n/3$. This
is due to the fact that Shamir’s secret-sharing scheme when looked at in this context is exactly a
Reed-Solomon code.

\myparagraph{$\VSS$ via bivariate polynomials.}  

The $\VSS$ protocol of \cite{Ben-OrGW88} consists of three stages.
\begin{enumerate}
    \item {\bf Secret sharing stage.}  Loosely speaking, in this stage the dealer embeds the Shamir's secret sharing polynomial in a 
{\em bivariate} polynomial $S(x, y)$. Specifically, to share a secret $s\in\{0,1\}$ the dealer chooses a random bivariate polynomial $S(x, y)$ of degree $t$ in each variable, such that 
$S(0, 0) = s$.  Note that $q(z)=S(0,z)$ is a polynomial corresponding to the Shamir secret sharing scheme and the values $q(\alpha_1),...,q(\alpha_n)$ are the Shamir shares embedded into
$S(x, y)$. Similarly, $p(z)=S(z,0)$ is a polynomial corresponding to the Shamir secret sharing scheme and the values $p(\alpha_1),...,p(\alpha_n)$ are the Shamir shares embedded into $S(x, y)$. The dealer sends each party $P_i$ {\em two} univariate polynomials as shares;
these polynomials are $f_i(x) = S(x, \alpha_i)$ and $g_i(y) = S(\alpha_i,y)$.  The Shamir-share of party $P_i$ is $f_i(0)=S(0,\alpha_i)$, and the polynomials $f_i$ and $g_i$ are given only for the sake of verification.

\item {\bf Verification stage.}  At this point the parties engage in an interactive verification protocol. First, each party $P_i$ sends each party $P_j$ the value $s_{i,j}= f_i(\alpha_j)$.  Note that if the dealer is honest, then the elements $s_{1,j},\ldots,s_{n,j}$ sent to party $P_j$ should be the value of the polynomial $g_j$ on  $\alpha_1,\ldots,\alpha_n$, respectively. Each party $P_j$ checks that indeed for every $i\in [n]$, $s_{i,j}=g_j(\alpha_i)$.  If this is not the case, it broadcasts a request for the dealer to reveal $s_{i,j}$.  If $P_j$ has more than $t$ requests then the dealer is clearly malicious, in which case $P_j$ broadcasts a ``complaint" thereby asking the dealer to reveal his private shares $f_i$ and $g_i$.  Finally, after the dealer broadcasts all the information requested, each party $P_i$ checks that all the public and private information he received from the dealer are consistent. If $P_i$ finds any inconsistencies he broadcasts a complaint thereby asking the dealer to reveal his private shares. If at this point more than $t$ parties have asked to make their shares public, the dealer is clearly malicious and all the parties pick the default zero polynomial as
the dealer’s polynomial. Likewise, if the dealer did not answer all the broadcasted requests he is declared malicious. On the other hand, if $t$ or less parties have complained then there are at least $t + 1$ honest parties who are satisfied (this follows from the fact that $t<n/3$).  The shares of these parties uniquely define a polynomial $S(x,y)$ and this polynomial conforms with all the information that was made public (otherwise one of these honest parties would have complained). In this case the complaining parties take the public information as their
share.

\item {\bf Reconstruction stage.}  At this point each party sends its updated share $f_i(0)$, and the secret $s$ is reconstructed by running the Reed-Solomon decoding algorithm. 
\end{enumerate}

Note that if the dealer is honest then no information about the shares of any honest party is revealed during the verification process. If however
the dealer is malicious, we do not need to protect the privacy of his information, and
the verification procedure ensures that all the honest parties values lie on some
polynomial of degree~$t$.

\myparagraph{Gate-by-Gate Emulation in the Malicious Setting}

As was done in the honest-but-curious setting, addition gates are computed locally by adding the corresponding shares, whereas computing multiplication gates is significantly more involved. 
Recall that in the honest-but-curious setting the multiplication step was done by multiplying the shares locally, thus obtaining shares of a degree-$2t$ polynomial.  Then the parties rerandomized this polynomial and then truncated it.  In the malicious setting, the rerandomization step needs to be made secure against malicious adversaries.  In addition, to apply the degree reduction step, we need to argue that the truncation is a linear function, but for this 
we must make sure that the all the parties use as input to this function their correct
point on the product polynomial $h(x) = f(x)g(x)$. To guarantee that this is indeed
the case, error correcting codes are used yet again.

\section{Pseudorandomness} \label{sec:derand}

\def\now{0}
\def\solutions{0}

\spnewtheorem{fact}[theorem]{Fact}{\bfseries}{\itshape}
\spnewtheorem{openprob}[theorem]{Open Problem}{\bfseries}{\itshape}
\spnewtheorem{compprob}[theorem]{Computational Problem}{\bfseries}{\itshape}

\ifnum\now=1
\newtheorem{problem}{Problem}[chapter] %
\newtheorem{algorithm}[theorem]{Algorithm}
\newtheorem{construction}[theorem]{Construction}
\newtheorem{example}[theorem]{Example}
\newtheorem{remark}[theorem]{Remark}
\else
\spnewtheorem{remk}[theorem]{Remark}{\bfseries}{\itshape}
\spnewtheorem{exmp}[theorem]{Example}{\bfseries}{\itshape}
\spnewtheorem{algorithm}[theorem]{Algorithm}{\bfseries}{\rmfamily}
\spnewtheorem{construction}[theorem]{Construction}{\bfseries}{\itshape}

\fi

\newcommand{\alginput}[1]{\ \\ \noindent Input: #1\\ \ \\}

\newenvironment{algenumerate}{\vspace{-5ex}\begin{enumerate}}{\end{enumerate}}

\def\FullBox{\hbox{\vrule width 8pt height 8pt depth 0pt}}

\def\qed{\ifmmode\qquad\FullBox\else{\unskip\nobreak\hfil
\penalty50\hskip1em\null\nobreak\hfil\FullBox
\parfillskip=0pt\finalhyphendemerits=0\endgraf}\fi}

\def\qedsketch{\ifmmode\Box\else{\unskip\nobreak\hfil
\penalty50\hskip1em\null\nobreak\hfil$\Box$
\parfillskip=0pt\finalhyphendemerits=0\endgraf}\fi}

\newenvironment{proofof}[1]{\begin{proof}[of #1]}{\end{proof}}

\newenvironment{proofsketch}{\begin{proof}[sketch]}{\end{proof}}

\ifnum\solutions=0
\excludecomment{solution}
\fi

\newenvironment{claimproof}{\begin{quotation} \noindent
{\bf Proof of claim:~~}}{\qedsketch\end{quotation}}

\newcommand{\eqdef}{\mathbin{\stackrel{\rm def}{=}}}
\newcommand{\F}{{\mathbb F}}
\newcommand{\Efield}{{\mathbb E}}
\newcommand{\Q}{{\mathbb Q}}
\newcommand{\suchthat}{{\;\; : \;\;}}
\newcommand{\pr}[2][]{\Pr_{#1}\left[#2\right]}
\newcommand{\deffont}[1]{{\em #1}}
\newcommand{\getsr}{\mathbin{\stackrel{\mbox{\tiny R}}{\gets}}}
\newcommand{\Exp}{\mathop{\mathrm{E}\/}\displaylimits}
\newcommand{\Var}{\mathop{\mathrm{Var}\/}\displaylimits}
\newcommand{\xor}{\oplus}

\def\textprob#1{\textmd{\textsc{#1}}}
\newcommand{\mathprob}[1]{\mbox{\textmd{\textsc{#1}}}}
\newcommand{\QuadRes}{\textprob{Quadratic Residuosity}}
\newcommand{\Sampling}{\textprob{Sampling}}
\newcommand{\OracleApprox}[1]{\textprob{$[+#1]$-Approx Oracle Average}}
\newcommand{\CircApprox}[1]{\textprob{$[+#1]$-Approx Circuit Average}}
\newcommand{\CircRelApprox}[1]{\textprob{$[\times (1+#1)]$-Approx \#CSAT}}
\newcommand{\DNFRelApprox}[1]{\textprob{$[\times (1+#1)]$-Approx \#DNF}}
\newcommand{\CA}[1]{\mathprob{CA}^{#1}}
\newcommand{\CRA}[1]{\mathprob{CSAT}^{#1}}

\newcommand{\GraphNoniso}{\textprob{Graph Nonisomorphism}}
\newcommand{\GNI}{\mathprob{GNI}}
\newcommand{\GraphIso}{\textprob{Graph Isomorphism}}
\newcommand{\GI}{\mathprob{GI}}
\newcommand{\MinCut}{\textprob{MinCut}}
\newcommand{\MaxCut}{\textprob{MaxCut}}
\newcommand{\MaxSAT}{\textprob{MaxSAT}}
\newcommand{\MaxFlow}{\textprob{Max Flow}}
\newcommand{\IdentityTest}{\textprob{Polynomial Identity Testing}}
\newcommand{\ArithCircIdentityTest}{\textprob{Arithmetic Circuit Identity Testing}}
\newcommand{\GraphConn}{\textprob{Graph Connectivity}}
\newcommand{\Primality}{\textprob{Primality Testing}}
\newcommand{\USTConn}{\textprob{Undirected S-T Connectivity}}
\newcommand{\STConn}{\textprob{S-T Connectivity}}
\newcommand{\PerfMatch}{\textprob{Perfect Matching}}
\newcommand{\PM}{\mathprob{PM}}
\newcommand{\ACIT}{\mathprob{ACIT}}
\newcommand{\LargeCut}{\textprob{Large Cut}}
\newcommand{\Determinant}{\textprob{Determinant}}
\newcommand{\MatrixMult}{\textprob{Matrix Multiplication}}
\newcommand{\Par}{\mathprob{Par}}
\newcommand{\Majority}{\textprob{Majority}}
\newcommand{\Permanent}{\textprob{Permanent}}
\newcommand{\PolynomialFactorization}{\textprob{Polynomial Factorization}}
\newcommand{\ApproxCountMatchings}{\textprob{Approximately Counting Matchings}}

\newcommand{\class}[1]{\mathsf{#1}}
\newcommand{\ioclass}[1]{\class{i.o.\mbox{-}#1}}
\newcommand{\coclass}[1]{\class{co\mbox{-}#1}} %
\newcommand{\RP}{\class{RP}}
\newcommand{\coRP}{\coclass{RP}}
\newcommand{\ZPP}{\class{ZPP}}
\newcommand{\RNC}{\class{RNC}}
\newcommand{\RL}{\class{RL}}
\renewcommand{\L}{\class{L}}
\newcommand{\coRL}{\coclass{RL}}
\newcommand{\BPL}{\class{BPL}}
\newcommand{\NL}{\class{NL}}
\newcommand{\NSPACE}{\class{NSPACE}}
\newcommand{\SPACE}{\class{SPACE}}
\newcommand{\DSPACE}{\class{DSPACE}}
\newcommand{\IP}{\class{IP}}
\newcommand{\AM}{\class{AM}}
\newcommand{\MA}{\class{MA}}
\renewcommand{\P}{\class{P}}
\newcommand\prBPP{\class{prBPP}}
\newcommand\prRP{\class{prRP}}
\newcommand\prAM{\class{prAM}}
\newcommand\prSUBEXP{\class{prSUBEXP}}
\newcommand\prP{\class{prP}}
\newcommand{\Ppoly}{\class{P/poly}}
\newcommand{\ioPpoly}{\ioclass{P/poly}}
\newcommand{\DTIME}{\class{DTIME}}
\newcommand{\TIME}{\class{TIME}}
\newcommand{\ETIME}{\class{E}}
\newcommand{\NE}{\class{NE}}
\newcommand{\coNE}{\coclass{NE}}
\newcommand{\BPTIME}{\class{BPTIME}}
\newcommand{\EXP}{\class{EXP}}
\newcommand{\SUBEXP}{\class{SUBEXP}}
\newcommand{\BPSUBEXP}{\class{BPSUBEXP}}
\newcommand{\ioSUBEXP}{\ioclass{SUBEXP}}
\newcommand{\qP}{\class{\tilde{P}}}
\newcommand{\PH}{\class{PH}}
\newcommand{\NC}{\class{NC}}
\newcommand{\NCO}{\class{NC^0}}
\newcommand{\PSPACE}{\class{PSPACE}}
\newcommand{\quasiP}{\class{\tilde{P}}}
\newcommand{\ACO}{\class{AC^0}}
\newcommand{\ACOPAR}{\class{AC^0[2]}}
\newcommand{\ACC}{\class{ACC}}
\newcommand{\BPACO}{\class{BPAC^0}}
\newcommand{\qACO}{\class{\widetilde{AC}^0}}
\newcommand{\PP}{\class{PP}}
\newcommand{\NEXP}{\class{NEXP}}
\newcommand{\NSUBEXP}{\class{NSUBEXP}}
\newcommand{\MAEXP}{\class{MAEXP}}
\newcommand{\coNP}{\coclass{NP}}
\newcommand{\ShP}{\class{\#P}}

\newcommand{\Diam}{\mathrm{Diam}}
\newcommand{\cut}{\mathrm{cut}}
\newcommand{\CP}{\Col}
\newcommand{\Supp}{\mathrm{Supp}}
\newcommand{\indeg}{\mathrm{indeg}}
\newcommand{\outdeg}{\mathrm{outdeg}}
\newcommand{\hit}{\mathrm{hit}}

\newcommand{\fail}{\mathtt{fail}}
\newcommand{\halt}{\mathtt{halt}}

\newcommand{\HFam}{\mathcal{H}}
\newcommand{\FFam}{\mathcal{F}}
\newcommand{\GFam}{\mathcal{G}}
\newcommand{\Dom}{\mathcal{D}}
\newcommand{\Rng}{\mathcal{R}}

\newcommand{\Hall}{\mathrm{H}}
\newcommand{\Hmin}{\mathrm{H}_{\infty}}
\newcommand{\avgHmin}{\mathrm{H}_{\infty}}
\newcommand{\HRen}{\mathrm{H}_2}
\newcommand{\HSha}{\mathrm{H}_{\mathit{Sh}}}
\newcommand{\Ext}{\mathrm{Ext}}
\newcommand{\Disp}{\mathrm{Disp}}
\newcommand{\Con}{\mathrm{Con}}
\newcommand{\Samp}{\mathrm{Samp}}
\newcommand{\Code}{\mathcal{C}}
\newcommand{\IsPath}{\textrm{IsPath}}
\newcommand{\hatn}{{\hat{n}}}
\newcommand{\hatf}{{\hat{f}}}
\newcommand{\hatell}{{\hat{\ell}}}
\newcommand{\hatL}{{\hat{L}}}
\newcommand{\hatA}{{\hat{A}}}
\newcommand{\Ham}{d_H}
\newcommand{\oLIST}{\mathrm{LIST}}
\newcommand{\cLIST}{\overline{\mathrm{LIST}}}
\newcommand{\LIST}{\mathrm{LIST}}
\newcommand{\Agr}{\mathrm{agr}}
\newcommand{\ECC}{\mathrm{ECC}}
\newcommand{\Amp}{\mathrm{Amp}}
\newcommand{\Red}{\mathrm{Red}}
\newcommand{\Contra}{{\Rightarrow\Leftarrow}}

\newcommand{\cond}[3]{#1\rightarrow_{#3}#2}

\newcommand{\zigzag}{\mathbin{\raisebox{.2ex}{
      \hspace{-.4em}$\bigcirc$\hspace{-.75em}{\rm
      z}\hspace{.15em}}}}
\newcommand{\replacement}{\mathbin{\raisebox{.2ex}{
      \hspace{-.4em}$\bigcirc$\hspace{-.75em}{\rm
      r}\hspace{.15em}}}}

\newcommand{\para}{\parallel}

\newcommand{\eps}{\varepsilon}
\newcommand{\ci} {\stackrel{\rm{c}}{\equiv}}
\newcommand{\Time}{\mathrm{time}}
\newcommand{\Space}{\mathrm{space}}
\newcommand{\Adv}{\mathrm{adv}}
\newcommand{\inner}[2]{\left[#1,#2\right]}
\newcommand{\tDec}{t_{\mathrm{Dec}}}
\renewcommand{\bbH}{\mathbb{H}}
\newcommand{\card}[1]{|#1|}

\newcommand{\GL}{\operatorname{GL}}
\newcommand{\maj}{\operatornamewithlimits{maj}}

\newcommand{\tO}{\tilde{O}}
\newcommand{\tTheta}{\tilde{\Theta}}
\newcommand{\tOmega}{\tilde{\Omega}}

\newcommand{\Bip}{\mathrm{Bip}}

\newcommand{\liftB}{\tilde{B}}
\newcommand{\liftA}{\hat{A}}

\newcommand{\Class}{\mathcal{C}}
\newcommand{\IIDBits}{\mathrm{IIDBits}}
\newcommand{\IndBits}{\mathrm{IndBits}}
\newcommand{\UnpredBits}{\mathrm{UnpredBits}}

\newcommand{\Kmax}{\mathit{K_{max}}}
\newcommand{\Mon}{\mathrm{Mon}}
\newcommand{\kmax}{\mathit{k_{max}}}

\newcommand{\Uni}{\mathcal{U}}

\renewcommand{\Im}{\mathrm{Im}}

\newcommand{\uniform}[1]{U_{#1}}

\newcommand{\nodes}{\#\mathrm{nodes}}

\newcommand{\Bin}{\mathrm{Bin}}
\newcommand{\PBin}{\mathrm{PBin}}

\newcommand{\Lovasz}{{Lov{\'a}sz}}
\newcommand{\Erdos}{{Erd{\H o}s}}
\newcommand{\Renyi}{R\'enyi}
\newcommand{\Hastad}{{H{\aa}stad}}

\newcommand{\INW}{\mathrm{INW}}
\newcommand{\fH}{\mathcal{H}}
\newcommand{\G}{\mathbf{G}}
\newcommand{\Gt}{\widetilde{\mathbf{G}}}
\newcommand{\bH}{\mathbf{H}}

\newcommand{\gen}{\INW}
\newcommand{\circleds}{\mathbin{\circledS}}

A major theme in Wigderson's work is to understand the power of randomness in efficient computation, addressing questions such as:
\begin{itemize}
\item Can randomized algorithms solve problems much more efficiently than deterministic algorithms, or can every randomized algorithm be converted into a deterministic algorithm with only a small loss in efficiency?
\item Can we give explicit, deterministic constructions of combinatorial objects whose existence is proven via the Probabilistic Method?
\item Can we convert weak random sources, which may have biases and correlations, into high-quality random bits that can be used for running randomized algorithms or protocols?
\end{itemize}
In this section, we will survey the answers that Wigderson's work has given to these fundamental questions, and the close connections between the questions that he has helped uncover and exploit.  For more details, we recommend the broader surveys of pseudorandomness~\cite{Goldreich10-prgprimer,Vadhan12-fnttcs,HatamiHo23}.\footnote{Some of our text is taken verbatim from \cite{Vadhan12-fnttcs}, with permission.}

\subsection{Hardness vs. Randomness}

\subsubsection{Motivation}

In the 1970's and 1980's, randomization was discovered to be an extremely powerful tool in theoretical computer science.  By allowing algorithms to ``toss coins,'' we could potentially solve problems much more efficiently than before.  In particular, polynomial-time randomized algorithms were found for a number of problems that were only known to have exponential-time deterministic algorithms, such as \PolynomialFactorization\ (over finite fields)~\cite{Berlekamp70}, \Primality~\cite{SolovaySt77,Miller76,Rabin80}, \IdentityTest~\cite{DeMilloLi78,Schwartz80,Zippel79}, and \ApproxCountMatchings\ in graphs~\cite{JerrumSi89}.    However, it was unclear whether this apparent exponential savings provided by randomization was real, or just a reflection of our ignorance: could there be polynomial-time deterministic algorithms for these problems that we just hadn't discovered or proven correct yet?  For example, already Miller~\cite{Miller76} gave a deterministic polynomial-time algorithm for \Primality\ based on the Extended Riemann Hypothesis, and three decades later, Agrawal, Kayal, and Saxena~\cite{AgrawalKaSa04} gave an unconditional deterministic polynomial-time algorithm.  Thus, the following problem remained open:

\begin{openprob}  \label{openprob:BPPneqP} Are there problems that can be solved by randomized algorithms in polynomial time that cannot be solved by deterministic algorithms in polynomial time? \end{openprob}

We now formalize this question using complexity classes that capture the power of efficient deterministic and randomized algorithms. As is common in complexity theory,
these classes are defined in terms of decision problems, where instances are given by binary strings $x\in \zo^* \eqdef \bigcup_{n=0}^\infty \zo^n$ and the set of instances where the answer should be ``yes'' is specified by a
\deffont{language} $L\subseteq \zo^*$, or equivalently a boolean function $f : \zo^*\rightarrow \zo$.
However, the definitions generalize
in natural ways to other types of computational problems, such as computing
functions or solving search problems.

Recall that we say a deterministic algorithm $A$ {\em runs in time}  $t : \N\rightarrow
\N$ if $A$ takes at most $t(|x|)$ steps on every input $x$ (where $|x|$ is the length of $x$ in bits), and it
{\em runs in polynomial time} if it runs time $t(n)=O(n^c)$ for a
constant $c$.  Polynomial time is a theoretical approximation to
feasible computation, with the advantage that it is robust to
reasonable changes in the model of computation and representation of
the inputs.

\begin{definition} $\P$ is the class of languages $L$ for which there exists a
deterministic polynomial-time algorithm $A$ such that for all instances $x$,
\begin{itemize}
\item $x\in L \Rightarrow \mbox{$A(x)$ accepts}$, and
\item $x\notin L \Rightarrow \mbox{$A(x)$ rejects}$.
\end{itemize}
\end{definition}

\begin{definition} $\BPP$ is the class of languages $L$ for which there exists a
probabilistic polynomial-time algorithm $A$ such that
for all instances $x$,
\begin{itemize}
\item  $x \in L \Rightarrow \Pr[\mbox{$A(x)$ accepts}] \geq 2/3$, and

\item  $x \not\in L \Rightarrow \Pr[\mbox{$A(x)$ accepts}] \leq 1/3$,
\end{itemize}
where the probabilities are taken over the random coin tosses of the algorithm $A$.
\end{definition}

The choice of the thresholds $\ell=1/3$ and $u=2/3$ is arbitrary, and any two distinct constants $\ell<u$ yields an equivalent definition, since the error probability of a randomized algorithm can be made arbitrarily small by running the algorithm several times and accepting if at least an $(\ell+u)/2$ fraction of the executions accept.

The cumbersome notation $\BPP$ stands for ``bounded-error
probabilistic polynomial-time,'' due to the unfortunate fact that
$\PP$ (``probabilistic polynomial-time'') refers to the definition
where the inputs in $L$ are accepted with probability
greater than 1/2 and inputs not in $L$ are accepted with
probability at most 1/2. Despite its name, $\PP$ is not a reasonable
model for randomized algorithms, as it takes exponentially many
repetitions to reduce the error probability. $\BPP$ is considered
the standard complexity class associated with probabilistic
polynomial-time algorithms, and thus a driving question of Wigderson's work surveyed in this section is the following formalization of Open Problem~\ref{openprob:BPPneqP} (negated).

\begin{openprob} \label{openprob:BPPequalsP} Does $\BPP=\P$?   \end{openprob}

More generally, we are interested in quantifying how much savings randomization provides. One
way of doing this is to find the smallest possible upper
bound on the deterministic time complexity of languages in $\BPP$.
For example, we would like to know which of the following complexity classes
contain $\BPP$:

\begin{definition}[Deterministic Time Classes]
\label{def:DTIME}
\vspace{-4ex}
$$
\begin{array}{rcll}
\DTIME(t(n)) &=& \multicolumn{2}{l}{\{L : \mbox{$L$ can be decided deterministically in
time $O(t(n))$}\}} \\
\P & = & \cup_c\DTIME(n^c) & \mbox{(``polynomial time'')}\\
\quasiP &=&  \cup_c\DTIME(2^{(\log{n})^c}) & \mbox{(``quasipolynomial time'')}\\
\SUBEXP &=& \cap_{\varepsilon}\DTIME(2^{n^\varepsilon}) & \mbox{(``subexponential time'')}\\
\EXP & = & \cup_c \DTIME(2^{n^c}) & \mbox{(``exponential time''),}
\end{array}
$$
where the unions and intersections are taken over all $c,\eps\in (0,\infty)$
\end{definition}

As a baseline, we can always remove randomization with at most an exponential slowdown:

\begin{proposition} \label{prop:BPPinEXP}
$\BPP\subseteq \EXP.$
\end{proposition}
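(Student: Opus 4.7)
The plan is to derandomize any $\BPP$ algorithm by brute-force enumeration of its random coins, incurring an exponential blowup in time but yielding a deterministic decision procedure that fits into $\EXP$.

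First, I would fix a language $L \in \BPP$ and let $A$ be a probabilistic polynomial-time algorithm witnessing membership, with running time bounded by some polynomial $p(n) = O(n^c)$ on inputs of length $n$. A key observation is that in $p(n)$ steps, $A$ can read at most $p(n)$ random bits, so without loss of generality we may view $A$ as a deterministic algorithm $A(x;r)$ taking an input $x \in \zo^n$ and a random string $r \in \zo^{p(n)}$, such that $\Pr_{r}[A(x;r) \text{ accepts}] \geq 2/3$ when $x \in L$ and $\leq 1/3$ when $x \notin L$.

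Next, I would construct a deterministic algorithm $A'$ that, on input $x$, iterates over all $2^{p(n)}$ strings $r \in \zo^{p(n)}$, computes $A(x;r)$ for each, and accepts iff strictly more than half of these computations accept (equivalently, takes a majority vote). Correctness follows immediately from the BPP promise: in the yes case the fraction of accepting $r$'s is at least $2/3 > 1/2$, and in the no case it is at most $1/3 < 1/2$.

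Finally, I would bound the running time of $A'$. Each simulation of $A(x;r)$ takes time at most $p(n)$, and there are $2^{p(n)}$ choices of $r$, giving total time $2^{p(n)} \cdot p(n) \cdot \poly(n) = 2^{O(n^c)}$, which lies in $\DTIME(2^{n^{c'}})$ for some constant $c'$ and hence in $\EXP = \bigcup_{c} \DTIME(2^{n^c})$. There is no real obstacle here — the only thing to be mildly careful about is formalizing the ``without loss of generality'' step that bounds the number of random bits by the running time, which follows because each step of the probabilistic machine consumes at most one random bit.
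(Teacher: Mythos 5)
Your proof is correct and follows exactly the paper's approach: view $A$ as a deterministic algorithm $A(x;r)$ on input and coin tosses, enumerate all $2^{m(n)}$ random strings (where $m(n)\leq t(n)$ bounds the coins used), and decide by comparing the acceptance fraction to $1/2$, for total time $2^{m(n)}\cdot t(n)\in \EXP$. The only cosmetic difference is that you bound the number of random bits by the running time $p(n)$ directly rather than introducing a separate symbol $m(n)\leq t(n)$.
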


\begin{proof}
If $L$ is in $\BPP$, then there is a probabilistic polynomial-time
algorithm $A$ for $L$ running in time $t(n)$ for some polynomial $t$.
Let $m(n)\leq t(n)$ be an upper bound on the number of random bits used by $A$ on inputs
of length $n$.  Thus we can view $A$ as a deterministic algorithm on two inputs
--- its regular input $x\in \zo^n$ and its coin tosses $r\in \zo^{m(n)}$.
Writing
$A(x;r)$ for $A$'s output on input $x\in \zo^n$ and
coin tosses $r\in \zo^{m(n)}$, we have
\[ \Pr_r[A(x;r)\mbox{
accepts}]=\frac{1}{2^{m(n)}}\sum_{r\in\{0,1\}^{m(n)}}A(x;r)\]
We can compute the right-hand side of the above expression in deterministic time
$2^{m(n)}\cdot t(n)$.
\end{proof}

We see that the enumeration method is {\em general} in that it
applies to all $\BPP$ algorithms, but it is {\em infeasible} (taking
exponential time).  However, if the algorithm uses only a small
number of random bits, it becomes feasible:

\begin{proposition} \label{prop:enumeration}
If $L$ has a probabilistic polynomial-time algorithm that runs in
time $t(n)$ and uses $m(n)$ random bits, then $L\in \DTIME(t(n)\cdot
2^{m(n)})$.  In particular, if $t(n)=\poly(n)$ and $m(n)=O(\log n)$,
then $L\in \P$.
\end{proposition}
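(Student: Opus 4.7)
The plan is to mimic exactly the argument used for Proposition~\ref{prop:BPPinEXP}, but track the dependence on $m(n)$ separately from $t(n)$ so as to obtain a tighter bound than the generic $2^{t(n)}$ cost.

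First, I would let $A$ be the probabilistic algorithm for $L$ promised by the hypothesis, running in time $t(n)$ and using at most $m(n)$ coin tosses on inputs of length $n$. I would view $A$ as a deterministic algorithm $A(x;r)$ taking two inputs: the instance $x \in \zo^n$ and a coin-toss string $r \in \zo^{m(n)}$. The deterministic simulator $A'$ would, on input $x$, enumerate all $2^{m(n)}$ strings $r \in \zo^{m(n)}$, compute $A(x;r)$ for each, and accept iff a majority of these executions accept. The correctness of $A'$ is immediate from the $\BPP$ guarantee: when $x \in L$ at least a $2/3$ fraction of $r$'s cause $A(x;r)$ to accept (in particular a majority), and when $x \notin L$ at most a $1/3$ fraction do.

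For the running time, each simulation of $A(x;r)$ costs $O(t(n))$ steps, and the enumeration loops over $2^{m(n)}$ values of $r$, plus $O(2^{m(n)})$ bookkeeping to tally the accepting count; this gives the claimed $O(t(n)\cdot 2^{m(n)})$ bound, so $L \in \DTIME(t(n)\cdot 2^{m(n)})$. For the ``in particular'' clause, substitute $t(n) = n^{O(1)}$ and $m(n) = O(\log n)$: then $2^{m(n)} = 2^{O(\log n)} = n^{O(1)}$, and the product $t(n)\cdot 2^{m(n)}$ is polynomial in $n$, placing $L$ in $\P$.

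There is essentially no obstacle here; the only minor point to be careful about is that $m(n)$ is an upper bound on the number of random bits used on \emph{every} input of length $n$, so padding shorter random strings (or, equivalently, defining $A(x;r)$ to simply ignore unused bits of $r$) makes the enumeration well-defined. The proposition is a quantitative refinement of Proposition~\ref{prop:BPPinEXP}, where the previous proof silently used $m(n) \le t(n)$; here we record the exact dependence on $m(n)$, which is what makes the logarithmic-randomness corollary nontrivial.
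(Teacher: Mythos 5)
Your proof is correct and follows exactly the enumeration argument that the paper uses for Proposition~\ref{prop:BPPinEXP} and implicitly invokes for this proposition: simulate $A(x;r)$ deterministically over all $2^{m(n)}$ coin-toss strings and take a majority vote, giving running time $t(n)\cdot 2^{m(n)}$, which is polynomial when $m(n)=O(\log n)$ and $t(n)=\poly(n)$. No issues.
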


Thus an approach to proving $\BPP=\P$ is to show that the number of random
bits used by any $\BPP$ algorithm can be reduced to $O(\log n)$. This is the angle of attack
pursued in Wigderson's work, as surveyed in the next section. However, to date,
Proposition~\ref{prop:BPPinEXP} remains the best unconditional upper-bound we
have on the deterministic time-complexity of $\BPP$.

\begin{openprob} Is $\BPP$ ``closer'' to $\P$ or $\EXP$?
Is $\BPP\subseteq \quasiP$?  Is $\BPP\subseteq \SUBEXP$?
\end{openprob}

\subsubsection{Wigderson's Contributions} 

\paragraph{\underline{Derandomization from Circuit Lower Bounds}}
In the early 1980's, the answer to Open Problem~\ref{openprob:BPPequalsP} seemed very likely to be no, that $\BPP\neq \P$, since there were many examples of problems where randomization provided an exponential speedup over the best deterministic algorithms known at the time.  The first evidence that randomization might not be so powerful came from Yao~\cite{Yao82}, who showed that if there exist ``cryptographically secure'' pseudorandom generators, as defined by Blum and Micali~\cite{BlumMi84}, then $\BPP\subseteq \SUBEXP$.  In a series of works, Wigderson and his collaborators obtained much stronger derandomization results, convincing the theoretical computer science community that indeed $\BPP=\P$.

\begin{theorem}[\cite{NisanWi94,BabaiFoNiWi93,ImpagliazzoWi97}] \label{thm:IW}
\begin{enumerate}
    \item If $\EXP$ has a function of circuit complexity $n^{\omega(1)}$, then $\BPP\subseteq \SUBEXP$. \label{itm:lowend}
    \item If $\ETIME\eqdef\DTIME(2^{O(n)})$ has a function of circuit complexity $2^{\Omega(n)}$, then $\BPP=\P$. \label{itm:highend}
\end{enumerate}
\end{theorem}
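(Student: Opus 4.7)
My plan follows the \emph{hardness versus randomness} paradigm: convert a hard function into a pseudorandom generator (PRG), then derandomize $\BPP$ by enumeration over the PRG's seeds. Concretely, I would first establish the following general lemma: if there is a function $G : \zo^{s(n)} \to \zo^{n}$ computable in time $2^{O(s(n))}$ whose output is indistinguishable from uniform by every Boolean circuit of size $n$, then $\BPP \subseteq \bigcup_c \DTIME(2^{O(s(n^c))})$. The proof is the one sketched in Proposition~\ref{prop:enumeration}: given a $\BPP$ algorithm $A(x;r)$ with $|r| = \poly(|x|)$, feed $A$ the outputs $G(y)$ for all seeds $y \in \zo^{s(\poly(n))}$ and take the majority vote; by indistinguishability (applied to the circuit $r \mapsto A(x;r)$, which has size polynomial in $n$), the majority agrees with $\Pr_r[A(x;r) \text{ accepts}]$ with room to spare. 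Picking $s(n) = O(\log n)$ will give $\BPP = \P$ and $s(n) = n^{o(1)}$ will give $\BPP \subseteq \SUBEXP$.

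The core of the argument is thus to build such a $G$ from a hard function in $\EXP$ or $\ETIME$. The plan has two stages. First, invoke the Nisan--Wigderson construction: given a predicate $f : \zo^\ell \to \zo$ that is \emph{average-case} hard in the sense that no circuit of size $s$ agrees with $f$ on more than a $\tfrac{1}{2} + \tfrac{1}{s}$ fraction of inputs, and given a combinatorial design $S_1, \ldots, S_n \subseteq [d]$ with $|S_i| = \ell$ and pairwise intersections of size at most $a = O(\log n)$, the generator $G_{NW}(y) = (f(y|_{S_1}), \ldots, f(y|_{S_n}))$ fools circuits of size roughly $s/2$. Standard designs achieve $d = O(\ell^2/a)$, so choosing $\ell = O(\log n)$ and $s = \poly(n)$ yields seed length $O(\log n)$ (for the high end), while $\ell = n^{o(1)}$ and $s = 2^{\Omega(\ell)}$ gives seed length $n^{o(1)}$ (for the low end). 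The soundness proof is a hybrid argument: a distinguisher for $G_{NW}$ implies a next-bit predictor, which can be hardwired and combined with small lookup tables (of size $n \cdot 2^a = \poly(n)$) for the coordinates outside $S_i$ to produce a circuit that computes $f$ on too many inputs.

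The main obstacle, and the central contribution of~\cite{ImpagliazzoWi97}, is that Theorem~\ref{thm:IW} assumes only \emph{worst-case} circuit hardness of $\EXP$, whereas the NW generator demands \emph{strong average-case} hardness: correlation at most $1/s$ against circuits of size $s$. One therefore needs a \emph{hardness amplification} procedure that takes $f \in \EXP$ of worst-case circuit complexity $s(n)$ and produces $f' \in \EXP$ on inputs of length $O(n)$ with average-case hardness roughly $s(n)^{\Omega(1)}$, \emph{without} losing more than polynomial factors in the circuit complexity. The plan is a two-step amplification: (i) apply a locally list-decodable error-correcting code (e.g., the Reed--Muller encoding of $f$'s truth table) to convert worst-case hardness into mild average-case hardness of the form ``no size-$s^{\Omega(1)}$ circuit agrees on more than a $1-\tfrac{1}{\poly(n)}$ fraction of inputs,'' since a circuit with better agreement could be used as an oracle to the list-decoder to reconstruct $f$; and (ii) apply a \emph{derandomized} XOR lemma (or equivalently a derandomized hard-core lemma) to push the agreement down to $\tfrac12 + \tfrac{1}{s^{\Omega(1)}}$, where the word ``derandomized'' is crucial because a naive XOR of $k$ independent inputs multiplies the input length by $k$ and destroys the high-end parameter regime. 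The bottleneck is precisely this last step: the derandomized XOR lemma must combine the pseudoentropy-style reasoning of Impagliazzo's hard-core lemma with pseudorandom sampling (via an auxiliary small-bias or expander-walk distribution) so that the input length grows only by an additive $O(\ell)$ term rather than multiplicatively. Once this tight worst-case-to-average-case reduction is in place, feeding the amplified $f'$ into the NW generator with the appropriate parameter choices yields both items of the theorem.
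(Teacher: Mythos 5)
Your proposal follows essentially the same route as the paper: derandomization by seed enumeration (Theorem~\ref{thm:PRGderandomize}), the Nisan--Wigderson generator built from designs with the next-bit-predictor hybrid argument (Theorem~\ref{thm:NWgeneral}), and a worst-case to average-case reduction for $\ETIME$ (Theorem~\ref{thm:amplifyhardness}) via low-degree encoding plus a derandomized XOR lemma, which is exactly how \cite{BabaiFoNiWi93,ImpagliazzoWi97} proceed. The argument is correct; your sketch of the amplification step even supplies detail the survey omits.
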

To define the ``circuit complexity'' referred to in the theorem, we associate a language $L\subseteq \zo^*$ in $\EXP$ or $\ETIME$ with its characteristic function $f : \zo^*\rightarrow \zo$. For each $n$, we consider the restriction $f_n : \zo^n\rightarrow \zo$ of $f$ to instances of length $n$, and ask how many boolean operations (AND, OR, NOT) are needed to compute $f_n$, i.e. what is the size of the smallest {\em boolean circuit} computing $f_n$, as a function of $n$. (See Section~\ref{sec:circuit-complexity} for a more formal definition.)  An algorithm running in time $t(n)$ can be simulated by boolean circuits of size $\tO(t(n))\eqdef t(n)\cdot \polylog(t(n))$, but the converse is not true, since circuits are a {\em nonuniform} model of computation, essentially allowing a different program for each input length (rather than a single set of instructions that can solve problems of arbitrary size). Thus Theorem~\ref{thm:IW} can be interpreted as saying ``if nonuniformity cannot speed up all (exponential-time) algorithms too much, then randomization never provides too much of a speed up.'' Or, in more of a ``win-win'' formulation, ``either we can speed up all (exponential-time) algorithms with nonuniformity, or we can efficiently derandomize all probabilistic algorithms.''

The two items in Theorem~\ref{thm:IW} are special cases of a more general quantitative result that relates circuit complexity to derandomization. At the ``low end,'' Item~\ref{itm:lowend} says that if there are problems solvable in exponential time that have superpolynomial circuit complexity, then we get a subexponential-time derandomization of $\BPP$.  This is the same as Yao's aforementioned result~\cite{Yao82} but with a much weaker hypothesis than the existence of cryptographically secure pseudorandom generators.  At the ``high end,'' Item~\ref{itm:highend} says that if instead we have problems with exponential circuit complexity, then in fact we get polynomial-time derandomization. (We also need to make the relatively minor switch from $\EXP$ to $\ETIME$.  If we used $\EXP$ instead, the conclusion would be  
$\BPP\subseteq \quasiP$.)

These circuit complexity hypotheses are very plausible. The $\NP$-complete problems are promising candidates; they can be solved in exponential time and are conjectured to require superpolynomial and even exponential circuit complexity, though we are very far from proving it. (Such a result would resolve the famous $\P$ vs. $\NP$ question.)
See Chapter~\ref{sec:lowerbounds} for a survey of Wigderson's fundamental contributions to circuit lower bounds.

Sections~\ref{sec:prgs}--\ref{sec:worstcase-avgcase} give an overview of the proof of Theorem~\ref{thm:IW}.

\paragraph{\underline{Circuit Lower Bounds from Derandomization}}

Theorem~\ref{thm:IW} of Wigderson et al. establishes a unidirectional implication between two major projects in theoretical computer science: {\em if} we can prove circuit lower bounds, {\em then} we can provably derandomize $\BPP$. 
 Since proving circuit lower bounds is so difficult, it is natural to wonder whether derandomization could be easier.  With Impagliazzo and Kabanets~\cite{ImpagliazzoKaWi02}, Wigderson proved that if we ``add nondeterminism'' to the complexity classes, then in fact circuit lower bounds are {\em equivalent} to derandomization.

\begin{theorem}[\cite{ImpagliazzoKaWi02}] \label{thm:IKW}
$\MA$ (a randomized analogue of $\NP$) has a nontrivial derandomization (namely, $\MA\neq \NEXP$, where $\NEXP$ is an exponential-time analogue of $\NP$) {\em if and only if} $\NEXP$ does not have polynomial-sized circuits.
\end{theorem}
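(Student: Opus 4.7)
The plan is to reformulate the biconditional as $\NEXP \subseteq \Ppoly \iff \NEXP = \MA$ and prove each direction using the \emph{easy witness method} of \cite{ImpagliazzoKaWi02}, which turns on the dichotomy: either every $\NEXP$-verifier admits witnesses encodable by polynomial-size circuits, or some $\NEXP$-verifier yields a function of very high circuit complexity.

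For the direction $\NEXP \subseteq \Ppoly \Rightarrow \NEXP = \MA$, I would first establish that $\NEXP = \EXP$ by a win--win argument. If every $\NEXP$-verifier has easy witnesses, then $\NEXP \subseteq \EXP$ follows by deterministically enumerating all polynomial-size circuits, reconstructing the witness string each encodes, and checking it with the verifier in exponential time. Otherwise, some $\NEXP$-language $L^\ast$ has, infinitely often, only witnesses of super-polynomial circuit complexity; these hard witnesses expose a function computable in (nondeterministic) exponential time whose truth table has high circuit complexity, and feeding it into the hardness-versus-randomness paradigm (Theorem~\ref{thm:IW} and its $\MA$ analogues) would derandomize $\MA$ sufficiently to contradict $\NEXP \subseteq \Ppoly$ via a Karp--Lipton-style collapse. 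Thus $\NEXP = \EXP$. Next I would apply the Babai--Fortnow--Nisan--Wigderson theorem that $\EXP \subseteq \Ppoly \Rightarrow \EXP = \MA$ (proved by running an interactive proof for $\EXP$-complete problems, in which the honest prover is simulated by its small circuit). Since $\EXP \subseteq \NEXP \subseteq \Ppoly$, we obtain $\EXP = \MA$, and combining gives $\NEXP = \EXP = \MA$.

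For the direction $\NEXP = \MA \Rightarrow \NEXP \subseteq \Ppoly$, I would proceed in two moves. First, $\MA \subseteq \EXP$ by brute force: enumerate Merlin's polynomially long messages, and for each, exactly compute Arthur's acceptance probability by summing over all $2^{\poly(n)}$ random tapes. Hence $\NEXP = \MA \subseteq \EXP$ gives $\NEXP = \EXP$. Second, an Adleman-style argument places $\MA \subseteq \NP/\poly$: for each input length $n$, a union bound over all pairs $(x,w)$ produces a polynomial-length sequence of random tapes $r_1,\ldots,r_k$ whose majority vote correctly captures Arthur's verdict on every input--message pair; supplying these tapes as non-uniform advice reduces Arthur to a deterministic predicate. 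So $\NEXP = \MA \subseteq \NP/\poly$. Finally, I would apply another easy-witness argument, analogous to the forward direction, to upgrade $\NEXP \subseteq \NP/\poly$ to $\NEXP \subseteq \Ppoly$: the $\NP/\poly$ containment lets us describe $\NEXP$ witnesses succinctly, which when fed back into the deterministic-enumeration template yields $\Ppoly$ circuits.

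The main obstacle will be the easy-witness dichotomy in the forward direction. The proof must carefully state what ``hard witnesses'' means at the right scale and confirm that the hardness-to-pseudorandomness conversion of Theorem~\ref{thm:IW} outpaces the derandomization consequences, so that the contradiction genuinely closes. Balancing the circuit sizes, error parameters, and witness lengths in this quantitative interplay is the technically delicate core of the argument.
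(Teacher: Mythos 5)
The survey states this theorem with a citation and gives no proof, so the comparison here is against the argument in \cite{ImpagliazzoKaWi02} itself. Your forward direction ($\NEXP\subseteq\Ppoly\Rightarrow\NEXP=\MA$) is essentially their argument: the easy-witness dichotomy yields $\NEXP=\EXP$, and the Babai--Fortnow--Nisan--Wigderson collapse $\EXP\subseteq\Ppoly\Rightarrow\EXP=\MA$ finishes. One caution: in the ``hard witness'' branch the contradiction is not closed by a Karp--Lipton-style collapse alone. Because the hard witness can only be \emph{guessed}, not computed or certified as hard, the derandomization you extract is of the form $\MA\subseteq \ioclass{NTIME}(2^{n^\eps})/n^\eps$, and you must additionally prove a hierarchy/diagonalization lemma against this infinitely-often, advice-taking class to contradict $\EXP=\MA$. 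You correctly flag this as the delicate core, but the missing ingredient is that hierarchy lemma, not just parameter balancing.

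The backward direction has a genuine gap in its last step. After $\NEXP=\EXP$ and the Adleman argument $\NEXP\subseteq\NP/\poly$, you propose to ``upgrade'' to $\Ppoly$ by feeding succinct witnesses ``back into the deterministic-enumeration template.'' This fails for two reasons. First, the output of that template is a deterministic \emph{exponential-time algorithm} (enumerate all polynomial-size circuits, decode each into a candidate witness, run the verifier), not a polynomial-size circuit: succinctness of witnesses says nothing about the circuit complexity of the language itself, since locating the correct small circuit among exponentially many candidates is exactly the hard part, and the circuit found depends on $x$ rather than only on $|x|$, so it cannot be supplied as advice. Second, the easy-witness lemma's hypothesis is $\NEXP\subseteq\Ppoly$ (and its proof leans on $\EXP\subseteq\Ppoly\Rightarrow\EXP=\MA$), which is the very conclusion you are after, so the step would be circular even if the template produced circuits; nor is $\NEXP\subseteq\NP/\poly\Rightarrow\NEXP\subseteq\Ppoly$ a known implication. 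The actual route is the contrapositive $\NEXP\not\subseteq\Ppoly\Rightarrow\MA\neq\NEXP$, split on whether $\EXP\subseteq\Ppoly$: if yes, then $\NEXP\neq\EXP\supseteq\MA$ and you are done immediately; if no, hardness-versus-randomness places $\MA$ inside $\ioclass{NTIME}(2^{n^\eps})/n^\eps$, and a (carefully stated) separation of $\NEXP$ from that class completes the proof.
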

It follows from Theorem~\ref{thm:IKW} that derandomization of ordinary randomized polynomial-time algorithms (without nondeterminism) also implies circuit lower bounds.  Specifically, it turns out that if we can derandomize the generalization of $\BPP$ to ``promise problems'' (i.e. partial boolean functions, where we don't define or care about the output on certain inputs), then we can also derandomize $\MA$ and hence deduce from Theorem~\ref{thm:IKW} that $\NEXP$ does not have polynomial-sized circuits. 
(This implication of Theorem~\ref{thm:IKW} also follows from the earlier work of \cite{BuhrmanFoTh98}.)
Building on Theorem~\ref{thm:IKW}, Kabanets and Impagliazzo~\cite{KabanetsIm04} proved that even if the specific problem of \IdentityTest\ (which is in $\BPP$) has a nontrivial derandomization, then $\NEXP$ does not have polynomial-sized boolean circuits {\em or} the \Permanent\ does not have polynomial-sized arithmetic circuits, either of which would be breakthroughs in complexity theory.  (See Section~\ref{sec:partialderivatives} for Wigderson's work on arithmetic circuit lower bounds and Section~\ref{sec:ncit} for Wigderson's work on variants of \IdentityTest.)

A more positive interpretation of Theorem~\ref{thm:IKW} is that we might be able to prove new circuit lower bounds by coming up with new methods for derandomizing algorithms.  This possibility was realized in Williams' program of proving circuit lower bounds by designing faster SAT algorithms~\cite{Williams10} and his breakthrough result that $\NEXP$ does not have polynomial-sized $\ACC$ circuits~\cite{Williams11}, both of which built on the results and techniques of \cite{ImpagliazzoKaWi02}.

\paragraph{\underline{Optimizing the Hardness vs. Randomness Tradeoff}}

As mentioned above, there is a full spectrum of ``hardness vs. randomness'' implications between the ``low end'' and ``high end'' derandomizations stated in Theorem~\ref{thm:IW}. 
With Impagliazzo and Shaltiel~\cite{ImpagliazzoShWi06}, Wigderson pointed out that in the intermediate regime, 
the proof of Theorem~\ref{thm:IW} yielded results that were suboptimal in a sense that can be made formal, and initiated a line of work that culminated in optimal hardness vs. randomness tradeoffs achieved by Shaltiel and Umans~\cite{ShaltielUm05,Umans03}.
More recently, researchers have turned to more finely quantifying how much slowdown is needed to derandomize algorithms. Under suitably strong complexity assumptions, recent works~\cite{DoronMoOhZu22,ChenTe21-fast} give evidence that every randomized algorithm running in time $T(n)$ can be converted to a deterministic algorithm running in time $n^{1+\epsilon}\cdot T(n)$ for an arbitrarily small constant $\epsilon>0$.  Thus, it seems that randomization saves at most an almost-linear factor in runtime!

\paragraph{\underline{Derandomization from Uniform Assumptions}}

Theorem~\ref{thm:IKW} and the results discussed after it show that some  derandomizations of $\BPP$ require novel circuit lower bounds, at least for $\NEXP$.
Nevertheless, in a remarkable paper with Impagliazzo~\cite{ImpagliazzoWi01}, Wigderson showed that a nontrivial derandomization of $\BPP$ {\em is} possible under the uniform assumption $\EXP\neq \BPP$.  Specifically, for every language $L$ in $\BPP$, they obtain a deterministic subexponential-time algorithm that correctly decides $L$ on all but a $1/\poly(n)$ fraction of inputs of length $n$, for infinitely many values of $n$.  Moreover, this holds not just for the uniform distribution on instances of length $n$, but simultaneously for every efficiently samplable distribution on instances.
 
An intriguing feature of the Impagliazzo--Wigderson uniform derandomization is that it is (necessarily~\cite{TrevisanVa07}) a ``non-black-box'' construction.  That is, the construction and proof actually make use of the {\em code} of the programs that compute the assumed hard function $f\in \ETIME$ and that decide the language $L\in \BPP$ that is being derandomized.  In contrast, Theorem~\ref{thm:IW}, like most results in complexity theory, treats these algorithms as ``black boxes,'' only using the fact that they can be solved by efficient programs to deduce that other programs using them as subroutines are efficient.

The Impagliazzo--Wigderson uniform derandomization of $\BPP$ is
a ``low-end'' result; assuming only a superpolynomial lower bound for $\EXP$, we get (only) a subexponential-time derandomization of $\BPP$.  It remains an open problem to have a high-end (or nearly high-end) analogue of their result, for example to get a polynomial-time (or quasipolynomial-time) average-case derandomization of $\BPP$ under the assumption that $\ETIME\nsubseteq 
\BPTIME(2^{o(n)})$ (or $\EXP\nsubseteq \BPSUBEXP$).  In \cite{TrevisanVa07}, a uniform analogue of the high-end worst-case-to-average-case hardness for $\ETIME$ (Theorem~\ref{thm:amplifyhardness}) was given.

Subsequent works have given uniform average-case derandomizations of randomized algorithms with one-sided error ($\RP$)~\cite{Kabanets01} and constant-round interactive proofs (a.k.a. Arthur--Merlin games, $\AM$)~\cite{Lu01,ImpagliazzoKaWi02,GutfreundShTa03,ShaltielUm09}.Recent work has identified ``almost-everywhere'' uniform hardness assumptions (e.g. computational problems where every uniform probabilistic polynomial-time algorithm fails to solve the problem on all but finitely many inputs) that are {\em equivalent} to worst-case derandomization of $\BPP$ (generalized to ``promise problems'')~\cite{Goldreich11,ChenTe21-uniform,LiuPa22}.

\subsubsection{Pseudorandom Generators}
\label{sec:prgs}

The approach to derandomizing algorithms suggested by Yao~\cite{Yao82} and pursued by Wigderson is by constructing {\em pseudorandom generators}.  These are defined in terms of computational indistinguishability, which was introduced in Section~\ref{sec:crypto} and will be convenient to reformulate here in a non-asymptotic form:

\begin{definition}[computational indistinguishability~\cite{GoldwasserMi84}]
\label{def:compind}
Random variables $X$ and $Y$ taking values in
$\zo^m$ are {\em $(s,\eps)$ indistinguishable} if for every boolean circuit $T : \zo^m\rightarrow \zo$ of size at most $s$, we have
$$
\left| \Pr[T(X)=1] - \Pr[T(Y)=1] \right| \le \eps
$$
The left-hand side above is called also the {\em advantage} of $T$ in distinguishing $X$ and $Y$.
\end{definition}
If we set $s=\infty$ (or even $s=2^m$), then we allow all $2^{2^m}$ boolean functions $T$ as statistical tests, and $(s,\eps)$-indistinguishability is equivalent to requiring that $X$ and $Y$ have total variation distance at most $\eps$.  (See Definition~\ref{def:tvd}.)  However, by restricting to computationally efficient tests, e.g., with $s=\poly(m)$, then we obtain a significantly relaxed definition, where even random variables $X$ and $Y$ with disjoint supports can be indistinguishable.  At the same time, for all efficient purposes (i.e. tasks that can be done by a boolean circuit of size $s$), $X$ and $Y$ are interchangeable.

A pseudorandom generator is a procedure that stretches a short seed if truly random bits into a long string that is computationally indistinguishable from uniform.

\begin{definition}[pseudorandom generator~\cite{BlumMi84,Yao82}] \label{def:PRG}
A deterministic function $G:\{0,1\}^d\rightarrow \{0,1\}^m$ is an
\emph{$(s,\eps)$ pseudorandom generator (PRG)} if
\begin{enumerate}
\item $d < m$, and
\item $G(U_d)$ and $U_m$ are $(s,\eps)$ indistinguishable, where $U_k$ denotes a random variable uniformly distributed over $\zo^k$.
\end{enumerate}
\end{definition}
If a test $T : \zo^m\rightarrow \zo$ has advantage at most $\eps$ in distinguishing $G(U_d)$ from $U_m$, we say that $G$ {\em $\eps$-fools} $T$.

People attempted to construct pseudorandom generators long before this definition was formulated.
Their generators were tested against a battery of statistical tests (e.g. the number of
1's and 0's are approximately the same, the longest run is of length $O(\log m)$, etc.), but
these fixed set of tests provided no guarantee that the generators would perform well in
an arbitrary application.
Indeed, most classical constructions (e.g. linear congruential generators, as implemented
in the standard C library) are known to fail in some applications.

Intuitively, the above definition guarantees that the pseudorandom bits produced by the generator
are as good as truly random bits for {\em all} efficient purposes (where efficient means computable by a circuit of size at most $s$).
In particular, we can use such a generator to reduce the number of random bits used by any algorithm from $m$ to $d(m)$ provided that the algorithm runs in time at most $t=s/\polylog(s)$, because the behavior of any such algorithm on any input $x$ can be simulated by a boolean circuit of size $s$.  For the resulting algorithm to be efficient, we will also need the generator
to be efficiently computable.

\begin{definition}
We say a sequence of generators $\{G_m : \zo^{d(m)}\rightarrow \zo^m\}$ is {\em computable in
time $t(m)$} if there is a {\em uniform and deterministic} algorithm $M$ such that for every $m\in \N$ and
$x\in \zo^{d(m)}$, we have $M(m,x)=G_m(x)$ and $M(m,x)$ runs in time at most $t(m)$. %
\end{definition}

Note that even when we define the pseudorandomness property of the generator with respect to
nonuniform boolean circuit, the efficiency requirement refers to uniform algorithms. For
readability,
we will usually refer to a single generator $G : \zo^{d(m)}\rightarrow \zo^m$, with it being
implicit that we are really discussing a family $\{G_m\}$.

\begin{theorem} \label{thm:PRGderandomize}
Suppose that for all $m$ there exists an $(m,1/8)$ pseudorandom generator
$G : \zo^{d(m)}\rightarrow \zo^m$ computable in time $t(m)$.  Then
$\BPP\subseteq \bigcup_c \DTIME(2^{d(n^c)}\cdot (n^c+t(n^c)))$.
\end{theorem}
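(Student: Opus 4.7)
The plan is to derandomize any $\BPP$ algorithm by brute-force enumeration over PRG seeds. Fix $L \in \BPP$ with decision algorithm $A$ running in time $p(n) = n^c$; without loss of generality $A$ uses exactly $p(n)$ random bits on an input of length $n$. For each input $x$, the map $r \mapsto A(x;r)$ is computable in time $p(n)$ and can therefore be realised by a boolean circuit of size $s(n) = \tO(p(n))$. Choose $m = m(n)$ to be a polynomial with $m \geq \max(p(n), s(n))$ --- for concreteness, $m(n) = n^{c+1}$ works for all large $n$ --- and pad $A$ so that it reads $m$ random bits but ignores all but the first $p(n)$; this does not change its runtime asymptotically, and the resulting circuit still has size at most $m$.

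The deterministic simulator on input $x$ computes
\[
\rho_x \;=\; \frac{1}{2^{d(m)}} \sum_{\sigma \in \zo^{d(m)}} A\bigl(x;\, G(\sigma)\bigr)
\]
by iterating through all seeds, spending time $t(m) + O(p(n))$ per seed to evaluate $G$ once and run $A$ once, and accepts iff $\rho_x \geq 1/2$. Correctness is immediate from pseudorandomness: the distinguisher $D_x(r) = A(x;r)$ is a size-$m$ boolean circuit, so Definition~\ref{def:PRG} yields $\bigl|\Pr_\sigma[D_x(G(\sigma))=1] - \Pr_r[D_x(r)=1]\bigr| \leq 1/8$. Combined with the $\BPP$ gap, this forces $\rho_x \geq 2/3 - 1/8 = 13/24 > 1/2$ when $x \in L$ and $\rho_x \leq 1/3 + 1/8 = 11/24 < 1/2$ when $x \notin L$, so the threshold test is always correct. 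The total time is $O\bigl(2^{d(m)} (t(m) + p(n))\bigr)$; substituting $m = n^{c+1}$ places $L$ in $\DTIME\bigl(2^{d(n^{c+1})}(n^{c+1} + t(n^{c+1}))\bigr) \subseteq \bigcup_{c'} \DTIME\bigl(2^{d(n^{c'})}(n^{c'}+t(n^{c'}))\bigr)$.

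The only subtle step is parameter matching: the PRG is controlled by a single integer $m$ serving simultaneously as its output length and as the circuit-size bound it fools, whereas the $\BPP$ algorithm has separate runtime and random-coin length parameters. Padding as above resolves this cleanly, and no other genuine obstacle appears.
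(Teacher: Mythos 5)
Your proposal is correct and follows essentially the same route as the paper's proof: replace the random tape by the PRG output, observe that $A(x,\cdot)$ is a small circuit so pseudorandomness preserves the acceptance probability up to $1/8$, and enumerate all seeds with a majority/threshold vote. Your explicit padding to reconcile the single parameter $m$ (output length and circuit-size bound) with the algorithm's runtime and coin count is just a slightly more careful rendering of the paper's choice of exponents and is fine.
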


\begin{proof}
Let $L$ be any language in $\BPP$.  Then there is a constant $c$ such that $L$ is decided by a bounded-error randomized algorithm in time $t(n)=O(n^{c-1})$ on inputs of length $n$.   

The idea is to replace the random bits used by $A$ with pseudorandom bits generated by $G$, use the pseudorandomness property to show that the algorithm will still be correct
with high probability, and finally enumerate over all possible seeds to obtain a deterministic algorithm.

\begin{claim}  For all sufficiently large $n$ and every $x\in \zo^n$, $A(x;G(U_{d(n^c)}))$ errs
with probability smaller than $1/2$.
\end{claim}

\begin{claimproof} Suppose that there exists some $x\in \zo^n$ on which $A(x;G(U_{d(n^c)}))$ errs
with probability at least $1/2$.  Then, $T(\cdot) = A(x, \cdot)$ 
distinguishes $G(U_{d(n^c)})$ from $U_{n^c}$ with advantage at least
$1/2-1/3 > 1/8$.  Since algorithms running in time $t(n)$ can be simulated by boolean circuits of size at most $\tO(t(n))$, $T(\cdot)$ can be computed by a boolean circuit of size at most $n^c$, for sufficiently large $n$. This contradicts the pseudorandomness property of $G$. 
\end{claimproof}

Now, enumerate over all seeds of length $d(n^c)$ and take a majority vote.  There are
$2^{d(n^c)}$ of them, and for each we have to run both $G$ and
$A$.
\end{proof}

In the definition of a cryptographic pseudorandom generator used by Yao~\cite{Yao82}, the requirement was that $G$ is computable in time polynomial in its input length, i.e. $t(m)\leq \poly(d(m))$.  This implies that $d(m)\geq t(m)^\delta \geq m^\delta$ for a constant $\delta > 0$, so the running time of the derandomization in Theorem~\ref{thm:PRGderandomize} is at least $2^{d(n^c)} \geq 2^{n^{\delta c}}$, and we can at best conclude $\BPP\subseteq \SUBEXP$.

Thus a key to Theorem~\ref{thm:IW} was the realization by Nisan and Wigderson~\cite{NisanWi94} that, for derandomization, we can relax the efficiency requirements of a cryptographic generator in two ways.  First, we can afford for the generator to be computable in time exponential in its seed length, since anyway we enumerate over all seeds when derandomizing.  Second (and relatedly), we can afford for the generator to run in more time than the algorithms it fools.  Indeed, in Theorem~\ref{thm:PRGderandomize}, we only need to fool circuits of size $m$, but we are happy for a generator computable in time $\poly(m)$.  In contrast, a cryptographic generator actually requires fooling circuits of size $m^{\omega(1)}$, ones that are superpolynomially larger than the output length and the running time of the generator.
Thus, they proposed the following efficiency requirement:
\begin{definition}[\cite{NisanWi94}] 
A generator $G : \zo^{d(m)}\rightarrow \zo^m$ is {\em quick} (a.k.a. {\em mildly explicit})  if it is computable in time $\poly(m,2^{d(m)})$.
\end{definition}

They demonstrated the benefits of these relaxed requirements with the beautiful pseudorandom generator construction described in the next section (which is a key component of the proof of Theorem~\ref{thm:IW}).

\subsubsection{The Nisan--Wigderson Generator} \label{sec:NW}

The Nisan--Wigderson generator constructs a quick pseudorandom generator from any function in $\ETIME$ that is sufficiently {\em hard on average}:

\begin{definition} \label{def:average-case-hard}
For $s\in \N$ and $\alpha>0$, a function $f : \zo^\ell \rightarrow \zo$ is {\em $(s,\alpha)$ average-case hard} if for every boolean circuit $A$ of size at most $s$, we have
$$\Pr[A(U_\ell) \neq f(U_\ell)] > \alpha.$$
\end{definition}
Note that, in contrast to the definition of $\BPP$, here the probabilities are taken over the {\em input} to the algorithm $A$, rather than its random coin tosses.  When $\alpha=0$, Definition~\ref{def:average-case-hard} simply says that $f$ has circuit complexity greater than $s$, but when $\alpha$ is nonzero it is a significantly stronger hardness requirement on $f$.  Note that $\alpha=1/2$ is impossible, since a constant function (of size $s=1$) can always compute $f$ correctly on at least half of the inputs.

We now state the Nisan--Wigderson theorem, restricted to the ``high-end'' regime, where hardness is against circuits of exponential size.

\begin{theorem}[\cite{NisanWi94}] \label{thm:NWspecific}
Suppose that there is a constant $\delta>0$ and a function $f \in \ETIME$ such that for every input length $\ell\in \N$, $f_\ell$ is $(2^{\delta \ell},1/2-1/2^{\delta\ell})$ average-case hard.  Then for every $m\in \N$, there is a quick
$(m,1/m)$ pseudorandom generator $G : \zo^{d(m)}\rightarrow \zo^m$
with seed length $d(m) = O(\log m)$.  In particular, $\BPP=\P$.
\end{theorem}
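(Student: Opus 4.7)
The plan is to construct the generator $G$ from the hard function $f$ using combinatorial designs (the Nisan--Wigderson construction), prove pseudorandomness via a hybrid argument that extracts a small circuit for $f$ from any distinguisher, and then invoke Theorem~\ref{thm:PRGderandomize} to conclude $\BPP=\P$. The high-level intuition is that the outputs of a pseudorandom generator should ``look independent,'' and we can achieve this by applying the hard function $f$ to many subsets of seed bits whose pairwise overlaps are small: small overlaps mean that knowing the outputs on other subsets gives only a short table's worth of information about the output on a given subset, so computing one output bit from the others requires computing $f$ (which is hard).

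First, I would fix the block length $\ell = c \log m$ for a constant $c$ to be determined, and construct a \emph{combinatorial design}: sets $S_1,\dots, S_m \subseteq [d]$ with $|S_i|=\ell$ and $|S_i \cap S_j| \leq \log m$ for all $i\neq j$, with $d = O(\ell) = O(\log m)$. Such designs exist and can be found in time $\poly(m,2^d)$ by a greedy probabilistic argument (pick each $S_i$ uniformly among $\ell$-subsets of $[d]$ and show the intersection bound holds with positive probability when $d$ is a large enough constant multiple of $\ell$). Given the design, define
\[
G(x) \;=\; \bigl(f_\ell(x|_{S_1}),\; f_\ell(x|_{S_2}),\; \ldots,\; f_\ell(x|_{S_m})\bigr) \in \zo^m,
\]
where $x|_{S_i}$ denotes the substring of $x\in\zo^d$ indexed by $S_i$. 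Quickness is immediate: computing $G$ on any seed takes $m$ evaluations of $f_\ell$, each of which runs in time $2^{O(\ell)} = \poly(m)$ since $f\in \ETIME$, so $G$ runs in time $\poly(m) = \poly(2^{d})$.

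For pseudorandomness, suppose toward contradiction that some size-$m$ circuit $T$ distinguishes $G(U_d)$ from $U_m$ with advantage $>1/m$. By the standard hybrid argument of Yao, there exists an index $i\in[m]$ and a \emph{next-bit predictor}: a circuit $P$ of size roughly $|T|$ such that, when $x\leftarrow U_d$ and $r_{i+1},\dots,r_m\leftarrow U_{m-i}$ are uniform,
\[
\Pr\bigl[P\bigl(f_\ell(x|_{S_1}),\ldots,f_\ell(x|_{S_{i-1}}), r_{i+1},\ldots, r_m\bigr) = f_\ell(x|_{S_i})\bigr] \;\geq\; \tfrac{1}{2} + \tfrac{1}{m^2}.
\]
Fix by averaging the best values of the bits of $x$ outside $S_i$ and of $r_{i+1},\ldots,r_m$; this reduces the prediction problem to a function of $y := x|_{S_i}\in\zo^\ell$ alone. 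The crucial point is the design property: for each $j<i$, the restriction $x|_{S_j}$ depends on at most $\log m$ bits of $y$ (namely those in $S_i\cap S_j$), so $f_\ell(x|_{S_j})$ can be hard-wired as a lookup table of size $2^{\log m} = m$ in the variables $y$. Stitching together these $m-1$ lookup tables and feeding them to $P$ yields a circuit $A$ of size $m\cdot m + |P| = \poly(m)$ satisfying $\Pr_y[A(y)=f_\ell(y)] \geq 1/2 + 1/m^2$, contradicting the $(2^{\delta\ell},1/2 - 2^{-\delta\ell})$ average-case hardness of $f_\ell$ once we choose the constant $c$ (in $\ell = c\log m$) large enough that $2^{\delta\ell}$ exceeds the size of $A$ and $2^{-\delta\ell} < 1/m^2$.

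Finally, since $G$ has seed length $d(m)=O(\log m)$, is computable in time $\poly(m)$, and is an $(m,1/m)$ PRG, Theorem~\ref{thm:PRGderandomize} gives $\BPP \subseteq \bigcup_c \DTIME(2^{O(\log n^c)}\cdot \poly(n^c)) = \P$. I expect the main technical obstacle to be calibrating the parameters so that the predictor-to-circuit reduction stays within the hardness bound for $f$: we need the combined ``lookup table'' circuit to have size less than $2^{\delta \ell}$ while simultaneously achieving prediction advantage larger than $2^{-\delta\ell}$, and making both hold forces the specific choice $\ell = \Theta(\log m)$ with a constant tied to $\delta$. Constructing the design deterministically and quickly (rather than merely probabilistically) is a secondary technical point, handled by a standard greedy construction over $\GF(2^{O(\log \ell)})$.
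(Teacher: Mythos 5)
Your proposal is correct and follows essentially the same route as the paper: the Nisan--Wigderson construction from a combinatorial design with intersections of size $\log_2 m$ over a universe of size $O(\log m)$, Yao's next-bit-predictor hybrid argument, averaging to fix the seed bits outside $S_i$, hard-wiring each $f_\ell(x|_{S_j})$ as a size-$2^{|S_i\cap S_j|}$ lookup table, and calibrating $\ell=\Theta_\delta(\log m)$ so the resulting circuit stays below size $2^{\delta\ell}$ while its agreement $1/2+1/m^2$ beats $1/2+2^{-\delta\ell}$. This matches the paper's Construction~\ref{const:NW}, Lemma~\ref{lem:design}, and Theorem~\ref{thm:NWgeneral} together with the stated parameter instantiation.
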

Similar to Theorem~\ref{thm:IW}, this is a specific instance of a quantitative tradeoff between hardness and derandomization. In particular, if we replace both occurrences of the exponential bound $2^{\delta\ell}$ with a superpolynomial bound $\ell^{\omega(1)}$, we obtain the ``low-end'' conclusion that $\BPP\subseteq \SUBEXP$.
However, the hypothesis in Theorem~\ref{thm:NWspecific} is significantly stronger in that it assumes average-case hardness rather than worst-case hardness, and very strong average-case hardness at that: no small circuit can compute $f$ with probability much better than random guessing.  In the next section, we will discuss how Wigderson and collaborators relaxed the average-case hardness assumption to a worst-case one in order to obtain Theorem~\ref{thm:IW}.

The starting point for Theorem~\ref{thm:NWspecific} is the realization, implicit in Yao~\cite{Yao82}, that if $f$ is $(s,1/2-\eps)$ average-case hard, then $G(x)=(x,f(x))$ is an $(s-O(1),\eps)$ pseudorandom generator. That is, by applying $f$ once on a uniformly random input, we obtain one pseudorandom bit (beyond the $d=\ell$ truly random bits in the seed).  So, to obtain may pseudorandom bits, we can try applying $f$ many times.  For this to provide a generator with large stretch (i.e. with output length superlinear in the input length), we cannot evaluate $f$ on independent random inputs, but rather need to generate many correlated inputs, but ensure that the correlations don't destroy the pseudorandomness.

The idea, building on Nisan~\cite{Nisan91-AC}, is to use inputs to $f$ that share very few bits.
Specifically, the sets of seed bits used for each input to $f$ will be given by a {\em design}:
\begin{definition}
$S_1,\cdots,S_m \subseteq [d]$ is an {\em $(\ell,a)$-design} if
\begin{enumerate}
\item $\forall i, |S_i|=\ell$
\item $\forall i \not= j, |S_i \cap S_j| \leq a$
\end{enumerate}
\end{definition}
It turns out that there exist designs with
lots of sets having small intersections over a small universe:

\begin{lemma} \label{lem:design}
For every every $\ell,m \in \N$, there exists an
$(\ell,a)$-design $S_1,\cdots, S_m \subseteq [d]$ with
$d=O\left(\frac{\ell^2}{a} \right)$ and $a = \log_2 m$. Such a design
can be constructed deterministically in time $\poly(m,d)$.
\end{lemma}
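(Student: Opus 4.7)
My plan is to establish the lemma in two stages: first, prove existence of designs with the claimed parameters via a probabilistic argument, then give an explicit polynomial-time construction based on low-degree polynomials over a finite field.

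For existence, I take $d = C\ell^2/a$ for a sufficiently large absolute constant $C$ and sample $S_1,\ldots,S_m$ independently, each a uniformly random $\ell$-subset of $[d]$. For any fixed pair $i \neq j$, the intersection $|S_i \cap S_j|$ is a hypergeometric variable with mean $\ell^2/d = a/C$. A Chernoff-type tail bound yields $\Pr[|S_i \cap S_j| > a] \leq (e/C)^{a}$, which (using $a = \log_2 m$ and $C$ large) is smaller than $m^{-3}$. A union bound over the $\binom{m}{2}$ pairs then gives nonzero probability that all intersections are at most $a$, establishing existence.

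For an explicit construction, I use the classical algebraic design. Let $q$ be the smallest prime power with $q \geq \ell$, and identify the universe with $\mathbb{F}_q \times \mathbb{F}_q$, giving size $O(\ell^2)$. Fix an $\ell$-subset $H \subseteq \mathbb{F}_q$, and for each polynomial $p \in \mathbb{F}_q[X]$ of degree less than $a$, let $S_p = \{(x,p(x)) : x \in H\}$. Then $|S_p| = \ell$; and for distinct $p \neq p'$, the difference $p - p'$ is a nonzero polynomial of degree $< a$, hence has fewer than $a$ roots, so $|S_p \cap S_{p'}| < a$. Since there are $q^a \geq 2^a = m$ such polynomials, we obtain enough sets, and enumerating and evaluating them takes time $\poly(m,d)$.

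The main obstacle is that this algebraic construction achieves only $d = O(\ell^2)$, not the sharper $d = O(\ell^2/a)$ stated in the lemma. To close the gap I would derandomize the probabilistic argument via the method of conditional expectations, using as a pessimistic estimator the sum over pairs $(i,j)$ (and over sets still being built) of the moment-generating-function bound $\mathbb{E}[\exp(\lambda|S_i \cap S_j|)]$ that drives the Chernoff step. Each candidate element added to a set is chosen to keep this estimator nonincreasing; since the estimator starts below $1$, the greedy process never fails and produces a valid design in time $\poly(m,d)$. The delicate part is verifying that this estimator admits an efficiently computable incremental update compatible with the hypergeometric (size-constrained) sampling, which is why one typically reformulates the random choices as $\ell$ rounds of sampling with suitable conditioning.
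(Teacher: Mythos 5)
The survey states this lemma without proof, so there is no in-paper argument to measure you against; your plan follows the standard Nisan--Wigderson route (probabilistic existence plus derandomization by conditional expectations), and the existence half is correct as written: the hypergeometric Chernoff bound and the union bound over pairs go through for a large enough constant $C$. Your algebraic construction over $\mathbb{F}_q\times\mathbb{F}_q$ is also correct but, as you rightly observe, only yields $d=O(\ell^2)$. This is not a cosmetic loss: in the regime relevant to Theorem~\ref{thm:NWspecific} one has $a=\Theta(\ell)$, so the difference between $O(\ell^2)$ and $O(\ell^2/a)=O(\ell)$ is exactly the difference between seed length $O(\log^2 m)$ and $O(\log m)$, i.e., between $\BPP\subseteq\quasiP$ and $\BPP=\P$.

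The one genuine gap is the derandomization step, and you have correctly located where it lives: under uniform $\ell$-subset (hypergeometric) sampling the moment generating function does not factor over elements, so the pessimistic estimator has no obvious incremental update. The standard fix is to change the sample space rather than fight the conditioning. Build the sets greedily, one at a time; when choosing $S_i$, include each element of $[d]$ independently with probability $p=2\ell/d$. Then for each previously fixed $S_j$ the estimator $\mathbb{E}\bigl[t^{|S_i\cap S_j|}\bigr]/t^a=(1+(t-1)p)^{\ell}/t^a$ (for a suitable constant $t>2$, since $t=2$ is too weak against the union bound at threshold $a=\log_2 m$) is a product over the elements of $S_j$, as is the estimator controlling $\Pr[|S_i|<\ell]$; both therefore admit trivial updates as the $d$ inclusion bits are fixed one by one. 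The combined estimator starts below $1$ because the union bound now ranges only over the $i-1<m$ already-constructed sets, so the greedy bit-fixing never fails; finally one discards surplus elements to bring $|S_i|$ down to exactly $\ell$, which can only shrink intersections. This yields the claimed deterministic $\poly(m,d)$ construction and completes your argument.
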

The important points are that intersection sizes are only logarithmic in the number of sets, and
the universe size $d$ is linear in $\ell$ in case we take $m=2^{\Omega(\ell)}$.

\begin{construction}[Nisan--Wigderson Generator] \label{const:NW}
Given a function $f: \zo^\ell \to
\zo$ and an $(\ell,a)$-design $S_1,\cdots, S_m \subseteq [d]$, define the {\em Nisan--Wigderson generator} $G: \zo^d \to \zo^m$ as
$$G(x) = f(x|_{S_1}) f(x|_{S_2}) \cdots f(x|_{S_m})$$
where if $x$ is a string in $\zo^d$ and $S \subseteq [d]$, then $x|_S$ is the
string of length $|S|$ obtained from $x$ by selecting the bits indexed by
$S$.
\end{construction}

This elegant construction is analyzed as follows.

\begin{theorem} \label{thm:NWgeneral}
Let $G : \zo^d\rightarrow \zo^m$ be the Nisan--Wigderson generator
based on a function $f : \zo^\ell\rightarrow \zo$ and some $(\ell,a)$ design.
If $f$ is $(s,1/2 - \eps/m)$ average-case hard, then
$G$ is a $(s',\eps)$ pseudorandom generator, for $s'=s-m\cdot 2^a$.
\end{theorem}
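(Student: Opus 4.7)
The plan is to prove the contrapositive: assuming a circuit $T$ of size $s'$ that distinguishes $G(U_d)$ from $U_m$ with advantage more than $\eps$, construct a circuit of size $\leq s$ that predicts $f$ on a uniformly random input with probability greater than $1/2 + \eps/m$, contradicting the $(s, 1/2 - \eps/m)$ average-case hardness of $f$.

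The first step is Yao's hybrid argument (the ``distinguisher-to-predictor'' transformation). Define hybrid distributions $H_0, H_1, \ldots, H_m$ where $H_i$ consists of the first $i$ bits of $G(U_d)$ followed by $m-i$ fresh uniform bits. Then $H_0 = U_m$ and $H_m = G(U_d)$, so by averaging there is some index $i\in [m]$ such that $T$ distinguishes $H_{i-1}$ from $H_i$ with advantage greater than $\eps/m$. A standard argument (fixing the last $m-i$ bits to their best value and comparing the $i$-th bit being $f(U_d|_{S_i})$ versus a fresh uniform bit) then yields a circuit $P$, of size essentially $s'$ plus $O(m)$ for the fixed bits, that on input $f(U_d|_{S_1}),\ldots,f(U_d|_{S_{i-1}})$ outputs a guess for $f(U_d|_{S_i})$ correct with probability greater than $1/2 + \eps/m$ over the choice of $U_d$.

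The second step uses the design property to convert $P$ into a circuit $C$ that takes as input $y = U_d|_{S_i}\in\zo^\ell$ and predicts $f(y)$. By averaging over the remaining seed bits $U_d|_{[d]\setminus S_i}$, fix them to the values that maximize $P$'s prediction advantage; the advantage $>1/2+\eps/m$ is preserved. With those bits hard-wired, each previous output bit $f(U_d|_{S_j})$ for $j<i$ depends only on the coordinates of $y$ indexed by $S_j\cap S_i$, which has size at most $a$. Therefore each such bit is a function of at most $a$ bits of $y$ and can be implemented by a brute-force lookup circuit of size $O(2^a)$. Hard-wiring $m-1$ such subcircuits in front of $P$ yields a circuit $C$ of size at most $s' + m\cdot 2^a = s$ that computes $f(y)$ correctly with probability greater than $1/2 + \eps/m$ on uniform $y\in\zo^\ell$, contradicting the assumed $(s,1/2-\eps/m)$ average-case hardness of $f$.

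The main technical obstacle is the size accounting in the reconstruction step: one must be careful that each of the $m-1$ hard-wired previous-bit subcircuits really costs only $2^a$ (using the $|S_j\cap S_i|\leq a$ design property), rather than $2^\ell$, and that the fixing of the seed bits outside $S_i$ preserves the $\eps/m$ prediction advantage on average (so that at least one fixing preserves it). Both are routine given the design property and an averaging argument, but they are the crux of why the design is defined the way it is. The hybrid step is standard, and once $C$ is constructed the contradiction with Definition~\ref{def:average-case-hard} is immediate.
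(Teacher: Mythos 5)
Your proposal is correct and follows essentially the same route as the paper: a distinguisher is converted to a next-bit predictor (the paper cites Yao's pseudorandomness/unpredictability equivalence where you unpack the hybrid argument explicitly), then the seed bits outside $S_i$ are fixed by averaging and each previous output bit is hard-wired as a $2^a$-size lookup table using the design property, yielding a circuit of size at most $s'+m\cdot 2^a=s$ that contradicts the hardness of $f$. The size accounting and the averaging step you flag as the crux are exactly the points the paper's proof makes.
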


Theorem~\ref{thm:NWspecific} follows from Theorem~\ref{thm:NWgeneral}
by setting $\eps=1/m$, $a=\log_2 m$, and $s=2^{\delta \ell}$, and observing that
for $\ell=(1/\delta)\cdot\log_2(2m^2) = O(\log m)$, we have $$s' = s-m\cdot 2^a = 2m^2 - m^2\geq m,$$
and $\eps/m \leq 1/2^{\delta \ell}$,
so we have an $(m,1/m)$ pseudorandom generator.  The seed length
is $d=O(\ell^2/\log m) = O(\log m).$

\begin{proof}
Suppose for contradiction that $G$ is not an $(s',\eps)$ pseudorandom generator. By the equivalence of pseudorandomness and next-bit unpredictability~\cite{Yao82},
there is a size $s'$
circuit 
$P$ such that
\begin{equation}
\Pr[P(f(X|_{S_1}) f(X|_{S_2}) \cdots f(X|_{S_{i-1}}))=f(X|_{S_i})] >
\frac{1}{2} + \frac{\eps}{m},
\end{equation}
for some $i\in [m]$ and a uniformly random $X\gets \zo^d$.
From $P$, we will construct a small circuit $A$
that computes $f$ on a uniformly random input with probability greater than $1/2+\eps/m$.

Let $Y=X|_{S_i}$.  By averaging, we can fix
all bits of $X|_{\overline{S_i}}=z$ (where $\overline{S_i}$ is the complement of $S$) such that
the prediction probability remains greater than $1/2+\eps/m$
over $Y$.
Define $f_j(y)=f(x|_{S_j})$ for $j
\in \{1,\cdots,i-1\}$. (That is, $f_j(y)$ forms $x$ by placing $y$
in the positions in $S_i$ and $z$ in the others, and then applies
$f$ to $x|_{S_j}$).  Then

$$\Pr_Y[P(f_1(Y) \cdots f_{i-1}(Y))=f(Y)] > \frac{1}{2} +
\frac{\eps}{m}.$$

Note that $f_j(y)$ depends on only $|S_i \cap S_j| \leq a$ bits of
$y$. Thus, we can compute each $f_j$ with a look-up table hardwired into our circuit. Indeed, every function on $a$ bits can be computed
by a boolean circuit of size at most $2^a$.  (In fact,  size
at most $O(2^a/a)$ suffices.)

Then, by considering $A(y)=P(f_1(y)\cdots f_{i-1}(y))$, we
deduce that
$f$ can be computed with error probability smaller than $1/2-\eps/m$ by a boolean circuit of size at most $s'+(i-1)\cdot 2^a < s'+m\cdot 2^a = s.$
This contradicts the
hardness of $f$.  Thus, we conclude $G$ is an $(m,\eps)$ pseudorandom generator.
\end{proof}

\subsubsection{Pseudorandom Generators from Worst-Case Lower Bounds}
\label{sec:worstcase-avgcase}

As we saw in the previous section, the Nisan--Wigderson construction gives us pseudorandom generators from boolean functions that are very hard on average, where every boolean circuit of size $2^{\delta \ell}$ must err with probability
greater than $1/2-1/2^{\delta \ell}$ on a random input.
In works with Babai, Fortnow, and Nisan~\cite{BabaiFoNiWi93} and Impagliazzo~\cite{ImpagliazzoWi97}, Wigderson showed how to relax the assumption to worst-case hardness, yielding Theorem~\ref{thm:IW}.
This was done by showing how to convert worst-case hard functions into average-case hard functions, which again we state only in the high-end regime of parameters:

\begin{theorem}[worst-case to average-case hardness for $\ETIME$~\cite{ImpagliazzoWi97}] \label{thm:amplifyhardness}
Suppose that for a constant $\delta>0$, there is a function in $\ETIME$ that has circuit complexity at least $2^{\delta\ell}$ on inputs of length $\ell$. Then there is
a constant $\delta'>0$ and a function 
in $\ETIME$
that is $(2^{\delta' \ell},1/2-1/2^{\delta' \ell})$ average-case hard.
\end{theorem}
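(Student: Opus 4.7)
The plan is to carry out the worst-case to average-case reduction in two stages: first amplify worst-case hardness to mild average-case hardness via a low-degree extension, then amplify mild average-case hardness to the near-$1/2$ regime via a (derandomized) XOR lemma.

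Stage one: given $f \in \ETIME$ with circuit complexity at least $2^{\delta \ell}$ on inputs of length $\ell$, identify an input string with a point in $[h]^k$ for suitable parameters $h, k$ with $h^k = 2^\ell$, and let $\F$ be a finite field with $|\F| = h^{O(1)}$. Let $\hat{f} : \F^k \to \F$ be the (unique) multilinear or total-degree-$(h-1)k$ extension of $f$ viewed as a function $[h]^k \to \zo$. Then $\hat{f}$ is computable in time $2^{O(\ell)}$ by brute-force interpolation, so it is in $\ETIME$. The classical random-self-reducibility argument shows that any circuit $C$ agreeing with $\hat{f}$ on a $1-1/\poly(\ell)$ fraction of inputs yields, via restriction to a random line and univariate interpolation/error-correction (Berlekamp--Welch), a probabilistic circuit computing $\hat{f}$ (and hence $f$) everywhere with only a polynomial blowup in size. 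Hence if $f$ has circuit complexity $2^{\delta \ell}$, then $\hat{f}$ is $(2^{\Omega(\delta \ell)}, 1/\poly(\ell))$ average-case hard.

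Stage two: from a function $g$ of input length $n = O(\ell)$ that cannot be computed on more than a $1-\alpha$ fraction of inputs by circuits of size $s$, produce a function of input length $O(n)$ that is hard to predict with advantage more than $2^{-\Omega(n)}$ by circuits of size $2^{\Omega(n)}$. The naive tool is Yao's XOR lemma: for independent uniform $x_1,\dots,x_t$, the function $g(x_1) \oplus \cdots \oplus g(x_t)$ is $(s', 2(1-\alpha)^t + \eps)$ hard for $s' \approx s \cdot \eps^2 / t$. Taking $t = \Theta(\ell / \alpha)$ drives the prediction advantage below $2^{-\Omega(\ell)}$, but this blows the input length up to $tn = \omega(\ell)$, which would weaken the final parameters. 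The fix, which is the technical heart of the argument, is a \emph{derandomized} XOR lemma: instead of drawing $x_1,\ldots,x_t$ independently, one draws them via a pseudorandom generator (for instance, the Impagliazzo--Wigderson construction based on the NW generator using $g$ itself as the hard function, or a random walk on an expander) using only $O(n)$ truly random bits, while preserving the hardness conclusion.

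The main obstacle is the derandomized XOR step: one must argue that with $O(n)$-bit seeds feeding a PRG whose outputs are used as the $t$ inputs to $g$, the XOR of the $g$-values still behaves like the fully independent XOR with respect to small circuits. This is proved by a hybrid / Impagliazzo hardcore-set argument, where one isolates a hardcore distribution on which $g$ is essentially unpredictable, and then shows that the pseudorandomness of the walk forces a noticeable fraction of the $t$ coordinates to land in the hardcore set; the XOR over hardcore coordinates is then close to uniform. Once this is in hand, combining with stage one gives a function $g^*$ of input length $O(\ell)$ that is $(2^{\Omega(\delta \ell)}, 1/2 - 2^{-\Omega(\delta \ell)})$ average-case hard, and since both the low-degree extension and the derandomized XOR are computable in time $2^{O(\ell)}$ given oracle access to $f$ (which is itself in $\ETIME$), $g^*$ lies in $\ETIME$. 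Setting $\delta'$ to be a small enough constant multiple of $\delta$ yields the statement.
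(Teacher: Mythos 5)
Your outline is correct and is exactly the route of the proof that this theorem's citation points to (the survey itself states the result without proof): first the Babai--Fortnow--Nisan--Wigderson low-degree-extension plus random-self-reducibility step to obtain mild average-case hardness on inputs of length $O(\ell)$, then the Impagliazzo--Wigderson derandomized XOR lemma --- a generator built from expander walks, pairwise independence, and an NW-design construction, analyzed via Impagliazzo's hardcore lemma --- to reach hardness $1/2-2^{-\Omega(\ell)}$ with only an $O(\ell)$ seed, so the final input length stays linear in $\ell$ and the function stays in $\ETIME$. The one detail to add is that the low-degree extension is $\mathbb{F}$-valued, so it must be Booleanized (e.g., by concatenating with the Hadamard/parity code) before the XOR step, as is done in the original papers.
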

Combining Theorem~\ref{thm:amplifyhardness} and Theorem~\ref{thm:NWspecific} yields the high-end part of Theorem~\ref{thm:IW}.

Beyond the application to pseudorandomness and derandomization, the relationship between worst-case complexity and average-case complexity is a central question in complexity theory. (See the survey \cite{BogdanovTr06}.) In particular, whether a similar result is true for $\NP$ (rather than $\ETIME$) remains a major open problem.

\subsection{Expanders, Extractors, and Ramsey Graphs}

Another area in which randomness has proved very useful is in 
the Probabilistic Method~\cite{AlonSp16}, whereby mathematical objects with interesting properties are proven to exist by showing that a randomly chosen object has the desired property with high (or at least nonzero) probability.  A famous example is \Erdos' existence proof for Ramsey graphs --- graphs with no large clique or independent set~\cite{Erdos47}.

In such cases, the problem of derandomization becomes one of finding {\em explicit constructions} of objects with the desired properties.  The search for explicit constructions is of pure mathematical interest, as a way of developing and testing our understanding of the mathematical properties at hand.  They are also important for many computer science applications, where we need efficient algorithms to describe and work with the objects.

In this section, we survey Wigderson's contributions to explicit constructions, in particular to the constructions of expander graphs, randomness extractors, and Ramsey graphs, as well as identifying and exploiting the connections between these.

\subsubsection{Expander Graphs} \label{sec:expanders}

{\em Expander graphs} are graphs that are ``sparse'' yet very ``well-connected.''  They are ubiquitous in theoretical computer science, with applications including communication and routing networks~\cite{Pinsker73,PelegUp89}, derandomizing algorithms~\cite{AjtaiKoSz83,Reingold08}, error-correcting codes~\cite{Guruswami06-EATCS}, lower bounds on circuit complexity~\cite{Valiant76} and proof complexity~\cite{Ben-SassonW01}, integrality gaps for optimization problems~\cite{LinialLoRa95,AumannRa98}, data structures~\cite{BuhrmanMiRaVe02}, fault-tolerant storage~\cite{UpfalWi87} and more.  A rich mathematical theory has developed around constructing expanders and understanding their properties; we refer to Wigderson's survey with Hoory and Linial~\cite{HooryLiWi06} as well as \cite{Vadhan12-fnttcs,Trevisan17,Lau22} for many aspects that we will not be able to cover here.

We will typically interpret the properties of expander graphs in an asymptotic
sense. That is, there will be an infinite family of graphs $G_i$,
with a growing number of vertices $N_i$.  By ``sparse,'' we mean
that the (maximum or average) degree $D_i$ of $G_i$ should be very slowly growing as a
function of $N_i$.  The ``well-connectedness'' property has a variety
of different interpretations, which we will discuss below.
Typically, we will drop the subscripts of $i$ and the fact that we
are talking about an infinite family of graphs will be implicit in
our theorems.  We will state many of
our definitions for {\em directed multigraphs} (which we'll call {\em digraphs} for short), though in the end we will
mostly study undirected multigraphs. %

The most intuitive definition of expansion is the following.
\begin{definition} \label{def:vertex-expansion}
A digraph $G$ is a $(K,A)$ \emph{vertex expander} if  for all sets
$S$ of at most $K$ vertices, the {\em (out-)neighborhood} $N(S) \eqdef
\set{u|\exists v\in S \mbox{ s.t. } (u,v)\in E}$ is of size at least
$A\cdot |S|$.
\end{definition}
Ideally, we would like graphs with degree $D=O(1)$, and $(K,A)$ vertex expansion with $K = \Omega(N)$ where $N$ is the number of vertices, and $A$ as close
to $D$ as possible.

It is often useful to work instead with a linear-algebraic measure of expansion.  For simplicity, we restrict attention to regular graphs in presenting the definition.

\begin{definition}
    Let $G$ be an $N$-vertex $D$-regular digraph with random-walk matrix $M$ (so $M_{ij}$ equals the number of edges from $i$ to $j$ divided by $D$).  Let $\sigma_2(G)\in [0,1]$ denote the second-largest singular value of $M$.  The {\em spectral expansion} of $G$ is $\gamma(G) = 1-\sigma_2(G)$.\footnote{In some other sources, the term spectral expansion refers to $\sigma_2(G)$ rather than $\gamma(G)$.  Here we use $\gamma(G)$, because it has
 the more natural feature that larger values of $\gamma$ correspond to the graph being ``more expanding''.}
\end{definition}
Ideally, we would like an infinite family of graphs with degree $D=O(1)$ and $\gamma(G)=\Omega(1)$.
Alon~\cite{Alon86a} proved that this linear-algebraic measure of expansion is
equivalent to the combinatorial measure of vertex expansion for
common parameters of interest. 

\begin{theorem}[\cite{Alon86a}] \label{cor:VertSpec}
Let $\GFam$ be an infinite family of $D$-regular multigraphs, for a constant $D\in \N$.  Then the following two conditions are equivalent:
\begin{itemize}
\item There is a constant $\delta>0$ such that every $G\in \GFam$ is an $(N/2,1+\delta)$ vertex expander.
\item There is a constant $\gamma>0$ such that every $G\in \GFam$ has spectral expansion at least $\gamma$.
\end{itemize}
\end{theorem}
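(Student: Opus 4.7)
Plan: The strategy is to bridge vertex expansion and spectral expansion through the intermediate notion of edge expansion (the Cheeger constant $h(G)$), using Cheeger-type inequalities.

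For the direction ``vertex expansion $\Rightarrow$ spectral expansion'': starting from $|N(S)|\ge (1+\delta)|S|$ and the trivial bound $|N(S)\cap S|\le |S|$, I would deduce $|N(S)\setminus S|\ge \delta|S|$; since each such vertex contributes at least one edge to $E(S,V\setminus S)$, this gives edge expansion $h(G)\ge \delta$. The standard Cheeger inequality $h(G)/D\le \sqrt{2(1-\lambda_2/D)}$ then yields $\lambda_2/D\le 1-\delta^2/(2D^2)$. To bound $|\lambda_N|/D$ strictly below $1$, I would argue that $(N/2,1+\delta)$-vertex expansion rules out near-bipartite structure: for any balanced bipartition $V=A\sqcup B$ with only an $\eta$-fraction of edges within parts, $D$-regularity forces $e_{AA}=e_{BB}=\eta DN/4$, and from $\sum_{v\in A}|N(v)\cap A|=2e_{AA}$ at most $2e_{AA}=\eta DN/2$ vertices of $A$ have a neighbor in $A$; this yields $|N(A)|\le (N/2)(1+\eta D)$, contradicting the expansion hypothesis unless $\eta\ge \delta/D$. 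A dual (bipartite) Cheeger inequality relating the bipartiteness ratio to $1-|\lambda_N|/D$ then converts this into a positive lower bound on $1-|\lambda_N|/D$, giving $\sigma_2\le 1-\gamma$ for some $\gamma=\gamma(\delta,D)>0$.

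For the direction ``spectral expansion $\Rightarrow$ vertex expansion'': the primary tool is the expander mixing lemma (EML) applied with $T=V\setminus N(S)$. Since no edges cross between $S$ and $T$, EML forces $|S|\cdot|V\setminus N(S)|\le \sigma_2^2 N^2$, which gives $|N(S)|\ge N-\sigma_2^2 N^2/|S|$. At $|S|=N/2$ this yields $|N(S)|/|S|\ge 2-4\sigma_2^2$, already exceeding $1+\delta$ once $\sigma_2$ is sufficiently bounded below $1/2$. To cover the regime where $\gamma$ may be small (so $\sigma_2$ is close to $1$), I would use Tanner's refinement of the EML, which yields $|S\cup N(S)|/|S|\ge 2/(1+\sigma_2^2)>1$ for every $\sigma_2<1$; combined with the fact that $\sigma_2<1$ (specifically $|\lambda_N|<D$) prevents any $S$ of size $N/2$ from being independent, so $|N(S)\cap S|$ is bounded away from zero, this translates into open-neighborhood expansion $|N(S)|/|S|\ge 1+\delta$ for some $\delta=\delta(\gamma,D)>0$, after a case split on $|S|$ small versus near $N/2$.

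The main obstacle is that the singular-value formulation $\sigma_2=\max(\lambda_2,-\lambda_N)/D$ bundles two distinct spectral phenomena --- absence of edge expansion ($\lambda_2$ near $D$) and near-bipartiteness ($|\lambda_N|$ near $D$) --- and both must be controlled in the vertex-implies-spectral direction. Standard Cheeger addresses only $\lambda_2$, so the dual Cheeger bound ruling out bipartite-like cuts is the crux. All constants depend polynomially on $\delta$, $\gamma$, and $1/D$; since $D$ is fixed across the family, the requested uniform positive constants follow.
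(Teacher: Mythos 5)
The paper states this theorem without proof (it simply cites Alon), so your argument has to stand on its own. The direction ``spectral $\Rightarrow$ vertex'' is essentially right, but note that Tanner's inequality in its standard form already bounds $|N(S)|$ itself: $|N(S)|\ge |S|/\bigl(\sigma_2^2+(1-\sigma_2^2)|S|/N\bigr)\ge 2|S|/(1+\sigma_2^2)$ for all $|S|\le N/2$, which gives expansion $1+\delta$ with $\delta=(1-\sigma_2^2)/(1+\sigma_2^2)\ge\gamma/2$ uniformly in $|S|$, with no case split and no need for the mixing-lemma step. The version you quote, with $|S\cup N(S)|$ in place of $|N(S)|$, is genuinely weaker: when $\sigma_2\ge 1/2$ the graph can contain an independent set $S$ of size $N/2$, and for such $S$ the bound $|S\cup N(S)|\ge 2|S|/(1+\sigma_2^2)$ only yields $|N(S)|\ge\frac{1-\sigma_2^2}{1+\sigma_2^2}|S|\ll|S|$; your patch via ``$|N(S)\cap S|$ bounded away from zero'' cannot recover this loss. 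Use the correct form of Tanner and this direction is one paragraph.

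The genuine gap is in the converse direction, exactly at the point you call the crux. The combinatorial fact you extract from vertex expansion --- that every \emph{balanced bipartition of the entire vertex set} $V=A\sqcup B$ has at least a $\delta/D$ fraction of non-crossing edges --- is strictly weaker than what any dual Cheeger inequality consumes. Trevisan's bipartiteness ratio is a minimum over \emph{all} subsets $S\subseteq V$ together with bipartitions $S=L\sqcup R$, and $\lambda_N$ can be pushed to $-D$ by a small, nearly isolated, nearly bipartite piece that global balanced bipartitions never see. Concretely, the disjoint union of one copy of $K_{D,D}$ with a large $D$-regular expander has $\lambda_N=-D$ (spectral expansion $0$), yet every balanced bipartition of its vertex set keeps a constant fraction of edges non-crossing because of the expander part; so the implication ``your bipartition condition $\Rightarrow$ $|\lambda_N|\le(1-\gamma)D$'' is simply false, and no spectral inequality can be invoked from it. (That graph also fails vertex expansion --- take $S$ to be one side of the $K_{D,D}$ --- which is precisely the information your argument discards.) The repair is to run your neighborhood computation locally: given any $S$ with bipartition $L\sqcup R$, $|L|\le|R|$, witnessing bipartiteness ratio at most $\eta$, regularity forces $|L|\ge(1/2-O(\eta))|S|$ and $N(L)\subseteq R$ up to $O(\eta D|S|)$ exceptional vertices, whence $|N(L)|\le(1+O(\eta D))|L|$ with $|L|\le N/2$; vertex expansion then gives $\eta\ge\Omega(\delta/D)$ for every such $S$, and Trevisan's inequality converts this lower bound on the bipartiteness ratio into $|\lambda_N|\le\bigl(1-\Omega(\delta^2/D^2)\bigr)D$. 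Your treatment of $\lambda_2$ via the ordinary Cheeger inequality is correct as written.
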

When people informally use the term
``expander,'' they often mean a family of regular graphs of {\em constant degree} $D$ satisfying one of the two equivalent conditions above.  However, we note that the quantitative relationship between vertex expansion and spectral expansion is lossy, so optimizing one of these measures of expansion need not yield optimality with respect to the other.

We can get more intuition for spectral expansion by considering some equivalent formulations of it.
Since $G$ is regular, the uniform distribution, written as a row vector $u=(1/N,\ldots,1/N)$, is an eigenvector of the random-walk matrix $M$ of eigenvalue 1, i.e. $uM = u$.  By the Perron-Frobenius Theorem, the largest singular value of $M$ equals 1, and thus we have the following variational characterization of spectral expansion. 

\begin{lemma}
 $1-\gamma(G) = \sigma_2(G) = \max_{x\perp u} \frac{\|xM\|}{\|x\|} = \max_\pi \frac{\|\pi M-u\|}{\|\pi-u\|},$
where the first maximum is over all nonzero row vectors $x\in \R^N$ that are orthogonal to $u$, and the second maximum is over all probability distributions $\pi\in [0,1]^N$ (also written as row vectors).
\end{lemma}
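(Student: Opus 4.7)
The plan is to verify each equality in turn. The first, $1-\gamma(G) = \sigma_2(G)$, is immediate from the definition of $\gamma(G)$ given just before the lemma, so the substance lies in the two variational characterizations of $\sigma_2$.

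For $\sigma_2(G) = \max_{x \perp u} \|xM\|/\|x\|$, I would first show that $u$ is a top left singular vector of $M$ corresponding to $\sigma_1(M) = 1$. Since $G$ is $D$-regular (both in- and out-degrees equal $D$), the random-walk matrix $M$ is doubly stochastic: each row sums to $1$ by definition of $M$ and each column sums to $1$ by in-degree regularity. Hence $uM = u$ and $uM^T = u$, giving $uMM^T = u$. A short weighted Cauchy--Schwarz calculation applied to $\|xM\|^2 = \sum_j(\sum_i x_i M_{ij})^2$, using that both the row and column sums of $M$ equal $1$, yields $\|xM\| \leq \|x\|$ for every $x$, so $\sigma_1(M) \leq 1$. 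Combined with $\|uM\| = \|u\|$, we obtain $\sigma_1(M) = 1$ with $u$ attaining this top singular value. The claimed equality then follows from the standard SVD characterization $\sigma_2(M) = \max\{\|xM\|/\|x\| : x \neq 0,\, x \perp v\}$ for any top left singular vector $v$, which we may take to be $u$.

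For the final equality $\sigma_2(G) = \max_\pi \|\pi M - u\|/\|\pi - u\|$, where the max is implicitly over probability distributions $\pi \neq u$, I would use $uM = u$ to rewrite $\pi M - u = (\pi - u)M$, so that the ratio becomes $\|(\pi - u)M\|/\|\pi - u\|$. Setting $x = \pi - u$, note that $x \perp u$: both $\pi$ and $u$ have coordinate-sum $1$, so $\sum_i x_i = 0$. This gives one direction of inequality from the preceding characterization. For the reverse direction, given any nonzero $x \perp u$, choose $\eps > 0$ small enough that $\pi \eqdef u + \eps x$ is a probability distribution (taking $\eps \leq 1/(N\|x\|_\infty)$ ensures nonnegativity, while $\sum_i x_i = 0$ keeps the entries summing to $1$). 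This $\pi$ realizes the ratio $\|xM\|/\|x\|$, so every value achieved on the SVD side is also achieved by some probability distribution, and the two maxima agree.

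The step requiring the most care is verifying that $u$ is genuinely a top left singular vector: this combines the doubly-stochastic operator-norm bound with the identity $uMM^T = u$. Without it, restricting to $x \perp u$ would not in general pick out the second singular direction. The remaining arguments are routine manipulations of the SVD and the elementary geometry of the probability simplex around its center $u$.
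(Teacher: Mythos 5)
Your proof is correct and follows the route the paper only gestures at: the paper states this lemma without proof, merely noting that $uM=u$ and that the top singular value equals $1$ "by the Perron--Frobenius Theorem," and your argument supplies exactly the missing details — that double stochasticity gives both $\sigma_1(M)\le 1$ (your Cauchy--Schwarz bound, which replaces the paper's appeal to Perron--Frobenius) and $uMM^T=u$, so that $u$ is genuinely a top left singular vector and the restriction to $x\perp u$ picks out $\sigma_2$. The translation between the $x\perp u$ formulation and the probability-distribution formulation via $\pi = u+\eps x$ is also exactly right, including the observation that $\sum_i x_i=0$ is what makes $x\perp u$ equivalent to $\pi$ lying in the affine hull of the simplex.
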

That is, if we start at any probability distribution $\pi$ on the vertices of $G$ and take one step of the random walk to end up at probability distribution $\pi M$, the $\ell_2$ distance to uniform will shrink by at least a factor of $1-\gamma(G)$.  So if $\gamma(G)$ is bounded away from 0, then random walks on $G$ will converge quickly to the uniform distribution.

Another useful characterization of $\gamma(G)$ is as follows.
\begin{lemma} \label{lem:MatrixDecomposition}
$\gamma(G)\geq \gamma$ iff we can write $M=\gamma J + (1-\gamma) E$, where $J$ is the matrix with every entry $1/N$ and $\|E\|\leq 1$, where $\|E\|$ is the spectral norm of $E$.
\end{lemma}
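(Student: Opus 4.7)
The plan is to establish both directions via the orthogonal decomposition $\R^N = \mathrm{span}(\mathbf{1}) \oplus \mathbf{1}^\perp$ and to exploit how $M$ and $J$ act on each summand. Because $G$ is $D$-regular (hence both in- and out-regular), the random-walk matrix $M$ is doubly stochastic, so $\mathbf{1}$ is both a left and a right eigenvector with eigenvalue $1$, and in fact a top right singular vector of $M$. The matrix $J = \tfrac{1}{N}\mathbf{1}\mathbf{1}^T$ is exactly the orthogonal projection onto $\mathrm{span}(\mathbf{1})$, so $Jx = x$ on $\mathrm{span}(\mathbf{1})$ and $Jx = 0$ on $\mathbf{1}^\perp$. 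Moreover, since $\mathbf{1}^T M x = \mathbf{1}^T x = 0$ for $x \perp \mathbf{1}$, the subspace $\mathbf{1}^\perp$ is $M$-invariant, and the operator norm of $M$ restricted to it equals $\sigma_2(M) = 1 - \gamma(G)$.

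For the forward direction (assume $\gamma(G) \geq \gamma$), I would simply define $E := (M - \gamma J)/(1-\gamma)$, treating the degenerate case $\gamma = 1$ separately (there $\sigma_2(M) = 0$ forces $M = J$ and one can take $E$ to be any matrix of norm at most $1$). Writing an arbitrary $x \in \R^N$ as $x = x_\parallel + x_\perp$ with $x_\parallel \in \mathrm{span}(\mathbf{1})$ and $x_\perp \perp \mathbf{1}$, a direct computation gives
\[
Ex \;=\; x_\parallel \;+\; \frac{Mx_\perp}{1-\gamma},
\]
and the two summands are orthogonal because $Mx_\perp \in \mathbf{1}^\perp$. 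The bound $\|Mx_\perp\| \leq (1-\gamma)\|x_\perp\|$ then yields by Pythagoras $\|Ex\|^2 \leq \|x_\parallel\|^2 + \|x_\perp\|^2 = \|x\|^2$, so $\|E\| \leq 1$.

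For the reverse direction, assume $M = \gamma J + (1-\gamma)E$ with $\|E\| \leq 1$. For any $x \perp \mathbf{1}$ we have $Jx = 0$, so $Mx = (1-\gamma)Ex$ and $\|Mx\| \leq (1-\gamma)\|x\|$. Taking the supremum over nonzero $x \in \mathbf{1}^\perp$, and using that $\sigma_2(M)$ equals this supremum (since $\mathbf{1}$ is a top right singular vector of the doubly stochastic $M$), we conclude $\sigma_2(M) \leq 1 - \gamma$, i.e. $\gamma(G) \geq \gamma$.

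I do not anticipate a serious obstacle: the lemma is essentially a two-line calculation once one notices that $J$ is the orthogonal projector onto the span of the top singular vector of $M$, and that $M - J$ captures the action of $M$ on $\mathbf{1}^\perp$. The only minor subtleties are handling the boundary case $\gamma = 1$ in the forward direction, and being careful with the identity $\sigma_2(M) = \sup_{x \perp \mathbf{1}, x \neq 0} \|Mx\|/\|x\|$, which uses precisely that $\mathbf{1}$ is (one of) the top right singular vector(s) of $M$, a feature of $M$ being doubly stochastic.
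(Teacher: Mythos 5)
The paper states Lemma~\ref{lem:MatrixDecomposition} without proof, so there is no reference argument to compare against; your task was effectively to supply a proof. Your proof is correct: the key observations — that $M$ is doubly stochastic (so $\mathbf{1}$ is a top singular vector on both sides), that $J$ is the orthogonal projector onto $\mathrm{span}(\mathbf{1})$, that $\mathbf{1}^\perp$ is $M$-invariant, and that consequently $\sigma_2(M)$ equals the operator norm of $M$ restricted to $\mathbf{1}^\perp$ — are exactly the facts needed, and they combine with the block-orthogonal computation of $Ex$ to give both directions cleanly. The two side remarks you flag (the $\gamma=1$ edge case, and the identity $\sigma_2(M) = \sup_{x\perp\mathbf{1}} \|Mx\|/\|x\|$, which is essentially the variational characterization stated in the lemma just before this one, transposed via double stochasticity) are handled appropriately.
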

Notice that $J$ is the random-walk matrix for the complete graph on $N$ vertices with self-loops, which is intuitively the most expanding possible graph (albeit not sparse).  Thus, Lemma~\ref{lem:MatrixDecomposition} says that an expander can be viewed as a sparse approximation of the complete graph.

It can be shown that a random $D$-regular undirected graph on $N$ vertices is an excellent expander with high probability, for $D=O(1)$ and $N\rightarrow \infty$.  For example, it achieves spectral expansion 
$\gamma(G) = 1- 2\sqrt{D-1}/D + o(1)$~\cite{Friedman08}, which is optimal up to the $o(1)$~\cite{Nilli91}, and achieves $(\alpha A, D-1-\eps)$ vertex expansion for any $\eps>0$ and $\alpha=\alpha(D,\eps)$~\cite{Pinsker73,Bassalygo81}.
For some applications of expanders, however, we cannot afford to choose the graph at random, because it may be too costly in memory, communication, or randomness. Indeed, some applications even require exponentially large expander graphs, in which case a random graph would be completely infeasible to manipulate.  Thus, we seek {\em explicit constructions} of expanders.

\begin{definition} \label{def:explicit-expanders}
Let $\GFam = \{G_i\}$ be an infinite family of digraphs where $G_i$ has $N_i$ vertices and is $D_i$-regular.  We say that $\GFam$ is {\em (fully) explicit} if given $N_i$, $u\in [N_i]$, and $j\in [D_i]$, the $j$'th neighbor of $u$ in $G_i$ can be computed deterministically in time $\poly(\log N_i)$.
\end{definition}
That is, we require a very efficient local description of the graph, where computing neighbors can be done in time polynomial in the bitlength of vertices, rather than in time polynomial in the number of vertices.

Thus, starting with Margulis~\cite{Margulis73}, there is a long and beautiful line of work on explicit constructions of constant-degree expanders, with
one highlight being the optimal spectral expanders of Lubotzky, Phillips, and Sarnak~\cite{LubotzkyPhSa88}  and Margulis~\cite{Margulis88}, known as {\em Ramanujan graphs}.  Many of these constructions were based on deep results from algebra and number theory, and it was of interest to have more combinatorial approaches to constructing expanders.

With Reingold and Vadhan~\cite{ReingoldVaWi01}, Wigderson gave a combinatorial construction of expanders based on a new graph operation, called the {\em zig-zag product}.  Although these expanders did not match the spectral expansion of Ramanujan graphs, the additional flexibility offered by the construction found numerous applications, which we will survey below.

Specifically, their approach to constructing expanders is to start with a constant-sized expander of appropriate parameters and repeatedly apply graph operations to build larger and larger graphs while preserving the degree and spectral expansion. 

Two standard operations on an $N$-vertex $D$-regular graph $G$ with random-walk matrix $M$ are the following:
\begin{description}
\item[\textit{Squaring:}] $G^2$ is the graph on $N$ vertices whose random-walk matrix is $M^2$. That is, edges in $G^2$ are walks of length 2 in $G$. If $G$ has spectral expansion at least $\gamma=1-\sigma$, then $G^2$ has spectral expansion at least $1-\sigma^2 = 2\gamma-\gamma^2$

\item[\textit{Tensoring:}] $G\otimes G$ is the graph on $N^2$ vertices whose random-walk matrix is $M\otimes M$ (the Kronecker product).  That is, random walks in $G\otimes G$ correspond to two independent random walks in $G$.  If $G$ has spectral expansion at least $\gamma$, then $G\otimes G$ also has spectral expansion at least $\gamma$.
\end{description}
Squaring has the benefit of improving expansion and tensoring has the benefit of creating larger graphs, but both have the downside of increasing the degree $D$ to $D^2$.  Thus, we need an operation that decreases the degree, without hurting the expansion too much.  This is what the zig-zag product achieves.

\paragraph{\underline{The Zig-Zag Product}}

Let $G$ be a $D_1$-regular digraph on $N_1$ vertices and $H$ be a regular digraph on $D_1$ vertices.
The {\em zig-zag product} of $G$ and
$H$, denoted $G \zigzag H$, is defined as follows. The nodes of $G \zigzag H$ are the pairs
$(u,i)$ where $u \in V(G)$ and $i \in V(H)$. 
We think of this each vertex $u$ of $G$ with a copy of $V(H)$, which we refer to as a {\em cloud}, and associate each vertex of $H$ with one of the edges incident to $u$.  The edges in $G\zigzag H$ then correspond to taking an $H$-step within a cloud, using a $G$-step to move between clouds, and an $H$-step in the resulting cloud.  See Figure~\ref{fig:zigzag} for an illustration.

\begin{figure}[h]
    \centering
    \begin{subfigure}[]{0.45\textwidth}
        \includegraphics[width=\textwidth]{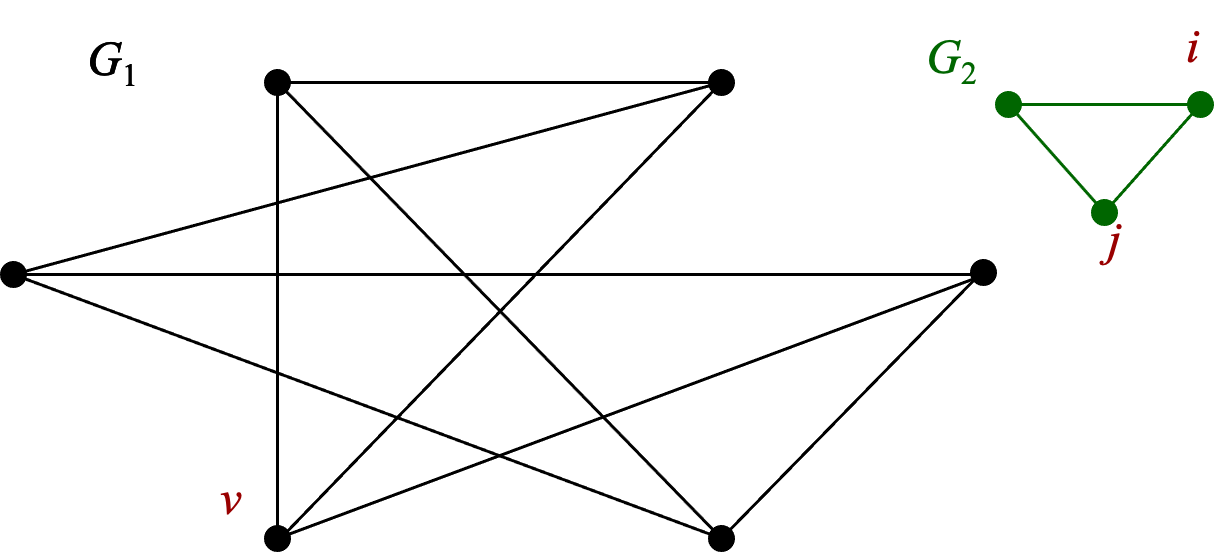}
        \caption{Initial $D_1$-regular graph $G_1$ and graph $G_2$ on $D_1$ vertices.}
    \end{subfigure}
    \hfill
    \begin{subfigure}[]{0.45\textwidth}
        \includegraphics[width=\textwidth]{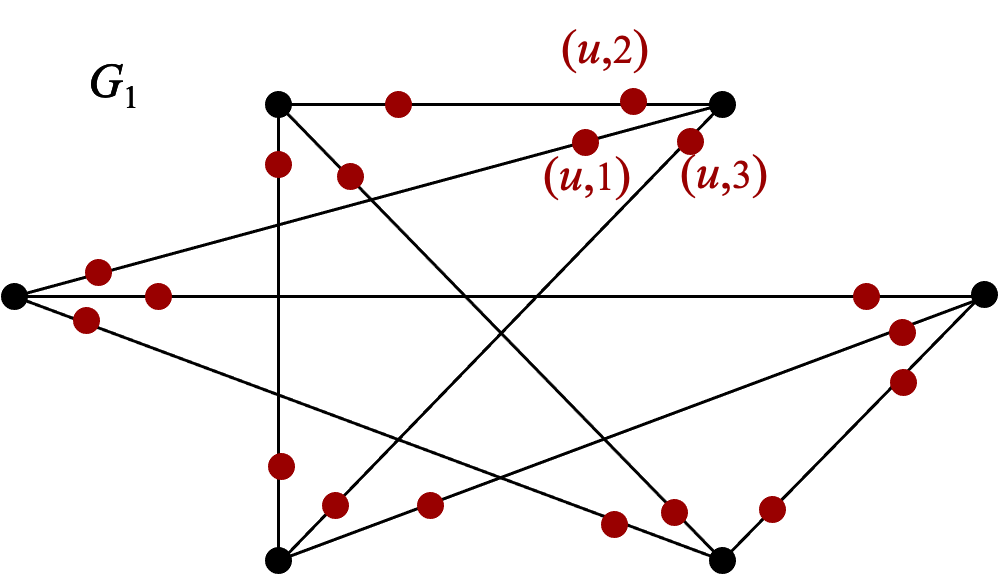}
        \caption{First we construct a cloud with the same shape as $G_2$ for each vertex $u \in G_1$.}
    \end{subfigure}
    \hfill
    \begin{subfigure}[]{0.45\textwidth}
        \includegraphics[width=\textwidth]{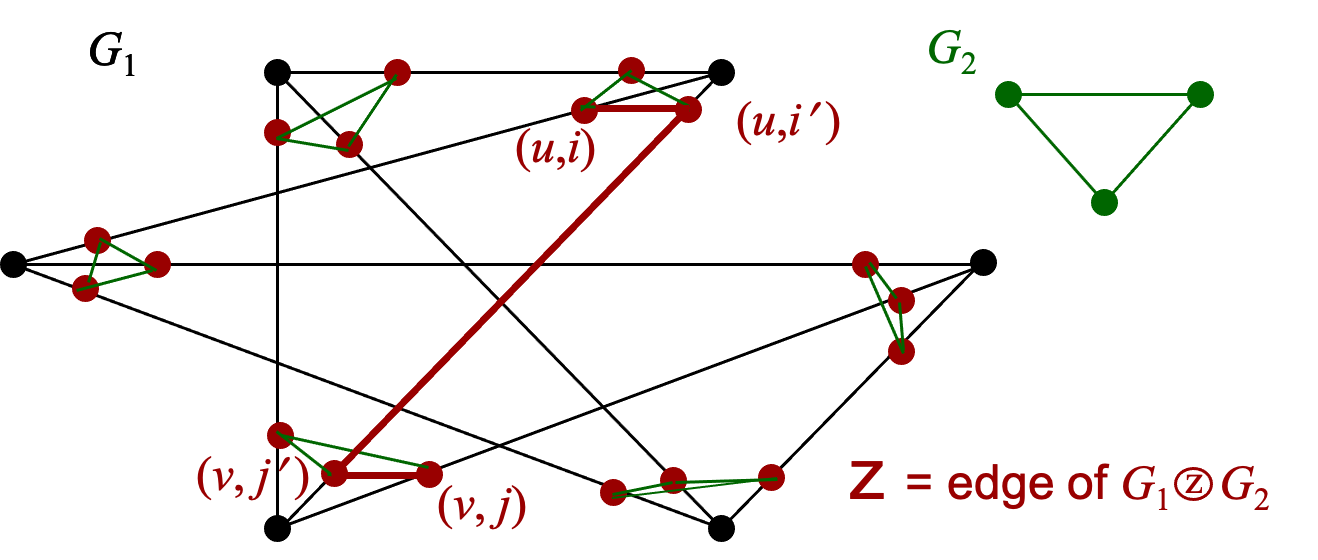}
        \caption{Three steps connecting $(u,i)$ and $(v,j)$ following Definition~\ref{def:zigzag}}
    \end{subfigure}
    \hfill
    \begin{subfigure}[]{0.45\textwidth}
        \includegraphics[width=\textwidth]{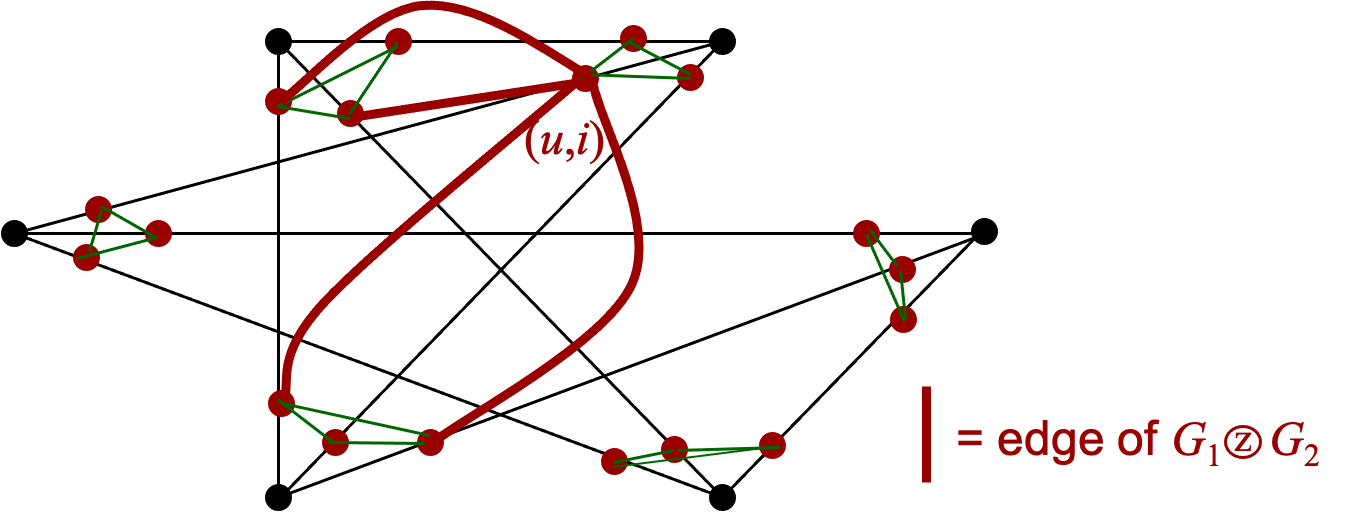}
        \caption{We construct an edge between $(u,i)$ and $(v,j)$ and similarly for other edges.}
    \end{subfigure}
    \caption{Illustration of the edge construction in the Zig-Zag product $G_1 \zigzag G_2$.}
    \label{fig:zigzag}
\end{figure}

\begin{definition}[Zig-zag Product] \label{def:zigzag} 
Let $G$ be an $D_1$-regular digraph on $N_1$ vertices, and $H$ a $D_2$-regular digraph on $D_1$ vertices.  Then
$G \zigzag H$ is the following $D_2^2$-regular graph on $N_1D_1$ vertices. The vertices are pairs $(u,i) \in [N_1]\times [D_1]$, and %
for $a,b\in [D_2]$, the $(a,b)$'th neighbor of a vertex $(u,i)$ is the vertex $(v,j)$ computed
as follows:
\begin{enumerate}
\item Let $i'$ be the $a$'th neighbor of $i$ in $H$.  (That is, take an $H$-step to move from $(u,i)$ to $(u,i')$.)
\item Let $v$ be the $i'$'th neighbor of $u$ in $G$, so $e=(u,v)$ is the $i'$'th edge leaving $u$.  Let $j'$ be such that $e$ is the $j'$'th edge
entering $v$ in $G$.  (That is, take a $G$-step to move from $(u,i')$ to $(v,j')$.)
\item Let $j$ be the $b$'th neighbor of $j'$ in $H$. (That is, take an $H$-step to move from $(v,j')$ to $(v,j)$.)
\end{enumerate}
\end{definition}

Note that the graph  $G\zigzag H$ depends on how the edges leaving and entering each vertex of $G$ are numbered.
Thus it is best thought of as an operation on labelled graphs. 
Nevertheless, the following lower bound on its expansion holds regardless of the labelling:

\begin{theorem}[\cite{ReingoldVaWi01}]\label{thm:zigzag} If $G$ has spectral expansion at least $\gamma_1$ and $H$ has spectral expansion at least $\gamma_2$, then $G\zigzag H$ has spectral expansion at least $\gamma_1\gamma_2^2$
\end{theorem}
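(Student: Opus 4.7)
The plan is to express the random-walk matrix $M$ of $G \zigzag H$ as a product of three simple matrices and then analyze it via the decomposition characterization of spectral expansion (Lemma~\ref{lem:MatrixDecomposition}). Identifying vertices of $G\zigzag H$ with pairs $(u,i)\in [N_1]\times[D_1]$, the three substeps in Definition~\ref{def:zigzag} factor $M$ as $M = \tilde{B}\,\tilde{A}\,\tilde{B}$, where $\tilde{B} = I_{N_1}\otimes M_H$ is the block-diagonal matrix that applies one $H$-step inside each cloud, and $\tilde{A}$ is the permutation matrix of the rotation map of $G$, sending $(u,i')$ to $(v,j')$ whenever the $i'$th edge out of $u$ is the $j'$th edge into $v$. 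In particular $\|\tilde{A}\|=1$.

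First I would apply Lemma~\ref{lem:MatrixDecomposition} to $H$ to write $M_H = \gamma_2 J_{D_1} + (1-\gamma_2)E_H$ with $\|E_H\|\le 1$, and therefore $\tilde{B} = \gamma_2\tilde{J} + (1-\gamma_2)\tilde{E}$, where $\tilde{J} \eqdef I_{N_1}\otimes J_{D_1}$ and $\|\tilde{E}\|\le 1$. Expanding the product yields
$$M \;=\; \gamma_2^2\,\tilde{J}\tilde{A}\tilde{J} \;+\; \gamma_2(1-\gamma_2)\bigl(\tilde{J}\tilde{A}\tilde{E} + \tilde{E}\tilde{A}\tilde{J}\bigr) \;+\; (1-\gamma_2)^2\,\tilde{E}\tilde{A}\tilde{E}.$$
By sub-multiplicativity of the operator norm, the three trailing ``error'' terms have combined norm at most $2\gamma_2(1-\gamma_2) + (1-\gamma_2)^2 = 1-\gamma_2^2$.

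The heart of the argument is the identity $\tilde{J}\tilde{A}\tilde{J} = M_G \otimes J_{D_1}$. I would verify this by unpacking the action: $\tilde{J}$ replaces a vector by its cloud-averages (producing a cloud-constant vector); on such a vector, $\tilde{A}$ simply relabels the value of cloud $u$ as the value of cloud $v$ along each $G$-edge; the final $\tilde{J}$ then averages the resulting function, which is precisely one $G$-step on the first coordinate while the second coordinate is uniformly distributed. Applying Lemma~\ref{lem:MatrixDecomposition} to $G$ next writes $M_G\otimes J_{D_1} = \gamma_1 J_{N_1 D_1} + (1-\gamma_1) F$ with $\|F\|\le 1$, and substituting back gives
$$M \;=\; \gamma_1\gamma_2^2\, J_{N_1 D_1} \;+\; E^{\ast},\qquad \|E^{\ast}\| \;\le\; \gamma_2^2(1-\gamma_1) + (1-\gamma_2^2) \;=\; 1-\gamma_1\gamma_2^2,$$
so the reverse direction of Lemma~\ref{lem:MatrixDecomposition} yields $\gamma(G\zigzag H) \ge \gamma_1\gamma_2^2$.

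I expect the main obstacle to be the clean identification $\tilde{J}\tilde{A}\tilde{J} = M_G\otimes J_{D_1}$: one must check that the sum over the back-edge labels $j'$ produces exactly the correct $G$-transition probabilities independently of the specific edge labellings, and that the tensor-product identity really holds as an equation of operators on all of $\R^{N_1 D_1}$ (it is the right-most factor $\tilde{J}$ that forces the output to be cloud-constant and thus validates the equality on the whole space). Once this structural identity is in hand, everything else is a routine triangle-inequality bookkeeping of the remaining terms.
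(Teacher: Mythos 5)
Your proposal is correct and is exactly the paper's preferred ``second intuition'' (attributed to \cite{RozenmanVa05,ReingoldTrVa06}): factor $M = \tilde{B}\tilde{A}\tilde{B}$, decompose each $H$-step via Lemma~\ref{lem:MatrixDecomposition}, observe that the $\gamma_2^2$-weighted main term $\tilde{J}\tilde{A}\tilde{J}$ equals $M_G\otimes J_{D_1}$, decompose that via Lemma~\ref{lem:MatrixDecomposition} applied to $G$, and absorb the remaining three terms into an error matrix of norm at most $1-\gamma_1\gamma_2^2$. You have simply carried out the bookkeeping the paper only sketches; the identity $\tilde{J}\tilde{A}\tilde{J}=M_G\otimes J_{D_1}$ you single out as the crux is indeed the one nontrivial structural fact, and your verification of it (the right $\tilde J$ cloud-averages the input so the subsequent $G$-step sees only cloud values, and the left $\tilde J$ spreads the result uniformly over the outgoing cloud regardless of the particular rotation-map labeling) is the right argument.
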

$G$ should be thought of as a big graph and $H$ as a small
graph, where $D_1$ is a large constant and $D_2$ is a
small constant.  Observe that when $D_1>D_2^2$ the degree is reduced
by the zig-zag product.

Before giving intuition for Theorem~\ref{thm:zigzag}, let's see how it can be used to construct an infinite family of constant-degree expanders. %

\begin{construction}[Zig-Zag Based Expanders] \label{const:mild}
Let $H$ be a fixed $D$-regular graph on $D^4$ vertices with spectral expanion at least 7/8.\footnote{Since the number of vertices is polynomially related to the degree, such graphs are much easier to construct than constant-degree expanders, and there are a number of simple constructions. Alternatively, since we think of $D$ as a constant, $H$ can be found by exhaustive search.}
Define
\begin{eqnarray*}
G_1 &=& H^2\\
G_t &=& G_{t-1}^2\zigzag H \qquad \text{for $t>1$}
\end{eqnarray*}
\end{construction}

A straightforward induction, using Theorem~\ref{thm:zigzag} and the properties of squaring, shows that this is an infinite family of expanders: 
\begin{proposition}
For all $t$, $G_t$ is a $D^2$-regular graph on $D^{4t}$ vertices with spectral expansion at least $1/2$. 
\end{proposition}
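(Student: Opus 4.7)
The plan is to prove all three properties—regularity, vertex count, and spectral expansion—simultaneously by induction on $t$, using Construction~\ref{const:mild} together with Theorem~\ref{thm:zigzag} and the facts about squaring and tensoring listed just above the construction.

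For the base case $t=1$, I would observe that $G_1 = H^2$ inherits the vertex set of $H$, so it has $D^4 = D^{4\cdot 1}$ vertices, and squaring a $D$-regular graph gives a $D^2$-regular graph. For the spectral expansion, since $H$ has $\gamma(H) \geq 7/8$, the squaring rule $\gamma(H^2) = 1-(1-\gamma(H))^2$ gives $\gamma(G_1) \geq 1-(1/8)^2 = 63/64 \geq 1/2$.

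For the inductive step, assume $G_{t-1}$ is $D^2$-regular on $D^{4(t-1)}$ vertices with $\gamma(G_{t-1}) \geq 1/2$. Then $G_{t-1}^2$ is $D^4$-regular on the same vertex set, and by squaring its spectral expansion is at least $1-(1/2)^2 = 3/4$. Now the zig-zag product $G_{t-1}^2 \zigzag H$ is well-defined precisely because the ``small'' graph $H$ has $D^4$ vertices, matching the degree of $G_{t-1}^2$. By Definition~\ref{def:zigzag}, $G_t$ has $D^{4(t-1)} \cdot D^4 = D^{4t}$ vertices and is $D^2$-regular (the degree is the square of the degree of $H$). Finally, Theorem~\ref{thm:zigzag} gives
\[
\gamma(G_t) \;\geq\; \gamma(G_{t-1}^2)\cdot \gamma(H)^2 \;\geq\; \frac{3}{4}\cdot\left(\frac{7}{8}\right)^2 \;=\; \frac{147}{256} \;\geq\; \frac{1}{2}.
\]

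The only real thing to check is that the numerical constants close the loop, and this is where the design of the construction reveals itself: the squaring step must boost the inductive hypothesis $\gamma\geq 1/2$ enough to absorb the $\gamma(H)^2$ factor introduced by zig-zag without falling below $1/2$. With $\gamma(H)\geq 7/8$ this works comfortably. Beyond this arithmetic, every step—degree, vertex count, well-definedness of the zig-zag, and the spectral bound—is an immediate application of a previously stated fact, so I do not anticipate any further obstacle.
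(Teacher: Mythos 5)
Your proof is correct and is exactly the ``straightforward induction, using Theorem~\ref{thm:zigzag} and the properties of squaring'' that the paper invokes without writing out: the base case, the degree and vertex-count bookkeeping, the well-definedness of the zig-zag step (degree of $G_{t-1}^2$ equals the number of vertices of $H$), and the arithmetic $\tfrac{3}{4}\cdot\bigl(\tfrac{7}{8}\bigr)^2 = \tfrac{147}{256} \geq \tfrac{1}{2}$ all check out. The only cosmetic nit is that the squaring rule gives a lower bound $\gamma(G^2)\geq 1-(1-\gamma(G))^2$ rather than an equality, but you use it correctly as a bound.
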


Although simple to describe, Construction~\ref{const:mild} does not quite meet our definition of explicitness (Definition~\ref{def:explicit-expanders}), since the natural recursive way to compute neighbors in $G_t$ (by doing two neighbor computations in $G_{t-1}$) appears to take time exponential in $t$, which is polynomial in $N_t=D^{4t}$, rather than polylogarithmic. This can be remedied by tensoring in addition to squaring, so that the number of vertices grows much more quickly than the depth of the recursion.

There are two different intuitions underlying the expansion of the zig-zag product:
\begin{enumerate}
\item Given an initial distribution $(U,I)$ on the vertices of $G_1\zigzag G_2$ that is far from uniform,
there are two extreme cases. (Here we use capital letters to denote random variables corresponding to the lower-case
values in Definition~\ref{def:zigzag}.)  Either
\begin{enumerate}
\item All the (conditional) distributions $I|_{U=u}$ within
the clouds are far from uniform, {\em or} \label{case:zigzag-conditionalfar}
\item All the (conditional) distributions $I|_{U=u}$ within the clouds of size $D_1$ are uniform
(in which case the marginal distribution $U$ on the clouds must be far from uniform).
\label{case:zigzag-marginalfar}
\end{enumerate}
In Case~\ref{case:zigzag-conditionalfar}, the first $H$-step
$(U,I)\mapsto (U,I')$
already brings us closer to the uniform distribution, and the other two steps cannot hurt (as they
are steps on regular graphs).  In Case~\ref{case:zigzag-marginalfar}, the first $H$-step has no effect, but the $G$-step $(U,I')\mapsto (V,J')$
has the effect
of making the marginal distribution on clouds closer to uniform, i.e. $V$ is closer to uniform
than $U$.  But note that the joint distribution $(V,J')$ isn't actually any closer
to the uniform distribution on the vertices of $G_1\zigzag G_2$ because the $G$-step is a permutation.
Still, if the marginal distribution $V$ on clouds is closer to uniform, then the conditional distributions within the
clouds $J'|_{V=v}$ must have become further from uniform, and thus the second $H$-step
$(V,J')\mapsto (V,J)$ brings us closer to uniform.

This intuition can be turned into a formal proof, and with a careful analysis (which can be found in \cite{ReingoldVaWi01}) yields slightly better expansion bounds than stated in Theorem~\ref{thm:zigzag}. 

\item  A second intuition, which follows \cite{RozenmanVa05,ReingoldTrVa06}, leads to a very short of Theorem~\ref{thm:zigzag}.  Here
we think of the expander $H$ as behaving similarly to the complete graph on
$D_1$ vertices, via Lemma~\ref{lem:MatrixDecomposition}.  In the case that $H$ equals the complete graph, then it is easy to see that $G\zigzag H =
G\otimes H$, whose spectral expansion is equal to $\gamma(G)$ (since the complete graph has spectral expansion 1).  For general $H$, we use Lemma~\ref{lem:MatrixDecomposition} to decompose the random-walk matrix for $H$ into a convex combination of the random-walk matrix for the complete graph and an error matrix of spectral norm at most 1, with the coefficient on the complete graph being $\gamma(H)$.  Doing this for both steps on $H$ in the zig-zag product leads to a spectral expansion lower bound of $\gamma(H)^2\cdot \gamma(G)$.
\end{enumerate}

As mentioned earlier, Construction~\ref{const:mild} does not achieve an optimal relationship between spectral expansion and degree (which is $\gamma(G) = 1-\Theta(1/\sqrt{D})$, achieved by random graphs~
\cite{Friedman08} or explicit Ramanujan graphs~\cite{LubotzkyPhSa88,Margulis88}). However, in subsequent work with Capalbo, Reingold, and Vadhan~\cite{CapalboReVaWi02}, Wigderson used a variant of the zig-zag product to construct near-optimal directed or bipartite {\em vertex expanders}, namely constant-degree graphs where sets of size up to $K=\Omega(N)$ expand by a factor of $A=(1-\eps)\cdot D$. (Viewed as bipartite graphs, the expansion is from the left side of the graph to the right side of the graph, corresponding the use of out-neighborhoods in Definition~\ref{def:vertex-expansion}.) %
This variant of the zig-zag product comes from viewing expanders as forms of randomness extractors (as discussed in the next section), and builds on the first intuition for the zig-zag product given above.  This was the first explicit construction of constant-degree graphs with expansion factor $A>D/2$, which has a qualitative implication that is important in a number of applications: they are also {\em unique-neighbor expanders}, where every left-set $S$ of size at most $K$ has at least one neighbor (in fact, at least $(1-2\eps)D$ neighbors) on the right that is incident to exactly one vertex in $S$.

A different variant of the zig-zag product was introduced by Ben-Aroya and Ta-Shma~\cite{BenAroyaTa08} and used to give a combinatorial construction of ``almost-Ramanujan'' expanders (namely with $\gamma(G) = 1-1/D^{1/2-o(1)}$, where the $o(1)\rightarrow 0$ as $D\rightarrow \infty$). This same variant was then used by Ta-Shma~\cite{TaShma17} in his breakthrough construction of linear error-correcting codes (aka small-biased sets) that nearly meet the Gilbert--Varshamov bound. 

With Alon and Lubotzky~\cite{AlonLuWi01}, Wigderson gave an intriguing algebraic interpretation of the zig-zag product: Under certain conditions, if $G$ and $H$ are Cayley graphs, then $G\zigzag H$ is a Cayley graph for the {\em semi-direct product} of the underlying groups.  Using this connection, they answered a question of Lubotzky and Weiss~\cite{LubotzkyWe92} and proved that expansion of Cayley graphs is {\em not} a group property: 
a group can have two constant-sized sets of generators, such that the Cayley graph defined by one is expanding and the other is not.  With Meshulam~\cite{MeshulamWi04} and Rozenman and Shalev~\cite{RozenmanShWi06}, Wigderson further used this group-theoretic zig-zag to obtain iterative constructions of expanding Cayley graphs.

Perhaps the most striking application of the zig-zag product is Reingold's algorithm for \USTConn~\cite{Reingold06}, which we will see in Section~\ref{sec:unconditional}, which in turn inspired Dinur's celebrated combinatorial proof of the PCP Theorem~\cite{Dinur07}.  (See Section~\ref{sec:PCP} for discussion of the PCP Theorem.)

Wigderson has also formulated and initiated the study of many other variants of expansion, such as expanding hypergraphs~\cite{FriedmanWi95}, monotone expanders~\cite{DvirWi10}, and notions of expansion for collections of linear maps~\cite{LiQiWiWiZh23}.

\subsubsection{Randomness Extractors}

Randomness extractors are functions that extract almost-uniform bits from sources
of biased and correlated bits.
The original
motivation for extractors was to simulate randomized
algorithms with weak random sources as might arise in nature.  This
motivation is still compelling, but extractors have taken on a much
wider significance in the years since they were introduced.  They
have found numerous applications in theoretical computer science
beyond this initial motivating one, in areas from
cryptography to distributed algorithms to hardness of approximation.  (See the surveys~\cite{NisanTa99,Vadhan12-fnttcs,Shaltiel04}.)
In this section, we will survey Wigderson's numerous contributions to the theory of extractors, their constructions, and their applications.  Many of these contributions involve developing and exploiting the close connection between randomness extractors and expander graphs.

We begin with some probability definitions that are needed to introduce randomness extractors.

\begin{definition} \label{def:tvd}
For random variables $X$ and $Y$ taking values in $\Uni$, their {\em
statistical difference}  (also known as {\em total variation distance})
is $\Delta(X,Y) = \max_{T \subseteq \Uni} |\Pr[X \in T] - \Pr[Y \in
T]|$. We say that $X$ and $Y$ are {\em $\eps$-close} if
$\Delta(X,Y) \leq \eps$.
\end{definition}
Recall that random variables being $\eps$-close is equivalent to them being $(\infty,\eps)$-indistinguishable (Definition~\ref{def:compind}).

\begin{definition}[entropy measures]
Let $X$ be a discrete random variable.  Then
\begin{itemize}
\item the {\em Shannon entropy} of $X$ is:
\[\HSha(X)=\Exp_{x\getsr X}\left[\log \frac{1}{\pr{X=x}}\right].\]

\item the {\em \Renyi\ entropy} of $X$ is:
\[\HRen(X)=\log \left(\frac{1}{\Exp_{x\getsr X}[\pr{X=x}]}\right) \text{ and}\]

\item the {\em min-entropy} of $X$ is:
\[\Hmin(X)=\min_x\left\{\log\frac{1}{\pr{X=x}}\right\},\]
\end{itemize}
where all logs are base 2.
\end{definition}

\begin{fact} \label{fact:entropies}
\begin{enumerate}
\item  For every random variable $X$, $$\Hmin(X)\leq \HRen(X)\leq \HSha(X),$$ with equality iff $X$ is uniform on its support.
\item  For every random variable $X$, $\HRen(X)\leq 2\Hmin(X)$, and for every $\eps>0$, there is a random variable
$X'$, such that $\HRen(X')\leq \Hmin(X)+\log(1/\eps)$. \label{part:minvsRenyi}
\end{enumerate}
\end{fact}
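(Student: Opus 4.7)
The plan is to derive each inequality directly from the definitions, using Jensen's inequality once and elementary probability manipulations otherwise. Throughout, write $p_x = \Pr[X = x]$ and $p_{\max} = \max_x p_x$, so that $\Hmin(X) = -\log p_{\max}$, $\HRen(X) = -\log \sum_x p_x^2 = -\log \Exp_{x \gets X}[p_x]$, and $\HSha(X) = \Exp_{x \gets X}[-\log p_x]$.

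For Part 1, the inequality $\HRen(X) \leq \HSha(X)$ is exactly Jensen's inequality for the concave function $\log$ applied to $p_X$: $\log \Exp[p_X] \geq \Exp[\log p_X]$. The inequality $\Hmin(X) \leq \HRen(X)$ follows from $\sum_x p_x^2 \leq p_{\max} \sum_x p_x = p_{\max}$ by taking negative $\log$. For the equality clause, Jensen is tight iff $p_X$ is almost surely constant on $\Supp(X)$ --- i.e., $X$ is uniform on its support --- and the second estimate is tight iff every $p_x$ is $0$ or $p_{\max}$, which is the same condition. So all three entropies coincide exactly when $X$ is uniform on its support.

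For the first statement of Part 2, the bound $\sum_x p_x^2 \geq p_{\max}^2$ yields $\HRen(X) \leq 2 \Hmin(X)$ immediately.

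For the second statement, the intended content (in the smooth-entropy framework of this section) is that $X'$ is $\eps$-close to $X$ in statistical distance; without such a restriction the claim is vacuous, since $X'$ uniform on $2^{\Hmin(X)+\log(1/\eps)}$ points would satisfy the bound trivially. The plan is a Markov-style truncation. Let $\tau = 2^{-\HRen(X)}/\eps$ and $S = \{x : p_x > \tau\}$. By Markov, $\Pr[X \in S] \leq \Exp[p_X]/\tau = \eps$. Set $X' = (X \mid X \notin S)$; then $\Delta(X, X') = \Pr[X \in S] \leq \eps$, and every probability in $X'$ is at most $\tau/(1-\eps)$, so $\Hmin(X') \geq \HRen(X) - \log(1/\eps) - \log(1/(1-\eps))$. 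This is the standard ``smooth min-entropy matches R\'enyi entropy'' statement; the printed bound $\HRen(X')\leq \Hmin(X)+\log(1/\eps)$ comes from interchanging the roles of $X$ and $X'$ (the two are symmetric up to $\eps$-closeness). The main step I would be careful with is absorbing the $\log(1/(1-\eps))$ slack into the stated constant, which is achieved by a harmless rescaling of $\eps$.
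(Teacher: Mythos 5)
The paper states this as a ``Fact'' and gives no proof of its own, so there is no argument of the authors' to compare against; I am assessing yours directly. Part 1 and the first inequality of Part 2 are correct and are the standard one-line derivations: Jensen's inequality for $\HRen(X)\le \HSha(X)$, the bounds $p_{\max}^2\le \sum_x p_x^2\le p_{\max}$ for $\Hmin(X)\le\HRen(X)\le 2\Hmin(X)$, and your equality analysis is right --- both tightness conditions reduce to all nonzero $p_x$ being equal, i.e.\ $X$ uniform on $\Supp(X)$.

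For the last clause of Part 2 you have correctly diagnosed the real issue: as printed, with no requirement that $X'$ be close to $X$, the claim is vacuous, and the content actually needed for the surrounding sentence (that a R\'enyi-entropy lower bound is ``equivalent'' to a min-entropy lower bound) is the smoothing statement that $X$ is $\eps$-close to some $X'$ with $\Hmin(X')\ge \HRen(X)-\log(1/\eps)$. Your Markov truncation proves exactly this up to the $\log(1/(1-\eps))$ loss you flag. Two caveats. First, rescaling $\eps$ does not literally absorb that loss (it trades it for an additive constant); the clean fix is not to condition on $X\notin S$ but to cap each $p_x$ with $x\in S$ at the threshold $\tau$ and redistribute the excess mass $\sum_{x\in S}(p_x-\tau)\le \Pr[X\in S]\le\eps$ over points (fresh ones if necessary) without exceeding $\tau$; this gives $\Hmin(X')\ge\HRen(X)-\log(1/\eps)$ exactly, with $\Delta(X,X')\le\eps$. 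Second, ``interchanging the roles of $X$ and $X'$'' is not a legitimate move for a $\forall\exists$ statement; if one wants the printed orientation $\HRen(X')\le\Hmin(X)+\log(1/\eps)$ for an $\eps$-close $X'$, it needs its own (easy) argument, e.g.\ shifting $\eps$ of the mass of $X$ onto its heaviest point, which forces $\Exp_{x}[p_{X'}(x)]\ge(p_{\max}+\eps)^2\ge\eps\cdot 2^{-\Hmin(X)}$. Neither caveat touches the substance: you have identified and proved the statement the text actually uses.
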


To illustrate the differences between the three notions, consider a source $X$ such that
$X=0^n$ with probability $0.99$ and $X=U_n$ with probability $0.01$.
Then $\HSha(X)\geq 0.01n$ (contribution from the uniform distribution), $\HRen(X)\leq \log
(1/.99^2)<1$ and $\Hmin(X)\leq \log (1/.99)<1$
(contribution from $0^n$).
Note that even though $X$  has Shannon entropy linear in $n$, we cannot expect
to extract bits that are close to uniform or carry out any useful randomized computations
with one sample from $X$, because it gives us nothing useful 99\% of the time.
Thus, we should use the stronger measures of entropy given by
$\HRen$ or $\Hmin$.  These entropy measures were introduced into the randomness extraction literature by Cohen and Wigderson~\cite{CohenWi89} and Chor and Goldreich~\cite{ChorGo88}, respectively.

We will consider the task of extracting randomness from sources where all we know is a lower bound
on the min-entropy (which is equivalent to a lower bound on \Renyi\ entropy by Fact~\ref{fact:entropies}):
\begin{definition}
A random variable $X$ is a {\em $k$-source} if $\Hmin(X)\geq k$, i.e., if
$\pr{X=x}\leq 2^{-k}$ for all $x$.
\end{definition}

A typical setting of parameters is $k=\delta n$ for some fixed $\delta$, e.g., $\delta=1/10$.  We
call $\delta$ the {\em min-entropy rate}.  Some different ranges that
are commonly studied (and are useful for different applications): $k=\polylog(n)$, $k=n^{\gamma}$ for a constant $\gamma\in (0,1)$,
$k=\delta n$ for a constant $\delta\in (0,1)$, and $k=n-O(1)$.  The middle two
($k=n^\gamma$ and $k=\delta n$) are the most natural for simulating randomized algorithms with
weak random sources.

An ideal goal for a randomness extractor is to take one sample from an unknown $k$-source as input and output almost-uniformly distributed bits.  Unfortunately, this is impossible to achieve: 
\begin{proposition}
For any $\Ext:\zo^n\rightarrow\zo$ there exists an $(n-1)$-source $X$ so that $\Ext(X)$
is constant.
\end{proposition}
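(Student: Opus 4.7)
The plan is to use a simple pigeonhole argument on the preimages of $\Ext$. Since $\Ext$ maps $\zo^n$ (a set of size $2^n$) to $\zo$, by pigeonhole one of the two preimages $\Ext^{-1}(0)$ or $\Ext^{-1}(1)$ must have size at least $2^{n-1}$. Call this preimage $S$ and let $b \in \zo$ be the corresponding output value.

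Next, I would define $X$ to be the uniform distribution on $S$. Then for every $x \in \zo^n$, $\Pr[X=x] \leq 1/|S| \leq 1/2^{n-1} = 2^{-(n-1)}$, so $\Hmin(X) \geq n-1$, meaning $X$ is an $(n-1)$-source as required. By construction, every element in the support of $X$ lies in $\Ext^{-1}(b)$, so $\Ext(X) = b$ with probability $1$, i.e., $\Ext(X)$ is constant.

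There is no real obstacle here: the argument is essentially one line of pigeonhole plus the definition of min-entropy. The point being made is conceptual rather than technical --- it explains why the definition of a \emph{seeded} extractor (which will presumably be introduced next) needs to allow a short auxiliary uniform seed, since a purely deterministic single-source extractor outputting even a single bit is impossible for general min-entropy sources.
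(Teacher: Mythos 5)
Your argument is exactly the paper's: pigeonhole gives a preimage $\Ext^{-1}(b)$ of size at least $2^{n-1}$, and the uniform distribution on it is an $(n-1)$-source on which $\Ext$ is constant. Correct, and you've also spelled out the min-entropy verification that the paper leaves implicit.
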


\begin{proof}
There exists $b\in \zo$ so that $|\Ext^{-1}(b)|\geq 2^n/2=2^{n-1}$. Then let
$X$ be the uniform distribution on $\Ext^{-1}(b)$.
\end{proof}

Thus, instead researchers turned to the problem of simulating randomized algorithms with a weak random source.  That is, suppose we have a language $L\in \BPP$.  The $\BPP$ algorithm for $L$ assumes a source of truly uniform and independent bits.  Can we decide membership in $L$ in polynomial time if we are instead given one sample from a $k$-source $X$ with large enough min-entropy $k$?  Of course, the answer is yes if $\BPP=\P$, but here we want unconditional results, not assuming circuit lower bounds like Theorem~\ref{thm:IW}.  With Cohen~\cite{CohenWi89}, Wigderson gave the first positive answer to this question for sources of constant entropy rate, namely $\delta=k/n>3/4$. This was then improved to any constant entropy rate $\delta>0$ by Zuckerman~\cite{Zuckerman96}, and then these approaches were abstracted by Nisan and Zuckerman~\cite{NisanZu96} into the following elegant definition of a randomness extractor:

\begin{definition}[seeded extractors~\cite{NisanZu96}] \label{def:extractor}
A function $\Ext : \{0,1\}^n \times \{0,1\}^d  \rightarrow
\{0,1\}^m$ is a {\em $(k, \eps)$-extractor} if for every
$k$-source $X$ on $\{0,1\}^n$,  $\Ext(X, U_d)$ is $\eps$-close
to $U_m$.
\end{definition}

That is, an extractor extracts almost-uniform bits given one sample from a $k$-source and a {\em seed} consisting of $d$ truly random bits.  The point is that if $d$ is small enough, such as $d=O(\log n)$, we can eliminate the seed entirely by trying all $2^d$ possibilities rather than choosing it at random, similarly to Proposition~\ref{prop:enumeration}.\footnote{The similarity of this approach to derandomization via pseudorandom generators is not a coincidence.  Trevisan~\cite{Trevisan01} showed that Wigderson et al.'s conditional construction of pseudorandom generators from circuit lower bounds (Theorem~\ref{thm:IW}) can also be interpreted as an unconditional construction of randomness extractors! Indeed, the same holds for any construction of pseudorandom generators from a ``black-box'' hard function $f$, and thus Wigderson's two lines of work on pseudorandom generators and extractors were unified.}

Indeed, using the Probabilistic Method, it can be shown that seed length $d=O(\log n)$ is possible: 
\begin{theorem}[\cite{Sipser88,Zuckerman97}] \label{thm:probabilistic-extractor}
For every $n\in \N$, $k\in [0,n]$ and $\eps > 0$, there
exists a $(k, \eps)$-extractor $\Ext : \{0,1\}^n \times
\{0,1\}^d  \rightarrow \{0,1\}^m$ with $m = k + d - 2
\log(1/\eps) - O(1)$ and $d = \log(n-k) + 2 \log
(1/\eps) + O(1)$.  Indeed, a randomly chosen function $\Ext$ with these parameters is a $(k,\eps)$-extractor with high probability.
\end{theorem}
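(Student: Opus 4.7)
My plan is to use a standard probabilistic-method argument, with the two main ingredients being (i) a reduction to flat sources, and (ii) a Chernoff + union bound over sources and tests.

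First, observe that it suffices to verify the extractor property on \emph{flat} $k$-sources, i.e.\ sources uniform on some set $S \subseteq \zo^n$ of size $K \eqdef 2^{\lceil k \rceil}$. This is because every $k$-source is a convex combination of flat $k$-sources (a standard fact: write any distribution with $\Hmin \geq k$ as an average of uniform distributions on $K$-element subsets), and statistical distance is convex in its arguments. Also, for each fixed distribution $Y$ on $\zo^m$, the statistical distance $\Delta(Y, U_m) = \max_{T \subseteq \zo^m} |\Pr[Y \in T] - |T|/2^m|$, so after fixing a flat source it suffices to union-bound over at most $2^{2^m}$ tests $T$.

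The construction is to pick $\Ext : \zo^n \times \zo^d \to \zo^m$ uniformly at random, i.e.\ independently and uniformly for each pair $(x,s)$. For a fixed flat source on $S$ and fixed test $T$, the indicators $\mathbb{1}[\Ext(x,s) \in T]$ for $(x,s) \in S \times \zo^d$ are i.i.d.\ Bernoulli with mean $\mu = |T|/2^m$, and their empirical average is exactly $\Pr[\Ext(X,U_d) \in T]$. Hoeffding's inequality then yields
\[
\Pr\bigl[\, |\Pr[\Ext(X,U_d)\in T] - \mu| > \eps \,\bigr] \;\leq\; 2\exp\!\bigl(-2\eps^2 \cdot K \cdot 2^d\bigr).
\]
The number of flat sources is at most $\binom{2^n}{K} \leq (e \cdot 2^n/K)^K = 2^{(n-k+O(1))K}$, and the number of tests is $2^{2^m}$. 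A union bound succeeds provided
\[
2\eps^2 \cdot K \cdot 2^d \;\geq\; (\ln 2)\bigl((n-k+O(1))K + 2^m\bigr) + \omega(1).
\]

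The final step is to solve for $d$ and $m$. Setting $d = \log(n-k) + 2\log(1/\eps) + c$ makes the left side equal to $2^{c+1} \cdot (n-k) K$, which dominates the $(n-k)K$ term on the right once $c$ is a sufficiently large absolute constant. Setting $m = k + d - 2\log(1/\eps) - c'$ gives $2^m = 2^{c-c'}(n-k)K$, which is also absorbed by the left side for $c'$ large enough relative to $c$. With these choices the union bound succeeds, so a random $\Ext$ is a $(k,\eps)$-extractor with probability close to $1$; in particular, such an extractor exists. The main delicate point is the parameter balancing in this last step: the Chernoff exponent scales like $(n-k) \cdot 2^k$, and the output length $m$ is pushed up precisely as far as this budget allows without swamping the $2^{2^m}$ union bound, which is why the extra $-2\log(1/\eps) - O(1)$ loss in $m$ is unavoidable by this argument (and indeed by any argument, up to the constants).
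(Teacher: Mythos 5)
The paper states this theorem only as a citation to \cite{Sipser88,Zuckerman97} and gives no proof, so there is nothing to compare against; your argument is the standard probabilistic-method proof of this result and it is correct in structure and in the parameter accounting: reduction to flat sources, Hoeffding over the $K\cdot 2^d$ independent values of a uniformly random $\Ext$, and a union bound over $\binom{2^n}{K}$ sources and $2^{2^m}$ statistical tests, with $d$ and $m$ then read off from the exponent budget exactly as you describe. One small correction: the flat-source decomposition requires rounding $K$ \emph{down}, i.e.\ $K = 2^{\lfloor k\rfloor}$ (or $\lfloor 2^k\rfloor$), since a $k$-source may assign probability $2^{-k} > 2^{-\lceil k\rceil}$ to a point and hence need not be a convex combination of flat sources on sets of size $2^{\lceil k\rceil}$; with the correct rounding the Hoeffding exponent and the source count change only by factors absorbed into the $O(1)$ terms, so the statement is unaffected.
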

Both the lower bound on the output length $m$ and upper bound on the seed length $d$ can be shown to be optimal up to additive constants for almost all settings of parameters~\cite{RadhakrishnanTa00}.
A small constant $\eps$, say $\eps=1/8$, can be shown to be sufficient for simulating randomized algorithms with a weak random source.  In this case, the seed length is $d=\log(n-k)+O(1)$ and we extract all but $O(1)$ of the $k+d$ bits of entropy that is fed is into the extractor as input.

However, like with expanders, for applications of extractors, we typically need explicit constructions, ones where $\Ext$ is computable in polynomial time. There was a long line of work giving increasingly improved constructions of extractors, and a milestone was achieved by Wigderson, together with Lu, Reingold, and Vadhan~\cite{LuReVaWi03}, who gave explicit extractors that are optimal up to constant factors.

\begin{theorem}[\cite{LuReVaWi03}] \label{thm:LRVW}
For all constants $\eps,\alpha>0$, and all $n,k\in \N$, there is an explicit
$(k,\eps)$-extractor $\Ext : \zo^n\times \zo^d\rightarrow \zo^m$ with $d=O(\log n)$ and $m=(1-\alpha)\cdot k$.
\end{theorem}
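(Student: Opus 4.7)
The plan is to build the extractor by composing three components: a condenser, a merger, and a high-rate extractor. A \emph{condenser} with parameters $(k, k', \eps)$ is a function $C: \zo^n \times \zo^{d_0} \to \zo^{m_0}$ such that for every $k$-source $X$, $C(X, U_{d_0})$ is $\eps$-close to a distribution with $\Hmin \geq k'$; a condenser is useful when its output has a much higher min-entropy rate $k'/m_0$ than the input rate $k/n$. Given such a sufficiently lossless condenser with $k' \approx k$ and $m_0 \approx k/(1-\alpha)$, composition with any $(k', \eps)$-extractor for nearly-full-entropy sources (which are easy to construct via the leftover hash lemma with seed $O(\log n)$) yields the desired $(k, O(\eps))$-extractor with seed length $O(\log n)$ and output length $(1-\alpha)k$.

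First, I would construct a condenser using the LRVW algebraic approach: view the source $X$ as specifying the coefficients of a low-degree polynomial $p$ over a finite field $\mathbb{F}$ of size $\poly(n/k)$, pick a random line $L \subseteq \mathbb{F}^t$ as seed (using only $2 \log |\mathbb{F}| = O(\log n)$ bits), and output the evaluations $(p(y_1), \ldots, p(y_t))$ along $t$ points on $L$. A careful min-entropy argument, exploiting the low degree of $p$ and the pairwise (but not full) independence of points on a line, shows that this output is statistically close to a \emph{somewhere block source}: a distribution $(B_1,\ldots,B_t)$ with the guarantee that, conditioned on the seed, at least one $B_i$ has high min-entropy given the earlier blocks, even though we do not know which.

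Second, I would apply a \emph{merger} to fuse the somewhere block source into a single source of min-entropy $\approx k$ and length $\approx k/(1-\alpha)$. A merger uses a short additional seed to output a convex-combination-like function of the blocks, with the property that for almost every seed the result inherits most of the entropy of the best block regardless of its identity. Mergers can be built recursively, using block-source extraction: use a short seed to extract from an early block, use the output as a seed for an extractor applied to the later blocks, and amplify across blocks. Iterating the condense-then-merge pipeline $O(\log(1/\alpha))$ times drives the min-entropy rate arbitrarily close to $1$ while paying only $O(\log n)$ in total seed length.

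The main obstacle is the parameter accounting: each condenser application and each merger step loses a small fraction of the entropy and charges some seed bits, and one must verify that the iteration converges to rate $1-\alpha$ with only $O(\log n)$ total seed and total entropy loss at most $\alpha \cdot k$. This requires the LRVW polynomial-and-line condenser to be essentially \emph{lossless} (preserving the input min-entropy up to additive $O(\log(1/\eps))$), and the merger to be \emph{merge-efficient} (output length close to the size of a single block). The bulk of the technical work in \cite{LuReVaWi03} is devoted to establishing exactly these sharp quantitative versions, by combining properties of Reed--Solomon codes, random-walk arguments on the induced block sources, and a clever recursion that bootstraps mergers with fewer blocks into mergers with more blocks.
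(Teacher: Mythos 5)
The paper states Theorem~\ref{thm:LRVW} without proof, as a citation, so your attempt has to be measured against the construction in \cite{LuReVaWi03} itself. Your high-level architecture is faithful to that construction: repeatedly condense to drive up the min-entropy rate, use mergers to handle the resulting ``somewhere'' guarantee, finish with extraction from a high-rate (block) source, iterate $O(\log(1/\alpha))$ times, and verify that the total seed cost stays $O(\log n)$ while the total entropy loss stays below $\alpha k$. Your diagnosis that the entire difficulty is in making each round essentially lossless is also correct.

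The gap is in the condenser, which is exactly where the work lives. The object you describe --- interpret the source as the coefficients of a low-degree polynomial over $\mathbb{F}$, evaluate it at $t$ points on a random line, and conclude from ``pairwise independence of points on a line'' that the output is close to a lossless somewhere block source --- is not the LRVW condenser, and the claimed conclusion does not follow from the sketch. Pairwise independence gives second-moment (collision/\Renyi-entropy) control, which is far from the near-lossless min-entropy preservation your parameter accounting demands; the known analyses of line- and curve-evaluation constructions (Ta-Shma--Zuckerman--Safra, Shaltiel--Umans) proceed by reconstruction/list-decoding arguments and incur entropy loss polynomial in $k$, which is fatal for extracting a $(1-\alpha)$ fraction. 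Algebraic lossless condensers with seed $O(\log n)$ were only achieved years later by Guruswami, Umans, and Vadhan via Parvaresh--Vardy codes, by a quite different mechanism. What \cite{LuReVaWi03} actually does at this step is combinatorial: a ``win--win'' repeated-condensing argument in which the source is partitioned into halves, and either the pair already forms a block source (so block-source extraction applies) or one half carries almost all of the entropy (so the source has been condensed to half its length); since one cannot tell which case occurred, both candidates are carried forward and the argument recurses, yielding a somewhere block source with few blocks, which is then fed to (improved) mergers. A smaller but real error: the leftover hash lemma does not give a high-entropy-rate extractor with seed $O(\log n)$ --- two-universal hashing needs a seed comparable to the source length --- so the final extraction step also needs the block-source or expander-walk machinery (as in Theorem~\ref{thm:GW}) rather than hashing. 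As written, the central condensing lemma of your outline is unsupported, so the proof does not go through.
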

In fact, the error parameter $\eps$ in Theorem~\ref{thm:LRVW} can be made subconstant, even almost polynomially small.  Constructions with no constraint on $\eps$ were later given by Guruswami, Umans, and Vadhan~\cite{GuruswamiUmVa09} and by Dvir and Wigderson~\cite{DvirWi11}, the latter being based on Dvir's resolution of the Kakeya problem in finite fields~\cite{Dvir09}.  For taking the entropy loss rate $\alpha$ parameter to be subconstant, the first construction was given earlier than Theorem~\ref{thm:LRVW} by Wigderson and Zuckerman~\cite{WigdersonZu99}, but had seed length $d=\polylog(n)$ rather than $d=O(\log n)$.  Subsequent to Theorem~\ref{thm:LRVW}, Dvir, Kopparty, Saraf, and Sudan~\cite{DvirKoSaSu09} achieved $d=O(\log n)$ with $\alpha=1/\polylog(n)$.

\paragraph{\underline{Extractors vs. Expanders}}

The works of Wigderson with Cohen~\cite{CohenWi89} and with Friedman~\cite{FriedmanWi95} showed that explicit constructions of certain kinds of imbalanced bipartite expanders suffice for simulating randomized algorithms with weak sources of randomness. Building on this connection, the Nisan--Zuckerman definition of seeded extractors~\cite{NisanZu96} can be interpreted graph-theoretically as follows. Given any function
$\Ext:\{0,1\}^n\times\{0,1\}^d\longrightarrow\{0,1\}^m$, we can view $\Ext$ as a
bipartite graph $G$ with $N=2^n$ vertices on the left, $M=2^m$ vertices on the right, and left-degree $D=2^d$, where the $y$'th neighbor of $x\in \zo^n$ is $\Ext(x,y)$.

Suppose $\Ext$ is a $(k,\eps)$-extractor.  Then given any set $S\subseteq \{0,1\}^n$ of size $K=2^k$, the uniform distribution on $S$, which we'll denote $U_S$, is a $k$-source.  The extractor property tells us that $\Ext(U_S,U_{[D]})$ is $\eps$-close to uniform on $[M]$.  That is, a random neighbor of a random element of $S$ is $\eps$-close to uniform on the right-hand vertices of $G$.  In particular, $|N(S)|\geq (1-\eps)M$. This property is just like
vertex expansion, except that it ensures a large neighborhood for sets of size exactly $K$ (rather than all sets of size at most $K$).
Indeed, this variant of vertex expansion was introduced in
graph-theoretic form in \cite{Pippenger85,Santha87,Sipser88}, and is equivalent to the following relaxation of extractors.

\begin{definition}[dispersers] \label{def:disperser}
A function $\Disp : \{0,1\}^n \times \{0,1\}^d  \rightarrow
\{0,1\}^m$ is a {\em $(k, \eps)$-disperser} if for every
$k$-source $X$ on $\{0,1\}^n$,  $\Disp(X, U_d)$ has a support of
size at least $(1-\eps)\cdot 2^m$.
\end{definition}

Despite this connection, the parameters most commonly studied for extractors/dispersers and expanders 
are quite different.  Extractors and dispersers typically have polylogarithmic degree (e.g. $D=\polylog(N)$, corresponding to seed length $d=O(\log n)$), are very imbalanced (e.g. $M=N^\delta$ for a constant $\delta\in (0,1)$), and often do not actually `expand' (i.e. $|N(S)|<|S|$, since we are generally satisfied with retaining entropy, not necessarily increasing it).
Nevertheless, in the ``high min-entropy regime"  $k=(1-o(1))n$, extractors and expanders become more closely related, and indeed Goldreich and Wigderson~\cite{GoldreichWi97} showed that by taking a power of a constant-degree spectral expander, we obtain the following ``high min-entropy extractors'':

\begin{theorem}[\cite{GoldreichWi97}] \label{thm:GW}
For every $n,k\in \N$ and $\eps>0$, there is an explicit $(k,\eps)$-extractor
$\Ext:\{0,1\}^n\times\{0,1\}^d\longrightarrow\{0,1\}^n$ with
$d=O(n-k+\log(1/\eps))$.
\end{theorem}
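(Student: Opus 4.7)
The plan is to construct the extractor as a random walk on a constant-degree spectral expander, and to analyze convergence via the standard $\ell_2$-contraction argument.

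Concretely, first I would fix an explicit family of $D$-regular spectral expanders $\{G_n\}$ on vertex set $\{0,1\}^n$ with constant degree $D$ and constant spectral expansion $\gamma = \gamma(G_n) \geq \gamma_0 > 0$; such a family exists by the zig-zag construction of Section~\ref{sec:expanders} (Construction~\ref{const:mild}, after one applies tensoring to reach every input length $n$). Given parameters $n$, $k$, $\eps$, I would choose a walk-length parameter $t$ and define
\[
\Ext(x,y) \;=\; \text{endpoint of the length-}t\text{ walk in }G_n\text{ starting at }x\text{ with edge-choices prescribed by }y\in [D]^t.
\]
Equivalently, $\Ext(x,\cdot)$ lists the neighbors of $x$ in $G_n^t$, so the seed length is $d = t\log_2 D$.

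Next I would analyze the extractor using $\ell_2$ distance. Let $X$ be any $k$-source on $\{0,1\}^n$, viewed as a probability row vector, and let $M$ be the random-walk matrix of $G_n$, with $u = (1/N, \ldots, 1/N)$ the uniform distribution, $N = 2^n$. Because each entry of $X$ is at most $2^{-k}$, we have $\|X\|_2^2 \leq 2^{-k}$, hence $\|X-u\|_2^2 = \|X\|_2^2 - 1/N \leq 2^{-k}$. Since $X-u \perp u$, the variational characterization of $\sigma_2(G_n) = 1-\gamma$ gives
\[
\|XM^t - u\|_2 \;=\; \|(X-u)M^t\|_2 \;\leq\; (1-\gamma)^t \cdot \|X-u\|_2 \;\leq\; (1-\gamma)^t \cdot 2^{-k/2}.
\]
Converting back to statistical distance by Cauchy--Schwarz, $\Delta(XM^t,U_n) \leq \tfrac{1}{2}\|XM^t-u\|_1 \leq \tfrac{\sqrt{N}}{2}\|XM^t-u\|_2 \leq \tfrac{1}{2}(1-\gamma)^t \cdot 2^{(n-k)/2}$.

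To make this at most $\eps$ it suffices to take $t$ large enough that $(1-\gamma)^t \leq 2\eps \cdot 2^{-(n-k)/2}$, i.e.\ $t \geq c \cdot \bigl((n-k) + \log(1/\eps)\bigr)$ for a constant $c$ depending only on $\gamma_0$. Then $d = t\log_2 D = O(n-k+\log(1/\eps))$ as desired, and explicitness of $\Ext$ reduces to being able to follow $t$ edges in $G_n$, which is polynomial-time in $n$ by the explicitness of $\{G_n\}$. The only real delicate point is arranging the explicit expander family for \emph{every} input length $n$ rather than for a recursive sequence of sizes; this is handled by tensoring with a small graph (as noted after Construction~\ref{const:mild}) so that neighbor computation takes time $\poly(\log N) = \poly(n)$ uniformly in $n$.
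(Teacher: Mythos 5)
Your proposal is correct and is exactly the construction the paper attributes to Goldreich and Wigderson: take the $t$-th power of an explicit constant-degree spectral expander on $\{0,1\}^n$ and bound the $\ell_2$ distance of a $k$-source to uniform by $2^{-k/2}$, contract by $(1-\gamma)^t$, and convert to statistical distance via Cauchy--Schwarz, yielding $t = O(n-k+\log(1/\eps))$ walk steps and hence seed length $O(n-k+\log(1/\eps))$. The parameter bookkeeping and the caveat about obtaining explicit expanders at every input length are both handled appropriately.
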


Note that the seed length of this extractor is linear rather than logarithmic, but importantly it is linear in $n-k$ rather than just $n$.  So when $k=n-o(\log n)$, the seed length is shorter than that of Theorem~\ref{thm:LRVW}.  The origin of Wigderson's zig-zag product described in Section~\ref{sec:expanders} was in the context of extractors, to compose extractors such as given in
Theorem~\ref{thm:LRVW} and in Theorem~\ref{thm:GW} to obtain a ``best of both'' seed length of $O(\log(n-k))$~\cite{ReingoldVaWi00-old}.

Wigderson's constant-degree expanders with expansion $(1-\eps)D$~\cite{CapalboReVaWi02} came from considering a common generalization of expanders and extractors.  In applying an extractor, any distribution $X$ that has large enough (min-)entropy gets transformed into one that is close to uniform.  In contrast, a random step on expander transforms any distribution $X$ that does {\em not} have too much entropy into one with higher entropy.  Formally, a spectral expander can be interpreted as one that increases \Renyi\ entropy (noting that the expression $\Exp_{x\getsr X}[\pr{X=x}]$ that appears in the definition of \Renyi\ entropy equals the squared $\ell_2$ norm of the probability mass function of $X$).   To bridge the two, we can ask for a function $\Con : \zo^n\times \zo^d\rightarrow \zo^m$ such that for every random variable $X$ of min-entropy $k\leq \kmax$, it holds that $\Con(X,U_d)$ is $\eps$-close to having min-entropy at least $k+a$.  Such a function is a necessarily a $(\Kmax,(1-\eps)A)$ vertex expander (where $\Kmax=2^{\kmax}$ and $A=2^a$), and in fact if $a=d$, the converse holds as well~\cite{TaShmaUmZu07}.  A general abstraction of {\em randomness conductors} that encompasses all of these notions was given in \cite{CapalboReVaWi02}, and a zig-zag product for conductors was developed and used to obtain constant-degree bipartite expanders with expansion $(1-\eps)\cdot D$.

The $\ell_2$-to-$\ell_1$ switch from requiring that \Renyi\ entropy increases to only requiring that the output distribution is $\eps$-close in total variation distance to having higher entropy is crucial for enabling these results. Indeed, it is impossible to derive expansion greater than $D/2$ from spectral expansion alone~\cite{Kahale95}.  Already in Wigderson's earlier work with Zuckerman~\cite{WigdersonZu99}, randomness extractors were used to construct balanced bipartite vertex expanders of non-constant degree that are impossible to derive from spectral expansion.

\subsubsection{Multi-source Extractors and Ramsey Graphs}

In the previous section, we argued that seeded extractors (Definition~\ref{def:extractor}) suffice for simulating randomized algorithms with a single sample from a weak random source because we can enumerate over all possible seeds in polynomial time.  However, this trick does not work for a number of other applications of randomness, such as in cryptography, distributed computing, and Monte Carlo simulation, where it is not clear how to combine the results from enumeration.  Thus, it is natural to ask whether we can extract almost-uniform bits given {\em only} access to weak sources of randomness, i.e. with no uniformly random seed. 

For example, we could consider extracting randomness from a small number of independent $k$-sources, a problem first studied by
Chor and Goldreich~\cite{ChorGo88}. 
That is we want a function $\Ext : (\zo^n)^c \rightarrow \zo^m$ such that for all independent random variables $X_1,X_2,\ldots,X_c$ where each $X_i$ is a $k$-source, 
$\Ext(X_1,X_2,\ldots,X_c)$ is $\eps$-close to $U_m$.  Or we could weaken the requirement to that of a disperser, where we only require that the output has support size at least $(1-\eps)\cdot 2^m.$

In addition to their motivation for obtaining high-quality randomness, extractors for $c=2$ independent sources are of interest because of connections to communication complexity and to Ramsey theory.  In particular, a disperser for 2 independent $k$-sources of length $n$ with output length $m=1$ is {\em equivalent} to a {\em bipartite Ramsey graph} --- a bipartite graph with $N$ vertices on each side that contains no $K\times K$ bipartite clique or $K\times K$ bipartite independent set (for $N=2^n$ and $K=2^k$): connect left vertex $x$ and right vertex $y$ iff $\Disp(x,y)=1$.
Giving explicit constructions of Ramsey graphs
that approach $K=O(\log N)$ bound given by the Probabilistic Method~\cite{Erdos47} is a long-standing open problem
posed by \Erdos~\cite{Erdos69}.

Chor and Goldreich~\cite{ChorGo88} gave extractors for 2 independent sources of min-entropy rate $\delta$ (i.e. $k$-sources on $\zo^n$ with $k=\delta n$) when $\delta>1/2$, and there was no improvement in this bound for nearly 2 decades. 
Substantial progress began again in Wigderson's work with Barak and Impagliazzo~\cite{BarakImWi06}, who used new results in arithmetic combinatorics to construct extractors for a constant number of independent sources of min-entropy rate $\delta$ for an arbitrarily small constant $\delta>0$.  Specifically, they used the Sum--Product Theorem over finite fields of Bourgain, Katz, and Tao~\cite{BourgainKaTa04}; this theorem says that for $p$ prime and every subset $A\subseteq \F_p$  whose size is not too close to $p$, either the set $A+A$ of pairwise sums or the set $A\cdot A$ of pairwise products is of size significantly larger than $|A|$. Using this theorem and other results in additive number theory, Barak, Impagliazzo, and Wigderson show that if $A$, $B$, $C$ are random variables distributed in $\F_p$ with min-entropy rate $\delta<.9$, then $A\cdot B+C$ is $\eps$-close to having min-entropy rate $(1+\alpha)\cdot \delta$ for a universal constant $\alpha>0$.  Recursively applying this result reduces the task of extracting from $\poly(1/\delta)$ sources of min-entropy $\delta n$ to extracting from 2 sources of min-entropy rate larger than $1/2$, which allows for applying the Chor--Goldreich extractor~\cite{ChorGo88}.

In subsequent works, Wigderson obtained even better multi-source extractors and dispersers.  With
Barak, Kindler, Shaltiel, and Sudakov~\cite{BarakKiShSuWi10}, Wigderson constructed explicit extractors for 3 sources of min-entropy $k=\delta n$~\cite{BarakKiShSuWi10}.  With Barak, Rao, and Shaltiel~\cite{BarakRaShWi12},
Wigderson constructed dispersers for 2 sources of min-entropy $k=n^{o(1)}$~\cite{BarakRaShWi12}, or equivalently bipartite Ramsey graphs that avoid $K\times K$ cliques and independent sets of size $K=2^{(\log N)^{o(1)}}$.
This latter
result was a major improvement over the previous best explicit construction of Ramsey graphs by Frankl and Wilson~\cite{FranklWi81}, which had $K=2^{\sqrt{n}}$ and only applied to the nonbipartite case.  A long line of subsequent work has continued to improve the parameters of 2-source extractors and dispersers, and very recently Li~\cite{Li23} has achieved 2-source extractors for min-entropy $k=O(\log n)$, which is optimal up to a constant factor, and thus bipartite Ramsey graphs for $K=\polylog(N)$, which is optimal up to the constant in the exponent.

\subsection{Unconditional derandomization} \label{sec:unconditional}

Theorem~\ref{thm:IW} of Wigderson and collaborators gives strong evidence that randomness does not provide a substantial gain in the efficiency of algorithms, but it assumes circuit lower bounds that we are very far from proving. Thus, together with Ajtai~\cite{AjtaiWi89}, Wigderson asked whether  there are large classes of algorithms that we can {\em unconditionally derandomize}, namely without making any unproven complexity assumptions.\footnote{The work of Ajtai and Wigderson~\cite{AjtaiWi89} actually preceded Theorem~\ref{thm:IW}, but was instead motivated by Yao's proof~\cite{Yao82} that $\BPP\subseteq \SUBEXP$ under the assumption that cryptographic pseudorandom generators exist.}  They showed that this is indeed possible, giving an unconditional subexponential-time derandomization of probabilistic constant-depth circuits. After that, unconditional derandomization became
a huge area of research, which is still flourishing. We refer the reader to the survey by Hatami and Hoza~\cite{HatamiHo23} for recent developments in the area.

\subsubsection{Undirected S-T Connectivity}

One subclass of $\BPP$ that has proved amenable to unconditional derandomization is 
$\BPL$, where we restrict the algorithms to use a logarithmic amount of space.  (When we measure the space
complexity of an algorithm, we only count the read-write working memory, and do not count the space needed for the read-only input and write-only output.)

\begin{definition} 
A language $L$ is in $\BPL$ if there exists a randomized algorithm $A$ that always halts,
uses space at most $O(\log n)$ on inputs of length $n$, and satisfies the following for all inputs $x$:
\begin{itemize}
\item  $x \in L \Rightarrow \Pr[\mbox{$A(x)$ accepts}] \geq 2/3$.

\item  $x \not\in L \Rightarrow \Pr[\mbox{$A(x)$ accepts}] \leq 1/3$.
\end{itemize}
\end{definition}

The standard model of a randomized space-bounded machine is one that has access to a coin-tossing
box (rather than an infinite tape of random bits), and thus must explicitly store in its workspace
any random bits it needs to remember.  The requirement that $A$ always halts ensures that its
running time is at most $2^{O(\log n)} = \poly(n)$, because otherwise there would be a loop in its
configuration space.  Thus $\BPL\subseteq \BPP$.

Similarly to the time case (Definition~\ref{def:DTIME}), we can ask what is the smallest deterministic space bound needed to simulate $\BPL$:

\begin{definition}[Deterministic Space Classes]
\label{def:DSPACE}
\vspace{-4ex}
$$
\begin{array}{rcl}
\DSPACE(s(n)) &=& \{L : \mbox{$L$ can be decided deterministically in
space $O(s(n))$}\} \\
\L & = & \DSPACE(\log n) \\
\L^c &=&  \DSPACE(\log^c n) 
\end{array}
$$
\end{definition}
Classic results in complexity theory~\cite{BorodinCoPi83,Jung81}
tell us that $\BPL\subseteq \L^2$; however, this is not really a result about randomized algorithms, since it applies even for the unbounded-error version of $\BPL$ (where inputs in $L$ are accepted with probability greater than 1/2 and inputs not in $L$ with probability at most $1/2$).  Thus the interesting question is whether we can show $\BPL = \L$ (randomization provides only a constant-factor savings in memory), or at least $\BPL \subseteq \L^c$ for a constant $c < 2$.

The potential power of randomization for logspace algorithms was first demonstrated in the late 1970's for the following basic problem:
\begin{compprob}
\USTConn:  Given an undirected graph $G$ and two vertices $s$ and $t$,
is there a path from $s$ to $t$ in $G$?
\end{compprob}
Basic algorithms like breadth-first or depth-first search solve \USTConn\ in linear time, but also take linear space. With randomization we can solve the problem in only logarithmic space:

\begin{theorem}[\cite{AleliunasKaLiLoRa79}] \label{thm:AKLLLR}
\USTConn\ is in $\BPL$.
\end{theorem}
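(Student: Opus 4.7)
The plan is to implement the following randomized algorithm within $O(\log n)$ space: starting at $s$, perform a random walk on $G$ for $T = 6n^3$ steps, and accept if and only if the walk visits $t$ at some point. The algorithm keeps just two registers --- the current vertex and a step counter --- each using $O(\log n)$ bits. At each step, it reads the adjacency information from the read-only input to enumerate the neighbors of the current vertex, then uses $O(\log n)$ fresh coins from the coin-tossing box to pick one uniformly at random; these coins need not be stored across steps. The algorithm always halts in $T = \poly(n)$ steps and its workspace is $O(\log n)$, so the only remaining issue is correctness.

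For soundness, if $s$ and $t$ lie in different connected components, the walk can never reach $t$ and the algorithm rejects with probability $1$. For completeness, suppose $s$ and $t$ both lie in a connected component $C$ with at most $n$ vertices and $m \le \binom{n}{2}$ edges. I would prove the classical hitting-time bound $H(s,t) = O(mn) = O(n^3)$, where $H(s,t)$ denotes the expected number of steps before a walk from $s$ first reaches $t$. Markov's inequality then gives $\Pr[\text{walk misses } t \text{ in } T \text{ steps}] \le H(s,t)/T \le 1/3$, as required.

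The heart of the argument is the single-edge commute-time lemma: for every edge $\{u,v\}$ in $C$, the commute time satisfies $H(u,v) + H(v,u) \le 2m$. I would prove this using the reversibility of random walks on undirected graphs: the chain has stationary distribution $\pi_x = \deg(x)/(2m)$, so the expected return time to any vertex $x$ equals $1/\pi_x = 2m/\deg(x)$; decomposing this return via the first neighbor visited after leaving $x$ yields the single-edge bound. Once the lemma is established, fix any path $s = x_0, x_1, \ldots, x_k = t$ in $C$ with $k \le n-1$; by linearity of expectation,
\[
H(s,t) \;\le\; \sum_{i=0}^{k-1} H(x_i, x_{i+1}) \;\le\; 2m(n-1) \;=\; O(n^3),
\]
completing the completeness analysis with $T = 6n^3$.

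The main obstacle is the commute-time lemma: although its proof is short, it is the one place where the undirected structure of $G$ is used in an essential way. Random walks on directed graphs can have expected hitting times exponential in $n$, which is why the same argument provably fails for the directed version $\STConn$ (an $\NL$-complete problem not known to be in $\BPL$). Everything else --- the space accounting, the application of Markov, and optional error amplification by a small number of independent repetitions --- is routine.
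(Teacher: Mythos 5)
Your proposal is correct and follows essentially the same route as the paper's proof sketch (Algorithm~\ref{alg:AKLLLR}): a polynomial-length random walk from $s$ maintained in $O(\log n)$ space, with one-sided error and completeness via a polynomial hitting-time bound. You additionally supply the proof of the hitting-time bound (via return times and the single-edge commute-time lemma) that the paper only asserts, and that argument is sound.
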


\begin{proofsketch} 
The algorithm simply does a polynomial-length random walk starting at $s$:

\begin{algorithm}[\USTConn\ via Random Walks] \label{alg:AKLLLR}
\alginput{$(G,s,t)$, where $G=(V,E)$ has $n$ vertices.}
\begin{algenumerate}
\item Let $v=s$.
\item Repeat $\poly(n)$ times:
\begin{enumerate}
\item If $v=t$, halt and accept.
\item Else randomly select $v\getsr \{ w : (v,w)\in E\}$.
\end{enumerate}
\item Reject (if we haven't visited $t$ yet).
\end{algenumerate}
\end{algorithm}

Notice that this algorithm only requires space $O(\log n)$, in order to maintain the current vertex $v$ as well as a counter
for the number of steps taken.  Clearly, it never accepts when there isn't a path from $s$ to $t$.  It can be shown that in any connected undirected graph, a random walk of length $\poly(n)$ from one vertex will hit any other vertex with high probability.  Applying this to the connected component containing $s$, it follows that the algorithm accepts with high probability when $s$ and $t$ are connected.  %
\end{proofsketch}

Using Nisan's pseudorandom generator for space-bounded computation~\cite{Nisan92}, Wigderson, together with Nisan and Szemer\'edi~\cite{NisanSzWi92}, proved that \USTConn\ is in $\L^{3/2}$.  Inspired by that result, Saks and Zhou~\cite{SaksZh99} then proved that $\BPL\subseteq \L^{3/2}$, which remains essentially the best derandomization of $\BPL$ to date.\footnote{Recently, Hoza~\cite{Hoza21}
gave a slight improvement, showing that
$\BPL \subseteq \DSPACE(\log^{3/2} n/\sqrt{\log\log n})$.}
Then Wigderson, together with Armoni, Ta-Shma, and Zhou~\cite{ArmoniTaWiZh00}, proved that \USTConn\ is in $\L^{4/3}$.
In 2005, Reingold~\cite{Reingold06} finally resolved the space complexity of \USTConn:

\begin{theorem}[\cite{Reingold06}] \label{thm:Reingold} \USTConn is in $\L$. \end{theorem}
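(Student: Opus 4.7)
The plan is to build on the zig-zag product and on the observation (from the $\BPL\subseteq \L^2$ simulation) that, in order to decide $\USTConn$ in space $O(\log n)$, it suffices to transform the input into a graph in which connectivity is trivial to verify; concretely, a constant-degree regular graph whose every connected component has diameter $O(\log n)$. The latter property is guaranteed whenever the graph is a \emph{spectral expander} of constant degree with constant spectral expansion. Thus the strategy is: (i) reduce to constant-degree regular graphs, (ii) iteratively apply the zig-zag product to turn each connected component into such an expander, and (iii) in the expanded graph, decide connectivity by brute-force enumeration of walks of length $O(\log n)$ from $s$.

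For step (i), I would replace every vertex $v$ of the input graph $G$ by a cycle on $\deg(v)$ ``rotation'' vertices, with the original edges incident to $v$ distributed one per cycle vertex, producing a $4$-regular graph $G_0$ whose connected components are in bijection with those of $G$. This transformation is trivially computable in $\L$ on the fly. For step (ii), fix a constant-sized $D$-regular expander $H$ on $D^{16}$ vertices with $\gamma(H)\ge 1/2$ (found once and for all by exhaustive search), and define
\[
G_{t+1} \;=\; G_t^{\,16} \zigzag H.
\]
By Theorem~\ref{thm:zigzag} together with the standard bound $\sigma_2(G^{16})\le \sigma_2(G)^{16}$, one step of this operation improves spectral expansion by a constant factor (whenever $\gamma(G_t)$ is small) while keeping the degree equal to $D^2$, a constant. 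Starting from $G_0$, whose components satisfy $\gamma\ge 1/\poly(n)$ by standard lower bounds for undirected regular graphs, only $\ell = O(\log n)$ iterations are needed to reach a graph $G_\ell$ whose every component is a constant-degree expander with $\gamma(G_\ell)\ge 1/2$. In such a graph any two connected vertices are joined by a walk of length $O(\log n)$, and connectivity can be decided by enumerating all such walks from (a ``lifted'' copy of) $s$ and checking whether any reaches a lifted copy of $t$, which uses only $O(\log n)$ bits of counter plus $O(\log n)$ bits for the walk labels.

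The main obstacle is implementing neighbor queries in $G_\ell$ within $O(\log n)$ space, since $|V(G_\ell)| = |V(G_0)| \cdot |V(H)|^\ell$ can grow to $\poly(n)$ or more, and the natural recursion for a neighbor in $G_{t+1}$ performs several neighbor queries in $G_t$. The crucial observation is that the recursion depth is $\ell = O(\log n)$, and at each level only $O(1)$ bits of overhead are needed: a label in $[D^{16}]$ indexing the power step of the squaring, plus a label in $V(H)$, plus a pointer to which sub-call we are currently executing. Because the sub-calls to $G_t$ can be executed \emph{sequentially}, reusing the same workspace for each, the space required to compute a single neighbor in $G_\ell$ is $O(\ell) = O(\log n)$, rather than exponential in $\ell$. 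The same recursive convention allows step (i) to be composed on the fly, so that the entire computation never writes out $G_\ell$ explicitly.

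Putting the pieces together yields an $\L$ algorithm for $\USTConn$: given $(G,s,t)$, enumerate all walks of length $c\log n$ from $s$ in $G_\ell$, invoking the recursive neighbor oracle at each step, and accept if and only if some walk reaches a vertex corresponding to $t$. Correctness follows because connectivity in $G_\ell$ agrees with connectivity in the original $G$ (the zig-zag product and the $4$-regularization preserve connected components), and the expansion of each component of $G_\ell$ guarantees that a walk of logarithmic length suffices. The entire algorithm uses $O(\log n)$ workspace, establishing Theorem~\ref{thm:Reingold}.
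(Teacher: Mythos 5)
Your proposal is correct and follows essentially the same route the paper sketches for Reingold's theorem: regularize the input, iterate squaring plus zig-zag for $O(\log n)$ rounds so that every connected component becomes a constant-degree expander, and then enumerate walks of length $O(\log n)$, with the logspace bound coming from the observation that the recursion has depth $O(\log n)$ with only $O(1)$ bits of state per level and sequentially reusable workspace. The one slip is a parameter mismatch: with $H$ a $D$-regular graph on $D^{16}$ vertices you should take the $8$th power of $G_t$ (or place $H$ on $D^{32}$ vertices if you insist on the $16$th power), since $G_t^{k}$ must have degree exactly $|V(H)|$ for $G_t^{k}\zigzag H$ to be defined, and the initial graph $G_0$ must be brought to that same degree (adding self-loops, which also ensures the $1/\poly(n)$ spectral gap you invoke) --- but these are trivial constant adjustments.
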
  

Reingold's Theorem is based on the following two ideas:
\begin{itemize}
\item \USTConn\ can be solved in logspace on constant-degree expander graphs. More precisely, it is easy on constant-degree
graphs where every connected component is promised to be an expander (i.e. has spectral expansion bounded away from 0): we can try
all paths of length $O(\log N)$ from $s$ in logarithmic space; this works because expanders have logarithmic diameter.

\item  The same operations that Reingold, Vadhan, and Wigderson~\cite{ReingoldVaWi01} used to construct an infinite expander family (described Section~\ref{sec:expanders}) can also be used to
turn {\em any} graph into an expander (in logarithmic space).  There, we started with a constant-sized expander
and used various operations to build larger and larger expanders.  The goal was to increase the size of the
graph (which was accomplished by zig-zag and/or tensoring),
while preserving the degree and the expansion (which was accomplished by zig-zag and squaring).  Here, we want to {\em improve} the expansion (which is accomplished by squaring),
while preserving the degree (as is handled by zig-zag) and ensuring the graph remains of polynomial size (so tensoring is counterproductive and not used).
\end{itemize}

\subsubsection{General Space-Bounded Computation}

Like in the time-bounded case, one of the main approaches to derandomizing $\BPL$ is to construct pseudorandom generators $G : \zo^d\rightarrow
\zo^n$ such that no randomized $(\log n)$-space algorithm can distinguish $G(U_d)$ from $U_n$.   In order to get
derandomizations that are correct on every input $x$, we require pseudorandom generators that fool {\em nonuniform} space-bounded algorithms.  Since randomized space-bounded algorithms get their random bits as a stream of coin tosses, we only need to fool space-bounded distinguishers that read each of their input bits once, in order. Thus, instead of boolean circuits, we want pseudorandom generators for the following class of distinguishers:
\begin{definition}
An {\em ordered branching program} $B$ of {\em width} $w$ and {\em length} $n$ is given by a {\em start state} $s_0\in [w]$, $m$ {\em transition functions} $B_1,\ldots,B_n : [w]\times \zo\rightarrow [w]$, and a set $A\subseteq [w]$ of {\em accept states}.  On an input $x\in \zo^n$, $B$ computes by updating its state via the rule $s_{i}=B_i(s_{i-1},x_i)$ for $i=1,\ldots,n$ and accepting iff $s_n\in A$.  
\end{definition}

The width $w$ of a branching program corresponds to a space bound of $\log w$ bits.
Similarly to Theorem~\ref{thm:PRGderandomize}, a family of generators $G_n : \zo^{d(n)}\rightarrow \zo^n$ that is computable in space $O(d(n))$ and such that $G_n(U_{d(n)})$ cannot be distinguished from $U_n$ by ordered branching of width $w=n$ implies that $\BPL \subseteq \bigcup_c \DSPACE(c\log n+d(n^c))$.  (Enumerating all seeds of length $d(m)$ only requires an additive space increase of $d(m)$.)  In particular, a pseudorandom generator with seed length $d(n)=O(\log^c n)$ immediately implies $\BPL\subseteq \L^c$.

Unfortunately, the best known pseudorandom generator for general space-bounded computation is Nisan's generator~\cite{Nisan92}, whose seed length of $O(\log^2 n)$ does not improve on the bound $\BPL\subseteq \L^2$.  Nevertheless, Saks and Zhou~\cite{SaksZh99} used Nisan's generator as part of a more sophisticated algorithm to obtain their result that $\BPL\subseteq \L^{3/2}$.

Together with Impagliazzo and Nisan~\cite{ImpagliazzoNiWi94}, Wigderson gave an appealing alternative to Nisan's generator that has been the subject of much subsequent research and improved analyses for restricted models of ordered branching programs:

\begin{definition}
    Given a sequence of regular digraphs $\fH=(H_1,\ldots,H_\ell)$ where $\deg(H_i)=d_i$ and $|V(H_i)|=2\prod_{j=1}^{i-1}d_j$, the \textbf{INW generator constructed with $\fH$}, denoted $\INW_\fH$ or $\INW_\ell$ when the family is clear, is the function
    defined recursively where for $x\in \zo$ we have $\INW_0(x)=x$ and for
    $x\in V(H_i)$ and $y\in [d_i]$, we have
    \begin{equation} \label{eqn:INW-recursion}
    \INW_{i}(x,y)=(\INW_{i-1}(x),\INW_{i-1}(H_{i}[x,y])),
    \end{equation}
    where $H_{i}[x,y]$ denotes the $y$'th neighbor of vertex $x$ in the graph $H_{i}$.
    $\INW_i$ thus generates an output of length $2^i$ using a seed of length $\left\lceil \log\left(2\prod_{i=1}^\ell d_i\right)\right\rceil$.
\end{definition}

That is, $\INW_i$ correlates the seeds of $\INW_{i-1}$ used to generate the first $2^{i-1}$ bits and the second $2^{i-1}$ bits as neighbors in the graph $H_i$.
Impagliazzo, Nisan, and Wigderson~\cite{ImpagliazzoNiWi94} proved that an instantiation of this generator
fools logspace algorithms with a seed length of $O(\log^2 m)$.  They did this by analyzing the construction when the graphs $H_i$ are good spectral expanders:
\begin{theorem}[\cite{ImpagliazzoNiWi94}] \label{thm:INW}
If every graph $H_i$ has spectral expansion at least $1-\sigma$, then $\INW_\ell$ $\eps$-fools ordered branching programs of with $w$ and length $n=2^\ell$ with error
at most $\eps=\sigma\cdot nw$.
\end{theorem}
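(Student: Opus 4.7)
The plan is to prove by induction on $\ell$ that the distinguishing advantage $\eps_\ell$ of $\INW_\ell$ against any ordered branching program of width $w$ and length $2^\ell$ is at most $(2^\ell - 1)\sigma w \leq \sigma n w$. The base case $\ell = 0$ is immediate: $\INW_0$ is the identity on one bit. For the inductive step, I would split any such program $P$ into two halves $P_1, P_2$ of length $2^{\ell-1}$. Let $A(x) = T_{P_1}(\INW_{\ell-1}(x))$ and $B(y) = T_{P_2}(\INW_{\ell-1}(y))$, where $T_{P_j}(r) \in \{0,1\}^{w \times w}$ is the (functional) transition matrix of $P_j$ on input $r$. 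By the recursion~\eqref{eqn:INW-recursion}, the accept probability under $\INW_\ell$ equals $e_{s_0}^\top \,\mathbb{E}_{(X,Y) \sim H_\ell}[A(X) B(Y)]\, \mathbf{1}_A$ (where $(X,Y)$ ranges over uniform edges of $H_\ell$), while the accept probability under truly uniform bits equals $e_{s_0}^\top M_1 M_2 \mathbf{1}_A$ with $M_j = \mathbb{E}_r[T_{P_j}(r)]$. Inserting the intermediate hybrid $\mathbb{E}_X[A(X)]\,\mathbb{E}_Y[B(Y)]$ (i.e.\ independent INW seeds on the two halves) splits the distinguishing advantage into an induction error and a correlation error.

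The induction error $|e_{s_0}^\top (\mathbb{E}_X[A(X)] \mathbb{E}_Y[B(Y)] - M_1 M_2)\mathbf{1}_A|$ is at most $2\eps_{\ell-1}$ by two invocations of the inductive hypothesis, one per half. A subtle point is that after marginalizing over the other half, the ``accept function'' of one half may take values in $[0,1]^w$ rather than $\{0,1\}^w$; writing any $\beta \in [0,1]^w$ as $\int_0^1 \mathbf{1}[\beta \geq t]\,dt$ expresses it as an average of $\{0,1\}$-valued indicators, so the hypothesis extends to this setting by integration. The correlation error $|e_{s_0}^\top (\mathbb{E}_{(X,Y) \sim H_\ell}[A(X) B(Y)] - \mathbb{E}_X[A(X)]\mathbb{E}_Y[B(Y)])\mathbf{1}_A|$ is where the expansion of $H_\ell$ enters. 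Setting $\alpha_x = e_{s_0}^\top A(x) \in \{0,1\}^w$ (a standard basis vector, since $A(x)$ is functional), $\beta_y = B(y)\mathbf{1}_A \in [0,1]^w$, and $N = |V(H_\ell)|$, this error equals
$$\frac{1}{N}\sum_{s=1}^{w} \big\langle \alpha^{(s)},\; (M_{H_\ell} - J)\,\beta^{(s)} \big\rangle,$$
where $\alpha^{(s)}, \beta^{(s)} \in \mathbb{R}^N$ collect the $s$-th coordinates of $\alpha_x$ and $\beta_y$ across $x,y \in V(H_\ell)$. Since $M_{H_\ell} - J$ has spectral norm at most $\sigma$, each summand is bounded by $\sigma\|\alpha^{(s)}\|_2 \|\beta^{(s)}\|_2 \leq \sigma N$; summing the $w$ terms and dividing by $N$ yields correlation error at most $\sigma w$. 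The recursion $\eps_\ell \leq 2\eps_{\ell-1} + \sigma w$ then solves to $\eps_\ell \leq \sigma n w$.

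The main obstacle is the correlation step. The spectral gap of $H_\ell$ is naturally an $\ell_2$ property of its random-walk matrix, but what we must control is a discrepancy in \emph{accept probabilities}, which is more naturally $\ell_1/\ell_\infty$ in flavor. The decisive move is to decompose $\alpha_x \cdot \beta_y = \sum_s \alpha_x(s)\beta_y(s)$ and apply the spectral inequality once per intermediate state $s$; it is precisely this per-state decomposition that introduces the factor $w$ in the per-level error, and in turn explains why the seed length of the INW generator is $O(\log^2 n)$ rather than $O(\log n)$.
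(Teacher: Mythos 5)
Your proof is correct and takes essentially the same route as the paper's argument: the hybrid with independent $\INW_{\ell-1}$ seeds on the two halves yields the $2\eps_{\ell-1}$ term, and the per-intermediate-state application of $\|M_{H_\ell}-J\|\le\sigma$ yields the $\sigma w$ term, giving the recurrence $\eps_\ell\le 2\eps_{\ell-1}+\sigma w$ and hence $\eps_\ell\le(2^\ell-1)\sigma w<\sigma nw$. In fact you supply more rigor than the text, which only sketches this intuition; your layer-cake decomposition of the $[0,1]^w$ accept vectors is exactly the right device to justify invoking the inductive hypothesis on each half.
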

To achieve spectral expansion $1-\sigma$, we can use explicit expanders $H_i$ with degree $d_i=\poly(1/\sigma)$, and hence get seed length 
$$\left\lceil \log\left(2\prod_{i=1}^\ell d_i\right)\right\rceil
= O\left(\log n\cdot \log\left(\frac{1}{\sigma}\right)\right).$$
To achieve error $\eps$ by Theorem~\ref{thm:INW}, we should set $\sigma = \eps/nw$, and thus we get seed length $O(\log n \cdot \log(nw/\eps))$, exactly matching Nisan~\cite{Nisan91} and giving seed length $O(\log^2 n)$ when $w=n$ and $\eps=1/8$ as needed for derandomizing $\BPL$.

To get intuition for Theorem~\ref{thm:INW}, notice that if took the graph $H_i$ to be complete graphs with self-loops, then in Expression~(\ref{eqn:INW-recursion}) for $\INW_i$ we would be using independent seeds for the left half and right half, so the error (distinguishing advantage) of $\INW_{i}$ should be at most twice the error of $\INW_{i-1}$ (since we are using it twice).  Furthermore, since an expander with spectral expansion at least $1-\sigma$ approximates the complete graph to within spectral norm at most $\sigma$, we incur an additional error of at most $\sigma w$ in the $i$'th level of recursion, where we pay a factor of $w$ by summing the error over the $w$ possible states of the branching program at the halfway point.  Thus the error $\eps_i$ for $\INW_i$ can be bounded by the recurrence $\eps_i\leq 2\eps_{i-1}+\sigma w$, which solves to $\eps_\ell \leq (2^{\ell}-1)\cdot \sigma w < \sigma \cdot nw$.

Impagliazzo, Nisan, and Wigderson~\cite{ImpagliazzoNiWi94} actually proved that the INW generator fools a wider class of algorithms than ordered branching programs, called {\em network algorithms}.  Subsequent developments, however, have focused on obtaining improved analyses for more restricted classes of ordered branching programs, namely {\em regular} and {\em permutation} branching programs. An ordered branching program $B$ is a {\em permutation program} if for every $i$ and bit $x_i\in \zo$, the transition function $B_i(\cdot,x_i) : [w]\rightarrow [w]$ is a permutation on the state set.  That is, the transitions are reversible (for any fixed input $x$).  A {\em regular} branching program is more general and just requires that for every state $s_i \in [w]$, there are exactly two pairs $(s_{i-1},x_i)\in [w]\times \zo$ such that $B_i(s_{i-1},x_i)=s_i$.  A more intuitive formulation of regularity comes from thinking of each transition function $B_i$ of the branching program as a bipartite graph with $w$ vertices on each side, where left-vertex $s_{i-1}$ is connected to right-vertices $B_i(s_{i-1},0)$ and $B_i(s_{i-1},1)$; in this viewpoint, a branching program is regular iff all of its associated bipartite graphs are regular.  (They are always 2-leftregular; the additional requirement here is that they are also 2-rightregular.)  One motivation for studying pseudorandomness for regular branching programs is that a general ordered branching program of width $w$ and length $n$ can be simulated by an ordered regular branching program of width $wn$~\cite{ReingoldTrVa06,BogdanovHoPrPy22,LeePyVa23}.

The \USTConn\ problem can be reduced to estimating the acceptance probability of an ordered permutation branching program, and it was shown by \cite{RozenmanVa05} that an instantiation of the INW generator with seed length $O(\log n)$ can be used to derandomize Algorithm~\ref{alg:AKLLLR} on the corresponding graphs and thus give a simpler proof of Reingold's Theorem (Theorem~\ref{thm:Reingold}).  Next, it was shown in \cite{BravermanRaRaYe10} showed that the INW generator fools ordered regular branching programs with seed length $O(\log n \cdot\log\log n + \log n \cdot \log(w/\eps))$.  Note that this seed length is nearly linear rather than quadratic in $\log n$.  In \cite{KouckyNiPu11,De11,Steinke12} it was shown that the INW generator fools ordered permutation branching programs with seed length $O(\poly(w) \cdot \log n\cdot \log(1/\eps)),$ which is $O(\log n)$ for constant $w$ and $\eps$.  Finally, in \cite{HozaPyVa21}, it was shown that the INW generator fools ordered permutation branching programs that have a single accept state with seed length $O(\log n \cdot\log\log n + \log n \cdot \log(1/\eps))$, with no dependence on the width $w$.  In \cite{PyneVa21-WPRG,BogdanovHoPrPy22,ChenLyTaWu23}, the INW generator, with these improved analyses, was also used to construct relaxations of pseudorandom generators (hitting-set generators and weighted pseudorandom generators) for ordered regular and/or permutation branching programs that have an even better dependence on the error parameter $\eps$.

The key to these improved analyses is to show that the error of the INW generator accumulates more slowly for these models of branching programs than given by Theorem~\ref{thm:INW}, for example achieving $\eps=O(\sigma\cdot \log n)$ yields the result of \cite{HozaPyVa21}.  
The error analysis of \cite{HozaPyVa21} builds on \cite{RozenmanVa05} in viewing the composition of the INW generator with an ordered branching program as the result of an iterated graph operation.  Note that if $B$ is an ordered permutation branching program of length $n$ and $G : \zo^d\rightarrow \zo^n$ is any generator, then composing $B$ and $G$ can be viewed as defining a $2^d$-regular bipartite multigraph $B\circ G$ with $w$ vertices on each side, where we connect left-vertex $s\in [w]$ to the final state reached when we run $B$ on each of the outputs of $G$ from start state $s$. The recursive operation (\ref{eqn:INW-recursion}) defining the INW generator amounts to taking a ``product'' of the two bipartite graphs $B_L \circ \INW_{i-1}$ and $B_R\circ \INW_{i-1}$, where $B_L$ and $B_R$ are the first and second halves of a program $B$ of length $2^i$. If the graph $H_i$ is the complete graph with self-loops, then this is a standard graph product operation, where the edges are obtained by first following an edge in $B_L \circ \INW_{i-1}$ and then following an independent edge in $B_R\circ \INW_{i-1}$.  (If the left half and right half are identical, this is simply graph squaring.) When $H_i$ is a sparse expander, then this is a ``derandomized product'' operation that has a similar spirit to the zig-zag product of Wigderson and collaborators~\cite{ReingoldVaWi01}.  Analyzing this repeated derandomized product using notions of approximation from spectral graph theory~\cite{AhmadinejadKeMuPeSiVa20} yields an improved analysis of the INW generator.

\subsubsection{Constant-depth Circuits and Iterated Restrictions}

The first computational model that was studied for unconditional derandomization, in the seminal paper of Ajtai and Wigderson~\cite{AjtaiWi89}, was constant-depth polynomial-size boolean circuits with unbounded fan-in AND and OR gates, also known as $\ACO$. They gave an unconditional construction of a pseudorandom generator with seed length $O(n^\eps)$ fooling $\ACO$, for any constant $\eps>0$. This was improved by Nisan~\cite{Nisan91-AC} to seed length $\polylog(n)$, using a construction that inspired the Nisan--Wigderson generator described in Section~\ref{sec:NW}. (Indeed, Nisan's generator is the special case of the Nisan--Wigderson generator where the hard function $f$ is the parity function.)  Ajtai and Wigderson~\cite{AjtaiWi89} pointed out a compelling algorithmic application of pseudorandom generators for $\ACO$, namely to derandomize the Karp--Luby $\BPP$ algorithm for approximately counting the number of satisfying assignments to a DNF formula (i.e. a depth 2 $\ACO$ circuit that is an OR of ANDs of literals)~\cite{KarpLuMa89}.  There are now nearly polynomial-time deterministic algorithms for this problem, all of which use pseudorandom generators along with other algorithmic techniques~\cite{NisanWi94,LubyVeWi93,LubyVe96}.  

Although Nisan~\cite{Nisan91-AC} dramatically improved upon the seed length of the Ajtai--Wigderson generator, the approach taken by Ajtai and Wigderson---{\em iterated pseudorandom restrictions}---has undergone a revival over the past decade.  The idea of iterated pseudorandom restrictions is to not try to generate all $n$ pseudorandom bits at once, but to use a short seed to select and assign values to a smaller fraction of the bits.  If we use a seed of length $d_0$ to assign a $p$ fraction of the bits, then by iterating, we can use a seed of length $O(d_0\cdot (\log n)/p)$ to assign all the bits. The benefit of this approach is that when analyzing the pseudorandomness of the $pn$ bits generated in each iteration, we can think of the remaining $(1-p)n$ bits as being chosen uniformly at random.  Thus, fooling a test $T : \zo^n\rightarrow \zo$ reduces to fooling a random restriction $\rho$ of $T$ where we select $(1-p)n$ coordinates to restrict pseudorandomly but assign their values uniformly at random.  For many computational models (in particular constant-depth circuits), random restrictions cause substantial simplification, making the restricted function $T|_\rho$ easier to fool. 

Over the past decade, iterated pseudorandom restrictions and variants have been used to obtain improved pseudorandom generators for a variety of computational models. One example is the model of {\em combinatorial rectangles}, which test membership in a set of the form $R_1\times R_2\times \cdots \times R_n \subseteq [m]^n$, which can be viewed as a special case of both ordered branching programs and $\ACO$ formulas.
For this model, Wigderson and collaborators gave the first pseudorandom generator whose seed length is logarithmic in $m$ and $n$ for a subconstant error parameter $\eps$~\cite{ArmoniSaWiZh96}.  The iterated restriction approach of Ajtai and Wigderson was used in \cite{GopalanMeReTrVa12} to achieve a seed length that is nearly logarithmic in all the parameters, i.e. $\tO(\log(mn/\eps))$.  Since then, variants of the iterated restrictions approach have been used to obtain improved generators for constant-depth circuits, {\em arbitrary-order} read-once branching programs, De Morgan formulas, and various restricted versions of these models.  The number of works is too large to list here, so we refer the reader to the excellent survey of Hatami and Hoza~\cite{HatamiHo23}.

\section{Computational Complexity Lower Bounds} \label{sec:lowerbounds}

Proving lower bounds for the resources needed to perform computational tasks, in different computational models, is among the most challenging and most important topics in theoretical computer science.
Let us start by quoting the starting paragraph of Wigderson's recently-published monumental book, {\it Mathematics and Computation: A Theory Revolutionizing Technology and Science}~\cite{wigderson19}:
\begin{quote}
Here is just one tip of the iceberg we'll explore in this book: How much time does it take to find the
prime factors of a 1,000-digit integer? The facts are that (1) we can't even roughly estimate the answer: it
could be less than a second or more than a million years, and (2) practically all electronic commerce and
Internet security systems in existence today rest on the belief that it takes more than a million years!
\end{quote}

This paragraph says it all. While computers have revolutionized our world, the resources required to perform computational tasks are poorly understood.
Developing a mathematical
theory of computation is crucial in our information age, where computers are involved in
essentially every part of our life.

Computational complexity, the study of the amount of resources needed to perform computational tasks, is essential for understanding the power of computation and for developing a theory of computation.
It is also essential in designing efficient communication protocols, secure cryptographic protocols and in understanding human and machine learning.

We present here some of Wigderson's works on computational complexity theory, focusing on computational complexity lower bounds.
We will see that often these works introduced powerful techniques that had substantial impact and many followup works.

\subsection{Boolean Circuit Complexity} \label{sec:circuit-complexity}

Boolean circuits are the standard computational model for computing Boolean functions $f: \{0,1\}^n \rightarrow \{0,1\}$.
Given a Boolean function $f: \{0,1\}^n \rightarrow \{0,1\}$, we ask how many Boolean operations are needed to compute $f$. As the set of allowed Boolean operations, we consider here the set of Boolean logical gates $\{\wedge, \vee, \neg\}$ (also known as De Morgan basis). 

Given $n$ input variables
$x_1,\ldots,x_n \in \{0,1\}$,
a Boolean circuit is a directed acyclic graph as follows:
All nodes are of in-degree 0 or 2.
A node of in-degree 0 (that is, a {\em leaf})
is labelled with either an input variable $x_i$
or its negation $\neg x_i$.
A node of in-degree 2 is labelled with either $\wedge$ or $\vee$
(in the first case the
node is an $\mathsf{AND}$ gate and in the second case an
$\mathsf{OR}$ gate).
A node of out-degree 0
is called an {\em output} node. 
The circuit is called a {\em formula} if the underlying
graph is a (directed) tree.

Each node in the circuit (and in particular each output node)
computes a Boolean function from $\{0,1\}^n$ to $\{0,1\}$ as follows.
A leaf just computes the value of the input variable
or negation of input variable
that labels it. For every non-leaf node~$v$,
if $v$ is an $\mathsf{AND}$ gate it computes the $\mathsf{AND}$ of the functions computed by its two children, and if
$v$ is an $\mathsf{OR}$ gate it computes the $\mathsf{OR}$ of the functions computed by
its two children.
If the circuit has only one output node, the function computed by the circuit is the function computed by the output node.

A Boolean circuit is {\em monotone} if it doesn’t use negation gates. Each node in a monotone Boolean circuit (and in particular each output node) computes a monotone Boolean function from $\{0,1\}^n$ to $\{0,1\}$.

The {\em size} of a circuit is defined to be the number of nodes
in it and the {\em depth} of a circuit is defined to be the length of the longest directed path from a leaf to an
output node in the circuit.
For a circuit $C$, we denote its size by $\mathsf{S}(C)$ and its depth by $\mathsf{D}(C)$.
For a Boolean function $f$, we denote by $\mathsf{S}(f)$ the size of the smallest Boolean circuit for $f$, usually referred to as the circuit size of $f$, and by $\mathsf{D}(f)$ the smallest depth of a Boolean circuit for $f$, usually referred to as the circuit depth of $f$.
For a monotone Boolean function $f$, we refer to the size of the smallest monotone Boolean circuit for $f$, as the monotone circuit size of $f$, and  to the smallest depth of a monotone Boolean circuit for $f$, as the monotone circuit depth of $f$.

We note that often the unbounded-fanin case is also considered, where the in-degree of a node is not limited to be 0 or 2. For example, this is convenient when studying constant-depth circuits. In these cases, the size of the circuit is usually defined as the number of edges in it, rather than the number of nodes.

Proving lower bounds for the size and depth of Boolean circuits
has been a major challenge for many years.
In particular, the biggest challenge is to prove super-polynomial
lower bounds for the size of Boolean circuits and formulas,
for some explicit function. Such bounds would imply lower bounds for essentially all other models of computation. For example, super-polynomial (in $n$) lower bounds on the size of (a family of) circuits that compute a family of functions $\{f_n: \{0,1\}^n \rightarrow \{0,1\}\}_{n \in \mathrm{N}}$ would imply  that that family of functions is not in the complexity class $\mathsf{P}$ (polynomial time). If in addition the family of functions is in $\mathsf{NP}$ (non-deterministic polynomial time), such a result would imply that $\mathsf{P} \neq \mathsf{NP}$.

However, progress on this type of questions has been very limited. The best known lower bounds for the size of Boolean circuits, for an explicit function, are only linear in~$n$~\cite{LR,IM}, and the best known lower bounds for the depth of Boolean circuits, for an explicit function, are only logarithmic in~$n$.

\subsection{Communication Complexity}

Communication complexity, first introduced by Yao~\cite{Yao}, is a central model in complexity theory that studies the amount of communication needed to solve a problem, when the input to the problem is distributed between two (or more) parties. 

In the two-player deterministic model, each of two players gets an input, where the two inputs $x,y$ are chosen from some set of possibilities (known to both players). The players' goal is to solve a communication task that depends on both inputs, such as computing a function $f(x,y)$, where $f:\{0,1\}^n \times \{0,1\}^n \rightarrow \{0,1\}$ is known to both players and $x,y$ are inputs of length $n$ bits.

The players communicate in rounds, where in each round one of the players sends a message to the other player. At the end of the protocol, in the example given above, both players need to know the value of $f(x,y)$. 

The communication complexity of a protocol is the maximal number of bits communicated by the players in the protocol, where the maximum is taken over all possibilities for the inputs. The communication complexity of a communication task is the minimal communication complexity of a protocol that solves
that task.
For a communication protocol $P$, we denote its communication complexity by $\mathsf{CC}(P)$. 
For a communication task $G$, we denote by $\mathsf{CC}(G)$ the smallest communication complexity of a (deterministic) protocol that solves $G$.
The probabilistic case, where the players are allowed to  use a public random string and are allowed to err with some fixed small probability smaller than $\tfrac{1}{2}$ is often studied as well.
For a communication task $G$, we denote by $\mathsf{CC}_{\epsilon}(G)$ the smallest communication complexity of a (probabilistic) protocol that solves $G$ correctly with probability at least $1-\epsilon$ on every input.

As an example, we give the problem of Set-Intersection, or Set-Disjointness, a central problem in communication complexity. In this problem, each of two players gets a vector in $\{0,1\}^n$ and their goal is to determine whether there exists a coordinate $i \in [n]$ where they both have~$1$.
This simple problem inspired a lot of progress in communication complexity.
It has been known for a long time that the probabilistic communication complexity of Set-Intersection is $\Omega(n)$~\cite{KS,Razb,BYKS,BM, BGPW}. The lower bound is trivially tight, up to the multiplicative constant.

\subsection{Karchmer-Wigderson Games}

Karchmer and Wigderson gave a striking connection between the depth of Boolean circuits and communication complexity. They showed that for every Boolean function $f: \{0,1\}^n \rightarrow \{0,1\}$, there is a simple and intuitive communication complexity game $G_f$, such that, the smallest depth of a Boolean circuit for $f$ is exactly equal to the deterministic communication complexity of $G_f$. Moreover, if $f$ is monotone, there is also a communication complexity game $M_f$, such that, the smallest depth of a monotone Boolean circuit for $f$ (that is, a Boolean circuit for $f$ that doesn't use negations) is exactly equal to the deterministic communication complexity of~$M_f$. In particular, this reduces the problem of proving lower bounds for the depth of Boolean circuits, a problem that seems hard to understand or analyze, to a problem in communication complexity that seems much more intuitive and easier to work with~\cite{KW}. 

\begin{definition}\label{def-KW}~\cite{KW} {\bf (KW Games, $\pmb{G_f}$):}
For every function $f: \{0, 1\}^n \rightarrow \{0, 1\}$,
define the communication game $G_f$ as follows:
Player~1 gets $x \in \{0, 1\}^n$, such that, $f(x) = 1$.
Player~2 gets $y \in \{0, 1\}^n$, such that, $f(y) = 0$.
The goal of the two players is to find a coordinate $i \in [n]$, such that, $x_i \neq y_i$ (note that there is at least one such $i$ since $f(x) \neq f(y)$).
\end{definition}

\begin{definition}\label{mon-KW1}~\cite{KW}  {\bf (KW Games, $\pmb{M_f}$):}
For every monotone function $f: \{0, 1\}^n \rightarrow \{0, 1\}$,
define the communication game $M_f$ as follows:
Player~1 gets $x \in \{0, 1\}^n$, such that, $f(x) = 1$.
Player~2 gets $y \in \{0, 1\}^n$, such that, $f(y) = 0$.
The goal of the two players is to find a coordinate $i \in [n]$, such that, $x_i =1$ and $y_i = 0$ (note that there is at least one such $i$ since $f(x) > f(y)$, and hence since $f$ is monotone, $x \not \leq y$).
\end{definition}

Recall that we denote deterministic communication complexity by $\mathsf{CC}$  and circuit depth by $\mathsf{D}$. In particular,
for a function $f: \{0, 1\}^n \rightarrow \{0, 1\}$, we denote by $\mathsf{D}(f)$ the smallest depth of a Boolean circuit for $f$. We denote by $\mathsf{CC}(G_f)$ the deterministic communication complexity of the game~$G_f$, and if $f$ is monotone, we denote by $\mathsf{CC}(M_f)$ the deterministic communication complexity of the game~$M_f$. 

\begin{theorem}\label{KW}~\cite{KW}
For every $f: \{0, 1\}^n \rightarrow \{0, 1\}$,
$\mathsf{CC}(G_f) = \mathsf{D}(f) $.
\end{theorem}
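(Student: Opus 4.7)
The plan is to prove equality by two matching inequalities, using a natural correspondence between circuit structure and protocol structure: $\mathsf{AND}$ gates correspond to rounds where Player~2 speaks, $\mathsf{OR}$ gates correspond to rounds where Player~1 speaks, and leaves (literals $x_i$ or $\neg x_i$) correspond to leaves of the protocol tree (outputs).

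First I would show $\mathsf{CC}(G_f) \le \mathsf{D}(f)$. Fix a depth-$d$ circuit $C$ computing $f$. I will describe a protocol that simulates a root-to-leaf walk in $C$, maintaining the invariant that the current gate $g$ satisfies $g(x)=1$ and $g(y)=0$ (at the root this holds since $C(x)=f(x)=1$ and $C(y)=f(y)=0$). At an $\mathsf{AND}$ gate with children $g_0,g_1$, both $g_0(x)=g_1(x)=1$, while $g_0(y)=0$ or $g_1(y)=0$; Player~2 sends one bit identifying a child evaluating to $0$ on $y$. At an $\mathsf{OR}$ gate the roles swap and Player~1 sends one bit. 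Each step descends one level, so the walk terminates after at most $d$ bits of communication at a leaf labeled by a literal $\ell\in\{x_i,\neg x_i\}$; the invariant $\ell(x)=1$, $\ell(y)=0$ forces $x_i\neq y_i$, so the players output $i$.

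Next I would show $\mathsf{D}(f) \le \mathsf{CC}(G_f)$, which is the direction where I expect the main conceptual work to lie. Given a protocol $\Pi$ of communication complexity $c$, I will build a circuit of depth $c$ by recursion on the protocol tree. For each node $v$ of $\Pi$, let $X_v\times Y_v$ denote the combinatorial rectangle of input pairs that reach $v$, where $X_v\subseteq f^{-1}(1)$ and $Y_v\subseteq f^{-1}(0)$. I will recursively construct a circuit $C_v$ of depth equal to the depth of the subtree at $v$ such that $C_v(x)=1$ for all $x\in X_v$ and $C_v(y)=0$ for all $y\in Y_v$. If $v$ is an internal node where Player~1 speaks, the bit partitions $X_v$ into $X_v^0\sqcup X_v^1$ (with $Y_v$ unchanged), and I set $C_v := C_{v_0}\vee C_{v_1}$; correctness on $X_v$ follows because every $x$ lies in some $X_v^b$, and correctness on $Y_v$ follows because both children rejected every $y\in Y_v$. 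Symmetrically, rounds of Player~2 become $\mathsf{AND}$ gates.

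The step that needs care is the leaf. If $v$ is a leaf outputting coordinate $i$, correctness of $\Pi$ forces $x_i\ne y_i$ for every $(x,y)\in X_v\times Y_v$. The key observation is that this rectangle condition forces either $x_i=1$ for all $x\in X_v$ and $y_i=0$ for all $y\in Y_v$, or the reverse: if $X_v$ contained two $x$'s disagreeing in coordinate $i$, then for any $y\in Y_v$ one of the pairs would violate $x_i\ne y_i$. Accordingly I set $C_v := x_i$ in the first case and $C_v := \neg x_i$ in the second. Applied at the root, where $X=f^{-1}(1)$ and $Y=f^{-1}(0)$, the resulting circuit computes $f$ and has depth equal to $\mathsf{CC}(\Pi)$; minimizing over $\Pi$ yields $\mathsf{D}(f)\le \mathsf{CC}(G_f)$, completing the proof.
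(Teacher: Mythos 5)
Your proposal is correct and follows essentially the same two-step argument as the paper: the easy direction descends the circuit from the root, letting Player~2 speak at $\mathsf{AND}$ gates and Player~1 at $\mathsf{OR}$ gates, and the converse direction recurses over the protocol tree, turning Player~1's bits into $\mathsf{OR}$ gates and Player~2's bits into $\mathsf{AND}$ gates, with the rectangle property at a leaf pinning down a literal. The paper phrases the converse as an induction on the communication cost of a generalized game $G_{A,B}$ for arbitrary disjoint $A,B$ rather than explicitly tracking the rectangles $X_v\times Y_v$ of the protocol tree, but that is a cosmetic difference; your leaf argument explaining why the rectangle condition forces a single literal is exactly what makes the paper's base case go through.
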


\begin{proof}
Let $z_1,\ldots,z_n \in \{0,1\}$ be the $n$ input variables for $f$ and recall that we denote by $x,y$ the inputs for the game $G_f$. 

{\bf Proving $\pmb{\mathsf{CC}(G_f) \leq \mathsf{D}(f)}$: } 
Let $C$ be any Boolean circuit for $f$. We will construct a communication protocol for the game~$G_f$, with communication complexity $\mathsf{D}(C)$. The construction is by induction on~ $\mathsf{D}(C)$.

{\bf Base case:}
$\mathsf{D}(C) = 0$. In this case, $f(z_1,\ldots,z_n)$ is simply the function $z_i$ or $\neg z_i$, for some $i$.
Therefore, there is no need for communication,
since $i$ is a coordinate in which $x$ and $y$
always differ.
That is, the two players can give the answer $i$, for any input pair
$(x,y)$. This is a protocol for $G_f$, with communication complexity $0$.

{\bf Induction step:}
Consider the top gate of $C$.
Assume first that the top gate is an $\mathsf{AND}$ gate and hence $C = C_1 \wedge C_2$, where $C_1,C_2$ are the two sub-circuits representing the two children of the top gate of $C$.
Thus,
$
\mathsf{D}(C_1), \mathsf{D}(C_2) \leq \mathsf{D}(C) - 1.
$
Denote by $f_1$ and~$f_2$ the functions computed by $C_1$ and $C_2$
respectively. Thus $f = f_1 \wedge f_2$.
By the inductive hypothesis,
$
\mathsf{CC}(G_{f_1}), \mathsf{CC}(G_{f_2}) \leq \mathsf{D}(C) - 1
$.
We know that $f(x) = 1$ and $f(y) =0$. Therefore, we know that
$f_1(x),f_2(x)$ are both equal to $1$ and at least one of $f_1(y)$ or $f_2(y)$ is equal to $0$.
Let us present the protocol for $G_f$.
In the first step of the protocol,
Player~2 sends a value in $\{1,2\}$, indicating which of the functions
$f_1$ or $f_2$ is equal to $0$ on $y$ (or an arbitrary value in $\{1,2\}$ if both are equal to $0$).
Assume that Player~2 sends $1$.
In this case, we have $f_1(x) = 1$ and
$f_1(y) = 0$.
Hence, to solve the game $G_f$, the players can apply a protocol for $G_{f_1}$.
By the inductive
hypothesis, there is such a protocol
with communication complexity
$\mathsf{CC}(G_{f_1}) \leq \mathsf{D}(C)-1$.
In the same way,
if Player~2 sends $2$ the players can use the protocol for $G_{f_2}$.
The players used only one additional bit of communication.
Hence, we can conclude that
$$\mathsf{CC}(G_f)  \leq  1 + \max\{\mathsf{CC}(G_{f_1}), \mathsf{CC}(G_{f_2})\}  
 \leq  1 + (\mathsf{D}(C) - 1)  =  \mathsf{D}(C).$$
We assumed that $C = C_1 \wedge C_2$. The other case,
$C = C_1 \vee C_2$, is proved in the same way,
except that Player~1 is the one who sends the first
bit, indicating whether $f_1(x) = 1$ or $f_2(x) = 1$.

Since the construction is valid for every circuit $C$ for $f$, and in particular for the one with smallest depth, we can conclude that $\mathsf{CC}(G_f) \leq \mathsf{D}(f)$.

{\bf Proving $\pmb{\mathsf{CC}(G_f) \geq \mathsf{D}(f)}$: }
For this proof, we define a more general communication game. For any two
disjoint sets: $A, B \subseteq \{0, 1\}^n$,
denote by $G_{A, B}$ the following game:
Player~1 gets $x \in A$.
Player~2 gets $y \in B$.
The goal of the two players is to find a coordinate $i$, such that, $x_i \neq y_i$.
Note that  $G_f$ is the same as $G_{f^{-1}(1), f^{-1}(0)}$.

We will prove the following claim: 
If $\mathsf{CC}(G_{A, B}) = d$ then there
is a function $g: \{0, 1\}^n \rightarrow \{0, 1\}$,
such that:
$g(x) = 1$, for every $x \in A$;
$g(y) = 0$, for every $y \in B$; and 
$\mathsf{D}(g) \leq d$.
That is, the function $g$ separates $A$ from $B$, and $\mathsf{D}(g) \leq d$.
Note that
for the game $G_f = G_{f^{-1}(1), f^{-1}(0)}$,
the function $g$ must be the function $f$ itself.
Hence, we obtain that
$\mathsf{D}(f) \leq \mathsf{CC}(G_f)$, as required.
The proof of the claim is
by induction on $d = \mathsf{CC}(G_{A, B})$.

{\bf Base case:}
$d = 0$. That is, the two players know the answer without any communication.
Hence, there is a coordinate $i$, such that, for every $x \in A$
and every $y \in B$, we have $x_i \neq y_i$.
Thus, either
the function $g(z) = z_i$ or the function
$g(z) = \neg z_i$ satisfies the requirements of the claim
(depending on whether for every $x \in A$ we have $x_i=1$, or,
for every $x \in A$ we have $x_i=0$).

{\bf Induction step:}
We have a protocol of communication complexity $d$ for the game~$G_{A, B}$.
Assume first that Player~1 sends the first bit in the protocol.
That bit partitions the set $A$ into two disjoint sets
$A = A_0 \cup A_1$ (where $A_0$ is the set of all inputs $x$ where Player~1 sends 0 and $A_1$ is the set of all inputs $x$ where Player~1 sends~1).
If the first bit sent by Player~1 is 0, the rest of the protocol is a protocol
for the game~$G_{A_0, B}$.
If the first bit sent by Player~1 is 1, the rest of the protocol is a protocol
for the game~$G_{A_1, B}$.
Hence, for both games, $G_{A_0, B}$ and $G_{A_1, B}$,
we have protocols with communication complexity
at most $d - 1$.
By the inductive hypothesis, we have two functions $g_0$ and $g_1$
that satisfy:
$g_0(x) = 1$, for every $x \in A_0$;
$g_1(x) = 1$, for every $x \in A_1$; 
$g_0(y) = g_1(y) = 0$, for every $y \in B$; and
$\mathsf{D}(g_0), \mathsf{D}(g_1) \leq d - 1$.
We define $g = g_0 \vee g_1$. Thus:
For every $x \in A$, we have
$g(x) = g_0(x) \vee g_1(x) = 1$;
For every $y \in B$, we have
$g(y) = g_0(y) \vee g_1(y) = 0$; and
$\mathsf{D}(g) \leq 1 + \max\{\mathsf{D}(g_0), \mathsf{D}(g_1)\} \leq d$.
That is, $g$ satisfies the requirements.

If Player~2 sends the first bit,
$B$ is partitioned into two disjoint sets, $B = B_0 \cup B_1$, and
as before, the rest of the protocol is a protocol for the
games $G_{A, B_0}$ and $G_{A, B_1}$ (depending on the bit that was sent). By the
inductive hypothesis, we
have two functions, $g_0,g_1$, corresponding to the two games,
$G_{A, B_0}$ and $G_{A, B_1}$, such that:
$g_0(x) = g_1(x) = 1$, for every  $x \in A$;
$g_0(y) =0$, for every  $y \in B_0$;
$g_1(y) = 0$, for every  $y \in B_1$.
We define $g = g_0 \wedge g_1$.
Thus:
For every $x \in A$, we have
$g(x) = g_0(x) \wedge g_1(x) = 1$;
For every $y \in B$, we have
$g(y) = g_0(y) \wedge g_1(y) = 0$; and
$\mathsf{D}(g) \leq 1 + \max\{\mathsf{D}(g_0), \mathsf{D}(g_1)\} \leq d$.
\end{proof}

For a monotone Boolean function $f: \{0, 1\}^n \rightarrow \{0, 1\}$, denote by $\mathsf{MD}(f)$ the smallest depth of a monotone Boolean circuit for $f$.

\begin{theorem}\label{mon-KW}~\cite{KW}
For every monotone $f: \{0, 1\}^n \rightarrow \{0, 1\}$,
$\mathsf{CC}(M_f) = \mathsf{MD}(f) $.
\end{theorem}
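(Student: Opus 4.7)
The plan is to mirror the structure of the proof of Theorem~\ref{KW} almost verbatim, with two small but essential adjustments that account for the monotone setting: the leaves of a monotone circuit are labelled by unnegated variables, and the Boolean operations $\wedge, \vee$ preserve monotonicity. I will first prove $\mathsf{CC}(M_f) \leq \mathsf{MD}(f)$ by turning a monotone circuit $C$ for $f$ of depth $d$ into a protocol for $M_f$ of cost $d$, by induction on $d$. In the base case $d=0$, the circuit is a single leaf labelled by some $z_i$ (no $\neg z_i$ since $C$ is monotone), so $f(z) = z_i$; for every valid input pair we automatically have $x_i = 1$ and $y_i = 0$, and the players can output $i$ with no communication, which is exactly the winning condition of $M_f$ (not just $G_f$). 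The inductive step is identical to the one in Theorem~\ref{KW}: if the top gate is $\wedge$, Player~2 transmits one bit naming an $f_j$ with $f_j(y)=0$; if it is $\vee$, Player~1 names an $f_j$ with $f_j(x)=1$; the players then recurse on the corresponding $M_{f_j}$ subgame, whose depth has dropped by at least one.

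For the converse $\mathsf{CC}(M_f) \geq \mathsf{MD}(f)$, I would introduce the generalised game $M_{A,B}$, where $A, B \subseteq \{0,1\}^n$ satisfy the natural separability condition that for every $x \in A$ and $y \in B$ there exists $i$ with $x_i = 1$ and $y_i = 0$ (equivalently, no $x \in A$ is coordinatewise $\leq$ any $y \in B$); note that $M_f = M_{f^{-1}(1), f^{-1}(0)}$ satisfies this condition since $f$ is monotone. The key claim, proved by induction on $d = \mathsf{CC}(M_{A,B})$, is that there exists a \emph{monotone} function $g:\{0,1\}^n \to \{0,1\}$ with $g(A) = \{1\}$, $g(B) = \{0\}$, and $\mathsf{MD}(g) \leq d$. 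Applying this to $A = f^{-1}(1)$ and $B = f^{-1}(0)$ forces $g = f$ and gives the desired bound.

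The base case $d=0$ supplies a coordinate $i$ with $x_i = 1$ for all $x \in A$ and $y_i = 0$ for all $y \in B$, so $g(z)=z_i$ works and is monotone; this is precisely the place where restricting the winning condition from ``$x_i \neq y_i$'' to ``$x_i = 1, y_i = 0$'' buys us the unnegated literal we need. For the inductive step, a first message from Player~1 splits $A = A_0 \cup A_1$, yielding monotone separators $g_0, g_1$ from the induction hypothesis, and $g = g_0 \vee g_1$ is monotone of depth $\leq d$; symmetrically, a first message from Player~2 gives $g = g_0 \wedge g_1$. Closure of the monotone class under $\wedge$ and $\vee$ is what makes the induction go through.

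The whole argument is a direct adaptation of Theorem~\ref{KW}, and I do not anticipate a serious obstacle; the only subtle point is the base case of the lower bound, where I must use that the generalised game's winning condition rules out the ``$x_i = 0, y_i = 1$'' possibility, so the constant separator is always a positive literal rather than a negative one. Keeping track of this asymmetry is precisely what ensures the separating function $g$ produced by the induction is monotone rather than merely Boolean.
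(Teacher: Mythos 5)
Your proposal is correct and is exactly the argument the paper intends: the paper's proof of Theorem~\ref{mon-KW} simply says ``Similar to the proof of Theorem~\ref{KW},'' and your write-up is the right elaboration, correctly identifying the two places where monotonicity enters (positive literals at the leaves in both base cases, and closure of monotone functions under $\wedge$ and $\vee$ in the inductive steps). No gaps.
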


\begin{proof}
Similar to the proof of Theorem~\ref{KW}.
\end{proof}

\noindent
{\bf Example ($\pmb{k}$-Clique):}
Take the Boolean function $f$ to be the $(n/2)$-Clique
function in simple graphs with $n$ vertices. That is, the input for $f$ is a simple graph with $n$ vertices and the output is 1 if and only if the graph contains a clique of size at least~$n/2$.
The games $G_f$ and $M_f$ are defined as follows: In both games,
Player~1 gets a graph $x$ (with~$n$ vertices) that contains
a clique of size at least $n/2$ and Player~2
gets a graph $y$ (with~$n$ vertices) that doesn't contain
a clique of size at least $n/2$. 
The goal of the two players in the game $M_f$ is to find
an edge in the graph  $x$ that is not an edge  in the graph $y$.
The goal of the two players in the game $G_f$ is to find
an edge in the graph~$x$ that is not an edge  in the graph~$y$ or an edge in the graph~$y$ that is not an edge  in the graph~$x$. 

\noindent
Theorem~\ref{KW} shows that the communication
complexity of the game $G_f$ is exactly equal to
the circuit depth of the $(n/2)$-Clique
function. In particular, one can try to prove a lower bound for
the circuit depth of the $(n/2)$-Clique function, by proving a lower bound for
the communication complexity of the game $G_f$.
Note that no lower bound better than
$\Omega(\log n)$ has ever been proved for the circuit depth of an explicit
Boolean function and such a bound would be a major breakthrough.

\noindent
Theorem~\ref{mon-KW} shows that the communication
complexity of the game $M_f$ is exactly equal to
the monotone circuit depth of the $(n/2)$-Clique
function. Moreover, it turned out that one can use this connection to prove a lower bound for
the monotone circuit depth of the $(n/2)$-Clique function, by proving a lower bound for
the communication complexity of the game~$M_f$~
\cite{RW}.
\vspace{4mm}

We will now present an alternative equivalent way to define the game $M_f$, in terms of the minterms and maxterms of the monotone Boolean function $f$.
Every monotone Boolean function can be characterized by the set of its minterms and the set of its maxterms.
\begin{definition} {\bf (Minterm, Maxterm):}
Let $f:\{0, 1\}^n \rightarrow \{0, 1\}$ be a monotone Boolean function.
A {\em minterm} of $f$ is an input $x \in \{0, 1\}^n$, such that, $f(x) = 1$ and for every input $x' < x$, we have $f(x') = 0$.
A {\em maxterm} of $f$ is an input $y \in \{0, 1\}^n$, such that, $f(y) = 0$ and for every input $y' > y$, we have
$f(y') = 1$.
\end{definition}

\begin{definition}\label{mon-KW2}~\cite{KW} {\bf (KW Games, $\pmb{M_f}$):}
For every monotone function $f: \{0, 1\}^n \rightarrow \{0, 1\}$,
define the communication game $M_f$ as follows:
Player~1 gets $x \in \{0, 1\}^n$, such that, $x$ is a minterm of $f$.
Player~2 gets $y \in \{0, 1\}^n$, such that, $y$ is a maxterm of $f$.
The goal of the two players is to find a coordinate $i \in [n]$, such that, $x_i =1$ and $y_i = 0$ (note that there is at least one such $i$ since $f(x) > f(y)$, and hence since $f$ is monotone, $x \not \leq y$).
\end{definition}

We have defined the game $M_f$ in two different ways, once in Definition~\ref{mon-KW1} and once in Definition~\ref{mon-KW2}. While the two definitions do not give the exact same game, the two games are equivalent, so we denote both of them by $M_f$. 
To see the equivalence, let $M^{'}_f$ be the game from Definition~\ref{mon-KW1} and let $M^{''}_f$ be the game from Definition~\ref{mon-KW2}. First, note that $M^{''}_f$ is a restriction of the game $M^{'}_f$ to a subset of inputs, so any protocol for $M^{'}_f$ is also a protocol for $M^{''}_f$. On the other hand, the players can use a protocol for $M^{''}_f$ to solve $M^{'}_f$ as follows: Given an input $x$ such that $f(x)=1$, Player~1 can find a minterm $x'$ of $f$ such that $x' \leq x$. In the same way, given an input $y$ such that $f(y)=0$, Player~2 can find a maxterm $y'$ of $f$ such that $y' \geq y$. The players can now apply the protocol for $M^{''}_f$ on inputs $x',y'$ to find a coordinate $i$ such that $x'_i =1$ and $y'_i=0$. Since $x_i \geq x'_i$ and  $y_i \leq y'_i$, we also have $x_i =1$ and $y_i=0$.

\vspace{4mm}
\noindent
{\bf Example ($\pmb{s}$-$\pmb{t}$-Connectivity):}
Take the Boolean function $f$ to be the $s$-$t$-Connectivity
function in simple graphs with $n$ vertices. That is, the input for $f$ is a simple graph with $n$ vertices, two of which are labeled as $s$ and $t$, and the output is 1 if and only if the graph contains a path connecting $s$ and $t$. Obviously, $f$ is a monotone function, since adding edges cannot disconnect
an existing path from $s$ to $t$.

\noindent
A minterm of $f$ is a graph that contains a path from $s$ to $t$, and no additional edges. That is, a minterm is just a path from $s$ to $t$ (that does not intersect itself).
A maxterm of $f$ is a graph $G$, such that,
the set of vertices of $G$ can be partitioned into two disjoint
sets $S$ and $T$, with
$s \in S$ and $t \in T$ and such that
$G$ contains all edges inside $S$ and inside $T$, but no edge between $S$ and $T$. We think of a maxterm as a partition of the set of vertices into two sets ($S$ and $T$), or as a two-coloring of the vertices by the colors $0$ and $1$ (where $S$ is colored $0$ and $T$ is colored  $1$).

\noindent
The game $M_f$ is defined as follows: Given $n$ vertices, two of which are labeled by $s$ and $t$,
Player~1 gets a path from $s$ to $t$ and Player~2 gets
a coloring of the $n$ vertices by the colors $\{0,1\}$,
such that, $s$ is colored 0 and $t$ is colored 1.
The goal of the two players is to find an edge $(u,v)$ on the path,
such that, $u$ is colored 0 and $v$ is colored~1 (or vice versa).

\noindent
Theorem~\ref{mon-KW} shows that the communication
complexity of the game $M_f$ is exactly equal to
the monotone circuit depth of the $s$-$t$-Connectivity
function. Moreover, it turned out that one can use this connection to prove a lower bound for
the monotone circuit depth of the $s$-$t$-Connectivity function, by proving a lower bound for
the communication complexity of the game~$M_f$~
\cite{KW}.
\vspace{4mm}

Unlike the case of general Boolean circuits, where progress in proving lower bounds for explicit Boolean functions has been very limited, there has been a long and very successful line of works that establish strong
lower bounds for the monotone circuit size and for the monotone circuit depth
of many explicit functions, starting from Razborov's celebrated  super-polynomial lower bounds for the size of monotone Boolean circuits~\cite{Razb1, Razb2}. 

Since their introduction, KW games have had a huge impact on the study of monotone circuit depth and beyond, and have been further studied in numerous works.
Already in their original paper, Karchmer and Wigderson used KW games to prove a tight lower bound of $\Omega(\log^2 n)$ for the monotone circuit depth of the $\pmb{s}$-$\pmb{t}$-Connectivity function in graphs with $n$ vertices~\cite{KW}. In particular, this result gave the first super-polynomial separation between monotone circuit size and monotone formula size and separated the monotone versions of the complexity classes $\mathsf{NC_1}$ and $\mathsf{NC_2}$. We present this result in Section~\ref{sec:KW-LB}.

Raz and Wigderson used KW games to prove tight lower bounds of $\Omega(n)$ for the monotone circuit depth of the clique and matching functions in graphs with $n$ vertices~\cite{RW}. We present this result in Section~\ref{sec:RW-LB}.

Karchmer, Raz and Wigderson used KW games to outline an approach for proving super-logarithmic lower bounds for the depth of general Boolean circuits~\cite{KRW}. We present this result in Section~\ref{sec:KRW-LB}.

Raz and McKenzie used KW games to separate the monotone versions of the complexity classes $\mathsf{NC}$ and $\mathsf{P}$, as well as  $\mathsf{NC_i}$ and $\mathsf{NC_{i+1}}$ for every $i$~\cite{RM}. That paper also introduced a general technique for proving lower bounds for communication complexity, a technique that was later on named by G\"{o}\"{o}s, Pitassi and Watson, the {\em lifting method}.  G\"{o}\"{o}s, Pitassi and Watson initiated the study of the lifting method as a general technique for proving separation results in communication complexity~\cite{GPW}, followed by a long line of recent works.

For a long time, KW games have been used mainly to study circuit depth, rather than circuit size. Nevertheless, a recent paper by Garg, G{\"{o}}{\"{o}}s, Kamath and Sokolov shows how to use (an extension of) KW games to prove lower bounds for  monotone circuit size~\cite{GGKS}, using Razborov's DAG-like communication protocols~\cite{Razb3}.

\subsection{Lower Bounds for the Monotone Depth of ST-Connectivity} \label{sec:KW-LB}

We will now present Karchmer and Wigderson's proof that any monotone circuit for the $s$-$t$-Connectivity
function in graphs with $n$ vertices is of depth $\Omega(\log^2 n)$~\cite{KW}.
We deviate from Karchmer and Wigderson's original presentation in various places.%

Recall that the game $M_f$ for the $s$-$t$-Connectivity
function  is defined as follows: Given $n$ vertices, two of which are labeled by $s$ and~$t$,
Player~1 gets a path from $s$ to~$t$ and Player~2 gets
a coloring of the $n$ vertices by the colors $\{0,1\}$,
such that, $s$ is colored 0 and $t$ is colored 1.
The goal of the two players is to find an edge $(u,v)$ on the path,
such that, $u$ is colored 0 and $v$ is colored~1 (or vice versa). By Theorem~\ref{mon-KW}, the communication
complexity of this game is exactly equal to
the monotone circuit depth of the $s$-$t$-Connectivity
function.

It is helpful to first see an upper bound for the communication complexity of the game.
A simple protocol for this game is as follows:
In the first round,
Player~1  sends the name (number) of the middle vertex in the
path and Player~2 replies with its
color. If the
color of the middle vertex is 0 then the players
continue with the second half of the path, and if
the color is 1 then the players continue with the first
half of the path.
The players continue to perform a binary search,
until they are
left with a path of length 1. This path will be an edge $(u,v)$,
such that, $u$ is colored 0 and $v$ is colored 1.
In each round of the protocol, the players
communicate $O(\log n)$ bits (the number of the vertex and its color).
Since in each step the path is shortened by a factor of 2, the number of rounds
will be $O(\log n)$.
Altogether, the communication complexity of the protocol is
$O(\log^2 n)$.
Hence, by Theorem~\ref{mon-KW}, the monotone circuit depth
of $s$-$t$-Connectivity is $O(\log^2n)$. Next, we present the lower bound.

\begin{theorem}\label{KW-LB}~\cite{KW}
The monotone circuit depth
of $s$-$t$-Connectivity is $\Omega(\log^2n)$.
\end{theorem}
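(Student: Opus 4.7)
The plan is to apply Theorem~\ref{mon-KW} and reduce to a communication complexity lower bound: it suffices to show $\mathsf{CC}(M_f) = \Omega(\log^2 n)$, where $M_f$ is the KW game described just before the theorem. Since the $O(\log^2 n)$ upper bound proceeds by performing binary search on the path (each of $O(\log n)$ rounds sends an $O(\log n)$-bit vertex name plus the one-bit color reply), the task is to show that no protocol can asymptotically beat this two-level cost structure.

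I would restrict attention to a hard family of inputs that exposes this two-level structure. Arrange the vertices (other than $s,t$) into $k = \Theta(\log n)$ layers $L_1,\dots,L_k$, each of width $w = n/k$. Player~1's input is a path $s \to v_1 \to \cdots \to v_k \to t$ with $v_i \in L_i$, so he essentially picks one vertex per layer. Player~2's input is a ``layered'' coloring: every vertex of layer $L_i$ receives the same color $g(i)$, where $g : \{0,1,\dots,k\} \to \{0,1\}$ is a monotone threshold function with $g(0)=0,g(k)=1$. For such inputs, the unique valid answer is the edge $(v_{i^*-1}, v_{i^*})$, where $i^* \in [k]$ is the location of the jump in $g$. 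Any protocol on these restricted inputs solves the general game, so it suffices to lower-bound their communication.

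The heart of the proof is a rectangle / progression argument showing that communication on this restricted game is $\Omega(k \cdot \log w) = \Omega(\log^2 n)$. A deterministic protocol of cost $c$ partitions inputs into $2^c$ monochromatic rectangles, each labelled by a single edge $(u,v)$ with $u \in L_i,\ v \in L_{i+1}$. Fixing such an answer forces Player~1's path to use $u$ and $v$ in layers $i,i+1$ and forces Player~2's threshold to be $i^*=i+1$; these constraints eliminate only one vertex per layer from Player~1's side and one bit from Player~2's side per ``confirmed'' layer. A direct counting argument shows the largest monochromatic rectangle has density at most $2^{-\Omega(k \log w)}$, so $c = \Omega(k \log w) = \Omega(\log^2 n)$.

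The main obstacle is precisely the step where the multiplicative coupling between ``which layer'' and ``which vertex in the layer'' must be enforced: a naive accounting only yields $\Omega(\log n)$, because in principle a protocol could concentrate its communication on identifying a single vertex name while gleaning the jump layer for free. To rule this out, I would use a round-by-round potential/round-elimination argument showing that each communicated bit can either advance the players' knowledge of $i^*$ by at most one binary-search step or refine vertex identity in at most one layer, but not both simultaneously, so that progress in the two ``dimensions'' cannot be amortized. Formalizing this decoupling --- essentially a direct-sum style lemma tailored to the layered KW game --- is the delicate technical step, and is where the essential insight of Karchmer and Wigderson is used.
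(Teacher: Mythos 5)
Your overall framing (reduce via Theorem~\ref{mon-KW} to the KW game $M_f$ and restrict to a layered family of inputs) matches the paper, but the hard input family you choose breaks the lower bound. By forcing Player~2's coloring to be constant on each layer and monotone in the layer index, you reduce Player~2's input to a single threshold $i^* \in [k]$ with $k = \Theta(\log n)$. The restricted game then has a protocol of cost $O(\log n)$: Player~2 sends $i^*$ in $O(\log\log n)$ bits, after which the answer edge $(v_{i^*-1}, v_{i^*})$ is known to Player~1, who announces it in $O(\log n)$ bits. Equivalently, in your rectangle count, the rectangle labelled by an edge $(u,v)$ with $u \in L_i$, $v \in L_{i+1}$ contains all $w^{k-2}$ paths through $u,v$ crossed with the single coloring having $i^* = i+1$, so its density is $w^{-2}k^{-1} = 2^{-O(\log n)}$, not $2^{-\Omega(k\log w)}$. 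No round-elimination or potential argument can rescue this, because the statement you would need (that the restricted game costs $\Omega(\log^2 n)$) is simply false; the ``decoupling'' of layer-identification from vertex-identification that you hope to enforce is exactly what your restriction destroys, since with layer-constant colorings Player~1 never has to reveal \emph{which} vertex of a layer the path visits.

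The paper avoids this by keeping Player~2's input fully general: in the game $STCON(\ell,n)$ the coloring is an arbitrary $y \in \{0,1\}^{\ell\cdot n}$ (with $\ell \le n^{0.1}$ layers of $n$ vertices each), so Player~2's input space has size $2^{\ell n}$ and the answer genuinely depends on individual vertex identities. The actual argument is then a density-increment induction on pairs $(A,B)$ of surviving input sets, proving $\mathsf{C}(\ell,n,\alpha,\beta) \ge c\log\ell\cdot\log n + \log\alpha + \log\beta$: ordinary bits of communication only halve a density (Case~I), and once Player~1's density drops below $n^{-0.1}$ (i.e.\ after $\Omega(\log n)$ bits) one performs a layer-halving step (Case~II) that passes to the left halves of paths and colorings, using a random vertex set $T$ in the right-half layers together with a Kruskal--Katona-type argument to keep both restricted sets dense. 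Since the number of layers can be halved $\Omega(\log n)$ times, this yields $\Omega(\log^2 n)$. Your proposal contains no analogue of this layer-halving/self-reduction mechanism, which is the essential content of the proof.
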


\begin{proof} 
For the proof of the lower bound, we will modify the communication game $M_f$ for the $s$-$t$-Connectivity
function, and present a variant of the game that we refer to as $STCON(\ell, n)$. In this game, there are two parameters, $n$ and $\ell \leq n^{0.1}$. We assume without loss of generality that $\ell$ is a power of~2. We have $\ell \cdot n$ vertices arranged in $\ell$ layers with $n$ vertices in each layer, and two additional vertices $s$ and~$t$. We assume that the layers are numbered $(1,\ldots,\ell)$ and the vertices in each layer are numbered $(1,\ldots,n)$. Player~1 gets a path of length $\ell + 1$ from $s$ to $t$ that passes through each of the $\ell$ layers exactly once, in their order. That is, the path starts at~$s$, goes to a vertex in the first layer then a vertex in the second layer and so on, and finally goes from the last layer to the vertex $t$. Such a path can be presented as $x \in [n]^{\ell}$, specifying the number of the vertex that the path reaches in each layer. 
Player~2 gets
a coloring of the $\ell \cdot n +2$ vertices by the colors $\{0,1\}$,
such that, $s$ is colored 0 and $t$ is colored 1. Such a coloring can be presented as  $y \in \{0,1\}^{\ell \cdot n}$, specifying the color of each vertex in each layer.
The goal of the two players is to find an edge $(u,v)$ on the path,
such that, $u$ is colored 0 and $v$ is colored~1 (or vice versa).

We will show a lower bound of $\Omega(\log \ell \cdot \log n)$ for the communication complexity of  $STCON(\ell, n)$. 
Since $STCON(\ell, n)$ is a restriction to a subset of inputs of the game $M_f$ (for the $s$-$t$-Connectivity
function with $\ell \cdot n+2$ vertices), such a bound implies a lower bound of $\Omega(\log^2 n)$ for the
monotone circuit depth of the $s$-$t$-Connectivity
function in graphs with $n$ vertices. 
Next, we give the proof for 
$$\mathsf{CC}(STCON(\ell, n)) = \Omega(\log \ell  \cdot \log n).$$

Let $X = [n]^{\ell}$ and  
$Y = \{0,1\}^{\ell \cdot n}$. For a subset $A\subseteq X$, we define its density as $\alpha =\tfrac{|A|}{|X|}$ and for a subset $B\subseteq Y$, we define its density as 
$\beta = \tfrac{|B|}{|Y|}$.
Recall that in the game $STCON(\ell, n)$, the input for Player~1 is viewed as $x \in X$
and the input for Player~2 is viewed as $y \in Y$.

We will consider restrictions of the game $STCON(\ell, n)$ to subsets of inputs $A\subseteq X$ and $B\subseteq Y$ and define the game $STCON(\ell, n, A, B)$ to be the same as $STCON(\ell, n)$, except that the input for Player~1 is $x \in A$
and the input for Player~2 is $y \in B$.
We denote by $\mathsf{C}(\ell, n, \alpha,\beta)$ the minimal communication complexity of a game $STCON(\ell, n, A, B)$ with a set $A \subseteq X$ of density $\alpha$ and a set $B \subseteq Y$ of density $\beta$.

Fixing $n$ to be a (sufficiently large) integer, and fixing $t \overset{\operatorname{def}}{=} \tfrac{1}{2n^{0.1}}$, we will show that for every $\ell \leq n^{0.1}$, every $\alpha \geq t $ and  $\beta \geq 0$,
$$\mathsf{C}(\ell, n, \alpha,\beta) \geq 
c \cdot \log \ell  \cdot \log n + \log(\alpha) + \log(\beta),$$
where $c > 0$ is a (sufficiently small universal) constant and the logarithm is base 2.
Hence,
$\mathsf{CC}(STCON(\ell, n)) = \mathsf{C}(\ell, n, 1,1) = \Omega(\log \ell  \cdot \log n)$.

The proof for $\mathsf{C}(\ell, n, \alpha,\beta) \geq 
c \cdot \log \ell  \cdot \log n + \log(\alpha) + \log(\beta)$ is by induction over $\ell,\alpha,\beta$, in this order (and note that since $n$ is fixed there is a finite number of possibilities for $\ell,\alpha,\beta$, so the induction is sound). We will consider two cases: 
$\alpha \geq 2t $ and $2t > \alpha \geq t $.

{\bf Case I: $\pmb{\alpha \geq 2t}$: }
Let $A \subseteq X$ be a subset of density $\alpha$ and $B \subseteq Y$ be a subset of density $\beta$.
Consider any protocol $P$ for the game $STCON(\ell, n, A, B)$ and let $d$ be the communication complexity of the protocol. Since $\alpha \geq 2t$, none of the edges of the path $x$ is fixed and hence $d>0$.
We will prove that $$d \geq 
c \cdot \log \ell  \cdot \log n + \log(\alpha) + \log(\beta).$$

Assume first that Player~1 sends the first bit in the protocol $P$.
That bit partitions the set $A$ into two disjoint sets
$A = A_0 \cup A_1$ (where $A_0$ is the set of all inputs $x$ where Player~1 sends 0 and $A_1$ is the set of all inputs $x$ where Player~1 sends~1).
If the first bit sent by Player~1 is 0, the rest of the protocol is a protocol
for the game~$STCON(\ell, n, A_0, B)$.
If the first bit sent by Player~1 is 1, the rest of the protocol is a protocol
for the game~$STCON(\ell, n, A_1, B)$.
Hence, for both games, $STCON(\ell, n, A_0, B)$ and $STCON(\ell, n, A_1, B)$,
we have protocols with communication complexity
at most $d - 1$. Let $\alpha_0$ be the density of $A_0$ and $\alpha_1$ be the density of $A_1$. Note that $\alpha_0 + \alpha_1 = \alpha$ and hence at least one of $\alpha_0, \alpha_1$ is larger than or equal to $\alpha /2$.
Hence $\mathsf{C}(\ell, n, \alpha/2,\beta) \leq d-1$. 
By the inductive hypothesis,
$d -1 \geq 
c \cdot \log \ell  \cdot \log n + \log(\alpha/2) + \log(\beta) $, that is
$$d \geq 
c \cdot \log \ell  \cdot \log n + \log(\alpha) + \log(\beta).$$
The case where Player~2 sends the first bit in the protocol $P$ is similar.

{\bf Case II: $\pmb{2t > \alpha \geq t }$: }
Let $A \subseteq X$ be a subset of density $\alpha$ and $B \subseteq Y$ be a subset of density $\beta$.
Consider any protocol $P$ for the game $STCON(\ell, n, A, B)$ and let $d$ be the communication complexity of the protocol. 
We will prove that $$d \geq 
c \cdot \log \ell  \cdot \log n + \log(\alpha) + \log(\beta).$$
Note that we can assume without loss of generality that 
$\log(\beta) \geq - \log^2n$, as otherwise the right hand side of the inequality is smaller than 0 (if $c < 1$).

Every path $x \in A$ can be written as $x=(x_L,x_R)$, where $x_L \in [n]^{\ell/2}$ is the left-hand half of the path $x$ (the first $\ell/2$ coordinates of $x$) and $x_R \in [n]^{\ell/2}$ is the right-hand half of the path $x$ (the last $\ell/2$ coordinates of $x$). We say that $x_L \in [n]^{\ell/2}$ is significant if there exist at least $\tfrac{\alpha}{4}\cdot n^{\ell/2}$ extensions $x_R \in [n]^{\ell/2}$, such that, $(x_L,x_R)\in A$. Let $A_L \subseteq [n]^{\ell/2}$ be the set of significant paths $x_L$. We say that $x_R \in [n]^{\ell/2}$ is significant if there exist at least $\tfrac{\alpha}{4}\cdot n^{\ell/2}$ extensions $x_L \in [n]^{\ell/2}$, such that, $(x_L,x_R)\in A$. Let $A_R \subseteq [n]^{\ell/2}$ be the set of significant paths $x_R$.
Let $\alpha_L$ be the density of $A_L$ in $[n]^{\ell/2}$, that is, $\alpha_L =\tfrac{|A_L|}{n^{\ell/2}}$. Let $\alpha_R$ be the density of $A_R$ in $[n]^{\ell/2}$, that is, $\alpha_R =\tfrac{|A_R|}{n^{\ell/2}}$. Since for every $(x_L,x_R)\in A$, either $x_L$ is not significant or $x_R$ is not significant or both are significant, we have 
$$\alpha \cdot n^{\ell} =|A| \leq n^{\ell/2} \cdot \tfrac{\alpha}{4} \cdot n^{\ell/2} + \tfrac{\alpha}{4} \cdot n^{\ell/2} \cdot n^{\ell/2} + \alpha_L \cdot n^{\ell/2}\cdot \alpha_R \cdot n^{\ell/2},$$
that is 
$$\tfrac{\alpha}{2} \leq \alpha_L \cdot \alpha_R.$$
Thus, $\alpha_L \geq \sqrt{\tfrac{\alpha}{2}}$ or $\alpha_R \geq \sqrt{\tfrac{\alpha}{2}}$. Without loss of generality 
$$\alpha_L \geq \sqrt{\tfrac{\alpha}{2}}.$$

Every coloring $y \in B$ can be written as $y=(y_L,y_R)$, 
where $y_L \in \{0,1\}^{(\ell/2) \cdot n}$ is the left-hand half of the coloring $y$ (the first $(\ell/2) \cdot n$ coordinates of $y$) and $y_R \in \{0,1\}^{(\ell/2) \cdot n}$ is the right-hand half of the coloring $y$ (the last $(\ell/2) \cdot n$ coordinates of $y$). We say that $y_L \in \{0,1\}^{(\ell/2) \cdot n}$ is significant if there exist at least $\tfrac{\beta}{2}\cdot 2^{\ell \cdot n /2}$ extensions $y_R \in \{0,1\}^{(\ell/2) \cdot n}$, such that, $(y_L,y_R)\in B$. Let $B_L \subseteq \{0,1\}^{(\ell/2) \cdot n}$ be the set of significant colorings $y_L$. 
Let $\beta_L$ be the density of $B_L$ in $\{0,1\}^{(\ell/2) \cdot n}$, that is, $\beta_L =\tfrac{|B_L|}{2^{\ell \cdot n /2}}$. Since for every $(y_L,y_R)\in B$, either $y_L$ is not significant or significant, we have 
$$\beta \cdot 2^{\ell \cdot n } =|B| \leq 2^{\ell \cdot n /2} \cdot \tfrac{\beta}{2} \cdot 2^{\ell \cdot n /2}  + \beta_L \cdot 2^{\ell \cdot n /2} \cdot 2^{\ell \cdot n /2},$$
that is,
$$ \beta_L \geq \tfrac{\beta}{2} .$$

Let $T$ be a subset of vertices (to be determined later) in the last $\ell / 2$ layers, that is, layers $\tfrac{\ell}{2}+1,\ldots,\ell$.
Given $T$, we define the set $A'_L \subseteq A_L$ to be the set of all $x_L \in A_L$ such that there exists an extension $x_R \in [n]^{\ell/2}$, such that, $(x_L,x_R)\in A$ and all vertices of the path $x_R$ are in $T$.
Given $T$, we define the set $B'_L \subseteq B_L$ to be the set of all $y_L \in B_L$ such that there exists an extension $y_R \in \{0,1\}^{(\ell/2) \cdot n}$, such that, $(y_L,y_R)\in B$ and all vertices of $T$ are colored 1 by the coloring $y_R$.

We claim that the protocol $P$ can be used to solve the communication game $STCON(\ell/2, n, A'_L, B'_L)$ and hence $\mathsf{CC}(STCON(\ell/2, n, A'_L, B'_L)) \leq d$. This can be done as follows. 
Given $T$ and an input $x_L \in A'_L$, Player~1 finds an extension $x_R \in [n]^{\ell/2}$, such that, $(x_L,x_R)\in A$ and all vertices of the path $x_R$ are in $T$.
Given $T$ and an input $y_L \in B'_L$, Player~2 finds an extension $y_R \in \{0,1\}^{(\ell/2) \cdot n}$, such that, $(y_L,y_R)\in B$ and all vertices of $T$ are colored 1 by the coloring $y_R$. The players run the protocol $P$ on inputs $x=(x_L,x_R)$, $y=(y_L,y_R)$. Since all vertices of the path $x_R$ are in $T$, they are all colored 1 by the coloring $y_R$. Hence, the edge $(u,v)$ returned by the communication protocol $P$ must satisfy $u=s$ or $u$ is in the first $\ell / 2$ layers, that is, layers $1,\ldots,\tfrac{\ell}{2}$. Thus, it is a valid answer for the game $STCON(\ell/2, n)$ on inputs $x_L,y_L$.

It remains to show that there exists a subset of vertices $T$, as above, such that $A'_L, B'_L$ are large, say $|A'_L| \geq  |A_L| / 2$ and $|B'_L| \geq  |B_L| / 2$. Assume first that there exists such a set $T$. Then, since $\alpha_L \geq  \tfrac{\sqrt{\alpha}}{2}$ and $\beta_L \geq \tfrac{\beta}{2} $, we get
$$
\mathsf{C}\bigl(\tfrac{\ell}{2}, n, \tfrac{\sqrt{\alpha}}{4},\tfrac{\beta}{4}\bigr) \leq 
\mathsf{CC}\bigl(STCON(\ell/2, n, A'_L, B'_L)\bigr) \leq d.
$$
Hence, by the inductive hypothesis,
\begin{align*}
d 
&\geq 
c \cdot \log \left( \tfrac{\ell}{2} \right)  \cdot \log n 
+ \log\left(\tfrac{\sqrt{\alpha}}{4}\right) + \log\left(\tfrac{\beta}{4}\right)
\\
&=
c \cdot \log \ell  \cdot \log n - c \cdot \log n
+ \tfrac{1}{2} \cdot \log\left(\alpha \right) + \log\left(\beta \right) -4
\\
&=
c \cdot \log \ell  \cdot \log n  + \log\left(\alpha \right) + \log\left(\beta \right)
- c \cdot \log n
- \tfrac{1}{2} \cdot \log\left(\alpha \right)  - 4
\\
&\geq
c \cdot \log \ell  \cdot \log n  + \log\left(\alpha \right) + \log\left(\beta \right),
\end{align*}
where the last inequality is because 
$- c \cdot \log n
- \tfrac{1}{2} \cdot \log\left(\alpha \right)  - 4 \geq 0$, which is true for a sufficiently small constant $c$, since $\alpha < 2t = \tfrac{1}{n^{0.1}}$ (by the premise of Case II) and since $n$ is sufficiently large.

Thus, it remains to argue that there exists a subset of vertices $T$, as above, such that $|A'_L| \geq  |A_L| / 2$ and $|B'_L| \geq  |B_L| / 2$.
Recall that
$T$ is a subset of vertices in the last $\ell / 2$ layers, that is, layers $\tfrac{\ell}{2}+1,\ldots,\ell$.
We will choose $T$ randomly as follows. Take $n^{0.2}$ random paths $x_R \in [n]^{\ell/2}$ (viewed as paths on the last $\ell / 2$ layers) and let $T$ be the union of the sets of vertices on all these paths. Equivalently, the restriction of $T$ to each layer (from the last $\ell / 2$ layers), is generated by taking $n^{0.2}$ random vertices (with repetitions). Note also that $T$ can be extended to a set $T' \supset T$ of size, say, 
$2n^{0.2} \cdot (\ell / 2)$, such that the distribution of $T'$ is exponentially close to the distribution of a random set of size $2n^{0.2} \cdot (\ell / 2)$ of vertices in the last $\ell / 2$ layers.

Recall that the set $A'_L \subseteq A_L$ is the set of all $x_L \in A_L$ such that there exists an extension $x_R \in [n]^{\ell/2}$, such that, $(x_L,x_R)\in A$ and all vertices of the path $x_R$ are in~$T$. Recall that for every $x_L \in A_L$ there exist at least $\tfrac{\alpha}{4}\cdot n^{\ell/2}$ extensions $x_R \in [n]^{\ell/2}$, such that, $(x_L,x_R)\in A$. Therefore, since $\tfrac{\alpha}{4} \geq \tfrac{1}{8n^{0.1}}$, for each $x_L \in A_L$ with probability exponentially close to 1, one of these extensions was chosen among the $n^{0.2}$ random paths that were chosen to generate $T$. Thus, with probability very close to 1 almost every $x_L \in A_L$ is also in $A'_L$ and in particular $|A'_L| \geq  |A_L| / 2$.

Recall that the set $B'_L \subseteq B_L$ is the set of all $y_L \in B_L$ such that there exists an extension $y_R \in \{0,1\}^{(\ell/2) \cdot n}$, such that, $(y_L,y_R)\in B$ and all vertices of $T$ are colored~1 by the coloring $y_R$. Recall that for every $y_L \in B_L$
there exist at least $\tfrac{\beta}{2}\cdot 2^{\ell \cdot n /2}$ extensions $y_R \in \{0,1\}^{(\ell/2) \cdot n}$, such that, $(y_L,y_R)\in B$, and recall that we assumed (without loss of generality) that $\beta \geq 2^{- \log^2n}$.
For each extension $y_R$, we consider the set $T_{y_R}$ of all vertices (in the last $\ell / 2$ layers) that $y_R$ colors~1. For every $y_L \in B_L$, we consider the family of sets
$F_{y_L} = \{T_{y_R}\}_{ y_R: (y_L,y_R)\in B}$.
Thus, for every $y_L \in B_L$, we have that 
$|F_{y_L}| \geq 2^{- \log^2n -1}\cdot 2^{\ell \cdot n /2}$. 
By Kruskal–Katona theorem~\cite{kruskal63, Katona87,Lovasz93} (or alternatively by information theoretic arguments), such a large family of sets is guaranteed to contain,  with probability close to~1, a set $T_{y_R}$ that contains the random set $T'$  (where the probability is over the choice of $T'$). Note that if $T_{y_R}$ contains $T'$, the coloring $y_R$ colors all vertices in $T'$ (and hence all vertices in $T$) by~1.
Thus, with probability close to~1 almost every $y_L \in B_L$ is also in $B'_L$ and in particular $|B'_L| \geq  |B_L| / 2$.
\end{proof}

\subsection{Lower Bounds for the Monotone Depth of Clique and Matching} \label{sec:RW-LB}

Next, we present Raz and Wigderson's lower bound of $\Omega(n)$ for the monotone circuit depth of the clique and matching functions in graphs with $n$ vertices~\cite{RW}. The proof establishes a lower bound of $\Omega(n)$ for the communication complexity of the corresponding KW games, by a direct reduction to known lower bounds in communication complexity, namely the lower bound of $\Omega(n)$ for the probabilistic communication complexity of Set-Disjointness~\cite{KS,Razb,BYKS,BM, BGPW}. This, in turn, further demonstrates the power of KW games, as well as the power of reductions from Set-Disjointness as a major tool for proving lower bounds in communication complexity and other computational models.

Recall that in the problem of Set-Intersection, or Set-Disjointness, each of two players gets a vector in $\{0,1\}^n$ and their goal is to determine whether there exists a coordinate $i \in [n]$ where they both have~$1$.
Recall that
for a communication task $G$, we denote by $\mathsf{CC}_{\epsilon}(G)$ the smallest communication complexity of a probabilistic protocol that solves $G$ correctly with probability at least $1-\epsilon$ on every input.
\begin{theorem}\label{thm:LB-Disjointness}~\cite{KS}  For any constant $\epsilon >0$, 
$\mathsf{CC}_{\epsilon}(Disjointness) \geq \Omega(n)$.
\end{theorem}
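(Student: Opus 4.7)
The plan is to prove the linear lower bound via the information complexity method, which yields the cleanest modern proof. First, I would define the (internal) information cost of a randomized protocol $\Pi$ under an input distribution $\mu$ as $IC_\mu(\Pi) = I(X;\Pi \mid Y) + I(Y;\Pi \mid X)$, where $(X,Y)\sim \mu$ and $\Pi = \Pi(X,Y)$ denotes the transcript together with the public randomness. The basic inequality $\mathsf{CC}(\Pi) \geq IC_\mu(\Pi)$ holds for every $\mu$, since the transcript entropy is bounded by its bit length, so it suffices to lower-bound $IC_\nu(\Pi)$ for a well-chosen hard distribution $\nu$.

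The next step is a direct sum theorem. I would choose a distribution $\nu$ on pairs in $\{0,1\}^n \times \{0,1\}^n$ whose single-coordinate marginals coincide with a fixed ``collapsing'' distribution $\mu$ supported on the zero-preimage $\{(0,0),(0,1),(1,0)\}$ of the single-bit $\mathsf{AND}$. The distribution $\nu$ is designed so that, using public coins to pick a random coordinate $i$ and to sample the remaining $n-1$ coordinates from $\mu$, any input pair $(a,b)$ for $\mathsf{AND}$ embedded in position $i$ produces a valid $\mathsf{Disjointness}$ instance whose answer equals $\mathsf{AND}(a,b)$. A chain-rule decomposition of $IC_\nu(\Pi)$ across coordinates, combined with this embedding, then yields
\[
IC_\nu(\Pi) \;\geq\; n \cdot IC_\mu(\Pi_{\mathsf{AND}}),
\]
where $\Pi_{\mathsf{AND}}$ is the induced single-bit protocol for $\mathsf{AND}$, still with error at most $\epsilon$.

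The main obstacle, and the real combinatorial core of the argument, is showing that $IC_\mu(\Pi_{\mathsf{AND}}) = \Omega(1)$ for any $\epsilon$-error protocol. The key tool is the cut-and-paste property of deterministic communication protocols: for any deterministic protocol, the transcript distributions $P_{ab}$ on inputs $(a,b)\in\{0,1\}^2$ satisfy $h(P_{00},P_{11}) = h(P_{01},P_{10})$, where $h$ denotes Hellinger distance; this follows from the rectangle structure of deterministic transcripts. Since correctness on input $(1,1)$ forces $P_{11}$ to be statistically far from each of $P_{00},P_{01},P_{10}$, this identity (after fixing the public randomness and averaging) lifts to a constant Hellinger distance between the transcript distributions on $(0,1)$ and $(1,0)$. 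Applying the standard inequality relating Hellinger distance to mutual information then lower-bounds $I(X;\Pi\mid Y)+I(Y;\Pi\mid X)$ by a constant under $\mu$. Chaining the three ingredients gives $\mathsf{CC}_\epsilon(\mathsf{Disjointness}) \geq IC_\nu(\Pi) \geq n \cdot \Omega(1) = \Omega(n)$, as desired.
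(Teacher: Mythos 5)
The paper does not actually prove this theorem: it is stated as a known result and cited to Kalyanasundaram--Schnitger and its successors (the reference list there includes Razborov's corruption-based proof and the Bar-Yossef--Jayram--Kumar--Sivakumar information-complexity proof), and is then used as a black box to derive the monotone depth lower bounds for matching and clique. Your proposal is a correct outline of the information-complexity route, i.e.\ essentially the BJKS argument: the inequality $\mathsf{CC}(\Pi)\geq IC_\mu(\Pi)$, the direct-sum decomposition over coordinates for a distribution supported on non-intersecting inputs, and the constant lower bound on the information cost of $\mathsf{AND}$ via the cut-and-paste (Hellinger) property of deterministic transcripts are exactly the three ingredients of that proof, and each is sound. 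This is a genuinely different argument from the original [KS] proof (which was a direct combinatorial/Kolmogorov-complexity argument) and from Razborov's distributional corruption bound; what it buys is modularity (the direct-sum theorem isolates all the dimension dependence) and reusability in multiparty and streaming settings, at the cost of needing the information-theoretic machinery.

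One technical point in your sketch deserves care: you say the players use public coins to ``sample the remaining $n-1$ coordinates from $\mu$.'' Since $\mu$ is a correlated distribution on $\{(0,0),(0,1),(1,0)\}$, the standard treatment does not sample both halves of each pair publicly; instead one writes $\mu$ as a mixture of product distributions via an auxiliary variable $D_j$ (e.g.\ $D_j$ decides which player's bit is pinned to $0$, and the other player samples its free bit privately), publicly samples the $D_j$'s, and works with the \emph{conditional} information cost $I(X;\Pi\mid Y,D)+I(Y;\Pi\mid X,D)$. This conditioning is what makes the coordinates conditionally independent so that the chain rule yields superadditivity, and what makes the embedded single-coordinate protocol's information cost match the per-coordinate term. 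With that refinement your three steps chain together exactly as you claim, giving $\mathsf{CC}_\epsilon(\mathsf{Disjointness})\geq n\cdot\Omega(1)$.
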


We will consider the following communication game, denoted $M_1$:
\begin{definition}
{\bf (Communication game $M_1$):}
Let $n=3k$ and let $V$ be a set of $n$ vertices.
Player~1 gets a $k$-matching $x$ on (a subset of) the set of vertices~$V$, that is,
$k$ edges (with vertices in $V$) that don't touch each other.
Player~2 gets a set~$y$ of $k-1$ vertices in $V$. 
The goal of the two players is to find an edge in $x$ that does not touch any 
of the vertices in $y$. (By the pigeonhole principle there must be at least one such edge). 
\end{definition}
We will prove that the deterministic communication complexity of 
$M_1$ is $\Omega(n)$,
$$\mathsf{CC}(M_1) \geq \Omega(n).$$

This bound implies lower bounds for the monotone circuit depth of several functions. We give a few examples:

\begin{theorem}\label{Thm:LB-Match}~\cite{RW} 
Let $n=3k$.
Let $Match$ be 
the (monotone) Boolean function that gets as an input a graph with $n$ vertices
and outputs 1 if and only if the graph contains a $k$-matching
(and outputs 0 otherwise). 
The monotone circuit depth of $Match$ is $\Omega(n)$.
\end{theorem}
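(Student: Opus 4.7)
By Theorem~\ref{mon-KW}, showing that the monotone circuit depth of $Match$ is $\Omega(n)$ reduces to showing that the deterministic communication complexity $\mathsf{CC}(M_{Match})$ is $\Omega(n)$. The plan is to exhibit the game $M_1$ as a restriction of $M_{Match}$ to specific minterm/maxterm pairs, and then to derive an $\Omega(n)$ lower bound on $\mathsf{CC}(M_1)$ by reduction from Set-Disjointness (Theorem~\ref{thm:LB-Disjointness}).

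First I would verify that $M_1$ is a sub-game of $M_{Match}$ in the sense of Definition~\ref{mon-KW2}. The minterms of $Match$ are precisely the $k$-matchings, so Player~1's input is already a minterm. To Player~2's $(k-1)$-set $y$ I associate the graph $G_y$ consisting of all edges of $V$ with at least one endpoint in $y$. I would check that $G_y$ is a maxterm of $Match$: any matching contained in $G_y$ uses at most $|y|=k-1$ edges, since each such edge must consume a distinct vertex of $y$; and for any edge $e$ with both endpoints in $V\setminus y$, the fact that $G_y$ contains the complete bipartite graph between $y$ and $V\setminus y$ lets us match each of the $k-1$ vertices of $y$ into the $2k-1$ vertices of $V\setminus(y\cup e)$, producing a $k$-matching in $G_y\cup\{e\}$. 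An edge of the matching $x$ is missing from $G_y$ iff both of its endpoints lie outside $y$, so the two games have identical goals on these inputs, giving $\mathsf{CC}(M_{Match})\geq\mathsf{CC}(M_1)$.

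The core step is then to prove $\mathsf{CC}(M_1)\geq \Omega(n)$ by reduction from Set-Disjointness on inputs of length $m=\Theta(n)$. The natural attempt uses a ``triple-gadget'' encoding: partition all but a constant number of vertices into triples $(p_i,q_i,r_i)$; given Disjointness inputs $a,b\in\{0,1\}^m$, have Alice include edge $(p_i,q_i)$ or $(p_i,r_i)$ in her matching depending on $a_i$, and have Bob include $q_i$ or $r_i$ in $y$ depending on $b_i$ (with a few leftover vertices handled so that $|x|=k$ and $|y|=k-1$). A case analysis shows that the matching edge at coordinate $i$ avoids $y$ iff $a_i=b_i$, so running a protocol for $M_1$ returns a coordinate of equality. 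Combined with Theorem~\ref{thm:LB-Disjointness}, a suitable refinement of this reduction then gives $\mathsf{CC}(M_1)=\Omega(n)$, and by the first two steps $\mathsf{MD}(Match)=\Omega(n)$.

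The main obstacle is precisely this last refinement. The naive triple-gadget encoding collapses two very different events, $a_i=b_i=0$ and $a_i=b_i=1$, to the same ``free edge'' outcome, so a free coordinate does not by itself witness intersection of $a$ and $b$. Overcoming this requires either reducing from a promise variant such as Unique Set-Disjointness (where $|a\cap b|\leq 1$ is promised and the task is to decide whether the intersection is empty), or inverting one side's encoding and exploiting public randomness to plant the Disjointness instance inside a larger matching game, so that with high probability the free coordinate selected by any protocol reveals intersection versus disjointness. Either refinement transfers the $\Omega(m)$ randomized lower bound of Theorem~\ref{thm:LB-Disjointness} to $\mathsf{CC}(M_1)$ and completes the argument.
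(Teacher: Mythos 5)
Your use of Theorem~\ref{mon-KW} and your identification of $M_1$ as a restriction of the KW game for $Match$ are correct and match the paper; your verification that the graph $G_y$ of all edges touching $y$ is a maxterm is in fact spelled out in more detail than the paper gives. The gap is in the remaining step, proving $\mathsf{CC}(M_1) \geq \Omega(n)$, which you correctly sense is the crux but do not close; and the two fixes you float do not match what the paper does.

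The structural obstruction to a direct reduction from Set-Disjointness to $M_1$ is that $M_1$ is a \emph{total search} problem: by the pigeonhole principle a free edge always exists, so a protocol for $M_1$ always returns an answer and cannot by itself decide whether two sets intersect. Your triple-gadget encoding runs into exactly this wall, as you note. The paper's way out is to insert a decision game $M_2$ (same inputs except Player~2 holds $k$ vertices rather than $k-1$, and the task is to decide whether some edge of the matching avoids $y$), and to chain three reductions:
$$
\mathsf{CC}(M_1) \;\geq\; \Omega\bigl(\mathsf{CC}_{\epsilon}(M_2)\bigr) \;\geq\; \Omega\bigl(\mathsf{CC}_{\epsilon}(3Dist)\bigr) \;\geq\; \Omega\bigl(\mathsf{CC}_{\epsilon}(\textrm{Disjointness})\bigr) \;=\; \Omega(n).
$$
The key missing idea is the first inequality, a search-to-decision conversion. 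One first uses a public random permutation of the vertex names to turn a deterministic protocol $P_1$ for $M_1$ into a zero-error randomized protocol that, on any input, outputs each valid free edge with equal probability. Player~2, holding a $k$-set $y$, deletes a uniformly random vertex $v\in y$ to form a $(k-1)$-set $y'$, runs $P_1$ on $(x,y')$ to obtain an edge $e$ avoiding $y'$, and outputs $1$ iff $e$ does not touch $v$. Output $1$ is always correct; output $0$ can only err if some other edge $e'$ of $x$ avoids all of $y$, and since the edges of the matching are disjoint, at most one of them can touch the single vertex $v$, so the symmetrized $P_1$ chooses that bad edge with probability at most $1/2$. This yields a bounded-error protocol for $M_2$ at the same cost. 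The second inequality uses a triangle gadget (Alice's letter $\mapsto$ the opposite triangle edge, Bob's letter $\mapsto$ the corresponding vertex), and the third embeds Disjointness into $3Dist$ by sending Alice's bits $0\mapsto b$, $1\mapsto a$ and Bob's bits $0\mapsto c$, $1\mapsto a$, so that agreement occurs exactly where both bits equal $1$ --- precisely the asymmetric encoding that dissolves the confound you flagged. Your first proposed fix (switching to Unique Disjointness) is unnecessary and not what the paper does; your second fix gestures at the random-deletion idea but without the symmetrization and the disjoint-edges $1/2$ bound it does not constitute a proof.
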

\begin{proof}
Consider an input $(x,y)$ for the game $M_1$.
The $k$-matching $x$ is a minterm of the function $Match$.
The set $y$ of $k-1$ vertices can be viewed as a maxterm of
the function $Match$, by considering a graph that contains all possible edges with at least one vertex in $y$.
Any protocol $P$ for the monotone KW game of the function $Match$
can be applied on $(x,y)$ to get an edge in $x$ that doesn't touch $y$.
That is, any protocol
$P$ for the monotone KW game of the function $Match$ can be applied also as a protocol for $M_1$. Since $\mathsf{CC}(M_1) \geq \Omega(n)$, the communication complexity of $P$ is~$\Omega(n)$.
Hence, by Theorem~\ref{mon-KW}, the monotone circuit depth of $Match$ is $\Omega(n)$.
\end{proof}

\begin{theorem}\label{Thm:LB-PM}~\cite{RW} 
Let $PM$ be 
the (monotone) Boolean function that gets as an input a graph with $n$ vertices
and outputs 1 if and only if the graph contains a perfect matching
(and outputs 0 otherwise). 
The monotone circuit depth of $PM$ is $\Omega(n)$.
\end{theorem}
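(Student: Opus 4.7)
The plan is to follow the template of Theorem~\ref{Thm:LB-Match} and reduce the communication game $M_1$ on $3k$ vertices (for which $\mathsf{CC}(M_1)=\Omega(k)$) to the monotone Karchmer-Wigderson game of $PM$ on $n=4k$ vertices, then apply Theorem~\ref{mon-KW}.  The general case of $n$ follows by padding with $O(1)$ fixed canonical edges added by both players.  So fix a partition of the $n=4k$ vertices as $V \sqcup U$ with $|V|=3k$, $|U|=k$, together with a canonical bijection that both players will use to pair any $k$-subset of $V$ to the vertices of $U$.

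Given an input $(x,y)$ for $M_1$, Player~1 would build the graph $X$ on $V\cup U$ consisting of the $k$ edges of $x$ together with the $k$ edges that pair, via the canonical bijection, the (exactly $k$) vertices of $V$ left unmatched by $x$ to the vertices of $U$; then $X$ is a perfect matching of $V\cup U$, hence a minterm of $PM$.  Player~2 would build the graph $Y$ on $V\cup U$ as the set of all edges with at least one endpoint in $y\cup U$.  The key observation is that $Y$ has no perfect matching: the $2k+1$ vertices of $V\setminus y$ are adjacent in $Y$ only to vertices of $y\cup U$, a set of size $2k-1$, so any matching in $Y$ must leave some vertex of $V\setminus y$ unmatched.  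Hence $(X,Y)$ is a legal input to the monotone KW game for $PM$, and the reduction is purely local---each player constructs their input without any communication.

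To conclude, I would observe that any edge $e\in X\setminus Y$ returned by a protocol for the $PM$ KW game must have both endpoints outside $y\cup U$, and in particular outside $U$, so $e$ cannot be one of the newly added edges incident to $U$ and must therefore be an edge of the original matching $x$ disjoint from $y$---a valid answer for $M_1(x,y)$.  Combining this with Theorem~\ref{mon-KW} and the $\Omega(k)$ communication lower bound for $M_1$ yields $\mathsf{MD}(PM)\geq\Omega(n)$.  I expect the only delicate step to be the pigeonhole argument ensuring $PM(Y)=0$, which depends precisely on the numerical relationship $|V\setminus y|=2k+1>2k-1=|y\cup U|$; all remaining steps are routine bookkeeping about how to translate the output of the $PM$ protocol back to the $M_1$ game.
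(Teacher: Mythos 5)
Your proof is correct and rests on the same idea as the paper's: pad with $k$ extra vertices joined to all others, so that perfect matchings of the padded graph on $4k$ vertices correspond exactly to $k$-matchings of the original graph on $3k$ vertices. The paper phrases this as a monotone reduction from $Match$ to $PM$ and then invokes Theorem~\ref{Thm:LB-Match}, whereas you inline that step and reduce the game $M_1$ directly to the monotone KW game of $PM$ (your set $Y$ of all edges meeting $y\cup U$ is precisely the image of the paper's $0$-instance under its reduction); the substance is the same.
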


\begin{proof}
Follows by a standard reduction from $Match$ to $PM$: Given an input graph $Z$ for the function $Match$, where the number of vertices in $Z$ is $n=3k$, construct a graph $Z'$ by adding $k$ vertices to $Z$ and connecting them to all other vertices. Then, there exists a perfect matching in $Z'$ if and only if there exists a matching of size $k$ in $Z$.
\end{proof}

\begin{theorem}\label{Thm:LB-Clique}~\cite{RW} 
Let $n=3k$.
Let $Clique$ be 
the (monotone) Boolean function that gets as an input a graph with $n$ vertices
and outputs 1 if and only if the graph contains a clique of size $2k+1$
(and outputs 0 otherwise). 
The monotone circuit depth of $Clique$ is $\Omega(n)$.
\end{theorem}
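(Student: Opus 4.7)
The plan is to mimic the reductions used in Theorems~\ref{Thm:LB-Match} and~\ref{Thm:LB-PM}: I will exhibit a reduction from the game $M_1$ to the monotone KW game $M_{Clique}$, so that any protocol for $M_{Clique}$ yields a protocol for $M_1$ of essentially the same communication cost. The conclusion will then follow from the lower bound $\mathsf{CC}(M_1) = \Omega(n)$ together with Theorem~\ref{mon-KW}.

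First I would identify the minterms and maxterms of $Clique$ on $n = 3k$ vertices with threshold $2k+1$. A minterm is a $(2k+1)$-clique, parameterized by its vertex set $S \subseteq V$ of size $2k+1$. A maxterm is a complete multipartite graph with exactly $2k$ non-empty color classes partitioning $V$, parameterized by a surjective coloring $\chi : V \to [2k]$: the missing edges are the intra-class pairs, and adding any one of them creates a $(2k+1)$-clique by picking one vertex from each of the other $2k-1$ classes. Using Definition~\ref{mon-KW2}, the game $M_{Clique}$ is therefore: Player~1 holds a set $S$ with $|S| = 2k+1$, Player~2 holds a coloring $\chi : V \to [2k]$, and they must output a pair $u \neq v$ in $S$ with $\chi(u) = \chi(v)$, which exists by pigeonhole since $|S| > 2k$.

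Next I would describe the reduction, which swaps the roles of the two players between $M_1$ and $M_{Clique}$ (a swap costs at most one extra bit of communication). Given an $M_1$ input $(x, y)$ with matching $x = \{(u_i, v_i)\}_{i=1}^k$ and vertex set $y$ of size $k-1$: Player~2 of $M_1$ plays Player~1's role in $M_{Clique}$ with $S := V \setminus y$, which has size $3k - (k-1) = 2k+1$, so it is a valid minterm. Player~1 of $M_1$ plays Player~2's role in $M_{Clique}$ with the coloring $\chi$ that assigns color $i$ to both endpoints of the edge $(u_i, v_i)$ for each $i \in [k]$, and assigns a distinct fresh color in $\{k+1, \ldots, 2k\}$ to each of the $k$ unmatched vertices. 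This uses exactly $2k$ colors with every class non-empty, so $\chi$ is a valid maxterm. The output $(u, v)$ produced by the $M_{Clique}$ protocol is a same-color pair in $V \setminus y$; since distinct unmatched vertices receive distinct colors, $(u,v)$ must be some matching edge $(u_i, v_i)$, and by construction it lies in $V \setminus y$ — precisely an answer for $M_1$.

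Hence $\mathsf{CC}(M_1) \leq \mathsf{CC}(M_{Clique}) + O(1)$, so the lower bound $\mathsf{CC}(M_1) = \Omega(n)$ (derived by reduction from Set-Disjointness via Theorem~\ref{thm:LB-Disjointness}, as is established in the proof structure of the earlier theorems), together with Theorem~\ref{mon-KW}, gives $\mathsf{MD}(Clique) = \Omega(n)$. There is no real obstacle here: all the difficulty is packed into the already-in-hand lower bound for $M_1$, and the only creative step specific to $Clique$ is noticing the arithmetic identity $|V \setminus y| = 2k+1 = (2k)+1$, which is exactly what makes pigeonhole force the same-color pair in $S$ to be a matching edge avoiding $y$.
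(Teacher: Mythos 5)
Your proof is correct and is essentially the paper's argument: the clique on $S = V\setminus y$ and the complete multipartite graph defined by your coloring $\chi$ are exactly the inputs $y'$ (the clique avoiding $y$) and $x'$ (the complement of the matching) on which the paper applies the KW protocol, with the same role swap between the two players. The only cosmetic difference is that you phrase the game via the minterm/maxterm formulation (Definition~\ref{mon-KW2}) while the paper uses the general formulation (Definition~\ref{mon-KW1}); these are shown equivalent in the text, so nothing further is needed.
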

\begin{proof}
Consider an input $(x,y)$ for the game $M_1$.
Given the set $y$ of $k-1$ vertices, consider the graph $y'$ that contains all edges that do not touch $y$ (that is, a clique in the complement of $y$). Since $y'$ is a clique of size $2k+1$, the function $Clique$ outputs 1 on $y'$.
Given the $k$-matching $x$, let the graph $x'$ be the complement of $x$, that is, the graph that contains all edges except the matching $x$. 
The function $Clique$ outputs 0 on $x'$.
Any protocol $P$ for the monotone KW game of the function $Clique$
can be applied on $(y',x')$ to get an edge in $y'$ that is not an edge in $x'$, that is an edge of $x$ that doesn't touch $y$.
Thus, any protocol
$P$ for the monotone KW game of the function $Clique$ can be applied also as a protocol for $M_1$. Since $\mathsf{CC}(M_1) \geq \Omega(n)$, the communication complexity of $P$ is~$\Omega(n)$.
Hence, by Theorem~\ref{mon-KW}, the monotone circuit depth of $Clique$ is $\Omega(n)$.
\end{proof}

Using similar arguments, one can establish lower bounds for the monotone depth of several other functions, such as, matching and perfect matching in bipartite graphs and clique functions with different sizes of cliques. 

It remains to prove the lower bound for the deterministic communication complexity of $M_1$.
\begin{theorem}\label{thm:RWmain}~\cite{RW}
$\mathsf{CC}(M_1) \geq \Omega(n).$
\end{theorem}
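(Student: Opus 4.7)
The plan is to establish $\mathsf{CC}(M_1) \geq \Omega(n)$ by reduction from Set-Disjointness on $m = k-1 = \Theta(n)$ bits and an appeal to Theorem~\ref{thm:LB-Disjointness}. Since deterministic communication complexity dominates the bounded-error probabilistic communication complexity with public coins, it suffices to exhibit a public-coin probabilistic reduction that, on inputs $\alpha,\beta\in\{0,1\}^{k-1}$, produces valid $M_1$-inputs $(x,y)$ whose solution---together with at most $O(\log n)$ additional communication---decides whether $\alpha\cap\beta=\emptyset$ with bounded error.

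For the reduction, identify $V$ with $k$ vertex-triples $(u_i,v_i,w_i)$ for $i\in[k]$, and use public randomness to choose an injection $\pi:[k-1]\hookrightarrow[k]$ with singleton complement $\{i^*\}$. Player~1 constructs her matching $x$ by including, for every $j\in[k-1]$, the edge $(u_{\pi(j)},v_{\pi(j)})$ if $\alpha_j=0$ and the edge $(u_{\pi(j)},w_{\pi(j)})$ if $\alpha_j=1$, and, in the free slot, the canonical edge $(u_{i^*},v_{i^*})$. Player~2 constructs $y$ by putting, for every $j\in[k-1]$, the vertex $u_{\pi(j)}$ into $y$ if $\beta_j=0$ and the vertex $v_{\pi(j)}$ into $y$ if $\beta_j=1$. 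Then $|x|=k$ and $|y|=k-1$, and a short case analysis shows that the edge at slot $\pi(j)$ avoids $y$ iff $\alpha_j\wedge\beta_j=1$, whereas the edge at slot $i^*$ always avoids $y$.

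After running the assumed $M_1$-protocol, both players recover the slot index $s$ of its output edge and declare ``not disjoint'' if $s\neq i^*$ and ``disjoint'' if $s=i^*$. Correctness when $\alpha\cap\beta=\emptyset$ is immediate, since in that case the only valid slot is $i^*$. The key difficulty arises in the intersecting case: the free slot is still a valid output, and a na\"{\i}ve $M_1$-protocol could always return its edge and thereby defeat the reduction. The crucial observation is that $i^*$ is chosen by the public coins \emph{after} the protocol is fixed; while $y$ reveals $i^*$ to Player~2 (as the unique slot with no $y$-vertex), $x$ hides $i^*$ from Player~1 among all slots whose edge is of $(u,v)$-type, so coordinating on the free-slot edge requires nontrivial communication.

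The main technical work, and the central obstacle, is to turn this observation into a quantitative statement: for uniformly random $(\pi,i^*)$ and a worst-case intersecting pair $(\alpha,\beta)$, any $M_1$-protocol with communication $o(n)$ must output a non-free slot with probability bounded away from zero. This can be carried out by a rectangle/fooling-set analysis on the planted distribution of $(x,y)$ together with the protocol's transcript, or, equivalently and perhaps more cleanly, by first reducing from the Unique-Disjointness promise variant of Set-Disjointness (which has the same $\Omega(n)$ lower bound) so that at most one ``intersection'' slot competes with the free slot, and then showing that the protocol's output-slot distribution in the intersecting case is distinguishable from its distribution in the disjoint case. Combined with Theorem~\ref{thm:LB-Disjointness}, this yields $\mathsf{CC}(M_1)\geq\Omega(n)$.
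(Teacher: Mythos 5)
Your gadget and the overall plan (reduce from Set-Disjointness and invoke Theorem~\ref{thm:LB-Disjointness}) are sound in spirit, and the case analysis for the slots $\pi(j)$ is correct. The proposal breaks, however, at exactly the step you flag as ``the main technical work,'' and the statement you hope to prove there is in fact false. Given \emph{any} correct protocol $P$ for $M_1$, consider the protocol $P'$ that begins with an $O(\log n)$-bit preamble: Player~2 checks whether $y$ hits exactly $k-1$ of the $k$ triples in one vertex each (as it always does on your reduction's support), and if so announces the index $i^*$ of the unhit triple; Player~1 then announces whether her matching contains the edge $(u_{i^*},v_{i^*})$ (it always does on your support), in which case both players output that edge; otherwise they fall back to $P$. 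The protocol $P'$ is correct on all $M_1$ inputs, costs only $O(\log n)$ more than $P$, and on every input produced by your reduction it outputs the free-slot edge---so your decision rule always declares ``disjoint.'' The ``crucial observation'' that $x$ hides $i^*$ from Player~1 does not help: $y$ reveals $i^*$ to Player~2, and $O(\log n)$ bits suffice to transfer it. Consequently, no rectangle or fooling-set argument can establish that every $o(n)$-bit protocol must output a non-free slot with constant probability.

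What is missing is the symmetrization idea that drives the paper's proof. The paper does not plant a distinguished free slot; instead it introduces the decision game $M_2$, in which Player~2 holds $k$ vertices (so that a valid edge may fail to exist), reduces $M_1$ to $M_2$ by deleting a uniformly random vertex $v$ from $y$, and---crucially---first applies a public uniformly random permutation to the vertex labels so that the deterministic $M_1$ protocol outputs each correct answer with \emph{equal} probability. Because the edges of a matching are disjoint, at most one valid edge touches the deleted vertex $v$, so whenever an edge avoiding all of $y$ exists the symmetrized protocol returns such an edge with probability at least $1/2$; this one-sided error is then reduced by repetition, and the resulting decision problem is handled by the chain $M_2 \rightarrow 3Dist \rightarrow$ Set-Disjointness. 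If you wish to keep your one-shot planted reduction, you would likewise have to randomly permute all $3k$ vertex labels (so that the protocol cannot be tailored to the triple structure) and then argue about the induced uniform distribution over valid answers---at which point you will have essentially reconstructed the paper's argument.
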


\begin{proof}
Let $n=3k$ and let $V$ be a set of $n$ vertices.
Consider the following communication game, denoted $M_2$:
Player~1 gets a $k$-matching $x$ on (a subset of) the set of vertices~$V$.
Player~2 gets a set~$y$ of $k$ vertices in $V$. 
The goal of the two players is to output 1 
if there is an edge in $x$ that does not touch any of the 
vertices in $y$, and output 0 otherwise, that is, if every edge in $x$ touches a vertex in $y$.

We will first prove that
for any constant $\epsilon > 0$,
\begin{equation} \label{eq:bd1}
\mathsf{CC}(M_1) \geq \Omega(\mathsf{CC}_{\epsilon}(M_2)).
\end{equation}

Assume that we have a deterministic communication protocol $P_1$ for the communication game
$M_1$. We will use $P_1$ to construct a probabilistic communication protocol $P_2$ for the communication game $M_2$,
with the same communication complexity as $P_1$ (up to an additive constant).

First note that, using a common random string,  we can assume that the protocol $P_1$ is a zero-error probabilistic protocol, such that, for every input $(x,y)$ for the game $M_1$, the protocol $P_1$ outputs each correct answer with the exact same probability
(that is, if for the input $(x,y)$ there are several correct answers
the protocol outputs each of them with the same probability).
This can be assumed, since, using the common random string, the players can 
randomly permute the vertices in $V$ before applying the protocol $P_1$.

Let $(x,y)$ be an input for the game $M_2$.
Player~2 gets the set $y$ of $k$ vertices, and will 
randomly choose a vertex $v \in y$ and remove it.
Now, Player~2 is left with a set $y'$ of $k-1$ vertices.
The two players can now apply the protocol $P_1$ (for $M_1$) on the input
$(x,y')$ and obtain as an output an edge $e \in x$ that doesn't 
touch any of the vertices in $y'$.  The players now check if the removed vertex $v$ is on the edge $e$.
If the vertex $v$ is not on the edge $e$, the protocol $P_2$ (for $M_2$) will outputs 1
(as $e$ is an edge that doesn't touch any vertex in $y$). 
If the vertex $v$ is on the edge $e$, the 
protocol $P_2$ will output 0 (that is, $P_2$ assumes that there is no edge
in $x$ that doesn't touch $y$, as such an edge was not found by $P_1$).

Note that if $P_2$ outputs 1
there can be no error (as $e$ does not touch  $v$  
or any  other vertex in $y$, 
since the protocol $P_1$ is always correct).
On the other hand, if $P_2$ outputs~0, an error is possible, as there might be a different edge $e'$ in $x$ that doesn't touch any 
of the vertices in $y$, and yet 
the protocol $P_1$ outputs the edge $e$ that does touch $v$. 
However, since the edges do not touch each other, there is at most 
one edge $e\in x$ that touches $v$.
Since we assume that the 
protocol $P_1$ outputs each possible correct answer with the exact same 
probability, and since an error occurs only if $e$ was the output of $P_1$
(and not any of the possible edges $e'$), 
the probability for an error is at 
most $1/2$ (for any input $(x,y)$). (The probability of error may be smaller if there are several edges $e'$ that do not touch any vertex in $y$).

To further reduce the probability of error to any constant $\epsilon$, one can
repeat the protocol $P_2$ a constant number of times.
This concludes the proof for Equation~\ref{eq:bd1}.

Next, we consider the following communication complexity game, denoted $3Dist$  (3-Distinctness): Let $n=3k$. Player~1 and Player~2 get inputs
$x,y \in \{a,b,c\}^k$, respectively. That is, each player gets a string 
of $k=n/3$ letters from $\{a,b,c\}$. The goal is to decide  
whether there is a coordinate $i$, such that $x_i = y_i$.

We will prove that
for any constant $\epsilon > 0$,
\begin{equation} \label{eq:bd2}
\mathsf{CC}_{\epsilon}(M_2) \geq \mathsf{CC}_{\epsilon}(3Dist).
\end{equation}

Assume that we have a probabilistic communication protocol $P_2$ for the communication game
$M_2$. We will use $P_2$ to construct a probabilistic communication protocol $P_3$ for the communication game $3Dist$,
with the same communication complexity and the same error as $P_2$. 

Let $(x,y)$ be an input for the game $3Dist$. Thus $x,y \in \{a,b,c\}^k$.
For each coordinate in $\{1,\ldots,k\}$, we construct a triangle (with different vertices for each triangle) 
and label its 3 vertices by $a,b,c$.
We label each edge of each triangle by the letter that labels the vertex that
it does not touch (that is, the vertex opposite to it).

The players convert their inputs to inputs for the game $M_2$
in the following way:
Player~1 
interprets her $k$ coordinates as the
corresponding $k$ edges in the $k$ triangles (one edge for each coordinate).
That is, each $x_i$ is interpreted as the corresponding edge
in the $i^{th}$ triangle.
Denote the set of these edges by $x'$. 
Player~2 interprets her
$k$ coordinates as the corresponding $k$ vertices in the $k$ triangles. 
That is, each $y_i$ is interpreted as the corresponding vertex 
in the $i^{th}$ triangle.
Denote the set of these vertices by $y'$.
Obviously, there is an edge in $x'$ that doesn't touch $y'$
if and only if there is a coordinate $i$, such that, $x_i=y_i$.
Thus, the players can use the protocol $P_2$ on input $(x',y')$ and declare the answer. This gives a probabilistic communication protocol $P_3$ for $3Dist$ with the same communication complexity and the same error as $P_2$.
This concludes the proof for Equation~\ref{eq:bd2}.

Finally, we will prove that for any constant $\epsilon > 0$,
\begin{equation} \label{eq:bd3}
\mathsf{CC}_{\epsilon}(3Dist) \geq \Omega(n).
\end{equation}
This will follow by a reduction from the
Set-Disjointness problem and by the lower bound for the probabilistic communication complexity of Set-Disjointness (Theorem~\ref{thm:LB-Disjointness}).

Assume that we have a probabilistic communication protocol $P_3$ 
for $3Dist$. 
We will show how to use this protocol to solve the Set-Disjointness
problem. 
Given an  
input pair $(x,y)$ for the Set Disjointness problem, such that $x,y \in \{0,1\}^k$,
the two players will generate inputs $x',y'$ for $3Dist$
as follows.
To generate $x'$,
Player~1 starts from $x$ and translates 0 to $b$ and 1 to $a$.
That is, for every $i$, if $x_i=0$ then $x'_i=b$ and if 
$x_i=1$ then $x'_i=a$.
To generate $y'$,
Player~2 starts from $y$ and translates 0 to $c$ and 1 to $a$.
That is, for every $i$, if $y_i=0$ then $y'_i=c$ and if
$y_i=1$ then $y'_i=a$.
Obviously, $x_i=y_i =1$ if and only if $x'_i = y'_i$. Hence, the two players
can apply the protocol $P_3$  on $(x',y')$ and declare the answer. 
This gives a probabilistic communication protocol for Set-Disjointness, with the same communication complexity and the same error as $P_3$.
By Theorem~\ref{thm:LB-Disjointness}, the communication complexity of the protocol is $\Omega(n)$. This concludes the proof for Equation~\ref{eq:bd3}.

By Equation~\ref{eq:bd1}, Equation~\ref{eq:bd2} and Equation~\ref{eq:bd3}, we get $\mathsf{CC}(M_1) \geq \Omega(n)$.
\end{proof}

\subsection{KRW Conjecture} \label{sec:KRW-LB}

Karchmer, Raz and Wigderson suggested an approach for proving super-logarithmic lower bounds for general 
Boolean circuit depth~\cite{KRW}. We will briefly outline this approach here. 

Let $n$ be an integer and assume for simplicity that $\log \log n$ is also an integer (where the logarithm is base~2). Let $k = \log n$. Let $f: \{0,1\}^k \rightarrow \{0,1\}$ be a random Boolean function. Since it's not hard to prove (by a standard counting argument) that a random Boolean function has large circuit depth (with high probability), we can assume that, say, $\mathsf{D}(f) \geq \tfrac{k}{2}$ (where $\mathsf{D}$ denotes circuit depth).

For two Boolean functions, $h: \{0,1\}^r \rightarrow \{0,1\}$ and 
$g: \{0,1\}^m \rightarrow \{0,1\}$, define their composition 
$h \circ g: \{0,1\}^{rm} \rightarrow \{0,1\}$ by 
$$
h \circ g (\vec{x_1},\ldots,\vec{x_r}) = h(g(\vec{x_1}),\ldots,g(\vec{x_r})),
$$
where 
$\vec{x_1},\ldots,\vec{x_r} \in \{0,1\}^m$.
Define $f^{(d)}$ to be the composition of $f$ with itself $d$ times.

KRW conjectured that for a random function $f: \{0,1\}^k \rightarrow \{0,1\}$ and any function $g: \{0,1\}^m \rightarrow \{0,1\}$,
$$
\mathsf{D}(f\circ g) \geq
\epsilon \cdot \mathsf{D}(f) + \mathsf{D}(g)
$$
(with high probability over the choice of $f$),
for some constant $\epsilon > 0$.
There are also various variants of this conjecture.

Assuming that the conjecture holds, we get 
$$
\mathsf{D}(f^{(d)}) \geq \epsilon \cdot \mathsf{D}(f) \geq 
\tfrac{\epsilon}{2} \cdot d\cdot k,
$$
(with high probability over the choice of $f$). Taking $d= k / \log k$, we get a function $f^{(d)} : \{0,1\}^n \rightarrow \{0,1\}$ of super-logarithmic depth.
The function $f^{(d)}$ is not explicit, as $f$ is a random function, but since $f$ depends on only $k=\log n$ input variables, its truth table of size $n$ can be given as $n$ additional input variables, so that $f^{(d)}$ is explicitly given.

To prove the conjecture, KRW suggested to use Karchmer-Wigderson games. 
Given $f: \{0,1\}^k \rightarrow \{0,1\}$ and  $g: \{0,1\}^m \rightarrow \{0,1\}$, the KW game corresponding to $f\circ g$ is as follows: \\
Player~1 gets $\vec{x_1},\ldots,\vec{x_k} \in \{0,1\}^m$, such that,
$$
f(g(\vec{x_1}),\ldots,g(\vec{x_k})) =1.
$$
Player~2 gets $\vec{y_1},\ldots,\vec{y_k} \in \{0,1\}^m$, such that,
$$
f(g(\vec{y_1}),\ldots,g(\vec{y_k})) =0.
$$
The goal of the two players is to find $(i,j)$ such that
$\vec{x_{i j}} \neq \vec{y_{i j}}$.

To see the intuition behind the conjecture,  assume that the inputs $(\vec{x_1},\ldots,\vec{x_k})$ for Player~1 and 
$(\vec{y_1},\ldots,\vec{y_k})$ for Player~2 satisfy that for every 
$i \in \{1,\ldots,k\}$, if $g(\vec{x_i}) = g(\vec{y_i})$ then $\vec{x_i} = \vec{y_i}$. Then, an answer $(i, j)$ for the KW game corresponding  to $f\circ g$ gives an answer $i$ for the KW game corresponding to $f$ 
(with input 
$((g(\vec{x_1}),\ldots,g(\vec{x_k})), (g(\vec{y_1}),\ldots,g(\vec{y_k})))$) 
and an answer $j$ for an instance of the KW game corresponding to $g$ (namely, the KW game corresponding to the $i^{th}$ coordinate, that is, the game played with input $(\vec{x_i}, \vec{y_i})$).

Finally, we note that while the conjecture is still wide open and seems hard to prove, some steps towards proving the conjecture have been done in several papers, including the works by Edmonds, Impagliazzo, Rudich and Sgall~\cite{EIRS01}, H{\aa}stad and Wigderson~\cite{HW90}, Gavinsky, Meir, Weinstein and Wigderson~\cite{GMWW17}, Dinur and Meir~\cite{DM18}, Meir~\cite{Meir23} (to name a few).

\subsection{Communication Complexity of Set-Disjointness}

The Set-Disjointness problem that was already mentioned before is a central problem in communication complexity, with numerous applications.
We have already seen how strong lower bounds for the monotone depth of Boolean functions (Theorem~\ref{Thm:LB-Match}, Theorem~\ref{Thm:LB-PM} and Theorem~\ref{Thm:LB-Clique}) follow from known lower bounds for the probabilistic communication complexity of Set-Disjointness (Theorem~\ref{thm:LB-Disjointness}).
The Set-Disjointness problem can be described as follows (equivalently to our previous description).
Each of two players gets a subset of $[n]$ and their goal is to determine whether the two subsets intersect.

Since the Set-Disjointness problem is so central, and because of the many applications, it's very interesting to also study variants of this problem. 
H{\aa}stad and Wigderson~\cite{HW07} studied the perhaps  most natural variant of the problem, the Set-Disjointness problem with sets of a fixed size $k$.
That is, given $n$ and $k \leq n/2$, Player~1 gets a subset $x \subset [n]$ of size $k$, Player~2 gets a subset $y \subset [n]$ of size $k$, and their goal is to determine whether the two subsets $x,y$ intersect. We denote this communication game by $D^n_{k}$. In the deterministic case, it is not hard to prove that for every $k \leq n/2$,
$\mathsf{CC}(D^n_{k}) = \Theta(\log \binom{n}{k})$~\cite{HW07}.

H{\aa}stad and Wigderson proved that in the probabilistic case, the communication complexity of $D^n_{k}$ is in fact $O(k)$. (This bound is tight when $k < c n$ for any constant $c < 1/2$).

\begin{theorem}\label{thm:HW}~\cite{HW07}  
For any $n$ and $k \leq n/2$ and constant $\epsilon >0$, 
$\mathsf{CC}_{\epsilon}(D^n_{k}) = O(k).$
\end{theorem}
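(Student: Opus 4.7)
The plan is to construct a public-coin randomized protocol for $D^n_k$ with communication $O(k)$ (where the hidden constant depends on $\epsilon$) and error at most $\epsilon$. First I would dispense with the easy range $k = \Omega(n)$ using the standard protocol in which each player sends her characteristic vector, costing $n = O(k)$ bits; so we may assume $k = o(n)$.

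Next, using public randomness, I would reduce the universe size via a uniformly random hash $h: [n] \to [M]$ with $M = C k^2 / \epsilon$ for a suitably large constant $C$. Since $|x \cup y| \leq 2k$, the birthday bound gives that with probability at least $1 - \epsilon$ the hash $h$ is injective on $x \cup y$; conditioned on this event, $x \cap y = \emptyset$ if and only if $h(x) \cap h(y) = \emptyset$, so the original problem reduces to $D^M_k$ on a universe of size $M = O(k^2)$. This reduction costs no communication, since $h$ is known to both players from the public random string.

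The crux of the argument is then to solve $D^M_k$ with $M = O(k^2)$ using only $O(k)$ communication. My plan is to iteratively shrink the universe by public-randomness bucketing: in each round, partition the current universe into $\Theta(k)$ buckets of equal size, and have each player send $\Theta(k)$ bits indicating which buckets contain elements of her set. Buckets where at most one player is present can be discarded, since they cannot contribute to an intersection. The players then recurse on the ``both-present'' buckets, approaching a base case of universe size $O(k)$ where the characteristic vectors of $h(x)$ and $h(y)$ can be exchanged directly.

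The main obstacle will be bounding the total communication across all recursion levels. A naive analysis only yields $O(k \log M) = O(k \log k)$ because the universe shrinks by a factor of $\Theta(k)$ per round. Achieving the tight $O(k)$ bound requires exploiting the fact that the expected number of set-elements within ``collision'' buckets also decreases geometrically from one level to the next, so the amortized communication across levels telescopes to $O(k)$. Error accumulation across recursion levels, and final amplification down to error $\epsilon$, can be handled by standard independent repetition with majority voting, introducing only a constant multiplicative overhead in the total communication.
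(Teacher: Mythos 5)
Your proposal is correct, and it gives a genuinely different protocol from the paper's, though the underlying telescoping idea is the same. You use a \emph{symmetric bucketing} scheme: both players publicly partition the current universe into roughly $k_i$ random buckets (where $k_i$ tracks the number of still-live elements), exchange indicator vectors of which buckets they occupy, discard buckets seen by only one player, and recurse. The paper instead uses an \emph{asymmetric covering} scheme: in each round the player whose live set is currently smaller (say $x_{i-1}$, of size $k_{i-1}$) scans a public list of about $2^{c k_{i-1}}$ uniformly random subsets of the current universe and transmits, in $O(k_{i-1})$ bits, the index of the first one containing her whole set; both players then restrict to that subset, which halves the other player's live set in expectation. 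In both cases the total live-element count drops geometrically and the per-round cost tracks it, so the sum is $O(k)$. The covering trick gets the ``per-round cost $\propto k_i$'' property for free from the index encoding and is insensitive to universe size, which is why the paper never needs your preliminary hash down to $O(k^2/\epsilon)$ and has lighter bookkeeping at termination. Your bucketing version is arguably more intuitive, but to beat $O(k\log k)$ you must actually make the per-round message shrink with $k_i$ --- either by reducing the bucket count round by round (requiring the players to agree on the current $k_i$) or by entropy-encoding the increasingly sparse indicator vectors; you gesture at this with ``amortized telescoping,'' but as literally written (a fixed $\Theta(k)$ buckets and $\Theta(k)$ bits per round) the protocol only gives $O(k\log k)$, so that is the one step I would want you to pin down.
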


\begin{proof} {\bf (Sketch)}
Player~1 gets a subset $x \subset [n]$ of size $k$ and Player~2 gets a subset $y \subset [n]$ of size $k$.
The players run a communication protocol that, assuming that $x,y$ are disjoint, has communication complexity $O(k)$ and at the end of the protocol both players know (with high probability) two disjoint subsets $S,T \subset [n]$, such that, $x \subseteq S$ and $y \subseteq T$. The sets $S,T$ can be viewed as a proof for the disjointness of $x,y$. If after $ck$ bits of communication (when $c$ is a sufficiently large constant), 
the protocol fails, it follows that (with high probability) $x,y$ are not disjoint.

The protocol works in $O(\log k)$ steps. First define,
$$N_{0}=[n],\;\; S_{0}=\emptyset,\;\; T_{0}=\emptyset,\;\; x_{0}=x,\;\; y_{0}=y.$$
After each Step~$i$, the players will have subsets $N_{i},S_{i},T_{i},x_{i},y_{i} \subseteq [n]$, where 
$N_{i}\cup S_{i}\cup T_{i}$ is a partition of $[n]$ and 
$$x \cap T_{i} = \emptyset,\;\; y \cap S_{i} = \emptyset,\;\; x_i = x \cap N_i,\;\; y_i = y \cap N_i.$$ 
Moreover, $$S_{i-1} \subseteq S_i,\;\; T_{i-1} \subseteq T_i.$$
Intuitively, after each Step~$i$, the players have already restricted the possible intersection of $x$ and $y$ to the set $N_i$ and it remains to check if $x_i$ and $y_i$ intersect.

Each Step~$i$ is done as follows. Assume without loss of generality that 
$|x_{i-1}| \leq |y_{i-1}|$. (Otherwise we switch the rolls of the players in Step~$i$). Let $k_{i-1} = |x_{i-1}|$.
The players interpret the public random string as a sequence of random subsets 
$Z_1,Z_2,\ldots \subseteq N_{i-1}$. Player~1 examines the first $2^{ck_{i-1}}$ sets in this sequence (where $c$ is a sufficiently large constant) and sends the index $j$ of the first set $Z_j$, such that, $x_{i-1} \subseteq Z_j$ (if such a set exists). Since $c$ is sufficiently large, such a set $Z_j$ exists with high probability, as the probability that a random set $Z_j$  satisfies $x_{i-1} \subseteq Z_j$ is $2^{-k_{i-1}}$.
Since the random string is public, both players now know $Z_j$ and update
$$N_{i}=Z_j,\;\; S_{i}=S_{i-1},\;\; T_{i}= T_{i-1} \cup N_{i-1} \setminus Z_j,\;\; x_{i} = x_{i-1},\;\; y_{i} = y_{i-1} \cap Z_j.$$
Note that all the required properties from the sets $N_{i},S_{i},T_{i},x_{i},y_{i}$ are satisfied. That is, 
$N_{i}\cup S_{i}\cup T_{i}$ is a partition of $[n]$ and 
$$x \cap T_{i} = \emptyset,\;\; y \cap S_{i} = \emptyset,\;\; x_i = x \cap N_i,\;\; y_i = y \cap N_i,\;\; S_{i-1} \subseteq S_i,\;\; T_{i-1} \subseteq T_i.$$ 
Note also that if $x,y$ are disjoint then $|y_{i}|$ is equal to $|y_{i-1}|/2$ in expectation and as long as $|y_{i-1}|$ is sufficiently large (say, larger than a sufficiently large constant), $|y_{i}|\leq 0.6 \cdot |y_{i-1}|$ with high probability and hence $|y_{i}|+|x_{i}| \leq 0.8 \cdot (|y_{i-1}| + |x_{i-1}|)$ with high probability (where high probability here means 1 minus  probability exponentially small in $|y_{i-1}| + |x_{i-1}|$). 

If $x,y$ are disjoint then, after repeating this protocol for $O(\log k)$ steps, we get that with high probability the final $x_i,y_i$ are both empty. This is because the sum of their sizes keeps decreasing by a constant factor until it's smaller than a (sufficiently large) constant and then it keeps decreasing by at least 1, with a constant probability in each step. When $x_i,y_i$ are both empty, the protocol stops, and we have two disjoint subsets $S,T \subset [n]$, such that, $x \subseteq S$ and $y \subseteq T$ (where $S,T$ are the final $S_i,T_i$).
The communication complexity is $O(k)$, since in each step the communication complexity is $O(|y_{i-1}| + |x_{i-1}|)$ and thus converges to $O(k)$, as 
$O(|y_{i-1}| + |x_{i-1}|)$ keeps decreasing by a constant factor until it's constant.
\end{proof}

The protocol in the proof of Theorem~\ref{thm:HW} uses an exponential amount of randomness. Nevertheless, the amount of randomness can always be reduced to $O(\log n)$ by a general theorem of Newman~\cite{New91}.

\subsection{Quantum versus Classical Communication Complexity}

Buhrman, Cleve and Wigderson were the first to study communication complexity advantages of quantum communication protocols over classical ones~\cite{BCW98}. A quantum communication protocol is a protocol where the players can send quantum states, rather than just classical bits, and the communication complexity of the protocol is defined to be the total number of qubits sent by the protocol, that is, the sum of the lengths (in qubits) of all the quantum states that are sent by the protocol~\cite{Yao93}.

Buhrman, Cleve and Wigderson proved a general theorem that shows that any quantum algorithm with small query complexity implies quantum communication protocols for related problems, with small communication complexity. 
Given a (total or partial) function $f: \{0, 1\}^n \rightarrow \{0, 1\}$, one can define the following two communication complexity problems: Given two inputs, $x,y \in \{0, 1\}^n$, where Player~1 gets~$x$ and Player~2 gets~$y$, the goal of the two players is to compute 
$f(x \wedge y)$, or $f(x \oplus y)$ 
(where $x \wedge y$ and $x \oplus y$ denote a coordinate by coordinate application of $\wedge$ and $\oplus$). Buhrman, Cleve and Wigderson proved that if there is a quantum algorithm for computing the function $f(z)$, with $k$ quantum queries to the input
$z=(z_1,\ldots,z_n)$, then there are quantum communication complexity protocols for computing  $f(x \wedge y)$ and $f(x \oplus y)$, with communication complexity $O(k\cdot \log n)$~\cite{BCW98}.

This general theorem gives a method for translating quantum query complexity upper bounds into quantum communication complexity upper bounds, as well as translating quantum communication complexity lower bounds into quantum query complexity lower bounds. Using this general theorem, Buhrman, Cleve and Wigderson obtained interesting consequences in both directions. They used known lower bounds for quantum communication complexity to obtain new lower bounds for quantum query complexity. They also used the general theorem to establish an exponential separation between zero-error quantum communication complexity, and classical deterministic communication complexity, that is, they gave a communication task that can be solved by a zero-error quantum communication complexity protocol with small communication complexity, and such that any classical deterministic communication complexity protocol for that task, requires exponentially larger communication complexity.

Perhaps the most striking consequence of Buhrman, Cleve and Wigderson's general theorem is that it proves
that the Set-Disjointness problem can be solved by a quantum protocol with communication complexity $O(\sqrt{n} \cdot \log n)$~\cite{BCW98}. 

\begin{theorem}\label{thm:BCW}~\cite{BCW98} 
The quantum communication complexity of Set-Disjointness is $O(\sqrt{n} \cdot \log n)$.
\end{theorem}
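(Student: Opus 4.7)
My plan is to combine two ingredients already flagged in the excerpt: Grover's quantum search algorithm and the BCW simulation theorem. The first gives a quantum query-algorithm for $\mathrm{OR}: \{0,1\}^n \to \{0,1\}$ that uses only $O(\sqrt{n})$ quantum queries to its input (with bounded error). Set-Disjointness of $x,y \in \{0,1\}^n$ is precisely the negation of $\mathrm{OR}(x \wedge y)$, so once we instantiate the simulation theorem with $f = \mathrm{OR}$ we obtain a quantum communication protocol using $O(\sqrt{n} \cdot \log n)$ qubits.

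The real work is establishing (a self-contained version of) the simulation theorem: from a $k$-query quantum algorithm for $f(z_1,\ldots,z_n)$, build a quantum communication protocol for $f(x \wedge y)$ of cost $O(k \log n)$. The algorithm is a sequence of input-independent unitaries interleaved with $k$ oracle calls of the form $O_z : |i\rangle|b\rangle \mapsto |i\rangle|b \oplus z_i\rangle$ on registers of size $\lceil \log n \rceil + 1$. The plan is to have the two players share those registers plus one auxiliary qubit, execute the input-independent unitaries locally (each player stores a full copy of the register, or alternatively passes ownership back and forth), and replace every oracle call by the following four-message subroutine that realizes $O_{x \wedge y}$:
\begin{enumerate}
\item Alice applies $|i\rangle|b\rangle|0\rangle \mapsto |i\rangle|b\rangle|x_i\rangle$ using her private input $x$ (a controlled copy of $x_i$ into the auxiliary qubit).
\item Alice sends the three registers to Bob.
\item Bob applies $|i\rangle|b\rangle|x_i\rangle \mapsto |i\rangle|b \oplus (x_i \wedge y_i)\rangle|x_i\rangle$ using his private input $y$ (a Toffoli on the auxiliary qubit, $y_i$, and $b$).
\item Bob sends the registers back, and Alice uncomputes the auxiliary qubit via $|i\rangle|c\rangle|x_i\rangle \mapsto |i\rangle|c\rangle|0\rangle$ using $x$ again.
\end{enumerate}
Each subroutine transmits $O(\log n)$ qubits, so $k$ simulated oracle calls cost $O(k \log n)$ total. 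Plugging in Grover's $k = O(\sqrt{n})$ yields the claimed bound.

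The step I expect to be the main conceptual obstacle is making sure the simulation is truly \emph{coherent}: the $k$ simulated queries are fed superpositions of indices $i$ by the surrounding algorithm, so the auxiliary qubit is entangled with the index register during the subroutine, and any residual entanglement left over after the subroutine would destroy interference on later steps. The uncomputation in step~4 is exactly what disentangles the auxiliary qubit and restores it to a product $|0\rangle$ independent of $i$, so that the composed operation on $(|i\rangle,|b\rangle)$ is genuinely $|i\rangle|b\rangle \mapsto |i\rangle|b \oplus (x_i \wedge y_i)\rangle$. Once this is verified, correctness of the protocol reduces directly to correctness of Grover's algorithm on the virtual input $z = x \wedge y$, and the communication-complexity count is immediate from summing the per-round cost. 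Finally, one should note that the same reduction works for $f(x \oplus y)$ (by replacing the Toffoli in step~3 with a controlled-XOR), giving the broader simulation statement quoted in the excerpt.
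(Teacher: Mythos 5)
Your proposal is correct and follows exactly the route the paper takes: instantiate the Buhrman--Cleve--Wigderson simulation theorem with Grover's $O(\sqrt{n})$-query algorithm for $\mathsf{OR}$, so that each of the $O(\sqrt{n})$ oracle calls on the virtual input $z = x \wedge y$ is replaced by an $O(\log n)$-qubit communication subroutine. The only difference is that you also sketch the proof of the simulation theorem (the compute--send--Toffoli--uncompute subroutine and the coherence of the uncomputation), which the paper states without proof; your sketch of it is the standard and correct one.
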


The proof of Theorem~\ref{thm:BCW}
follows from the general theorem by using Grover's algorithm for computing the $\mathsf{OR}$ of $n$ input variables, using only $O(\sqrt{n})$ quantum queries to the input~\cite{Grover96}. 

Theorem~\ref{thm:BCW} stands in contrast to Theorem~\ref{thm:LB-Disjointness} that states that the classical probabilistic communication complexity of Set-Disjointness is $\Omega(n)$. Theorem~\ref{thm:BCW} established a quadratic gap between quantum and classical probabilistic communication complexity and was followed by a long line of works that further studied the relative power of quantum and classical communication protocols.
We note that this  quadratic separation remained essentially the largest known gap between quantum and classical probabilistic communication complexity of total functions, for almost two decades. A line of recent works improved that gap to an almost cubic gap~\cite{ABK16, ABBGJKLS16, Tal20, BS21, SSW21}. Proving a super-polynomial gap  between quantum and classical probabilistic communication complexity of total functions remains a fascinating and long-standing open problem in communication complexity. For partial functions (promise problems), exponential gaps between quantum and classical probabilistic communication complexity were established by a long line of works~\cite{Raz99, BYJK08, GKKRW08, KR11, Gavinsky20, GRT19}.
Finally, we note that Theorem~\ref{thm:BCW} was proved to be essentially tight by Razborov~\cite{Razborov_2003}.

\subsection{Partial Derivatives in Arithmetic Circuit Complexity} \label{sec:partialderivatives} 

Arithmetic circuits are the standard computational model for arithmetic computations, such as computing the determinant or the permanent of a matrix
or the product of two matrices.
Given a field $\mathbb{F}$ and an $n$-variate polynomial $P(x_1,\ldots,x_n)$ over~$\mathbb{F}$, 
we ask how many $+, \times$ operations over $\mathbb{F}$ are needed to compute $P$.

An arithmetic circuit over $\mathbb{F}$,
with input variables
$x_1,\ldots,x_n \in \mathbb{F}$,
is a directed acyclic graph as follows:
Every node of in-degree 0 (that is, a {\em leaf})
is labelled with either an input variable
or a field element or a product of an input variable and a field element.
Every node of in-degree larger than 0 is labelled with either $+$ or $\times$
(in the first case the node is a {\em sum } gate and in the second case a
{\em product} gate).
A node of out-degree 0 is called an {\em output} node. 
The circuit is called a {\em formula} if the underlying
graph is a (directed) tree.

Each node in the circuit (and in particular each output node)
computes a polynomial in the ring of polynomials $\mathbb{F}[x_1,\ldots,x_n]$ as follows. A leaf just computes the value of the input variable, or field element, or product of input variable and  field element, that labels it. For every non-leaf node $v$, if $v$ is a sum gate it computes the sum of the polynomials computed by its children, and if $v$ is a product gate it computes the product of the polynomials computed by its  children.
If the circuit has only one output node, the polynomial computed by the circuit is the polynomial computed by the output node.

The {\em size} of a circuit is defined to be the number of wires (edges)
in it and the {\em depth} of a circuit is defined to be the length of the longest directed path from a leaf to an
output node in the circuit.

Proving lower bounds for the size of arithmetic circuits
has been a major challenge for many years.
Super-linear lower bounds for the size of general arithmetic circuits were proven in the seminal works of Strassen~\cite{strassen73} and Baur and Strassen~\cite{BS83}.
Their method, however, only gives lower bounds of up to $\Omega (n\log d)$, where $n$ is the number of input variables and $d$ is the degree of the computed polynomial.
In particular, if the degree $d=d(n)$ is polynomial in $n$ this gives lower
bounds of at most $\Omega (n\log n)$.
Lower bounds for various restricted classes of arithmetic circuits have also been studied in many works.

In 1997, Nisan and Wigderson suggested a general approach for obtaining lower bounds for restricted classes of arithmetic circuits~\cite{NW97}.
The approache is based on measuring the dimension of the vector space
spanned by all partial derivatives of the polynomials computed at the nodes of the circuit. 
(Partial derivatives were previously used to obtain lower bounds for arithmetic circuits in the works of Smolensky~\cite{Smolensky90} and Nisan~\cite{Nisan91}).

For an $n$-variate polynomial $f(x_1,\ldots,x_n)$, let $D(f)$ denote the set of all partial derivatives, of all orders, of $f$ (including $f$ itself as the partial derivative of order~0), and let $\mathsf{Dim}(f)$ denote the
dimension of the vector space spanned by $D(f)$. 
The main idea is to bound the growth of $\mathsf{Dim}(f)$ from the leaves to the outputs of the circuit and hence show that for an output of the circuit, 
$\mathsf{Dim}(f)$ is bounded. Thus, the circuit cannot compute a polynomial $P(x_1,\ldots,x_n)$ with a larger $\mathsf{Dim}(P)$.
The following simple formulas are easily proved and are useful for bounding the growth of $\mathsf{Dim}(f)$,
\begin{align*}
\mathsf{Dim}(h+g) &\leq \mathsf{Dim}(h) + \mathsf{Dim}(g),\\
\mathsf{Dim}(h \times g) &\leq \mathsf{Dim}(h) \cdot \mathsf{Dim}(g).
\end{align*}

Nisan and Wigderson used this approach to prove several lower bounds, including exponential lower bounds for the  size of depth-3 homogeneous circuits (where an homogeneous circuit is a circuit where all nodes in the circuit compute homogeneous polynomials), and exponential lower bounds 
for the  size of constant-depth set-multilinear circuits (where a set-multilinear circuit is a circuit where the set of variables 
$\{x_1,\ldots,x_n\}$
is partitioned into $d$ subsets $X_1,\ldots,X_d$, such that, for every node $v$ in the circuit,  each monomial in the polynomial computed by the node $v$ contains at most one variable from each subset $X_i$)~\cite{NW97}.

The partial-derivatives method of Nisan and Wigderson has been very influential on later works. Many subsequent works used this approach as a starting point and further built on these ideas to obtain lower bounds for additional classes of arithmetic circuits. In particular, these ideas have been very important in the study of  multilinear circuits 
(for example,~\cite{RazS05, RazM09, RazM06, RY09, RSY08, DMPY12, ChillaraL019, AlonKV20}),
constant-depth homogeneous circuits (for example,~\cite{KumarS13b, KayalLSS17, KumarS17})
and bounded-depth arithmetic circuits (for example,~\cite{ShpilkaW01, Kayal12, GKKS14, FournierLMS15, Limaye0T21, AGK0T22}).

\subsection{Resolution Made Simple}

{\em Resolution} is a proof system (technically, refutation system) for refuting unsatisfiable CNF formulas, that is, unsatisfiable Boolean formulas in conjunctive normal forms.

Given Boolean variables
$x_1,\ldots,x_n \in \{0,1\}$, a {\em literal} is either a variable,
$x_i$, or a negation of a variable,  $\neg x_i$.
A clause in these variables is an $\mathsf{OR}$ of literals, that is, $\bigvee_{i=1}^k z_i$, for some $k$, where each $z_i$ is a literal.
The {\em Resolution rule} says that if $C$ and
$D$ are two clauses and $x_i$ is a variable then any assignment
that satisfies both clauses, $C \vee x_i$ and $D \vee \neg
x_i$, also satisfies the clause $C \vee D$. 
Thus, from $C \vee x_i$ and $D \vee \neg x_i$, one can deduce $C \vee D$.

A Resolution refutation for a set of clauses $F$
(equivalently, for a CNF formula~$F$) proves that the clauses in $F$ are not simultaneously satisfiable.
For a set of clauses~$F$,
a Resolution refutation 
is a sequence of clauses $C_1,C_2,\ldots,C_s$,
such that: (1) Each clause $C_j$ is either a clause in $F$ or obtained by the Resolution rule from two previous clauses in the sequence, and (2) The last
clause, $C_s$, is the empty clause (and is hence unsatisfiable).
The {\em size}, or {\em length}, of a Resolution refutation
is the number of clauses in it.

It is
well known that Resolution is a sound and complete propositional
proof system, that is, a CNF formula $F$ is unsatisfiable if and only if
there exists a Resolution refutation for $F$. We think of a
refutation for an unsatisfiable formula $F$ also as a proof for
the tautology $ \neg F$.
Hence, Resolution refutations are also called Resolution proofs.

Resolution is one of the most widely studied propositional
proof systems. 
Lower bounds for the size of Resolution proofs for 
many propositional tautologies have been proved, starting from 
Haken's celebrated exponential lower bounds for the propositional pigeonhole principle~\cite{Haken85}.

Ben-Sasson and Wigderson suggested a general approach for proving lower
bounds for the size of Resolution proofs, an approach that
generalized, unified and simplified essentially all previously known lower bounds for Resolution, was used to obtain many additional lower bounds, and
ultimately gave a deeper understanding of Resolution as a proof system~\cite{Ben-SassonW01}. 

The approach focuses on the {\em width} of a resolution proof. 
The width of a resolution proof is defined to be the number of literals in the largest clause of the proof. 
Ben-Sasson and Wigderson argued that Resolution is best studied when the focus is on the width. Their key theorem relates the smallest length of a Resolution proof to the smallest width of a Resolution proof. Informally, the theorem states that if a set of clauses $F$ has a short Resolution refutation then it also has a Resolution refutation with small width. The proof is based on a proof by Clegg, Edmonds and Imagliazzo, who gave similar relations (between size of a proof and degree of a proof) for algebraic proof systems~\cite{CleggEI96}.

\begin{theorem}\label{thm:BW}~\cite{Ben-SassonW01} 
Let $F$ be a an unsatisfiable CNF formula. Let $w_0$ be the size of the largest clause in~$F$. Let $w$ be the minimal width of a Resolution refutation for~$F$.
Let $s$ be the minimal size of a Resolution refutation for $F$. Then,
$$ w \leq w_0 + O\left(\sqrt{n \log s}\right).$$
\end{theorem}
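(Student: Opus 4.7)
The plan is to take any size-$s$ refutation $\pi$ of $F$ and iteratively restrict variables so as to eliminate \emph{fat} clauses, meaning clauses of width exceeding some threshold $w_0 + k$ for a parameter $k$ to be optimized. The key observation driving the argument is a simple pigeonhole principle: among the at most $s$ fat clauses in $\pi$, each contributes more than $k$ literal-occurrences out of only $2n$ possible literals, so some literal $\ell$ appears in at least a $k/(2n)$-fraction of the fat clauses, and restricting $\ell \mapsto 1$ satisfies (and hence deletes) all of them at once. I plan to combine this with the standard ``lifting'' step that converts a narrow refutation of a variable-restricted formula into a slightly wider derivation from the original formula, and with the bilateral lemma $w(F \vdash \emptyset) \leq 1 + \max\bigl(w(F\!\!\upharpoonright_{\ell=1} \vdash \emptyset),\, w(F\!\!\upharpoonright_{\ell=0} \vdash \emptyset)\bigr)$, obtained by resolving the two unit clauses $\ell$ and $\neg\ell$ that can be extracted from the two lifted restricted refutations.

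Concretely, I would set up a double induction on the pair $(n, s)$ maintaining the invariant that any unsatisfiable $F'$ on $n'$ variables with axioms of width $\leq w_0$ and a refutation of size at most $s'$ admits a refutation of width at most $w_0 + k + c \cdot n'\ln(s')/k$ for an appropriate absolute constant $c$. The bilateral split then reduces the original problem to two instances on $n-1$ variables: on the branch where the heavy literal $\ell$ is set to $1$, the refutation size drops by the $k/(2n)$-fraction guaranteed by the pigeonhole step, so the induction closes through the size decrement; on the complementary branch the induction closes purely through the decrement in variable count. Unrolling the recurrence, the width budget accumulated after $t$ restrictions is $k + t$, and the size of the remaining refutation drops below $1$ once $t \geq (2n/k)\ln s$, giving a total refutation width bounded by $w_0 + k + (2n/k)\ln s$. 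Choosing $k = \lceil\sqrt{2n\ln s}\rceil$ balances the two terms and yields $w \leq w_0 + O\bigl(\sqrt{n\log s}\bigr)$, as desired.

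The main obstacle is making the double induction close cleanly: the two branches of the bilateral split recurse on genuinely different quantities -- size on one side, variable count on the other -- and a careless induction on $n$ alone would produce a linear-in-$n$ blowup rather than the desired $\sqrt{n\log s}$ rate. Reconciling the two so that both recurrences contribute to the same width budget is the delicate technical point, and exploits the fact that each round simultaneously shaves off one variable while the size-reducing branch achieves the sharper pigeonhole decrement. A secondary, more routine obstacle is verifying that the lifting step -- re-inserting the pivot literal into every clause of a restricted refutation -- yields syntactically valid resolution inferences with only a $+1$ width overhead, including boundary cases where a restricted clause coincides with an axiom of $F$ or where the pivot literal already occurs in some intermediate clause.
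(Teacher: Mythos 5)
Your overall strategy is the right one---this is essentially the Ben-Sasson--Wigderson argument: pigeonhole a literal occurring in a $k/(2n)$-fraction of the fat clauses, restrict, set up a double induction, and balance $k$ against $(n/k)\log s$. But the combination step you propose does not close the induction. The symmetric ``bilateral lemma'' $w(F\vdash\emptyset)\le 1+\max\bigl(w(F|_{\ell=1}\vdash\emptyset),\,w(F|_{\ell=0}\vdash\emptyset)\bigr)$ charges a $+1$ width overhead on \emph{both} branches, yet only the $\ell=1$ branch enjoys the multiplicative decrease in the number of fat clauses; the $\ell=0$ branch merely loses one variable. A root-to-leaf path of the recursion contains at most $O((n/k)\ln s)$ many $\ell=1$ steps but up to $n$ many $\ell=0$ steps, so if every step costs $+1$ you get $w\le w_0+k+n$, not $w_0+k+O((n/k)\log s)$. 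Concretely, with your invariant $w\le w_0+k+c\,n'\ln(s')/k$, closing the $\ell=0$ branch requires $1\le c\ln s/k$, i.e.\ $k\le c\ln s$, which is incompatible with the intended choice $k=\Theta(\sqrt{n\log s})$ except in the degenerate regime $\log s=\Omega(n)$ where the theorem is trivial. You correctly flag this as ``the delicate technical point,'' but the sentence you offer in its place does not resolve it.

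The fix is to make the combination asymmetric. Lift only the $\ell=1$ refutation, obtaining a derivation of the unit clause $\neg\ell$ from $F$ of width at most $w(F|_{\ell=1}\vdash\emptyset)+1$. Then, instead of lifting the $\ell=0$ refutation, resolve $\neg\ell$ against every axiom of $F$ containing $\ell$: this re-derives every axiom of $F|_{\ell=0}$ from $F\cup\{\neg\ell\}$ within width $w_0$, after which the refutation of $F|_{\ell=0}$ can be appended \emph{verbatim}, with no width increase on that branch. This yields $w(F\vdash\emptyset)\le\max\bigl(w(F|_{\ell=1}\vdash\emptyset)+1,\;w(F|_{\ell=0}\vdash\emptyset),\;w_0\bigr)$, so only the size-reducing steps consume width budget and the recurrence unrolls to $w_0+k+O((n/k)\ln s)$ as you intended. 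A secondary repair: you induct on the pair $(n,s)$ with $s$ the total size, but the pigeonhole step only guarantees a $(1-k/(2n))$-factor decrease in the number of \emph{fat} clauses, not in the total size; the induction parameter must therefore be the fat-clause count (which is at most $s$ at the root, so the final bound is unaffected).
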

In particular, Theorem~\ref{thm:BW} shows that one can obtain lower bounds for the size of Resolution proofs by proving lower bounds for the width of Resolution proofs (which, in many cases, is easier to analyze).

The size of the clauses of a Resolution proof was implicit in previous works and  played a major roll in previous lower bounds.
Previous lower bounds for the size of Resolution proofs were usually proved in two steps as follows. In the first step, the entire proof was hit by a random restriction of the variables (that is, some of the variables were randomly set to~0, some were randomly set to~1 and some were left untouched), in order to hit and eliminate all large clauses of the proof (assuming for a contradiction that the proof is short). The second step proved that large clauses must exist in any Resolution refutation for the restriction of the unsatisfiable formula under the random restriction from the first step (and hence the proof must be long).
The approach of Ben-Sasson and Wigderson simplified essentially all previous proofs, as the random restriction was no longer needed and one could focus on proving lower bounds on the width of Resolution refutations for the original unsatisfiable formula, rather than for a random restriction of it.

\newcommand{\defeq}{:=}
\renewcommand{\Cap}{{\mathrm{Cap}}}
\newcommand{\Tr}{\mathrm{Tr}}
\newcommand{\BL}{\mathrm{BL}}
\newcommand{\Sn}{{S}}

\section{Complexity, Optimization, and, Symmetries}\label{sec:optimization}

This section presents an overview of work by Wigderson and his co-authors on  optimization methods to come up with efficient algorithms for various algorithmic problems in computational complexity theory, mathematics, and physics \cite{LSW98,GargGOW16,GargGOW17,ALOW17,Allen-ZhuGLOW18, BurgisserGOWW18, BurgisserFGOWW18,BurgisserFGOWW19}.
A common theme in all these works is the realization that the relevant algorithmic tasks can be formulated as optimization problems over  algebraic  groups that also have an analytic structure.
A representative optimization problem is to find a minimum norm vector in the orbit of a given $\mathrm{GL}_n(\mathbb{C})$ action   on a vector space.
This viewpoint led Wigderson and his co-authors to deploy tools from invariant theory, representation theory, and optimization to develop a quantitative theory of optimization over Riemannian manifolds that arise from continuous symmetries of noncommutative groups.

The starting point is  the work  \cite{LSW98}  that analyzes the convergence of a matrix scaling algorithm to compute an approximation to the permanent (Section \ref{sec:permanent}).
This corresponds to the commutative setting where the symmetries corresponded to diagonal subgroups (tori) of a matrix group.
The role of symmetries in the analysis of the algorithm, however, was not quite explicit.

In \cite{Gurvits04}, Gurvits  extended the results of \cite{LSW98} to the noncommutative setting of ``operators''.
In particular, he studied Edmonds' singularity problem \cite{Edmonds} and, motivated by \cite{LSW98}, he presented a (deterministic)  ``operator scaling'' algorithm for it.
However, he fell short of presenting convergence bounds for this algorithm.
Section \ref{sec:operator} presents the work  \cite{GargGOW16} that gives convergence bounds for Gurvits' operator scaling algorithm.
This paper makes  the first contact of scaling algorithms to invariant theory.
It also demonstrates the applicability of computational  problems over group orbits and scaling techniques far beyond complexity theory: to mathematics and physics.
Section \ref{sec:ncit} presents a result from   \cite{GargGOW16} that gives a deterministic polynomial time algorithm for the noncommutative version of Edmonds' singularity problem. 
Section \ref{sec:BL}  gives an outline of a result from \cite{GargGOW17} that shows how operator scaling  can be used to efficiently compute  Brascamp-Lieb constants important in mathematics.

Section \ref{sec:geodesic} visits the paper \cite{Allen-ZhuGLOW18} which starts with the realization that the problem of finding a minimum-norm vector over an orbit  is a geodesically convex optimization problem over a Riemannian manifold.
Subsequently, \cite{Allen-ZhuGLOW18}  extend the theory of second-order methods in convex optimization to the setting of geodesically convex optimization and give an algorithm whose running time depends logarithmically on the error in the approximation.
The focus here is on introducing geodesic convexity and showing how the capacity of an operator can be captured by a geodesically convex optimization problem.

Finally,  Section \ref{sec:nullcone}, presents results from \cite{BurgisserFGOWW19}.
Here,  the general norm minimization problem is introduced and  various variants of it studied by \cite{BurgisserFGOWW19} are presented.
These problems unify and generalize  prior works in this line.
Of particular importance is the connection to noncommutative duality in invariant theory which extends linear programming duality and allows one to give conditions on when an optimization problem is feasible. 
This gives rise to other connections such as  moment maps (analog of Euclidean gradients) and a precise notion of geodesic convexity.
This paper culminates with the definition and convergence bounds for first-order and second-order algorithms for various  optimization problems over noncommutative matrix groups.
The convergence bounds are based on novel parameters related to the group action via a synthesis of algebra and analysis. 
This paper also gives a host of new analytic algorithms for various  problems important in invariant theory and complexity theory.

\subsection{Permanent and matrix scaling}\label{sec:permanent}

Let $A\in \R^{n\times n}$ be a square matrix with entries $A_{i,j}$ for $1\leq i,j \leq n$. The permanent of $A$ is defined as:
$$\mathrm{Per}(A)\defeq \sum_{\sigma \in S_n} \prod_{i=1}^n A_{i,\sigma(i)},$$
where $S_n$ is the set of all permutations over $n$ symbols, i.e., the set of bijections $\sigma: \{1,2,\ldots, n\} \to \{1,2,\ldots, n\}$.
The permanent makes its appearance in various branches of science and mathematics and algorithms to compute it are sought after. 
For instance, permanents of $0,1$-valued matrices are intimately connected to perfect matchings in bipartite graphs.
Consider a bipartite graph $G=(L,R,E)$ where $L,R$ is the bipartition of the vertex set of $G$ and $E$ is the set of edges of $G$.
Assume  $|L|=|R|=n$ and define an $n \times n $ matrix $A$ (adjacency matrix of $G$) whose $(i,j)$th entry is $1$ is an edge between the $i$th vertex of $L$ and the $j$th vertex of $R$.
It follows from the definition that the permanent of $A$ is equal to the number of perfect matchings in $G$.

The computational complexity of the permanent has been extensively studied in theoretical computer science.
 Valiant~\cite{Valiant79} proved that it is unlikely that there is an efficient algorithm that computes the permanent of a nonnegative matrix -- even when the matrix has only $0,1$ entries (the problem is $\mathsf{\#P}-$complete).   
This result, under standard assumptions in complexity theory, rules out an efficient algorithm to compute the permanent of a nonnegative matrix and raises the question of finding approximations to it.
Checking if $\mathrm{Per}(A)$ of a  nonnegative matrix is zero or not, however, is in $\mathsf{P}$ since it reduces to checking if the associated bipartite graph has a perfect matching or not.

\subsubsection{Doubly stochastic matrices and their permanents}

A special class of nonnegative matrices is doubly-stochastic matrices whose row sums and column sums are all equal to one.
\begin{definition}\label{def:DS} {\bf(Doubly stochastic matrix)}
An $n \times n$ matrix $A$ is said to be doubly stochastic if it is nonnegative and 
its rows and columns sum up to one:
For each $i$, $\sum_{j=1}^n A_{i,j}=1$ and for each $j$, $\sum_{i=1}^n A_{i,j}=1$.    
\end{definition}
\noindent
If a nonnegative matrix $A$ is an adjacency matrix of a graph $G$ each of whose vertices has degree $d \geq 1$, then the matrix $\frac{1}{d}A$ is doubly stochastic.
 The set of all doubly stochastic matrices is convex and, in fact, a polytope -- the Birkhoff polytope \cite{birk:46}.
   The well-known Birkhoff-von Neumann theorem states that the Birkhoff polytope is a convex hull of $n \times n$ permutation matrices.

The matrix with all entries $\frac{1}{n}$ is doubly stochastic.
Its permanent is $\frac{n!}{n^n}$.
van der Waerden conjectured that the permanent of any $ n \times n$ doubly-stochastic matrix must be at least $\frac{n!}{n^n}$.
Interestingly, this lower bound does not depend on the entries of $A$ as long as it is doubly stochastic.
Egorychev~\cite{Egorychev81} and Falikman~\cite{Falikman81} proved the van der Waerden conjecture.

\begin{theorem}\label{thm:EF}{\bf (Permanent of doubly-stochastic matrices \cite{Egorychev81,Falikman81})}
For any $n \times n$ doubly-stochastic matrix $A$, $\mathrm{Per}(A) \geq \frac{n!}{n^n}$.
\end{theorem}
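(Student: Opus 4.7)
The plan is to prove the theorem via the real-stable polynomial and capacity approach due to Gurvits, which sits naturally alongside the scaling-and-optimization theme of this section. The first step is to associate to $A$ the polynomial
\[
p_A(x_1,\ldots,x_n) \defeq \prod_{i=1}^n \Bigl(\sum_{j=1}^n A_{ij}\, x_j\Bigr),
\]
observe that extracting the coefficient of $x_1 x_2 \cdots x_n$ gives $\mathrm{Per}(A) = \partial_{x_1}\cdots\partial_{x_n}\, p_A\big|_{x=0}$, and note that $p_A$ is real-stable (as a product of real linear forms with nonnegative coefficients), homogeneous of degree $n$, with degree at most $n$ in each variable. I will then define the capacity $\mathrm{cap}(p) \defeq \inf_{x_1,\ldots,x_n>0} p(x)/(x_1\cdots x_n)$ and verify $\mathrm{cap}(p_A)=1$ by the weighted AM-GM inequality: $\sum_j A_{ij} x_j \ge \prod_j x_j^{A_{ij}}$ (valid since $\sum_j A_{ij}=1$), which multiplied over $i$ and combined with $\sum_i A_{ij}=1$ yields $p_A(x)\ge \prod_j x_j$ for all $x>0$, with equality at $x=(1,\ldots,1)$.

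The theorem will then reduce to the inequality
\[
\partial_{x_1}\cdots\partial_{x_n}\, p\,\big|_{x=0} \;\ge\; \frac{n!}{n^n}\,\mathrm{cap}(p),
\]
valid for every real-stable $p \in \R[x_1,\ldots,x_n]$ with nonnegative coefficients and degree at most $n$ in each variable. I will prove it by induction on $n$, passing from $p$ to $q(x_1,\ldots,x_{n-1}) \defeq \partial_{x_n} p(x_1,\ldots,x_{n-1},0)$. Closure of real stability under partial differentiation and specialization at a real value (Borcea--Br\"and\'en) ensures $q$ is again real-stable with nonnegative coefficients, and homogeneity forces $q$ to have degree at most $n-1$ in each variable. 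The per-step capacity estimate reduces to a one-variable lemma: for any polynomial $\phi(t)$ of degree at most $d$ with nonnegative coefficients and only real nonpositive roots, $\phi'(0) \ge \bigl((d-1)/d\bigr)^{d-1}\inf_{t>0}\phi(t)/t$. Applying this with $d=n$ at fixed $x_1,\ldots,x_{n-1}>0$ (the univariate polynomial $\phi(t) \defeq p(x_1,\ldots,x_{n-1},t)$ has only real nonpositive roots by stability of $p$), then dividing by $x_1\cdots x_{n-1}$ and taking the infimum, yields $\mathrm{cap}(q)\ge ((n-1)/n)^{n-1}\,\mathrm{cap}(p)$. A short factorial manipulation gives $\prod_{m=2}^{n}\bigl((m-1)/m\bigr)^{m-1} = n!/n^n$, closing the induction.

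The hard part will be the univariate lemma. Writing $\phi(t) = c\prod_{k=1}^d (t+\lambda_k)$ with $\lambda_k\ge 0$ (using real-rootedness and nonnegative coefficients), the inequality becomes a sharp extremal bound for the ratio $e_{d-1}(\lambda)/\inf_{t>0}\prod_k(t+\lambda_k)/t$ in terms of the elementary symmetric functions of $\lambda$, with extremizers $\lambda_1=\cdots=\lambda_{d-1}>0,\ \lambda_d=0$ corresponding precisely to the van der Waerden extremal matrix $A=J_n=\tfrac{1}{n}\mathbf{1}\mathbf{1}^\top$, so the constant $n!/n^n$ is tight. Real-rootedness is indispensable: without it, the single Taylor coefficient $\phi'(0)$ bears no quantitative relationship to $\inf_{t>0}\phi(t)/t$, and it is precisely the product-of-linear-forms structure of $p_A$ (equivalently, the real-stability of $p_A$) that supplies this essential structural ingredient.
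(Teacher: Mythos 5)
The paper does not prove this theorem: it cites the Egorychev--Falikman resolution of the van der Waerden conjecture, whose original arguments run through the Alexandrov--Fenchel inequality and an analysis of minimizers over the Birkhoff polytope. Your proposal instead gives Gurvits's capacity proof --- which is precisely the ``alternate proof'' the paper gestures at later (Section~\ref{sec:poly_cap}) via $\mathrm{Per}(A)\ge \frac{n!}{n^n}\Cap(f_A)$ and $\Cap(f_A)=1$ for doubly stochastic $A$; your $p_A$ and $\mathrm{cap}$ are the paper's $f_A$ and Equation~\eqref{eq:matrix_cap}. The outline is correct: the weighted AM--GM computation of $\mathrm{cap}(p_A)$, the one-variable reduction $\mathrm{cap}(q)\ge\bigl(\tfrac{n-1}{n}\bigr)^{n-1}\mathrm{cap}(p)$, and the telescoping identity $\prod_{m=2}^{n}\bigl(\tfrac{m-1}{m}\bigr)^{m-1}=\tfrac{n!}{n^n}$ all check out, and the stability-closure facts you invoke (differentiation, specialization at real points) are standard. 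Two points need tightening before this is a proof. First, your inductive statement is not self-propagating as phrased: applying $\partial_{x_n}$ and setting $x_n=0$ does not reduce the degree of the \emph{remaining} variables, so ``degree at most $n$ in each variable'' for $p$ does not by itself yield ``degree at most $n-1$ in each variable'' for $q$. You must either add homogeneity to the inductive hypothesis (which holds for $p_A$ and is preserved by the reduction, as you note) or prove the degree-sensitive form $\partial_{x_1}\cdots\partial_{x_n}p\big|_{0}\ge \mathrm{cap}(p)\prod_{k}\lambda(\deg_{x_k}p_k)$ with $\lambda(d)=\bigl(\tfrac{d-1}{d}\bigr)^{d-1}$ decreasing in $d$. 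Second, the univariate lemma is the entire analytic content of the argument and is only asserted; it does have a short proof (normalize $\inf_{t>0}\phi(t)/t=1$, write $\phi(t)=c\prod_k(t+\lambda_k)$ with $\lambda_k\ge 0$, and show by a smoothing or Lagrange-multiplier argument that the extremal configuration has all nonzero roots equal), but as written you have deferred exactly the step where real-rootedness does its work, so the proposal is an accurate roadmap rather than a complete proof.
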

\noindent
On the other hand, just for a   row-stochastic  matrix $A$, it trivially holds that $$\mathrm{Per}(A) \leq \prod_{i=1}^n\sum_{j=1}^n A_{i,j}=1.$$
Since $\frac{n!}{n^n} \geq e^{-n}$, for a doubly-stochastic matrix, the permanent is between $e^{-n}$ and $1$.
Hence, if  $A$ is doubly stochastic, then we can  output $1$ and this is an $e^{n}$ approximation to its permanent.

\subsubsection{Matrix scaling}
The starting point  of the work Linial, Samorodnitsky, and Wigderson \cite{LSW98} (see also the journal version of this paper \cite{LSW98}) 
is the observation that the
permanent (of any matrix) has certain {\em symmetries}: For positive vectors $x,y \in \mathbb{R}_{>0}$, if we define $B:=XAY$ where $X$ and $Y$ are diagonal matrices corresponding to vectors $x$ and $y$ respectively, then we can write down the permanent of $B$ exactly:
\begin{equation}\label{eq:scaling}
\mathrm{Per}(B)=\left(\prod_{i=1}^n x_i  \right) \mathrm{Per}(A)\left(\prod_{j=1}^n y_j  \right).
\end{equation}
This operation of left and right multiplying $A$ with diagonal matrices is referred to as (matrix) {\em scaling}.
Thus, in the case $A$ is not doubly stochastic (something that can be  efficiently checked), one can try to find a  scaling $(x,y)$ of $A$ such that $B$ is doubly stochastic.
If so, one can output 
$$\frac{1}{\left(\prod_{i=1}^n x_i  \right) \left(\prod_{j=1}^n y_j  \right)}$$ 
as an approximation for $\mathrm{Per}(A)$.
From the discussion in the previous section, such an algorithm would be an $e^{n}$  approximation to the permanent.

An approach to finding such a scaling is to do the following iteratively: Find a vector $x$ that ensures that all the rows of the scaled $A$ sum up to one, and then pick a $y$ that ensures the same for the columns.
This matrix scaling algorithm was suggested by Sinkhorn \cite{Sinkhorn64}.
Franklin and Lorenz \cite{FL89} analyzed the convergence rate of Sinkhorn's scaling algorithm.
They showed that, when a doubly-stochastic scaling of $A$ exists, Sinkhorn's algorithm outputs a matrix $B$ that is $\eps$ away (in $\ell_\infty$-distance) from being doubly stochastic and, to do so, it takes a polynomial number of iterations in  the number of bits needed to represent the input matrix $A$ and  $\frac{1}{\eps}$.
Kalantari and Khachiyan \cite{KK96}   gave a convex-optimization-based algorithm to check if $A$ can be scaled to a doubly-stochastic matrix and, if it can be, then to find an $\eps$ approximation to it.
The running time of their algorithm is polynomial in the number of bits needed to represent the input matrix $A$ and  $\log \frac{1}{\eps}$; thus, giving a deterministic polynomial time algorithm that approximates the permanent of a nonnegative $A$ to within a multiplicative factor of $e^n$.

The question that \cite{LSW98} studied is if the number of iterations can be made independent of the number of bits needed to represent $A$.
Such an algorithm, whose number of iterations does not depend on the entries of $A$, is referred to as a {\em strongly polynomial} time algorithm.
At its core, this turns out to be related to the following mathematical question:
{If  $\mathrm{Per}(A)>0$, then how small can it get as a function of the entries of $A$?}
Theorem \ref{thm:EF} \cite{Egorychev81,Falikman81} implies that, if $A$ is doubly stochastic, then this cannot get below $e^{-n}$.

\medskip
\noindent
{\bf Preprocessing step.} The idea in 
\cite{LSW98} is to augment Sinkhorn's scaling algorithm with a {\em preprocessing} step that, in the beginning, scales the columns of $A$ to ensure that the permanent of the new matrix is lower bounded by $n^{-n}.$ 
They do so by first efficiently finding a permutation $\sigma \in S_n$  that maximizes $\prod_{i=1}^nA_{i,\sigma(i)}$.
They then show that there is a positive diagonal matrix $Y$ such that $B=AY$ and, for all 
$1 \leq i,j \leq n$, $B_{i,\sigma(i)} \geq B_{i,j}$.
This ensures that if we normalize the rows of $B$ such that each of them sums up to one, the permanent of the resulting matrix is at least $\frac{1}{n^n}$.

\medskip
\noindent
{\bf Potential function and measuring progress.} 
To analyze the progress in Sinkhorn's scaling algorithm, \cite{LSW98} consider the permanent itself as the potential function.
If $A_t$ is the matrix at the beginning of the $t$th iteration of Sinkhorn's scaling algorithm, they show that, as long as $A_t$ is {\em far} from being doubly stochastic,
\begin{equation}\label{eq:per_improve}   \mathrm{Per}(A_{t+1}) \gtrsim \left( 1+ \frac{1}{n}\right)\mathrm{Per}(A_t).
\end{equation}
They use the following potential function that measures the distance of a matrix $B$ from being doubly stochastic:
\begin{equation}\label{eq:spotential} \mathrm{ds}(B):=\|R(B)-I\|_F^2 + \|C(B)-I\|_F^2.
\end{equation}
Here $R(B),C(B)$ are  diagonal matrices whose $(i,i)$th entries are the sum of the $i$th row and $i$th column respectively.

To gain some intuition why \eqref{eq:per_improve} is true, first note that if we have positive numbers $c_1,\ldots,c_n$ that sum up to $1$ and are more than $\delta$ distance from all one vector ($\|1-c\|_2^2 \approx \delta$) then 
$\prod_{i=1}^n c_i \lesssim 1-\frac{\delta}{2}$.
Hence, if we have a matrix $B$ that is row stochastic and we scale its columns to $1$, i.e., consider $B C^{-1}$, where $C$ is the diagonal matrix corresponding to the column sums of $B$, then 
$$\mathrm{Per}(BC^{-1})=\frac{\mathrm{Per}(B) }{\prod_{i=1}^nc_i} \gtrsim \mathrm{Per}(B) \cdot (1+\delta).$$
Thus, as long as $\delta \geq  \frac{1}{n}$, the permanent increases by a multiplicative factor of $1+\frac{1}{n}$.

\medskip
\noindent
{\bf Termination condition.}
If after $t$ iterations, $\mathrm{ds}(A_t) \geq \frac{1}{n}$, then $$\mathrm{Per}(A_{t+1}) \gtrsim \left(1+\frac{1}{n} \right)^t\mathrm{Per}(A_1).$$
Since the permanent of a row-stochastic matrix is upper bounded by $1$, and $\mathrm{Per}(A_1) \geq \frac{1}{n^n}$ due to the preprocessing step, the above cannot continue for more than about $n^2$ iterations. 
Thus, after roughly $n^2$ iterations, $\mathrm{ds}(A_{t}) < \frac{1}{n}$, and $A_{t}$ is close to a doubly stochastic matrix.
Finally, \cite{LSW98} prove an approximate version of Theorem \ref{thm:EF} 
and lower bound the permanent of approximately doubly-stochastic matrices. 
Roughly speaking, they show that if $B$ is row stochastic and $\mathrm{ds}(B)< \frac{1}{n}$, then $\mathrm{Per}(B)>\frac{1}{e^{n(1+o(1))}}$.
Thus, we can output the matrix produced after about $n^2$ iterations.
This completes the sketch of the proof of the following theorem.

\begin{theorem}\label{thm:lsw}{\bf (Approximating permanent via matrix scaling \cite{LSW98})}
There is an algorithm that, given an $n \times n$ nonnegative matrix $A$, computes a number $Z$ such that 
$\mathrm{Per}(A) \leq Z \leq e^{n(1+o(1))} \cdot \mathrm{Per}(A)$ using $\tilde{O}(n^5)$ elementary operations.
\end{theorem}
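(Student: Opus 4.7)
The plan is to analyze the two-phase algorithm sketched in the excerpt: a combinatorial preprocessing step followed by Sinkhorn's alternating row/column normalization, with the permanent itself serving as the potential function that controls termination. The key driver is the scaling identity \eqref{eq:scaling}, which tells us that once we have transformed $A$ by diagonal scalings $X,Y$ into a matrix $B$ that is (approximately) doubly stochastic, the bounds on $\mathrm{Per}(B)$ from an approximate version of Theorem~\ref{thm:EF} translate, via division by $\prod_i x_i \prod_j y_j$, into a multiplicative $e^{n(1+o(1))}$ approximation to $\mathrm{Per}(A)$. Handling the degenerate case $\mathrm{Per}(A)=0$ is done separately by detecting the absence of a perfect matching in the bipartite support of $A$ and outputting $Z=0$.

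For the preprocessing step, I would compute a permutation $\sigma^\star\in S_n$ maximizing $\prod_i A_{i,\sigma(i)}$ by solving the max-weight bipartite matching problem on the weights $\log A_{ij}$ (with $-\infty$ for zero entries). I then scale the columns so that every entry satisfies $B_{i,j}\le B_{i,\sigma^\star(i)}$ (e.g., divide column $j$ by $A_{\sigma^{\star -1}(j),j}$), and finally normalize each row to sum to~$1$ to obtain the starting matrix $A_1$. Because the diagonal entries along $\sigma^\star$ are maximal in their rows, each such entry is at least $1/n$ after row-normalization, which yields the crucial lower bound $\mathrm{Per}(A_1)\ge \prod_i A_1(i,\sigma^\star(i)) \ge n^{-n}$. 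Both the matching and the scaling are strongly polynomial.

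For the per-iteration progress bound, let $A_t$ denote the matrix at the start of Sinkhorn iteration $t$, which is row-stochastic, and let $C(A_t)$ be the diagonal matrix of column sums. The identity $\mathrm{Per}(A_t C(A_t)^{-1}) = \mathrm{Per}(A_t)/\prod_j C(A_t)_{j,j}$ combined with the quantitative AM--GM inequality ``if $c_1+\cdots+c_n = n$ and $\|c-\mathbf{1}\|_2^2\ge \delta$, then $\prod_j c_j \le 1-\Omega(\delta/n)$'' yields
\[
\mathrm{Per}(A_{t+1})\;\ge\;\bigl(1+\Omega(\mathrm{ds}(A_t)/n)\bigr)\,\mathrm{Per}(A_t),
\]
where $\mathrm{ds}(\cdot)$ is the potential from \eqref{eq:spotential}. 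So long as $\mathrm{ds}(A_t)\ge 1/n$, each iteration multiplies the permanent by at least $1+\Omega(1/n)$; combined with the trivial upper bound $\mathrm{Per}(A_t)\le 1$ for row-stochastic matrices and the lower bound $\mathrm{Per}(A_1)\ge n^{-n}$, this can persist for at most $O(n^2\log n)$ iterations. Each Sinkhorn iteration costs $O(n^2)$ arithmetic operations, so the total cost lies within the claimed $\tilde O(n^5)$ budget (with some slack for arithmetic precision).

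The part I expect to be the main obstacle is the \emph{approximate} van der Waerden step: proving that once $\mathrm{ds}(A_t)<1/n$ we still have $\mathrm{Per}(A_t)\ge e^{-n(1+o(1))}$. The exact Egorychev--Falikman bound (Theorem~\ref{thm:EF}) does not apply directly; instead one needs a robust/stability version showing that a small perturbation of the row and column sums from the doubly-stochastic polytope costs at most a $e^{o(n)}$ factor in the permanent. My plan would be to derive such a bound by projecting $A_t$ onto a genuinely doubly-stochastic matrix $A_t^\star$ at $L_1$-distance $O(\sqrt{\mathrm{ds}(A_t)/n}\cdot n)=o(1)$ (using Birkhoff--von Neumann style corrections), bounding the ratio $\mathrm{Per}(A_t)/\mathrm{Per}(A_t^\star)$ entry-wise, and then applying Theorem~\ref{thm:EF} to $A_t^\star$. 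Given such a robust bound, returning $Z := 1/\bigl(\prod_i x_i \prod_j y_j\bigr)$, where $x,y$ are the accumulated scaling factors, completes the proof.
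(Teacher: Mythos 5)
Your overall architecture is exactly the paper's: preprocess using a maximizing permutation to force $\mathrm{Per}(A_1)\ge n^{-n}$, run Sinkhorn iterations with the permanent as potential and the quantitative AM--GM inequality driving a $(1+\Omega(1/n))$ gain per iteration while $\mathrm{ds}(A_t)\ge 1/n$, terminate after $\tilde O(n^2)$ iterations, and finish with a robust (approximate) version of the Egorychev--Falikman bound. You also correctly identify the approximate van der Waerden step as the real technical content of \cite{LSW98}.

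There is, however, one concrete step that fails as written: your preprocessing scaling. Dividing column $j$ by $A_{\sigma^{\star-1}(j),j}$ makes the entries along $\sigma^\star$ equal to $1$, but it does \emph{not} make them maximal in their rows; e.g.\ for $A=\left(\begin{smallmatrix}10&5\\1&1\end{smallmatrix}\right)$ the identity is the maximizing permutation, yet after your column division the $(1,2)$ entry is $5>1=B_{1,1}$. Since the property $B_{i,\sigma^\star(i)}\ge B_{i,j}$ is exactly what gives each diagonal entry value $\ge 1/n$ after row normalization, and hence the lower bound $\mathrm{Per}(A_1)\ge n^{-n}$, this gap must be repaired. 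The standard fix is to take the scaling from the \emph{dual} of the assignment LP: optimality of $\sigma^\star$ yields potentials $u_i,v_j$ with $u_i+v_j\ge\log A_{ij}$ and equality on the matched pairs, and scaling row $i$ by $e^{-u_i}$ and column $j$ by $e^{-v_j}$ makes every entry at most $1$ with the $\sigma^\star$-entries equal to $1$; this is what the existence claim in the paper refers to. Two smaller points: your quantitative AM--GM statement ($\prod_j c_j\le 1-\Omega(\delta/n)$) is weaker by a factor of $n$ than what holds (and than what your subsequent iteration count uses, though even the weaker version still fits inside $\tilde O(n^5)$); and your plan for the robust van der Waerden bound via an $\ell_1$-projection onto the Birkhoff polytope does not automatically give the entrywise multiplicative control you invoke --- an $\ell_1$-close doubly stochastic matrix can differ from $A_t$ by a large ratio on small entries --- so that step, which you rightly flag as the main obstacle, still needs the actual argument from \cite{LSW98}.
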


\noindent
Subsequent to the work of  \cite{LSW98}, Jerrum, Sinclair, and Vigoda~\cite{JSV04}, building upon a long line of work, showed that the Markov Chain Monte Carlo framework can be deployed to obtain a randomized algorithm to estimate the permanent of any nonnegative matrix to within a factor of $1+\eps$ in time that is polynomial in the bit-lengths of $A$ and $\frac{1}{\eps}$.
As for deterministic algorithms, 
in a follow-up work, Gurvits and Samorodnitsky \cite{GS02} show how  scalings can be viewed as  solutions to certain  convex programs -- leading to convex programming relaxations for the permanent and better deterministic approximations; see Section \ref{sec:poly_cap} and \cite{StraszakV17Permanent} for a discussion.
This line of work on deterministic approximation algorithms has recently been generalized to a class of general  counting and optimization problems; see  \cite{StraszakV17,AnariG17}.

\subsection{Noncommutative  singularity testing and operator scaling}\label{sec:operator}

Edmonds \cite{Edmonds} considered the following generalization of checking whether the permanent of a nonnegative matrix is zero or not:
Given an $m$-tuple of $n \times n$ complex matrices $A_1,\ldots,A_m$, is there a singular matrix in their linear space (over $\mathbb{C}$) or not?
This {\em singularity} problem is equivalent to deciding if the polynomial 
$$p_{A_1,\ldots,A_m}(x_1,\ldots,x_m):=\det\left(x_1A_1+\cdots + x_mA_m \right)$$
is identically zero or not.
$p_{A_1,\ldots,A_m}$ is a homogeneous polynomial of degree $n$ and can be efficiently evaluated at any given point.
To see how deciding if a bipartite graph has a perfect matching is a special case of Edmonds' singularity problem, we let $A_i$ be the matrix which has a $1$ only at the entry corresponding to the $i$th edge in the associated graph and $0$ elsewhere; see 
\cite{LovaszSingular}.

There is a simple and efficient randomized algorithm to test this:  
Pick independent and random values for each of the variables $x_1,\ldots,x_m$ from the set $\{1,2,\ldots, 2n\}$ and output the value of  $p_{A_1,\ldots,A_m}$ for this input. 
It can be shown that if $p_{A_1,\ldots,A_m}$ is not identically zero then, with probability at least $\frac{1}{2}$, this algorithm outputs a nonzero value.
By repeating an appropriate number of times, this probability can be amplified to any number less than $1$.
This problem is an instance of the {\em Polynomial Identity Testing}  (PIT) problem where one is given a polynomial and the goal is to check if it is identically zero or not.
The randomized algorithm mentioned above works for PIT as well.
While for some special cases of PIT deterministic algorithms are known (e.g., the deterministic primality testing algorithm of Agrawal, Kayal, and Saxena \cite{AgrawalKaSa04}),
the problem of coming up with an efficient deterministic algorithm for PIT remains open.
We mention that Edmonds' singularity problem is almost the same as the fully general  PIT problem due to a result of Valiant \cite{Valiant79} that establishes the ``universality'' of the determinant.
\cite{KabanetsI04} proved that derandomizing PIT implies arithmetic circuit lower bounds for the complexity class {\sf NEXP}; tying the goal of derandomizing PIT to one of the central goals of theoretical computer science: that of proving circuit lower bounds.

Gurvits \cite{Gurvits04} considered a version of Edmonds' singularity problem and  reformulated it in terms of {\em completely positive operators} that take positive definite matrices to positive definite matrices.
Subsequently, he generalized the matrix scaling algorithm of Linial, Samorodnitsky, and Wigderson \cite{LSW98} to {\em operator scaling} for this problem.
He introduced a potential function -- {\em capacity} -- that can track the progress of the operator scaling algorithm and used it to give deterministic polynomial time algorithms for Edmonds' singularity problem for various special cases (Section \ref{sec:cap}).
However, he could not prove a bound on the number of iterations of his operator scaling in general. 
The main result of the paper by Garg, Gurvits, Oliviera, and Wigderson \cite{GargGOW16} is a bound on the number of iterations of Gurvits' operator scaling algorithm.
The key ingredient in their analysis is a lower bound on the   capacity of a completely positive operator (Section \ref{sec:cap_lower}).
This implies that Gurvits' operator scaling algorithm can also approximate the capacity of a completely positive operator to any accuracy in polynomial time.
Moreover, \cite{GargGOW16} show that this algorithm implies a  deterministic polynomial time algorithm for testing a {\em noncommutative} version of Edmonds' problem (Section \ref{sec:ncit}).
Here, prior to the work of \cite{GargGOW16}, the best algorithms (whether randomized or deterministic) required an exponential time algorithm \cite{IQS17}. 
In a companion paper Garg, Gurvits, Oliviera, and Wigderson \cite{GargGOW17} show the application of this operator scaling machinery to the various computational problems involving the Brascamp-Lieb inequalities (Section \ref{sec:BL}).

\subsubsection{Completely positive operator and its capacity}\label{sec:cap}
Let $M_n(\mathbb{C})$ denote the set of $n\times n$ matrices with complex entries.
Let $\mathrm{GL}_n(\mathbb{C})$ denote the degree $n$ general linear group of $n \times n$ invertible matrices over $\mathbb{C}$.
Let $\mathrm{SL}_n(\mathbb{C})$ denote the degree $n$ special linear group of $n \times n$  matrices over $\mathbb{C}$ with determinant $1$.
Both of the above are groups with respect to ordinary matrix multiplication.
Let ${H}_n(\mathbb{C})$ denote the set $n \times n$ Hermitian matrices.
Let $S^n_+$ denote the set of $n \times n$ complex  positive semi-definite (PSD) matrices and 
let $S^n_{++}$ denote the set of $n \times n$ complex  positive definite (PD) matrices.
For two matrices $X,Y$ their tensor product is denoted by $X \otimes Y$.

\begin{definition}\label{def:PT}{\bf (Completely positive operator)} For positive integers $n_1 \geq n_2$, an operator $T:M_{n_1}(\mathbb{C}) \to M_{n_2}(\mathbb{C})$ is said to be completely positive if there are $n_2 \times n_1$ complex matrices $A_1,\ldots,A_m$ such that, for $X \in S^{n_1}_{++}$,  $T(X)=\sum_{i=1}^m A_iXA_i^\dagger$.
The dual of $T$ is denoted by $T^*$ and is such that $T^*(Y)=\sum_{i=1}^m A_i^\dagger YA_i$ for $Y \in S^{n_2}_{++}$.
\end{definition}
\noindent
If $n_1=n_2=n$, we say that $T$ is a {\em square} operator. 
\begin{definition}\label{def:DSPT}{\bf (Doubly-stochastic completely positive operator)}
A completely positive operator $T:M_{n_1}(\mathbb{C}) \to M_{n_2}(\mathbb{C})$ is said to be doubly stochastic if $T\left(\frac{n_2}{n_1}I_{n_1}\right)=I_{n_2}$ and $T^*\left(I_{n_2}\right)=I_{n_1}$.
\end{definition}
 \cite{Gurvits04} introduced the following notion of capacity for completely positive operators.
\begin{definition}\label{def:cap}{\bf (Capacity of a completely positive operator \cite{Gurvits04})} For a completely positive operator $T:M_{n_1}(\mathbb{C}) \to M_{n_2}(\mathbb{C})$, its capacity is defined as
$$ \Cap(T):= \inf\left\{\frac{\det\left(\frac{n_2}{n_1}T(X)\right)}{\det(X)^{\frac{n_2}{n_1}}}: X \succ 0\right\}.$$
\end{definition}
We focus on the square case and return to the rectangular (nonsquare) case in Section \ref{sec:BL}.
In the square case ($n_1=n_2=n$),
$$ 
 \Cap(T):=\inf\{\det(T(X)): X \succ 0, \ \det(X)=1\}. 
$$
A square operator is said to be {\em rank decreasing} if there is an $X \succeq 0$ such that $\mathrm{rank}(T(X)) < \mathrm{rank}(X)$.
Operators that are not rank decreasing are referred to as {\em rank nondecreasing}. 
The analog of this property in the matrix case (for a nonnegative matrix $A$) is as follows: For every nonnegative vector $x$, the number of coordinates of the vector $Ax$ that are positive is at least the number of coordinates of $x$ that are positive.
This is just Hall's condition and implies that the permanent of $A$ is positive.
\cite{Gurvits04} proved that, for a completely positive operator,
 $\Cap(T)>0$ if and only if  $T$ is rank nondecreasing.
\cite{GargGOW16} give other conditions that are equivalent for a completely positive operator to be rank nondecreasing.
One such condition that is relevant to proving a lower bound on the capacity is that there exist $d \times d$ matrices $F_1,\ldots,F_m$ for some $d$ such that the polynomial \begin{equation}\label{eq:cap_det}\det(F_1\otimes A_1+ \cdots  + F_m\otimes A_m) \neq 0.
\end{equation}
Similar to the notion of distance to a matrix to being doubly stochastic in Definition \ref{def:DS}, consider the following distance of a completely positive operator from being doubly stochastic:
\begin{equation}\label{eq:dso}
 \mathrm{ds}_{O}(T):=\Tr((T(I)-I)^2)+ \Tr((T^*(I)-I)^2).
 \end{equation}
An analog of Equation \eqref{eq:scaling} that captures the symmetries of the operator setting is as follows:
Let $T$ be a completely positive square operator defined by $A_1,\ldots,A_m$ and $B,C \in \mathrm{GL}_n(\mathbb{C})$.
Then, if we define $T_{B,C}$ to be the operator defined by $BA_1C,\ldots,BA_mC$, then 
\begin{equation}\label{eq:operatorscaling}
    \Cap(T_{B,C})= |\det(B)|^2 \cdot   \Cap(T) \cdot  |\det(C)|^2.
\end{equation}
If $B,C \in \mathrm{SL}_n(\mathbb{C})$, then 
 $ \Cap(T_{B,C})=\Cap(T)$.
This is true because the capacity is defined in terms of determinants and, hence, the symmetries of the determinant arise.
It is worth noting that the polynomials in (the l.h.s. of) Equation \eqref{eq:cap_det} are invariant when $B,C$ have determinant $1$. 
In fact, these polynomials linearly span the space of all such invariant polynomials.

Moreover, suppose $T$ is a completely positive operator specified by $A_1,\ldots,A_m$ and either $\sum_{i=1}^m A_i A_i^\dagger =I$ (row-stochastic) or $\sum_{i=1}^m A_i^\dagger A_i =I$  (column-stochastic) then it follows from  the AM-GM inequality that 
\begin{equation}\label{eq:cap_upper}
\Cap(T) \leq \det(T(I)) \leq \left(\frac{\Tr (T(I))}{n} \right)^n=1.
\end{equation}

\subsubsection{Operator scaling}\label{sec:cap_lower}
We present a sketch of the operator scaling algorithm and its analysis.
Suppose $T$ is a  completely positive operator  specified by $m$ $n \times n$ matrices $A_1,\ldots,A_m$, where each entry of each matrix is an  integer bounded in absolute value by $M$.
Our goal is to decide if $\Cap(T)>0$ or not. 
Or equivalently, to decide if $T$ is rank nondecreasing.

An operator scaling of $T$ is given by positive matrices $B,C$ such that 
the operator $T_{B,C}$ defined by $B^{\frac{1}{2}}A_1C^{\frac{1}{2}},\ldots,B^{\frac{1}{2}}A_mC^{\frac{1}{2}}$ is doubly stochastic.
The left normalization (or scaling) of $T$, denoted by $T_L$, is defined as 
$$T_L(X):=T(I)^{-\frac{1}{2}}T(X) T(I)^{-\frac{1}{2}}$$
and the right normalization (or scaling) of $T$ is defined as $$T_R(X):=T(T^*(I)^{-\frac{1}{2}}X T^*(I)^{-\frac{1}{2}}).$$
It follows  that $T_L(I)=I$ and $T_R^*(I)=I$.

Gurvits' operator scaling algorithm \cite{Gurvits04} follows the same outline as the matrix scaling algorithm analyzed in \cite{LSW98}.
It first checks if both $T(I)$ and $T^*(I)$ are  nonsingular. 
If not, then $T$ is rank decreasing and the algorithm stops.
Else, it keeps performing left and right normalizations on $T$ until the distance to double stochasticity is below $\frac{1}{n}$.
If the operator $T$ is rank decreasing, then one can argue that the left and right normalizations cannot make it rank nondecreasing. 
Thus, the algorithm will always output rank decreasing in this case.

\cite{GargGOW16} prove that if $T$ is rank nondecreasing, then after a small-enough number of iterations $t$, the operator $T_t$ is such that $\mathrm{ds}_{O}(T_t) < \frac{1}{n}$.
This is analogous to the matrix case: They show that every iteration such where $\mathrm{ds}_{O}(T_t)>\frac{1}{n}$, 
$$ \Cap(T_{t+1}) \gtrsim \left(1+\frac{1}{n} \right) \Cap (T_t).$$
Since there is an upper bound of $1$ on the capacity of a row-stochastic  operator,  it remains to lower bound  $\Cap(T_1)$ when $\Cap(T)>0$.
We note that for the matrix case, we used permanent as a measure, but could have also used an appropriate notion of capacity as defined in Section \ref{sec:poly_cap}.

The main technical contribution of \cite{GargGOW16} is a lower bound on the capacity of a right-normalized completely positive operator. 
Let $T_A$ be a completely positive operator an operator specified by integer-valued matrices $A_1,\ldots,A_m$ each of whose entry is bounded in absolute value by $M$. 
Let $T$ be the right normalization of $T_A$.
Then, it follows that  $$\Cap(T)=\frac{\Cap(T_A)}{\det(T^*(I))}.$$
Thus, to lower bound $\Cap(T)$, it is sufficient to lower bound $\Cap(T_A)$ and upper bound $\det(T^*_A(I))$.
The latter follows from an upper bound on 
$$\Tr(T^*_A(I))= \sum_{i=1}^m \Tr \left(A_i^\dagger A_i\right) \leq M^2mn^2.$$
Thus, by the AM-GM inequality 
\begin{equation}\label{eq:det_upper}\det(T^*(I))\leq \left(\frac{\Tr(T^*_A(I))}{n}\right)^n \leq (Mmn)^n.
\end{equation}
The original proof of  a lower bound on the capacity of a nondecreasing completely positive operator $T_A$ relied on degree bounds in invariant theory; we return to it in the next section.
Here we mention their proof based on Alon's Combinatorial Nullstellensatz \cite{alon_1999}; see also \cite{Vishnoi_Null}. 
Alon's result states that if $p(z_1,\ldots,z_\ell)$ is a nonzero polynomial (over $\mathbb{C}$) with the degree of $z_i$ is $d_i$,
then there are nonnegative integers $(a_1,\ldots,a_\ell)$  such that $\sum_{i=1}^\ell a_i \leq d$ and $a_i \leq d_i$ such that $p(a_1,\ldots,a_\ell)\neq 0$.

From Equation \eqref{eq:cap_det}, we know that if  $T$ is rank nondecreasing, then there exist $d \times d$ matrices $F_1,\ldots,F_m$ for some $d$ such that the polynomial $\det(F_1\otimes A_1+ \cdots  F_m\otimes A_m) \neq 0$.
Thus, the (ordinary) polynomial $\det(X_1\otimes A_1+ \cdots  X_m\otimes A_m)$ (in the variables corresponding to entries of matrices $X_1,\ldots,X_m$) is nonzero.
Thus, Alon's result implies that there exist
integer-valued matrices $D_1,\ldots,D_m$  such that 
$\det(D_1\otimes A_1+ \cdots  D_m\otimes A_m) \neq 0$ and,  importantly, the sum of the square of all the entries of all the matrices is bounded by $n^2d$.

Let $X \succ 0$ and define $C_i:=T_A(X)^{-\frac{1}{2}}A_iX^{\frac{1}{2}}$.
Thus, $\sum_{i=1}^m C_iC_i^\dagger=I$ and, hence $\Tr\left( \sum_{i=1}^m C_iC_i^\dagger\right)=n$.
Now, let $Y:=D_1\otimes C_1 + \cdots + D_m \otimes C_m$.
Then, on the one hand, by the AM-GM inequality, $$\det(YY^\dagger) \leq \left(\frac{\Tr(YY^\dagger)}{nd} \right)^{nd} \leq \left(\frac{n^3d}{nd}\right)^{nd}= n^{2dn},$$ where one uses the bound on the sum of the square of entries of $D_i$s.
On the other hand, 
$$ \det(YY^\dagger) = |\det(Y)|^2 \geq | \det(D_1 \otimes A_1 + \cdots + D_m \otimes A_m)|^2 \det(X)^d \cdot \det(T_A(X))^{-d}.$$
Since all entries of $D_1 \otimes A_1 + \cdots + D_m \otimes A_m$ are integers and its determinant is nonzero, $|\det(D_1 \otimes A_1 + \cdots + D_m \otimes A_m)|\geq 1$, implying 
\begin{equation}\label{det_lower}
\det(T_A(X)) \geq (\det(YY^\dagger))^{-\frac{1}{d}} \geq n^{-\frac{2dn}{d}}= \frac{1}{n^{2n}}. 
\end{equation}
Thus, combining Equations \eqref{eq:det_upper} and \eqref{det_lower}, we obtain the following theorem.
\begin{theorem}\label{thm:cap_lower}{\bf (Lower bound on the capacity of a rank nondecreasing  operator \cite{GargGOW16})}
Let $T$ be the right-normalized version of a rank nondecreasing and  completely positive operator given by $A_1,\ldots, A_m$, where each $A_i$ is an $n\ \times n$ integer matrix with each entry bounded in  absolute value  by $M$.
Then, $\Cap(T) \geq \frac{1}{(Mmn^3)^n}$.
\end{theorem}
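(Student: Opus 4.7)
The plan is to factor the capacity bound through the un-normalized operator $T_A$ (with the $A_i$'s as given), since $T$ is defined from $T_A$ by a right scaling. Using the scaling symmetry analogous to Equation~\eqref{eq:operatorscaling} (with $B = I$ and $C = T_A^*(I)^{-1/2}$), I would first establish $\Cap(T) = \Cap(T_A)/\det(T_A^*(I))$. The denominator is the easy direction: since each $A_i$ has integer entries of magnitude at most $M$, one has $\Tr(T_A^*(I)) = \sum_i \Tr(A_i^\dagger A_i) \leq M^2 m n^2$, and AM--GM on the eigenvalues gives $\det(T_A^*(I)) \leq (Mmn)^n$. So the real task is to show $\Cap(T_A) \geq n^{-2n}$; the two bounds combine to give the claimed $\Cap(T) \geq 1/(Mmn^3)^n$.

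To lower bound $\Cap(T_A) = \inf_{\det X = 1} \det(T_A(X))$, I would exploit the equivalent characterization of rank nondecreasing mentioned after Equation~\eqref{eq:cap_det}: there exist some dimension $d$ and $d \times d$ complex matrices $F_1, \ldots, F_m$ so that $\det(F_1 \otimes A_1 + \cdots + F_m \otimes A_m) \neq 0$. Viewed as a polynomial in the $m d^2$ entries of formal matrix variables $X_1, \ldots, X_m$, this determinant is a nonzero polynomial of total degree $nd$ in which each variable appears with degree at most $nd$. Applying Alon's Combinatorial Nullstellensatz, one extracts integer matrices $D_1, \ldots, D_m$ on which the polynomial does not vanish and whose entries are small integers, say with $\sum_{i,j,k}(D_i)_{jk}^2 \leq n^2 d$.

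With these witnesses in hand, I would fix any $X \succ 0$ with $\det X = 1$, set $C_i := T_A(X)^{-1/2} A_i X^{1/2}$ (so $\sum_i C_i C_i^\dagger = I$ and hence $\Tr\sum_i C_i C_i^\dagger = n$), and analyze the $nd \times nd$ matrix $Y := \sum_i D_i \otimes C_i$ in two ways. For the upper bound, $\Tr(YY^\dagger) \leq (\sum_{i,j,k}(D_i)_{jk}^2) \cdot \Tr(\sum_i C_i C_i^\dagger) \leq n^3 d$, so AM--GM gives $\det(YY^\dagger) \leq (n^3 d / (nd))^{nd} = n^{2dn}$. For the lower bound, pulling out the change of variables $A_i = T_A(X)^{1/2} C_i X^{-1/2}$ inside the tensor-sum determinant yields $\det(YY^\dagger) = |\det(Y)|^2 \geq \det(T_A(X))^{-d} \det(X)^d \cdot |\det(\sum_i D_i \otimes A_i)|^2$, and the last factor is the square of a nonzero integer, hence at least $1$. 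Comparing the two bounds (and using $\det X = 1$) gives $\det(T_A(X)) \geq n^{-2n}$, independent of the auxiliary dimension $d$.

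The step I expect to be the subtlest is the application of the Combinatorial Nullstellensatz: one needs not merely some witness $F_i$'s but integer witnesses with entry magnitudes controlled in a way that precisely matches the dimension $d$, so that $d$ cancels out of the final exponent. The degree-per-variable bound of $nd$ is exactly what is needed to produce $D_i$'s with $\ell_2^2$-mass at most $n^2 d$, and this specific scaling is what makes the upper and lower bounds on $\det(YY^\dagger)$ compatible and yields a $d$-free final estimate. Everything else---the scaling identity for capacity, the AM--GM estimates, and the integer-determinant trick---is routine once this key quantitative witness has been produced.
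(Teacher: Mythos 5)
Your proposal is correct and follows essentially the same route as the paper's own argument: factoring $\Cap(T)=\Cap(T_A)/\det(T_A^*(I))$, bounding the determinant via the trace and AM--GM, and lower-bounding $\Cap(T_A)$ by extracting small integer witnesses $D_1,\ldots,D_m$ from Alon's Combinatorial Nullstellensatz and comparing the two estimates on $\det(YY^\dagger)$ for $Y=\sum_i D_i\otimes C_i$. The quantitative details (the $n^2d$ mass bound on the $D_i$'s, the $n^{2dn}$ upper bound, and the cancellation of $d$) match the paper's proof sketch exactly.
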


\noindent
As discussed above, this implies the following theorem to check if a completely positive operator is rank nondecreasing or, equivalently, if its capacity is positive.
\begin{theorem}\label{thm:ranknondecreasing}{\bf (Checking if  a completely positive  operator is rank nondecreasing \cite{GargGOW16})}
There is an algorithm that, given a completely positive operator $T$ 
given by $A_1,\ldots, A_m$, where each $A_i$ is an $n\ \times n$ integer matrix with each entry bounded in  absolute value  by $M$, decides if $T$ is rank nondecreasing or not in time polynomial in $n,m$, and $\log M$.
\end{theorem}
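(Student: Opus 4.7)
The plan is to analyze Gurvits' operator scaling algorithm using $\Cap(T)$ as a potential function, sandwiching it between the lower bound from Theorem~\ref{thm:cap_lower} and the upper bound from Equation~\eqref{eq:cap_upper}, and showing that each iteration makes multiplicative progress.

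The algorithm proceeds as follows. Given a completely positive operator $T$ specified by integer matrices $A_1, \ldots, A_m$, first check whether $T(I)$ and $T^*(I)$ are both nonsingular; if either fails, output ``rank decreasing'' (since $I \succ 0$ maps to something of lower rank). Otherwise, initialize $T_0$ as the right normalization of $T$, so that $T_0^*(I) = I$, and then alternately apply left and right normalizations to produce a sequence $T_0, T_1, T_2, \ldots$. Halt and output ``rank nondecreasing'' as soon as $\mathrm{ds}_O(T_t) < 1/n$; if the halting condition fails to trigger within a prespecified polynomial iteration bound $N = O(n^2 \log(Mmn))$, output ``rank decreasing''.

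For correctness in the rank nondecreasing case, after the initial right normalization $T_0$ is column-stochastic so $\Cap(T_0) \leq 1$ by Equation~\eqref{eq:cap_upper}, while Theorem~\ref{thm:cap_lower} together with the scaling identity~\eqref{eq:operatorscaling} yields $\Cap(T_0) \geq (Mmn^3)^{-n}$. The per-step progress lemma asserts that whenever $\mathrm{ds}_O(T_t) \geq 1/n$, the subsequent normalization yields $\Cap(T_{t+1}) \geq (1 + \Omega(1/n)) \cdot \Cap(T_t)$. This follows because a normalization scales the operator by $T_t(I)^{-1/2}$ (or $T_t^*(I)^{-1/2}$), and \eqref{eq:operatorscaling} shows that $\Cap$ is then divided by $\det(T_t(I))$, which by AM--GM is bounded away from $1$ by an amount governed by the deviation of the eigenvalues of $T_t(I)$ from $1$, precisely what $\mathrm{ds}_O$ measures. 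Since the capacity remains at most $1$ after every normalization (each iterate is row-stochastic or column-stochastic), the number of ``bad'' iterations with $\mathrm{ds}_O \geq 1/n$ is at most $N = O(n \log((Mmn^3)^n) / \log(1 + 1/n)) = O(n^2 \log(Mmn))$, so the halting condition must trigger within $N$ steps. Conversely, in the rank decreasing case, conjugation by invertible linear maps preserves the rank-decreasing witness, so $\Cap(T_t) = 0$ throughout and the halting threshold $\mathrm{ds}_O(T_t) < 1/n$ is never met (indeed capacity staying zero is incompatible with being close to doubly stochastic, since a doubly stochastic operator has $\Cap \geq$ an analogue of the van der Waerden bound). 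Hence the algorithm is correct.

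The main obstacle is implementing the iterations with polynomially bounded bit complexity, since $T_t(I)^{-1/2}$ generally has irrational entries and naive exact arithmetic would cause bit-lengths to explode. The remedy is to truncate all entries to $\mathrm{poly}(n, m, \log M)$ bits of precision at each step, then verify that this approximate normalization still yields an $\Omega(1/n)$ multiplicative increase in capacity whenever the true $\mathrm{ds}_O$ is large. This in turn requires controlling the condition numbers of $T_t(I)$ and $T_t^*(I)$ in terms of $\Cap(T_t)$ (via the smallest-eigenvalue bound implicit in the capacity lower bound) and bounding the propagation of the truncation error through the iterative process, so that a robust version of the progress lemma survives. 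Discharging this numerical analysis, while keeping the overall iteration count polynomial, is the main technical content beyond the conceptual scheme; once done, Theorem~\ref{thm:ranknondecreasing} follows.
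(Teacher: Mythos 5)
Your proposal is correct and follows essentially the same route as the paper: Gurvits' alternating normalization with capacity as the potential, the upper bound from Equation~\eqref{eq:cap_upper}, the lower bound of Theorem~\ref{thm:cap_lower} after right normalization, the multiplicative progress per iteration when $\mathrm{ds}_O \geq 1/n$, and the resulting $O(n^2\log(Mmn))$ iteration bound. The paper likewise defers the bit-complexity/truncation analysis to \cite{GargGOW16}, which you correctly identify as the remaining technical work.
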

While we did bound the number of iterations needed by Gurvits' operator scaling algorithm for the above theorem, we omitted a discussion on ensuring that the bit complexity of the numbers that arise in the execution of the algorithm remain polynomially bounded in the input bit length; see \cite{GargGOW16} for details.

\cite{GargGOW16} also show how an adaptation of Gurvits' operator scaling algorithm can be used to obtain an approximation of the operator capacity. 
We omit the algorithm and the proof.

\begin{theorem}\label{thm:capacity}{\bf (Approximating the capacity of an  operator \cite{GargGOW16})}
There is an algorithm that, given a completely positive operator $T$ on dimension $n$, and described by $b$ bits, outputs a $1+\eps$ multiplicative approximation to $\Cap(T)$ in time polynomial in $n,b,\frac{1}{\eps}$.
\end{theorem}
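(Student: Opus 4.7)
The plan is to adapt Gurvits' operator scaling algorithm, already used to prove Theorem \ref{thm:ranknondecreasing}, so that instead of merely detecting whether $\Cap(T) > 0$, it produces a numerical approximation. First I would invoke Theorem \ref{thm:ranknondecreasing} to test whether $T$ is rank nondecreasing; if not, output $0$ exactly. Otherwise I would iterate the scaling procedure $T_1 = T, T_2, T_3, \ldots$ (alternating left and right normalizations), tracking the accumulated positive-definite scaling matrices and the resulting determinantal factor $D_t$ such that, by the transformation rule \eqref{eq:operatorscaling}, $\Cap(T_t) = D_t \cdot \Cap(T)$ at every step. Thus $\Cap(T) = \Cap(T_t)/D_t$, so an approximation to $\Cap(T_t)$ combined with the bookkeeping quantity $D_t$ yields an approximation to $\Cap(T)$.

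The key analytic ingredient I would establish is a quantitative ``approximately doubly stochastic implies capacity close to $1$'' lemma: if $\mathrm{ds}_{O}(T_t) < \delta$, then $1 - \eta \leq \Cap(T_t) \leq 1 + \eta$ for some $\eta$ polynomial in $\delta$ and $n$ (e.g., $\eta = O(n\sqrt{\delta})$). The upper bound, after left-normalization, is immediate from $\Cap(T_t) \leq \det(T_t(I))$ together with $\|T_t(I) - I\|_F^2 < \delta$; the lower bound would be proved by a noncommutative analog of the approximate van der Waerden estimate used in \cite{LSW98} for matrices, applying the variational characterization of $\Cap$ in Definition \ref{def:cap} and standard matrix perturbation bounds. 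Granted this lemma, I would pick $\delta = \Theta(\eps^2/n^2)$, run the scaling until the first iterate $T_t$ with $\mathrm{ds}_{O}(T_t) < \delta$, and output $1/D_t$ as the estimate; this number is within a multiplicative factor $(1 + O(\eps))$ of $\Cap(T)$.

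For the iteration bound, I would reuse the potential argument from Section \ref{sec:cap_lower}: each iteration with $\mathrm{ds}_{O}(T_t) \geq \delta$ multiplies $\Cap(T_t)$ by at least $1 + \Omega(\delta/n)$, while Theorem \ref{thm:cap_lower} provides the initial lower bound $\Cap(T_1) \geq (Mmn^3)^{-n}$ after a suitable preprocessing normalization, and Equation \eqref{eq:cap_upper} caps $\Cap(T_t) \leq 1$ throughout. Combining these bounds gives at most $O(n^2 \log(Mmn)/\delta) = \poly(n, b, 1/\eps)$ scaling iterations, each of which runs in time polynomial in the bit-complexity of the current description of $T_t$.

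The main obstacle I anticipate is twofold. The first and more conceptual difficulty is the quantitative stability lemma for $\Cap$ under the $\mathrm{ds}_{O}$ metric: in the matrix setting it follows from the approximate van der Waerden inequality, but in the noncommutative setting it demands either a direct perturbation argument around the doubly stochastic fixed point or a careful extraction of a bound from the degree/determinantal chain underlying Theorem \ref{thm:cap_lower}. The second is bit-complexity control: the iterates $T_t$ involve inverses and square-roots of positive-definite matrices, so intermediate quantities must be computed in finite precision, and one must show that rounding errors inverse-polynomial in $nb/\eps$ propagate into only a $(1 + O(\eps))$ distortion of $D_t$ and do not corrupt the $\mathrm{ds}_{O}$ stopping criterion. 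Both issues are addressable using techniques already implicit in \cite{GargGOW16} and require careful numerical accounting but no fundamentally new ideas.
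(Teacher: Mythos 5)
Your plan is correct and is essentially the argument of \cite{GargGOW16}: the survey itself states Theorem~\ref{thm:capacity} without proof (``we omit the algorithm and the proof''), but your reconstruction is exactly the extension of the potential-function analysis of Section~\ref{sec:cap_lower} that the cited source carries out, namely tracking the determinantal factor $D_t$ under scaling and stopping once $\mathrm{ds}_{O}(T_t)$ is small. The one lemma you leave open --- that $\mathrm{ds}_{O}(T_t)<\delta$ forces $\Cap(T_t)$ to lie within $1\pm\mathrm{poly}(n)\sqrt{\delta}$ of $1$ --- is indeed true and is an instance of the quantitative noncommutative duality bound recorded later as Theorem~\ref{thm:nc}, since the weight margin $\gamma(\pi)$ of the left-right action is inverse-polynomial in $n$.
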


\noindent
In a subsequent work, B\"urgisser,  Garg, Oliveira,  Walter, and Wigderson \cite{BurgisserGOWW18}
present a  generalization of operator scaling to {\em tensor scaling}; we omit the details.
We note that, unlike the matrix and operator scaling case, to test scalability, it is not sufficient
to take $\eps$ which is polynomially small. 
Currently, there is no known polynomial time algorithm for testing the scalability of tensors.

In another follow-up work, B\"urgisser, Franks, Garg, Oliveira,  Walter, and Wigderson \cite{BurgisserFGOWW18} study the {\em nonuniform} version of scaling where one is given prescribed {\em marginals} and an input matrix/operator/tensor, and the goal is to decide if we can scale
the input to have the prescribed marginals?
For instance, instead of scaling a nonnegative matrix so that the row sums and column sums are all one, one may ask to find a scaling to a specified row sum vector $r$ and a column sum vector $c$.
 In the matrix scaling case, the
theory of nonuniform scaling is not much different from the theory of uniform scaling. 
However in
the operator and tensor scaling settings, the nonuniformity presents additional challenges; see \cite{BurgisserFGOWW18}.

\subsubsection{Noncommutative singularity and identity testing}\label{sec:ncit}
Let $A_1,\ldots,A_m \in M_n(\mathbb{C})$ and consider $x_1,\ldots,x_m$ to be noncommutative variables.
The algorithmic problem, which is a noncommutative version of Edmonds' singularity problem, is to check if $L:=\sum_{i=1}^m x_iA_i$ is invertible (nonsingular) over the skew-field (also known as division ring or field of fractions) of $x_1,\ldots,x_n$.
This notion of nonsingularity is nontrivial to define and 
there are several equivalent ways to define it. 
Perhaps the simplest is if there is a way of ``plugging in'' matrix for each $x_i$ to get an invertible matrix, i.e., do there exist $d \times d$ matrices $B_1, \ldots, B_m$ (for some $d$) s.t. $\sum_{i=1}^m B_i \otimes A_i$ is invertible.

The connection between the noncommutative singularity problem and the capacity of a completely positive operator is as follows:
Consider the completely positive operator $L(X):=\sum_{i=1}^m A_i X A_i^\dagger$
defined by the matrices $A_1,\ldots,A_m$ input to the noncommutative singularity problem.
Then,  $\sum_{i=1}^m x_iA_i$ is singular over the skew-field if and only if there is an $X \succ 0$ such that $\mathrm{rank}(L(X)) < \mathrm{rank}(X)$, i.e., the completely positive operator $L$ is rank decreasing.
Thus, from Theorem \ref{thm:ranknondecreasing}, it immediately follows that the problem of checking noncommutative singularity is in {\sf P}.
\begin{theorem}\label{thm:ncpit}{\bf (Noncommutative singularity testing \cite{GargGOW16})}
There is a deterministic algorithm, that given $m$ $n \times n$ matrices $A_1,\ldots,A_m$ whose entries need at most $b$ bits to represent, decides in time $\mathrm{poly}(n,m, b)$ if the matrix $L = \sum_{i=1}^m x_iA_i$ is invertible
over the free skew field.
\end{theorem}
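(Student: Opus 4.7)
The plan is to reduce noncommutative singularity testing to the problem of deciding whether an associated completely positive operator is rank nondecreasing, and then invoke Theorem~\ref{thm:ranknondecreasing}. Given the input matrices $A_1, \ldots, A_m$, I would define the completely positive operator $T : M_n(\mathbb{C}) \to M_n(\mathbb{C})$ by $T(X) = \sum_{i=1}^m A_i X A_i^{\dagger}$. Since the entries of each $A_i$ use $b$ bits, the matrices $A_i$ have integer (or rational) entries of absolute value at most $M = 2^{b}$ (clearing denominators if necessary), so Theorem~\ref{thm:ranknondecreasing} provides a deterministic algorithm running in time $\mathrm{poly}(n, m, \log M) = \mathrm{poly}(n, m, b)$ that decides whether $T$ is rank nondecreasing.

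The heart of the argument is the equivalence, stated in the paragraph preceding the theorem, between invertibility of $L = \sum_{i=1}^m x_i A_i$ over the free skew field and the property that $T$ is rank nondecreasing. In one direction, if $L$ is invertible over the skew field, then by the standard characterization via matrix substitutions there exist $d \times d$ matrices $B_1, \ldots, B_m$ with $\sum_{i=1}^m B_i \otimes A_i$ invertible; this is precisely the condition in \eqref{eq:cap_det} that the excerpt identifies as equivalent to $T$ being rank nondecreasing (equivalently $\Cap(T) > 0$, by Gurvits' theorem). For the converse, if $T$ is rank decreasing, then there is a positive semidefinite $X$ with $\mathrm{rank}(T(X)) < \mathrm{rank}(X)$, which implies via the ``shrunk subspace'' reformulation that any matrix tuple $(B_1, \ldots, B_m)$ of arbitrary size $d$ yields $\sum_{i=1}^m B_i \otimes A_i$ with a nontrivial kernel, so no evaluation of $L$ is invertible and $L$ is singular over the skew field.

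Having established the equivalence, the algorithm is simply: build $T$ from the input, run the operator-scaling-based decision procedure of Theorem~\ref{thm:ranknondecreasing}, and output its answer. The total running time is $\mathrm{poly}(n, m, b)$, as required. One must also check that intermediate iterates of operator scaling have bit complexity polynomially bounded in the input size; this is a standard but nontrivial bookkeeping step (acknowledged in the discussion after Theorem~\ref{thm:ranknondecreasing}) that is handled by periodically rounding the scaling matrices $B, C$ while maintaining the capacity progress guaranteed by Theorem~\ref{thm:cap_lower}.

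The main obstacle is the equivalence between symbolic invertibility over the free skew field and the operator-theoretic property of being rank nondecreasing. Defining invertibility over the free skew field rigorously requires Cohn's theory of the universal field of fractions, and the link to matrix evaluations (and hence to the tensor formulation in \eqref{eq:cap_det}) relies on Cohn--Fortin--Reutenauer style results on the inner rank of matrices over free algebras. Once that bridge is in place, the algorithmic content is entirely captured by Theorem~\ref{thm:ranknondecreasing}, whose proof already supplies the nontrivial capacity lower bound needed to bound the number of scaling iterations.
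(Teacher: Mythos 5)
Your proposal is correct and follows essentially the same route as the paper: form the completely positive operator $T(X)=\sum_{i=1}^m A_i X A_i^\dagger$, use the equivalence between singularity of $L$ over the free skew field and $T$ being rank decreasing, and invoke Theorem~\ref{thm:ranknondecreasing}. The paper states this more tersely (treating the equivalence as given), whereas you sketch both directions of the equivalence and flag the bit-complexity bookkeeping, but the underlying argument is identical.
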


\noindent
Polynomial identity testing, in the commutative setting, 
 captures the polynomial and rational function identity test for formulas \cite{Valiant79}.
The same is not true in the noncommutative setting.
However, Cohn \cite{Cohn} proved that there is an efficient algorithm that converts every arithmetic formula
$\phi(x)$ in noncommuting variables  of size $s$ to a symbolic matrix $L_\phi$ of size $\mathrm{poly}(s)$, such that the
rational expression computed by $\phi$ is identically zero if and only if $L_\phi$ is singular.
Theorem \ref{thm:ncpit} implies that 
there is a deterministic algorithm, that,  for any noncommutative formula over $\mathbb{Q}$ of size $s$ and bit complexity $b$,
determines in $\mathrm{poly}(s, b)$ steps if it is identically zero.
Thus, the noncommutative rational identity testing problem is in {\sf P}; see also the works of  \cite{IvanyosQS18,HamadaHirai} for different proofs of this result. 
Note that Theorem \ref{thm:ncpit} requires access to the matrices $A_1,\ldots,A_m$.
The problem of proving an analogous result when we have only black-box access to $\sum_{i=1}^m x_iA_i$ remains open.
We note that, in the noncommutative setting, inversions are nontrivial to handle than in the commutative setting where we can push them out and eliminate them. 
Indeed, an efficient deterministic algorithm to check if a noncommutative formula {\em without} inversions is identically zero was known; see Raz and Shpilka \cite{RazS05}.

\subsubsection{Brascamp-Lieb constants}
\label{sec:BL}

Let $n$, $m$, and $(n_j)_{j\in[m]}$ be positive integers and $p:=(p_j)_{j\in[m]}$ be nonnegative real numbers.
Let $B:=(B_j)_{j\in[m]}$ be an $m$-tuple of linear transformations where $B_j$ is a surjective linear transformation from $\R^{n}$ to $\R^{n_j}$.
The corresponding  Brascamp-Lieb {\em datum} is denoted by $(B,p)$.
The Brascamp-Lieb inequality states that for each Brascamp-Lieb datum $(B,p)$ there exists a constant $C(B,p)$ (not necessarily finite) such that for any selection of real-valued, nonnegative, Lebesgue measurable functions $f_j$ where $f_j:\R^{n_j}\rightarrow \R$,
\begin{equation}
\int_{x\in\R^n} \left(\prod_{j\in[m]} f_j(B_j x)^{p_j}\right)dx
\leq
C(B,p) \prod_{j\in[m]}\left(\int_{x\in \R^{n_j}} f_j(x) dx \right)^{p_j}.
\label{eq:BLinequality}
\end{equation}
The smallest constant that satisfies \eqref{eq:BLinequality} for any choice of $f:=(f_j)_{j\in[m]}$ satisfying the properties mentioned above is called the Brascamp-Lieb {\em constant} and we denote it by $\BL(B,p)$. 
Brascamp-Lieb inequalities generalize  many inequalities used in analysis and all of mathematics, such as the H\"older inequality and Loomis-Whitney; see the paper by Brascamp and Lieb \cite{brascamp1976best}. 

A Brascamp-Lieb datum $(B,p)$ is called \emph{feasible} if $\BL(B,p)$ is finite, otherwise, it is called \emph{infeasible}.
Bennett,  Carbery,  Christ,  and Tao \cite{bennett2008brascamp} proved that the constant $ \mathrm{BL}(B,p)$ is nonzero whenever $p$ belongs to the  set $P_B \subseteq \R^m$ defined as follows:
$$\textstyle P_B := \left\{p\in \R^m_{\geq 0}: \sum_{j=1}^m p_j \dim(B_j U) \geq \dim(U), \mbox{ for every lin. subspace }U\subseteq \R^n\right\}.$$
Note that the above definition has infinitely many linear constraints on $p$ as $V$ varies over different subspaces of $\R^n$.
However, there are only finitely many different linear restrictions as $\dim(B_j V)$ can only take integer values from $[n_j]$.
Consequently, $P_{B}$ is a convex set and, in particular, a polytope. 
Examples of Brascamp-Lieb polytopes include matroid basis polytopes and linear matroid intersection polytopes; see \cite{GargGOW17}.

A Brascamp-Lieb inequality is nontrivial only when $(B,p)$ is a feasible Brascamp-Lieb datum.
Therefore, it is of interest to  characterize feasible Brascamp-Lieb data and compute the corresponding Brascamp-Lieb constant.
Towards this, Lieb~\cite{lieb1990gaussian} showed that one needs to consider only Gaussian functions as inputs for \eqref{eq:BLinequality}.
This result suggests the following characterization of the Brascamp-Lieb constant as an optimization problem.
\begin{theorem}\label{thm:lieb}{\bf (Gaussian maximizers \cite{lieb1990gaussian})}
Let $(B,p)$ be a Brascamp-Lieb  datum with $B_j\in\R^{n_j\times n}$ for each $j\in[m]$.
Then,
\begin{equation}
\frac{1}{\BL(B,p)^2}=\inf\left\{\frac{\det\left({\sum_{j=1}^m p_j B_j^\top Y_j B_j}\right)}{\prod_{j=1}^m \det(Y_j)^{p_j}}: Y_j\in \R^{n_j \times n_j}, Y_j\succ 0, j=1,2, \ldots, m\right\}.
\label{eq:BLconstant}
\end{equation}
\label{thm:BLconstant}
\end{theorem}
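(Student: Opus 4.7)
The plan is to prove both directions of the claimed equality. For the easy direction---the upper bound $1/\BL(B,p)^2 \le \inf(\cdot)$---I would substitute centered Gaussian test functions into inequality \eqref{eq:BLinequality}. Specifically, for any $Y_j \succ 0$ set $f_j(y) := \exp(-\pi\, y^\top Y_j y)$; by the standard Gaussian integral identity $\int_{\R^k} e^{-\pi y^\top A y}\,dy = \det(A)^{-1/2}$, one computes
\[
\int_{\R^n}\prod_j f_j(B_j x)^{p_j}\,dx
= \det\Bigl(\textstyle\sum_j p_j B_j^\top Y_j B_j\Bigr)^{-1/2}, \qquad
\prod_j \Bigl(\int f_j\Bigr)^{p_j} = \prod_j \det(Y_j)^{-p_j/2}.
\]
Substituting into \eqref{eq:BLinequality} and squaring yields $1/\BL(B,p)^2 \le \det(\sum_j p_j B_j^\top Y_j B_j)/\prod_j \det(Y_j)^{p_j}$ for every $(Y_j)$; taking the infimum finishes this half.

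The reverse inequality requires showing that Gaussians are in fact extremizers, which is the harder and deeper statement. My plan is to use a heat-flow monotonicity argument in the spirit of Carlen--Lieb--Loss and Bennett--Carbery--Christ--Tao. Given arbitrary nonnegative $f_j \in L^1(\R^{n_j})$, evolve them by the Gaussian heat semigroup, $u_j(t,\cdot) := f_j * \gamma_t$. Since the heat flow conserves $L^1$ mass, $\int u_j(t,\cdot) = \int f_j$ for all $t \ge 0$, so the right-hand side of \eqref{eq:BLinequality} is $t$-invariant. The key claim is that the left-hand side, $Q(t) := \int_{\R^n}\prod_j u_j(t, B_j x)^{p_j}\,dx$, is monotone nondecreasing in $t$. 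To prove this I would differentiate $Q$, use $\partial_t u_j = \tfrac{1}{4\pi}\Delta u_j$ and integration by parts to rewrite $Q'(t)$ as an integral over $x$ of a quadratic form in the pulled-back score vectors $B_j^\top \nabla \log u_j(t, B_j x)$, and then show pointwise nonnegativity. As $t \to \infty$ each $u_j$ becomes asymptotically Gaussian after the natural parabolic rescaling, so $Q(t)/\prod_j(\int f_j)^{p_j}$ converges to a Gaussian BL ratio; combined with the monotonicity this forces the ratio at $t = 0$ to be bounded above by its Gaussian supremum, which by the easy direction equals $1/\sqrt{\inf(\cdot)}$, giving $\BL(B,p)^2 \le 1/\inf(\cdot)$.

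The main obstacle I expect is the pointwise monotonicity $Q'(t) \ge 0$. The naive derivative contains terms of mixed sign, and combining them into a manifestly nonnegative quantity requires a somewhat self-referential maneuver: at each fixed $x$ the integrand must be rewritten so that its nonnegativity reduces to the \emph{Gaussian} case of the very inequality being proved, applied to the local covariances $Y_j(t,x) := -\nabla^2 \log u_j(t, B_j x)$. This inductive structure was a conceptual innovation of Bennett--Carbery--Christ--Tao and is the technical heart of the argument. Two auxiliary points also need care: (i) the infeasible case $p \notin P_B$, where one must verify that both sides of the equality are $+\infty$ and $0$ in compatible senses---this follows by scaling in the direction of a subspace $U$ violating the $P_B$ inequality; and (ii) the passage to the $t \to \infty$ limit after rescaling, which uses compactness together with the scaling identity $\sum_j p_j n_j = n$ (itself forced by $P_B$ applied to $U = \R^n$) to ensure that the infimum on the right-hand side of \eqref{eq:BLconstant} is genuinely achieved in the limit.
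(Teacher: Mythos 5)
The paper does not prove this statement at all: Theorem~\ref{thm:lieb} is imported verbatim from Lieb's 1990 paper as a black box, so there is no in-paper argument to compare yours against. Judged on its own terms, your easy direction is correct and complete: with $f_j(y)=e^{-\pi y^\top Y_j y}$ the two Gaussian integrals evaluate exactly as you write, and squaring and taking the infimum over $(Y_j)$ gives $1/\BL(B,p)^2\le\inf(\cdot)$. Your hard direction is a faithful outline of the Bennett--Carbery--Christ--Tao heat-flow route to Lieb's theorem, which is a genuinely different proof from Lieb's original one (Lieb argued via multiplicativity of the constant under tensor products plus a central-limit/rotation argument; the heat-flow proof is later and, arguably, more conceptual, and it is also the one that generalizes to the finiteness characterization via $P_B$ that the paper quotes from \cite{bennett2008brascamp}).

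The one place where your sketch papers over the real difficulty is the monotonicity claim $Q'(t)\ge 0$ for the \emph{plain} heat semigroup. As literally stated --- differentiate, integrate by parts, and exhibit a pointwise nonnegative integrand in the score vectors $B_j^\top\nabla\log u_j$ --- the argument does not close: the cross terms cannot in general be dominated pointwise, and the ``self-referential'' reduction to the Gaussian case of the inequality applied to $Y_j(t,x):=-\nabla^2\log u_j(t,B_jx)$ is not a pointwise statement one can simply invoke. The rigorous versions get around this either by running each $f_j$ under a heat flow whose diffusion matrix is a fixed near-extremal Gaussian tuple $(A_j)$ (so that the matrix inequality needed for monotonicity \emph{is} exactly the Gaussian case with those fixed $A_j$, verified once and for all), or by inducting on dimension through critical subspaces. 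You have correctly located the crux, but the step as written is a gap rather than a proof; everything else (the infeasible case via scaling along a violating subspace, the $t\to\infty$ limit using $\sum_j p_jn_j=n$) is handled appropriately.
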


\noindent

\noindent
One of the computational questions concerning the Brascamp-Lieb inequality is: Given a Brascamp-Lieb datum $(B,p)$, can we compute $\BL(B,p)$ in time that is polynomial in the number of bits required to represent the datum?
Since computing $\BL(B,p)$ exactly may not be possible due to the fact that this number may not be rational even if the datum $(B,p)$ is, one seeks an arbitrarily good approximation.
Formally, given the entries of $B$ and $p$ in binary, and an $\eps>0$, compute a number $Z$ such that 
\begin{equation*}
 \BL(B,p) \leq Z \leq  (1+\eps) \; \BL(B,p)
\end{equation*}
in time that is polynomial in the combined bit lengths of $B$ and $p$ and $\log \frac{1}{\eps}.$

There are a few obstacles to this problem: (a) Checking if a given Brascamp-Lieb datum is feasible is not known to be in {\sf P}. 
(b)  The formulation of the Brascamp-Lieb constant by Lieb~\cite{lieb1990gaussian} as in \eqref{eq:BLconstant} is neither concave nor logconcave in the usual sense.
Thus,  techniques developed in the context of linear and convex optimization do not seem to be directly applicable.

Garg, Gurvits, Oliviera, and Wigderson \cite{GargGOW17} gave an algorithm to compute the Brascamp-Lieb constant in polynomial time when the vector $p$ is rational and given in unary. 
More precisely, the running time of their algorithm to compute $\BL(B,p)$ up to multiplicative error $1+\epsilon$ has a polynomial dependency to $\epsilon^{-1}$ and the {\em magnitude} of the denominators in the components of $p$ rather than the number of bits required to represent them.
They also presented algorithms with similar running times for checking if a Brascamp-Lieb datum is feasible, or if a given point is approximately in the Brascamp-Lieb polytope. 
The key idea in \cite{GargGOW17}  is to use Lieb's characterization (Theorem  \ref{thm:lieb}) to reduce the problem of computing $\BL(B,p)$ to the problem of computing the capacity of a completely positive operator.
We note that the special case  when the matrices are of rank $1$; i.e., $B_j\in \R^{1\times n}$ for every $j=1,2, \ldots, m$ was studied in \cite{StraszakV19}.
By interpreting Brascamp-Lieb constants in the rank-$1$ regime as solutions to certain entropy-maximization problems, \cite{SinghV14,StraszakV19} showed that they can be computed, up to a multiplicative precision $\eps>0$, in time polynomial in $m$ and $\log \frac{1}{\eps}$.

\medskip
\noindent
{\bf The reduction.}
Let $p_j=\frac{c_j}{c}$ for integers $(c_j)_{j\in[m]}$ and $c$.
\cite{GargGOW17} construct a completely positive operator $T_{B,p}$ such that $\Cap(T_{B,p})=\frac{1}{\mathrm{BL}(B,p)^2}$.
Let $m':=\sum_{j=1}^m c_j$ and consider a mapping $\sigma:[m']\to [m]$ which maps all those $i$ to $j$ that satisfy
$$ \sum_{k<j} c_k < i \leq \sum_{k \leq j}c_k.$$
Let $M_{ij}$ be an $n_{\sigma(i)} \times n$ matrix that is zero if $\sigma(i) \neq j$ and $B_{\gamma(i)}$ if $\gamma(i)=j$.
Now, for $\ell \in [m']$ define $A_\ell$ to be the block matrix whose rows are $M_{i\ell}$ for $i \in [m']$. 
$T_{B,p}$ is now a rectangular completely positive operator from ${M}_{nc}(\mathbb{C}) \to {M}_{n}(\mathbb{C})$ that maps a positive definite $X$ to $\sum_{i \in [m']} A_i^\dagger X A_i$.

Recall the capacity of a nonsquare completely positive operator (Definition \ref{def:cap}): $$\Cap(T_{B,p}):=\inf\left\{ \left( \frac{\det(T_{B,p}(X))}{c}\right): X \succ 0, \ \det(X)^{\frac{1}{c}}=1 \right\}.$$
Given the block form of each $A_i$, it follows that
$$ T_{B,p}(X)= \sum_{i=1}^{m'} B_{\sigma(i)}^\dagger X_i B_{\sigma(i)},$$
where $X_i$ is an appropriate submatrix of $X$.
Thus, it follows from the basic properties of the determinant that we can write
$$\Cap(T_{B,p})=\inf\left\{ \det \left( \frac{\sum_{i=1}^{m'}B_{\sigma(i)}^\dagger X_i B_{\sigma(i)}}{c}\right): X_i \succ 0, \ \prod_{i=1}^{m'}\det(X_i)=1 \right\}.$$
Replace $\sum_{i:\sigma(i)=j}X_i$ by $c_jY_j$ to obtain 
$$\Cap(T_{B,p})=\inf\left\{ \det \left( \frac{\sum_{j=1}^{m}c_jB_{j}^\dagger Y_j B_{j}}{c}\right): Y_j \succ 0, \ \prod_{j=1}^{m}\det(Y_j)^{c_j}=1\right\}=\frac{1}{\mathrm{BL}(B,p)^2}$$
via Theorem \ref{thm:lieb}.
To ensure we can use the algorithm developed for capacity, we also need to also prove that $T_{B,p}$ is rank nondecreasing. 
Towards this, first, we need to extend the notion of rank nondecreasing to nonsquare operators and then show that it satisfies this property; see \cite{GargGOW17} for the details.

Note that this construction does not lead to an optimization  problem whose dimension is polynomial in the input bit length as the size of the constructed operator in the operator scaling problem depends exponentially on the bit lengths of the entries of $p$. 
From the geodesic convexity of capacity (discussed in Section \ref{sec:geodesic}), it follows that the Brascamp-Lieb constant is also a solution to a geodesically convex optimization problem.
A succinct geodesically convex formulation was provided in \cite{SraVY18}.

\subsubsection{Polynomial capacity}\label{sec:poly_cap}
A basic version of the capacity of polynomials was considered in a paper by Gurvits and Samorodnitsky \cite{GS02} and then generalized to operators (Definition \ref{def:cap}) by \cite{Gurvits06}.
Subsequently, Gurvits  defined a notion of  capacity for hyperbolic polynomials in \cite{Gurvits06} and used it to prove a generalization of van der Waerden conjecture by Bapat  \cite{BAPAT1989107} for mixed discriminants.
In this section, we present this notion of polynomial capacity just for the setting of the permanent.
For an $n \times n$ nonnegative matrix $A$, consider the polynomial
$$f_A(x_1,\ldots,x_n) :=  \prod_{i=1}^n \sum_{j=1}^n A_{i,j} x_j.$$ 
 \cite{Gurvits06} considered the following notion of capacity:
\begin{equation}\label{eq:matrix_cap} \Cap(f_A):= \inf \left\{{f_A(x_1,\ldots,x_n)}: x_i>0, \ {\prod_{i=1}^n x_i} =1  \right\}.
\end{equation}
It is easily checked that if $A$ is stochastic then $0 \leq \Cap(f_A) \leq 1$, and $\Cap(f_A)=1$ if and only if  $A$ is doubly stochastic.
The main result of \cite{Gurvits06} when specialized for the above polynomial implied that 
$$ \mathrm{Per}(A) \geq \left( \frac{n!}{n^n} \right) \Cap(f_A),$$
giving an alternate proof of the van der Waerden conjecture (Theorem \ref{thm:EF}).
Thus, $\Cap(f_A)$ is an $e^{-n}$ approximation to $\mathrm{Per}(A)$.
One can also replace the permanent potential function with the capacity in the proof of \cite{LSW98} presented in Section \ref{sec:permanent}.
Moreover, after introducing new variables $y_i = \log x_i$ and replacing the objective with $\log f_A(x_1,\ldots,x_n)$, one obtains a convex program that can be solved efficiently; see  \cite{Gurvits06,SinghV14,StraszakV17Permanent,StraszakV19}.
This gives an alternate proof of Theorem \ref{thm:lsw}.
As discussed in previous sections, \cite{Gurvits06} arrived 
at this notion of capacity while trying to extend the work of Linial, Samorodnitsky, and Wigderson \cite{LSW98} to Edmonds' singularity problem. 
The proof technique in \cite{Gurvits06}  relied on the location of the roots of the polynomial under consideration ($f_A$ in the case of permanent).
This viewpoint itself has had far-reaching consequences in theoretical computer science and mathematics; see \cite{Vishnoi-Survey}.

\subsection{Capacity and geodesic convex optimization}\label{sec:geodesic}
In the most general setting, an optimization problem takes the form
$$\inf_{x\in K} f(x),$$
for some set $K$ and some function $f:K\to\R$.\footnote{Part of this section draws from \cite{VishnoiGeodesic}}
When $K \subseteq \R^d$, we can talk about the convexity of $K$ and $f$. 
$K$ is said to be convex if any ``straight line'' joining two points in $K$ is entirely contained in $K$, and $f$ is said to be convex if, on any such straight line, the average value of $f$ at the endpoints is at least the value of $f$ at the mid-point of the line.
When $f$ is ``smooth'' enough, there are equivalent definitions of convexity in terms of the standard differential structure in $\R^d$: the gradient or the Hessian of $f$.
Thus, convexity can also be viewed as a property arising from the interaction of the function and how we differentiate in $\R^n$; e.g., the Hessian of $f$ at every point in $K$ should be positive semi-definite. 
When both $K$ and $f$ are convex, the optimization problem is called a convex optimization problem.
The fact that the convexity of $f$ implies that any local minimum of $f$ in $K$ is also a global minimum, along with the fact that computing gradients and Hessians is typically easy in Euclidean spaces, makes it well-suited for developing first-order algorithms such as gradient descent and second-order algorithms such as interior point methods.
Analyzing the convergence of these methods boils down to understanding how well-behaved  derivatives of the function are, and there is a well-developed theory of algorithms for convex optimization  see \cite{boyd2004convex,nesterov2004introductory,vishnoi_book}.

Several optimization problems, however, are nonconvex.
An important example is that of the capacity of a completely positive operator (Definition \ref{def:cap}) 
\begin{equation}\label{eq:cap_geo}
 \Cap(T):=\inf\{\det(T(X)): X \succ 0, \ \det(X)=1\}=\inf\left\{\frac{\det(T(X))}{\det(X)}: X \succ 0\right\},
 \end{equation}
which is nonconvex as the objective function is nonconvex. 
However, \cite{GargGOW16} observed a curious property of the capacity: Consider the following Lagrangian of this optimization problem:
$$f(X,\lambda):= \log \det (T(X)) + \lambda \cdot \log \det X,$$
where $\lambda$ is the multiplier for the constraint.
Then, any $X$ for which $\nabla_X f(X,\lambda)=0$ is an optimal solution 
to Equation \eqref{eq:cap_geo}.
B\"urgisser,  Garg, Oliveira,  Walter, and Wigderson \cite{BurgisserGOWW18} mention that the capacity optimization problem, while nonconvex, is {\em geodesically convex}.
While the domain of positive definite matrices is convex in the ordinary sense, the key to showing that  capacity optimization is geodesically convex is to view this space as a manifold and redefine what it means to be a straight line by introducing a {\em metric}.

This redefinition of a straight line entails the introduction of a different differential structure.
Roughly speaking, a manifold is a topological space that locally looks like Euclidean space.
''Differentiable manifolds'' are a special class of manifolds that come with a differential structure that allows one to do calculus over them.
Straight lines on differential manifolds are called ``geodesics'', and  a set that has the property that a geodesic joining any two points in it is entirely contained in the set is called geodesically convex (with respect to the given differential structure).
A function that has this property that  its average value at the end points of a geodesic is at least the value of $f$ at the mid-point of the geodesic is called geodesically convex (with respect to the given differential structure). 
And, when $K$ and $f$ are both geodesically convex, the optimization problem  is called a geodesically convex optimization problem.
Geodesically convex functions also have key properties similar to convex functions such as the fact that a local minimum is also a global minimum.

Allen-Zhu,  Garg, Li, Oliveira, and Wigderson \cite{Allen-ZhuGLOW18} develop first-order and second-order methods for a class of geodesically  convex optimization problems that include capacity.
In this section, we first introduce the basics of geodesic convexity (Section \ref{sec:geodesic_cvx}),  show that the capacity optimization problem in Equation \eqref{eq:cap_geo} is geodesically convex (Section \ref{sec:cap_geodesic}), and give a high-level view of the algorithms in \cite{Allen-ZhuGLOW18} (Section \ref{sec:2ndorder}). 
We do not develop a theory of geodesic convexity here but give the minimal details to ensure that we can argue that the capacity function in \eqref{eq:cap_geo} is geodesically convex; see \cite{Udr94,VishnoiGeodesic} for a thorough treatment  on geodesic convexity.

\subsubsection{The Riemannian geometry of positive definite matrices and geodesic convexity}\label{sec:geodesic_cvx}
For simplicity, here we consider the case of real symmetric matrices and symmetric positive definite matrices.
Let $\Sn^n$ denote the space of all 
 $n \times n$ real symmetric matrices and  
let $\Sn^n_{++}$ denote the space of all $n \times n$ symmetric positive definite matrices.
$\Sn^n_{++}$ is a smooth manifold and the 
 tangent space at $P\in\Sn^n_{++}$ is $\R^{\frac{n(n+1)}{2}}$ which is homeomorphic to $\Sn^n$ for each $P\in \Sn^n_{++}$.
We consider the metric  induced by the Hessian of the  function: $-\log\det(P)$ for a positive definite matrix $P$.
This function is convex and 
the  metric is 
$$g_P(U, W):= \Tr[P^{-1}UP^{-1} W]$$
 for $P\in\Sn^n_{++}$ and $U,W\in\Sn^n$.
$g_P$ is clearly  symmetric, bilinear, and positive definite.
It is also nondegenerate as $\Tr[P^{-1}UP^{-1} W]=0$ for every $W$ implies 
$$\Tr[P^{-1}UP^{-1} U]=\Tr[P^{-\frac{1}{2}}UP^{-\frac{1}{2}} P^{-\frac{1}{2}}UP^{-\frac{1}{2}}]=0
$$
 or equivalently $P^{-\frac{1}{2}}UP^{-\frac{1}{2}}=0$.
Since $P$ is a nonsingular matrix, $P^{-\frac{1}{2}}UP^{-\frac{1}{2}}=0$ is equivalent to $U=0$.
Next, we observe that $\Sn^n_{++}$ with $g$ is a Riemannian manifold.
This follows from the observation that 
$g_P$ varies smoothly with $P$.

Since the metric tensor allows us to measure distances on a Riemannian manifold,  there is an alternative, and sometimes useful, way of defining geodesics on it: as length-minimizing curves.
Before we can define a geodesic in this manner, we need to define  the length of a curve on a Riemannian manifold.
This gives rise to a notion of distance between two points as the  minimum length of a  curve that joins these points.
Using the metric tensor we can measure the instantaneous length of a given curve.
Integrating along the vector field induced by its derivative, we can measure the length of the curve.
And, we can then define the shortest curve -- geodesic -- that connects two points.

It is well-known that the geodesic with respect to the Hessian of the log-determinant metric that joins $P$ to $Q$ on $\Sn^n_{++}$ can be parameterized as follows (see \cite{bhatia2009positive}):
\begin{equation}\label{eq:pd-geodesics}
\rho(t) := P^{\frac{1}{2}} (P^{-\frac{1}{2}} Q P^{-\frac{1}{2}})^t P^{\frac{1}{2}}.
\end{equation}
Thus, $\rho(0)=P$ and $\rho(1)=Q$.

In general, let $(M, g)$ be a Riemannian manifold.
A set $K\subseteq M$ is said to be  geodesically  convex with respect to $g$, if for any $p,q\in K$, any geodesic $\rho_{pq}$ that joins $p$ to $q$ lies entirely in $K$. 
It follows from Equation \eqref{eq:pd-geodesics} that $\Sn^n_{++}$ is a geodesically convex set with respect to the metric defined above. 

\begin{definition}[\bf Geodesically convex function]
Let $(M, g)$ be a Riemannian manifold and $K\subseteq M$ be a geodesically convex set with respect to $g$.
A function $f:K\to\R$ is said to be a geodesically convex function with respect to $g$ if for any $p,q\in K$, and for any geodesic $\rho{pq}:[0,1]\to K$ that joins $p$ to $q$,
$$
\forall t\in[0,1]\;\;f(\gamma_{pq}(t))\leq (1-t) f(p) + t f(q).
$$
\end{definition}
$\log\det(X)$  is geodesically both convex and concave on $\Sn^n_{++}$ with respect to the metric $g_X(U, V):=\Tr[X^{-1} U X^{-1} V]$.
To see this, let $X, Y\in\Sn^n_{++}$ and $t\in[0,1]$. 
Then, the geodesic joining $X$ to $Y$ is 
$$\rho(t)=X^{\frac{1}{2}}(X^{-\frac{1}{2}}YX^{-\frac{1}{2}})^t X^{\frac{1}{2}}.$$ 
Thus, 
$$
\log\det(\rho(t)) = \log\det(X^{\frac{1}{2}} (X^{-\frac{1}{2}}Y X^{-\frac{1}{2}})^t X^{\frac{1}{2}}) = (1-t)\log\det(X) + t\log\det(Y).
$$
Therefore, $\log\det(X)$ is a geodesically linear function over the positive definite cone with respect to the metric $g$.

\subsubsection{Geodesic convexity of capacity}\label{sec:cap_geodesic}

We now show that the capacity of a completely positive operator $T$ is a geodesically convex optimization problem.
First, we show that 
$T(X)$ is ``geodesically convex''.
In other words, for any geodesic, $\rho:[0,1]\to\Sn^n_{++}$,
\begin{equation}\label{eq:operator_geo}
\forall t\in[0,1],\;\;T(\rho(t))\preceq (1-t)T(\rho(0)) + t T(\rho(1)).
\end{equation}
Write $T(X):=\sum_{i=1}^m A_i X A_i^\top$ for some $n\times n$ matrices $A_i$.
Consider the geodesic $\rho(t):=P^{\frac{1}{2}}\exp(tQ)P^{\frac{1}{2}}$ for $P\in\Sn^n_{++}$ and $Q\in\Sn^n$.
The second derivative of $T$ along $\rho$ is
$$
\frac{d^2 T(\rho(t))}{d t^2} =\sum_{i=1}^m A_i P^{\frac{1}{2}} Q\exp(tQ)QP^{\frac{1}{2}} A_i^\top 
					= T(P^{\frac{1}{2}} Q\exp(tQ)QP^{\frac{1}{2}}).
$$
Since $P^{\frac{1}{2}} Q\exp(tQ)QP^{\frac{1}{2}}$ is positive definite for any $t\in[0,1]$, $T(P^{\frac{1}{2}} Q\exp(tQ)QP^{\frac{1}{2}})$ is also positive definite as $T$ is a strictly positive operator.
Consequently, $\frac{d^2}{dt^2} T(\rho(t))$ is positive definite, and~\eqref{eq:operator_geo} holds.

Now, we argue that $\log\det(T(X))$ is also geodesically convex.
We need to show that the Hessian of $\log\det(T(X))$ is positive semi-definite along any geodesic.
Let us consider the geodesic 
$\rho(t):=P^{\frac{1}{2}}\exp(tQ)P^{\frac{1}{2}}$  for $P\in\Sn^n_{++}$ and $Q\in\Sn$,
and let $h(t):=\log\det(T(\rho(t)))$.
The  second derivative of $\log\det(T(X))$ along $\rho$ is:
\begin{align*}
\frac{d^2  h(t)}{d t^2}
	=& \Tr\left[-T(\rho(t))^{-1}\frac{d}{dt} T(\rho(t))T(\rho(t))^{-1}\frac{d}{dt} T(\rho(t)) + T(\rho(t))^{-1}\frac{d^2}{dt^2}T(\rho(t))\right].
\end{align*}
Thus, we need to verify that 
$\left.\frac{d^2 h(t)}{d t^2}\right\rvert_{t=0}\geq0$.
In other words, we need to show that 
$$\Tr\left[T(P)^{-1}\left(T(P^{\frac{1}{2}}Q^2 P^{\frac{1}{2}}) -T(P^{\frac{1}{2}}QP^{\frac{1}{2}})T(P)^{-1} T(P^{\frac{1}{2}}QP^{\frac{1}{2}})\right)\right]\geq 0.
$$
In particular, if we show that
$$
T(P^{\frac{1}{2}}Q^2 P^{\frac{1}{2}}) \succeq T(P^{\frac{1}{2}}QP^{\frac{1}{2}})T(P)^{-1} T(P^{\frac{1}{2}}QP^{\frac{1}{2}}),
$$
then we are done.
Let us define another strictly positive linear operator
$$
T'(X):=T(P)^{-\frac{1}{2}} T(P^{\frac{1}{2}}XP^{\frac{1}{2}})T(P)^{-\frac{1}{2}}.
$$
If $T'(X^2)\succeq T'(X)^2$, then by picking $X=Q$ we arrive at the conclusion.
This inequality is an instance of  Kadison's inequality, see~\cite{bhatia2009positive} for more details.
Therefore, $\log\det(X)$ is a geodesically convex function.
We can now conclude that $\log\det T(X) - \log\det X$ is geodesically convex as $\log\det X$ is geodesically linear.

\begin{theorem}\label{thm:cap_geodesic}{\bf (Geodesic convexity of capacity \cite{Kubo1979,sra2015conic})}
Let $T(X)$ be a completely positive linear operator.
Then, 
$\frac{\det(T(X)}{\det(X)}$ is geodesically convex on $\Sn^n_{++}$ with respect to the metric $g_X(U, W):=\Tr[X^{-1} U X^{-1} W]$.

\end{theorem}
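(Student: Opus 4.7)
The plan is to reduce geodesic convexity to a one-variable, second-derivative test. Using the parameterization in equation (\ref{eq:pd-geodesics}), any geodesic in $(\Sn^n_{++}, g)$ through $P$ may be written as $\rho(t) = P^{1/2}\exp(tQ)P^{1/2}$ with $Q \in \Sn^n$, so geodesic convexity of a smooth $f$ on $\Sn^n_{++}$ is equivalent to $\frac{d^2}{dt^2} f(\rho(t)) \big|_{t=0} \geq 0$ for every such $P,Q$. I will prove geodesic convexity of $\log\det T(X) - \log\det X$ and then lift the result to the ratio $\det T(X)/\det X$ by a composition argument.

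The easy half is that $\log\det X$ is geodesically linear: along $\rho$ we have $\log\det\rho(t) = \log\det P + t\,\Tr(Q)$, which is affine in $t$, hence both geodesically convex and concave. For the nontrivial half, I would compute $h(t) := \log\det T(\rho(t))$ directly. Using the Jacobi formula $\frac{d}{dt}\log\det M(t) = \Tr[M(t)^{-1}\dot M(t)]$ and the fact that $T$ is linear, I get
$$h''(0) = \Tr\!\left[T(P)^{-1}T(P^{1/2}Q^2 P^{1/2})\right] - \Tr\!\left[\bigl(T(P)^{-1}T(P^{1/2}QP^{1/2})\bigr)^{2}\right].$$
So $h''(0) \geq 0$ would follow from the operator inequality
$$T(P^{1/2}Q^2 P^{1/2}) \succeq T(P^{1/2}QP^{1/2})\,T(P)^{-1}\,T(P^{1/2}QP^{1/2}).$$

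The main obstacle is establishing this operator inequality, and I plan to do so via Kadison's inequality. Introduce the normalized operator $T'(X) := T(P)^{-1/2}\,T(P^{1/2}XP^{1/2})\,T(P)^{-1/2}$. Since $T$ is completely positive and $P \succ 0$, the map $T'$ is completely positive as well, and by construction $T'(I) = I$, i.e., $T'$ is unital. Kadison's inequality then gives $T'(X^2) \succeq T'(X)^2$ for every self-adjoint $X$; specializing to $X = Q$ and conjugating both sides by $T(P)^{1/2}$ yields exactly the desired inequality, so $h''(0) \geq 0$ and $\log\det T(X)$ is geodesically convex.

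Subtracting the geodesically linear $\log\det X$ preserves geodesic convexity, so $\log\frac{\det T(X)}{\det X}$ is geodesically convex. For the final step, note that if $\varphi$ is geodesically convex and $\psi:\R \to \R$ is convex and nondecreasing, then $\psi \circ \varphi$ is again geodesically convex (this follows from plugging the geodesic-convexity inequality for $\varphi$ into the ordinary convexity inequality for $\psi$). Applying this with $\psi = \exp$ shows that $\frac{\det T(X)}{\det X} = \exp\!\bigl(\log\frac{\det T(X)}{\det X}\bigr)$ is geodesically convex on $\Sn^n_{++}$, which is what was to be proved. Kadison's inequality is the one nonroutine ingredient; the rest is bookkeeping of derivatives and a standard composition argument.
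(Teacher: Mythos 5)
Your proposal is correct and follows essentially the same route as the paper: the same geodesic parameterization $\rho(t)=P^{1/2}\exp(tQ)P^{1/2}$, the same reduction of $h''(0)\geq 0$ to the operator inequality $T(P^{1/2}Q^2P^{1/2})\succeq T(P^{1/2}QP^{1/2})T(P)^{-1}T(P^{1/2}QP^{1/2})$, and the same appeal to Kadison's inequality for the unital completely positive map $T'(X)=T(P)^{-1/2}T(P^{1/2}XP^{1/2})T(P)^{-1/2}$. Your closing step -- exponentiating the geodesically convex function $\log\det T(X)-\log\det X$ via a convex nondecreasing composition -- is a welcome addition, since the paper proves geodesic convexity of the logarithm but states the theorem for the ratio $\det T(X)/\det X$ itself.
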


\subsubsection{Computing the capacity via geodesically convex optimization}\label{sec:2ndorder}

As discussed in Section \ref{sec:poly_cap}, for polynomial capacity, one
can make an appropriate change of variables and make the polynomial capacity optimization problem  convex with respect to the Euclidean
metric. 
This allows for the deployment of  standard convex optimization techniques to obtain algorithms that run in time polynomial in $n, \log M, \log \frac{1}{\eps})$; see \cite{KLRS08,CMTV17,ALOW17,StraszakV19}.
The main result of Allen-Zhu, Garg, Li, Oliviera, and Wigderson \cite{Allen-ZhuGLOW18} is an algorithm which $\eps$-approximates capacity and runs in time polynomial in
$n, m, \log M$ and $\log\frac{1}{\eps}$,
where $M$ denotes the largest magnitude of an entry
of $A_i$. 
Thus, it improves upon the result of \cite{GargGOW16} presented in Section \ref{sec:operator}
which runs in time polynomial in $n, m, \log M, \frac{1}{\eps}$.
The algorithm of \cite{Allen-ZhuGLOW18} finds an $X_\eps \succ 0$ such that 
$$\log \det (T(X_\eps)) -\log\det(X_\eps)   \leq \log \Cap (T) + \eps.$$
Their algorithm is a geodesic generalization
of the ``box-constrained'' Newton's method  introduced in \cite{CMTV17,ALOW17}. 
In
each iteration, their algorithm expands the objective into its second-order Taylor expansion and then solves it via Euclidean convex optimization; see \cite{boyd2004convex, nesterov2004introductory,vishnoi_book} for  Newton's method in Euclidean space.
Their algorithm is a general
second-order method and applies to any geodesically convex problem (over the space of positive definite matrices) that
satisfies a particular ``robustness'' property. 
This robustness property asserts  that the function behaves like a quadratic function
in every ``small'' neighborhood with respect to the metric, 
it is  weaker than self-concordance,
and it was introduced in the Euclidean space in \cite{CMTV17,ALOW17}.

Roughly speaking, their algorithm starts with an $X_0=I$ and computes  $X_{t+1}$ from $X_t$ by solving a constrained Euclidean convex quadratic minimization problem as follows: 
For a symmetric matrix $H$, let $f^t(H):=F(X_t^{\frac{1}{2}} e^{H}X_t^{\frac{1}{2}})$.
Let $q^t$ be the second-order Taylor approximation of $f^t$ around $H=0$. 
Since $F$ is geodesically convex, $q^t$ is convex in the ordinary sense. 
Thus, one can optimize $q^t(H)$ under the constraint $\|H\|_2 \leq \frac{1}{2}$ (this is the {\em box} constraint). 
If $H_t$ is the optimizer to this constrained optimization problem, $X_{t+1}:=X_t^{\frac{1}{2}} e^{H_t}X_t^{\frac{1}{2}}$.
\cite{Allen-ZhuGLOW18} show that after about $R \log \frac{1}{\eps}$ iterations, this algorithms produces an $\eps$-approximate minimizer to $F$.
Here $R$ is a bound on the {\em distance} of each iterate to the optimal solution.

For the operator scaling problem, the function $F(X):=\log \det\left(\sum_{i=1}^m A_i X A_i ^\top\right)-\log\det(X)$ which is geodesically convex over the Riemannian manifold of positive definite matrices.
They show how to modify this function slightly and  provide a bound for $R$ (or rather an alternative to it). 

 As an application, \cite{Allen-ZhuGLOW18} present a polynomial time algorithm for an equivalence problem for the left-right group action underlying the operator scaling problem. 
 This yields a deterministic polynomial-time algorithm for (commutative) PIT problems; we omit the details, see \cite{Allen-ZhuGLOW18}.

\subsection{The null-cone problem, invariant theory, and noncommutative optimization}\label{sec:nullcone}

We present a summary of the  paper by   
B\"urgisser, Franks, Garg, Oliveira,  Walter, and Wigderson
\cite{BurgisserFGOWW18,BurgisserFGOWW19} that generalizes and unifies many prior  works and initiates a systematic development of a theory of noncommutative optimization under symmetries.
We start by presenting some basics in Section \ref{sec:group_basics}.
In Section \ref{sec:null_cone}, we introduce the general definition of capacity and that of the null cone.
In section \ref{sec:nc_duality}, we introduce the notion of a moment map that leads to connections with geodesic convexity and noncommutative duality.
Finally, in Section \ref{sec:nco}, we mention the computational problems and the algorithmic results from \cite{BurgisserFGOWW19}.

\subsubsection{Groups, orbits, and invariants}\label{sec:group_basics}
We consider a vector space $V \cong \mathbb{C}^m$ for some $m$.
Given a group $G$, the {\em action} of $G$ on $V$ is a function $\phi: G \times V \to V$ for which we write $\phi(g,v)$ as just $g \cdot v$. 
A {group action} must further satisfy the properties that $g \cdot (h \cdot v) = (gh) \cdot v$ and $e \cdot v = v$, where $e$ is the identity element in $G$. 
An {\em orbit} of $v \in V$ under a given  action of $G$ is the set $$\mathcal{O}_v := \{w \in V : w = g \cdot v \text{ for some } g \in G\}.$$ 
The closure of an orbit $\mathcal{O}_v$ is denoted by $\overline{\mathcal{O}_v}$. 

A group representation $\pi$ is a map from an element  $g \in G$ to an invertible linear transformation $\pi(g)$ of the vector space $V$ (or $\mathrm{GL}(V)$). 
Enforcing $\pi$ to be a group homomorphism (i.e., for any $g_1,g_2 \in G$ we have $\pi(g_1 g_2)=\pi(g_1)\pi(g_2)$) implies the action $g \cdot v := \pi(g) v$ is a group action.

Invariant polynomials are polynomial functions on $V$ that are invariant by
the action of $G$.
The ring of invariant of polynomials is denoted by $\mathbb{C}[V]^G$ and is finitely generated due to a theorem of Hilbert \cite{Hilbert1890,Hilbert1893}.
It is known that for two vectors $v_1, v_2 \in  V$, their orbit-closures intersect if and only if $p(v_1) = p(v_2)$ for all $p \in \mathbb{C}[V]^G$; see \cite{mumford1994geometric}.

As an example, operator scaling can be viewed as  a special case of  the left-right action of $G=\mathrm{SL}_n(\mathbb{C}) \times \mathrm{SL}_n(\mathbb{C})$ on $V=(\mathbb{C}^{n\times n})^m$:
$$ \pi(C,D)\cdot (A_1,\ldots, A_m):=(C A_1  D^{\dagger},\ldots,C A_m D^{\dagger}).$$
Here, the invariants for the left-right action are generated by polynomials
of the form $\det (\sum_{i=1}^m E_i \otimes A_i)$, where $E_i$ are complex $d \times d$ matrices for some $d$.
Dersken and Makam \cite{DERKSEN201744} prove that $d \leq n^5$ suffices.
This implies that, to check if the orbit-closures for two  $(A_1,\ldots,A_m)$ and $(B_1,\ldots,B_m)$ under  the left-right action of $\mathrm{SL}_n(\mathbb{C}) \times \mathrm{SL}_n(\mathbb{C})$, it suffices to check if  $\det (\sum_{i=1}^m Y_i \otimes A_i)=\det (\sum_{i=1}^m Y_i \otimes B_i)$ for all $d \times d$ matrices $Y_i$s  on  disjoint set of variables for $d \leq n^5$.
This is an instance of the ordinary PIT problem and a deterministic algorithm for this problem is provided by the algorithm in  \cite{Allen-ZhuGLOW18} discussed in Section \ref{sec:2ndorder}.

\subsubsection{Capacity and the null cone}\label{sec:null_cone}
\cite{BurgisserFGOWW19} generalize operator scaling and the algorithmic results for it to the case when $\pi$ is {\em any} representation of  $G=\mathrm{GL}_n(\mathbb{C})$.
To do so, one needs to   
 assume that $V$ is equipped with an inner product $\langle \cdot, \cdot \rangle$ which defines a norm $\|v\|:=\sqrt{\langle v,v\rangle}$.
For a representation $\pi$, \cite{BurgisserFGOWW19} define capacity of an element $v \in V$ as 
\begin{equation}\label{eq:cap_general}
 \Cap(v):= \inf_{g \in G} \|\pi(g) v\|.
 \end{equation}
In the commutative (torus) case, this is precisely the notion of polynomial capacity introduced by Gurvits \cite{Gurvits06} (Section \ref{sec:poly_cap}).
In the left-right action case, the notion of operator capacity (Definition \ref{def:cap}) and
the one in Equation \eqref{eq:cap_general} can also be seen to  coincide.

A natural question is: For what $v$ is $\Cap(v)=0$?
This brings us to the notion of the {\em null cone} of $V$ which is defined as follows:
$$ \mathcal{N}:=\{v \in V: \Cap(v)=0\}.$$
Thus, the null cone is the set of all vectors $v \in V$ whose orbit closure contains $0$.

\subsubsection{Geodesic convexity,  moment map, and noncommutative duality}\label{sec:nc_duality}
For a representation $\pi$ of $\mathrm{GL}_n(\mathbb{C})$,  a vector $v \in V$, and an $H \in \mathcal{H}(n)$, consider  $\log \|\pi(e^{tH})v\|$ as a function of $t$.
Here, $e^{tH}$ is the matrix exponential and is a geodesic in $\mathrm{GL}_n(\mathbb{C})$ starting at the identity element in the direction $H$.
This function can be proved to be convex in $t$, making a connection to geodesic convexity; see \cite{BurgisserFGOWW19} for details. 
The derivative of this function at time $t=0$ gives rise to the {\em moment map} $\mu(v)$ for $v \in V$ as follows:
For an $H \in \mathcal{H}(n)$, $$
\langle \mu(v),H \rangle :=  \frac{\partial  \log \|\pi( e^{tH})v\|}{\partial t}  (0).
$$
Thus, a moment map can  be viewed as a noncommutative version of the gradient in a suitably defined 
Riemannian manifold that arises from the symmetries of noncommutative groups \cite{BurgisserFGOWW19}.
Hence, as $\|\pi( g)v\|$ tends to $\Cap(v)$ with $g$, $\mu(v)$ tends to zero.

For $H \in {H}(n)$, let $\mathrm{spec}(H):=(\lambda_1,\ldots,\lambda_n)$ where $\lambda_1 \geq \cdots \geq \lambda_n$ are the eigenvalues of $H$.
The {\em moment polytope} of $v$, denoted by $\Delta(v)$ is the closure of the set of eigenvalues of $\mu(w)$ as $w$ varies in the orbit of $v$:
$$ \Delta(v):= \overline{\{ \mathrm{spec}(\mu(w): w \in \mathcal{O}_v \}}.
$$
It is a nontrivial result that $\Delta(v)$ is a convex polytope \cite{Kostant1973,Atiyah1982,NM}.

It was proved by Kempf and Ness \cite{KempfNess} that $v$ is not in the null-cone $\mathcal{N}$ if and only if  $\mu(w)=0$ for some $w$ in the orbit closure of $v$, or $0 \in \Delta(v)$.
This is an important result and can be viewed as a noncommutative analog of  Farkas' Lemma in the commutative world.
Thus, we can draw an analogy to convex optimization: If we view the moment map as the gradient of the action of $pi$ at the identity element, then the $\|w\|$ is minimized when the gradient is zero.
$v$ is in the null cone if and only if 
$\Cap(v)=0$.

One of the key structural results in \cite{BurgisserFGOWW19} is a quantitative version of the Kempf-Ness theorem.

\begin{theorem}\label{thm:nc}{\bf (Noncommutative duality \cite{BurgisserFGOWW19})}
For a unit vector $v$ in $V$, 
$$ 1- \frac{\|\mu(v)\|}{\gamma(\pi)} \leq \Cap^2(v) \leq 1- \frac{\|\mu(v)\|^2}{4N(\pi)}.$$
\end{theorem}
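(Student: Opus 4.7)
The plan is to establish the two inequalities independently, viewing the upper bound as a noncommutative analogue of the descent lemma from smooth convex optimization and the lower bound as a quantitative incarnation of the Kempf--Ness criterion. Both arguments will rest on the fundamental fact (which underlies Theorem~\ref{thm:cap_geodesic}) that $t \mapsto \log\|\pi(e^{tH})v\|$ is convex for every Hermitian $H$, together with a representation-theoretic diagonalisation of $d\pi(H)$ by the weights of $\pi$.

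For the upper bound, I would take a single geodesic step in the steepest-descent direction of the norm. Setting $H := -\mu(v)$, define $h(t) := \|\pi(e^{tH})v\|^2$, so that $h(0)=1$ and the defining identity of the moment map yields $h'(0) = -2\|\mu(v)\|^2$. Using the simultaneous diagonalisation $d\pi(H) = \sum_\omega \omega(H)\, P_\omega$ afforded by the weight decomposition and writing $p_\omega := \|P_\omega v\|^2$, one obtains $h(t) = \sum_\omega p_\omega e^{2t\omega(H)}$, so the growth of $|h''(t)|$ is controlled by $h(t)$ up to a factor bounded by $N(\pi)$, which quantifies the maximal squared norm of the weights of $\pi$. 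A one-step descent argument (optimising $t$ in the resulting quadratic upper bound on $h(t)-1$) then yields some $t^\star$ for which $h(t^\star) \leq 1 - \|\mu(v)\|^2/(4N(\pi))$, and since $\Cap^2(v) \leq h(t^\star)$ the upper bound follows.

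For the lower bound, I would reformulate $\Cap(v)$ by means of the Cartan decomposition $\mathrm{GL}_n(\mathbb{C}) = \mathrm{U}(n)\cdot \exp(\mathcal{H}(n))\cdot \mathrm{U}(n)$: combined with the $\mathrm{U}(n)$-invariance of the inner product (which we may assume after averaging), this reduces $\Cap(v)$ to an infimum of $\|\pi(e^H)w\|$ over $H \in \mathcal{H}(n)$ and $w$ in the $\mathrm{U}(n)$-orbit of $v$, on which $\|\mu(\cdot)\|$ is constant. The convex function $F(H) := \log\|\pi(e^H)w\|^2$ satisfies $\nabla F(0) = 2\mu(w)$, so by convexity $F(H) \geq 2\langle \mu(w),H\rangle$ for every $H$. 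The delicate point is that a naive application of Cauchy--Schwarz lets $H$ be arbitrarily long in the $-\mu(w)$ direction; the weight margin $\gamma(\pi)$ intervenes precisely here, essentially because the moment polytope $\Delta(v)$ has vertices among the weights of $\pi$, so any direction $H$ that genuinely drives $F$ towards its infimum must have inner product with $\mu(w)$ that is nontrivial relative to $\gamma(\pi)$. Combining this control with the convexity bound produces $2\log\Cap(v) \geq -\|\mu(v)\|/\gamma(\pi)$, and the elementary inequality $e^{-x} \geq 1-x$ then rearranges to the stated bound $\Cap^2(v) \geq 1 - \|\mu(v)\|/\gamma(\pi)$.

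The main obstacle is the quantitative content of the weight margin in the lower bound. The upper bound is essentially a smoothness estimate familiar from Euclidean convex optimisation, repackaged via the weight decomposition. The lower bound, by contrast, requires invariant-theoretic input to ensure that a small moment map forces near-optimality of $v$ uniformly over the noncommutative group $\mathrm{GL}_n(\mathbb{C})$; without such an input (and the associated parameter $\gamma(\pi)$) no quantitative relationship between $\mu(v)$ and $\Cap(v)$ can be uniform across representations. Making the role of the weight margin precise --- and checking that the constants are compatible with the stated inequality --- is the step I expect to require the most care.
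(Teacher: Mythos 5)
First, a point of orientation: the survey states Theorem~\ref{thm:nc} as a quoted result of \cite{BurgisserFGOWW19} and supplies no proof --- it only defines the weight norm $N(\pi)$ and the weight margin $\gamma(\pi)$ and refers the reader to that paper. So your proposal must be judged on its own merits rather than against an argument in the text. On the upper bound your plan is sound and is essentially the argument of \cite{BurgisserFGOWW19}: one geodesic step in the direction $H=-\mu(v)$, with the weight decomposition (after conjugating $H$ into the diagonal torus) giving $h(t)=\sum_\omega p_\omega e^{2t\omega(H)}$, all exponents bounded in magnitude by $N(\pi)\|H\|$, and then a one-step descent lemma. One caveat: carrying this out, the second derivative is controlled by the \emph{square} of the maximal weight norm, so the denominator you will obtain is of order $N(\pi)^2$, not $N(\pi)$; this matches the bound actually proved in \cite{BurgisserFGOWW19}, and the survey's ``$4N(\pi)$'' appears to be a typographical slip, so do not distort your argument to hit the stated constant.

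The lower bound is where the proposal has a genuine gap, and it is not merely a matter of ``care with constants.'' The convexity inequality $F(H)\ge 2\langle\mu(w),H\rangle\ge -2\|\mu(w)\|\,\|H\|$ is vacuous for your purpose: its right-hand side tends to $-\infty$ as $\|H\|\to\infty$, so it gives no lower bound on $\inf_H F(H)$ whatsoever. The statement you invoke to rescue it --- that any direction driving $F$ toward its infimum must have inner product with $\mu(w)$ that is ``nontrivial relative to $\gamma(\pi)$'' --- is essentially the theorem itself and cannot be assumed; and the assertion that $\Delta(v)$ has vertices among the weights of $\pi$ is at best imprecise outside the torus case. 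The mechanism by which $\gamma(\pi)$ actually enters is different: after the Cartan reduction one analyzes $\|\pi(e^{\mathrm{diag}(a)})w\|^2=\sum_{\omega}e^{2\langle\omega,a\rangle}\|w_\omega\|^2$ in the weight basis, and the weight margin is used through a separation argument on convex hulls of \emph{subsets of weights} (equivalently, through the asymptotic slopes $\lim_{t\to\infty}\partial_t\log\|\pi(e^{tH})w\|=\max_{\omega\in\mathrm{supp}}\langle\omega,a\rangle$ of the Kempf--Ness function along one-parameter subgroups): whenever the set of weights on which the norm can decay does not contain the origin in its convex hull, its distance to the origin is at least $\gamma(\pi)$, and this is what converts a bound on $\|\mu(w)\|$ into a bound on the total mass $\sum\|w_\omega\|^2$ that can be sent to zero. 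Turning this into the clean inequality $\Cap^2(v)\ge 1-\|\mu(v)\|/\gamma(\pi)$ (note also that when $\|\mu(v)\|\ge\gamma(\pi)$ the claim is trivial, so the content is entirely in the regime $\|\mu(v)\|<\gamma(\pi)$, where one must in particular establish the qualitative Kempf--Ness criterion that $v\notin\mathcal{N}$) is precisely the invariant-theoretic input you flag as delicate --- but as written the proposal contains no proof of it, only a correct identification of where one is needed.
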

Here, the {\em weight norm} $N(\pi)$ is defined to be the maximum Euclidean norm of a weight that occurs in $\pi$.
A weight vector $\lambda \in \mathbb{Z}^n$ occurs in $\pi$ if one of its irreducible subspaces is of type $\lambda$.
And, the {\em weight margin} $\gamma(\pi)$  is the minimum Euclidean distance between the origin and the convex hull of any subset of the weights of $\pi$ that does not contain the origin.
The weights arise in the study of irreducible representations of $\pi$ and we direct the reader to \cite{BurgisserFGOWW19} for a discussion on them.

For matrix scaling (the left-right action by special torus group), it can be shown that $\gamma(\pi)\geq \frac{1}{\mathrm{poly}(n)}$.
For operator scaling too  (with the left-right action  by $\mathrm{SL}_n(\mathbb{C}) \times \mathrm{SL}_n(\mathbb{C})$), it can be shown that $\gamma(\pi)\geq \frac{1}{\mathrm{poly}(n)}$.

\subsubsection{Noncommutative optimization under symmetries}\label{sec:nco}

\cite{BurgisserFGOWW19} study a variety of general and related problems related to  orbits of group actions.
\begin{enumerate}
\item {\bf Null cone membership problem:} Given  $(\pi,v)$, check if $v \in \mathcal{N}$.

\item {\bf Moment polytope membership problem:} Given $(\pi,v,p)$, check if $p \in \Delta(v)$.

\item {\bf  Norm-minimization problem:} Given $(\pi,v,\eps)$ such that $\Cap(v)>0$, output a $g \in G$ such that $\log \|\pi(g) \cdot v\| - \log \Cap(v) \leq \eps$.

\item {\bf Scaling problem:} Given $(\pi, v, p,\eps)$ such that $p \in \Delta(v)$, output an element $g \in G$ such that $\|\mathrm{spec}(\mu(\pi(g)v)) - p\| \leq \eps$.

\end{enumerate}

\noindent
\cite{BurgisserFGOWW19} discuss 
how these problems capture a diverse set of problems in different areas of computer science, mathematics, and physics. 
We already discussed the application to approximating the permanent (Section \ref{sec:permanent}), noncommutative singularity testing (Section \ref{sec:ncit}), and computing Brascamp-Lieb constants (Section \ref{sec:BL}).
Other applications include the Horn problem: Do there exist three Hermitian matrices $A,B,C$ with prescribed eigenvalues such that $A+B=C$?,
the quantum marginal problem: Given density matrices describing local quantum states, is there a global pure state consistent with the local states?
Moreover, these problems also connect to  geometric complexity theory (GCT) \cite{MulmuleySohoni} that formulates a variant of {\sf VP} vs. {\sf VNP} question as
 checking if the (padded) permanent lies in the orbit-closure of the determinant (of an appropriate
 size), under the action of the general linear group on polynomials induced by its natural
linear action on the variables.

\cite{BurgisserFGOWW19} also show how, sometimes, these abovementioned problems may reduce to each other and discuss multiple ways in which the input may be specified.
For instance, in the operator scaling problem $\pi$ is fixed (and not part of the input) while, in general, one could be given an oracle to $\pi(g)v$ for a $g \in G$ and an input vector $v$.
$p$ and $\eps$ are assumed to be  given in binary and they present algorithms that run in time both a polynomial 
in $\frac{1}{\eps}$ and in $\log \frac{1}{\eps}$.
\cite{BurgisserFGOWW19} note that techniques from \cite{SinghV14,StraszakV19}  can  be used to
design polynomial time algorithms for commutative null cone and moment polytope membership in
the oracle setting.

Prior works for these problems, including the ones discussed in Section \ref{sec:permanent} and \ref{sec:operator}, the underlying groups
need to be products of at least two copies of rather specific linear groups ($\mathrm{SL}(n)$s or tori),
to support the algorithms and analysis.
More importantly, these actions were {\em linear} in each of the copies.
In \cite{BurgisserFGOWW19}, arbitrary group actions of $\mathrm{GL}_n$ that can be described by a  representation, are handled.
They develop two general methods, a first-order and a second-order method, which require information about the gradient and the Hessian of the function to be optimized. 
Their algorithms  rely on the connection of the moment map to geodesic convexity and the running time bounds depend on the quantitative parameters -- weight norm and weight margin -- arising in their quantitative version of noncommutative duality (Theorem \ref{thm:nc}).
The main technical work goes into showing how these parameters control convergence to the optimum in each of these methods.

The first-order method of \cite{BurgisserFGOWW19} is a natural analog of gradient descent.
For the problem of computing  $\Cap(v)$, it starts with an element $g_0=I$ (the identity element in $G$) and repeats for $\tau$ iterations and a suitable ``step-size'' $\eta>0$ the following:
$$ g_{t+1}:=e^{-\eta \mu(\pi(g_t)v)}g_t.$$
They show that there is a choice of $\eta$ such that this method, when $Cap(v)>0$, finds a $g$ such that $\|\mu(\pi(g)v)\| \leq \eps$
for $\tau=\left(\frac{N(\pi)^2}{\eps^2}|\log \Cap (v)| \right).$
This approximately solves the scaling problem for $p=0$.
The generalization to $p \neq 0$ is also presented.

Their second-order method, at a high-level, repeatedly optimizes  
quadratic Taylor expansions of the objective in a small neighborhood (similar to Newton's method in convex optimization). 
It is an extension of their method for computing operator capacity mentioned in Section \ref{sec:2ndorder}.
The number of iterations it  takes for the above-mentioned scaling problem is $\tilde{O}\left( \frac{N(\pi)\sqrt{n}}{\gamma(\pi)} \left( |\log \Cap(v) | + \log \frac{n}{\eps}\right) \right)$.

The work of \cite{BurgisserFGOWW19} has also led to a host of new challenges in noncommutative optimization.
An important one is to design analogs of the ``cutting plane'' or the ``interior point methods'' in the noncommutative setting.
Such algorithms would likely yield true polynomial time algorithms for Problems (1)-(4) mentioned above; see \cite{HNW23} for some progress towards the latter goal.
Finally, there are several other  works where the lens of symmetry has been helpful in the design of nonconvex optimization and sampling algorithms, see \cite{Barvinok05convexgeometry,sanyal2011,Waterhouse:1983:DSP,Edelman1999,Parrilo,LeakeV22,LeakeV20b,LeakeMV21} and the references therein.

\bibliographystyle{abbrv}
\bibliography{refs,pseudorandomness}
\end{document}